\documentclass[american,11pt,letterpaper]{article}
\usepackage[T1]{fontenc}
\usepackage[utf8]{inputenc}
\usepackage{amsmath}
\usepackage{amsthm}
\usepackage{amssymb}
\usepackage{geometry}

\usepackage{wasysym}
\usepackage{appendix}
\usepackage{tikz}
\usepackage{graphicx}
\usepackage[nottoc,notlot,notlof]{tocbibind} 

  \usepackage[colorlinks, pagebackref, linkcolor=blue, citecolor=blue]{hyperref}

\makeatletter
\theoremstyle{plain}
\newtheorem{thm}{\protect\theoremname}
\theoremstyle{plain}
\newtheorem{lem}[thm]{\protect\lemmaname}
\theoremstyle{plain}
\newtheorem{prop}[thm]{\protect\propositionname}
\theoremstyle{plain}
\newtheorem{cor}[thm]{\protect\corollaryname}
\theoremstyle{plain}
\newtheorem{conj}[thm]{Conjecture}
\theoremstyle{remark}
\newtheorem{rem}[thm]{\protect\remarkname}

\usepackage{amsmath}
\DeclareMathOperator{\per}{per}
\DeclareMathOperator{\fix}{fix}

\makeatother

\usepackage{babel}
\providecommand{\corollaryname}{Corollary}
\providecommand{\lemmaname}{Lemma}
\providecommand{\propositionname}{Proposition}
\providecommand{\remarkname}{Remark}
\providecommand{\theoremname}{Theorem}

\newcommand{\Cat}{\mathrm{Cat}}
\newcommand{\supp}{\mathrm{supp}}

\newcommand{\R}{\mathbb{R}}
\newcommand{\C}{\mathbb{C}}

\newcommand{\NN}{\mathbb{N}}

\newcommand{\E}{\mathbb{E}}
\newcommand{\Var}{\operatorname{Var}}

\newcommand{\CN}{\mathcal{N}_{\mathbb C}}

\begin{document}

\title{Approximating the Permanent of a Random Matrix with Polynomially Small Mean: Zeros and Universality}
\author{Frederic Koehler\thanks{University of Chicago, \texttt{fkoehler@uchicago.edu}.} \and Pui Kuen Leung\thanks{University of Chicago, \texttt{pkl@uchicago.edu}.}}
\maketitle
\begin{abstract}
We study algorithms for approximating the permanent of a random matrix when the entries are slightly biased away from zero. This question is motivated by the goal of understanding the classical complexity of linear optics and \emph{boson sampling} (Aaronson and Arkhipov '11; Eldar and Mehraban '17). Barvinok's interpolation method enables efficient approximation of the permanent, provided one can establish a sufficiently large zero-free region for the polynomial $\per(zJ + W)$, where $J$ is the all-ones matrix and $W$ is a random matrix with independent mean-zero entries.

We show that when the entries of $W$ are standard complex Gaussians, all zeros of the random polynomial $\per(zJ + W)$ lie within a disk of radius $\tilde{O}(n^{-1/3})$, which yields an approximation algorithm when the bias of the entries is $\tilde{\Omega}(n^{-1/3})$. Previously, there were no efficient algorithms at biases smaller than $1/\mathrm{polylog}(n)$, and it was unknown whether there typically exist zeros $z$ with $|z| \ge 1$. As a complementary result, we show that the bulk of the zeros, namely $(1 - \epsilon)n$ of them, have magnitude $\Theta(n^{-1/2})$. This prevents our interpolation method from contradicting the conjectured average-case hardness of approximating the permanent.  
We also establish analogous zero-free regions for the hardcore model on general graphs with complex vertex fugacities. In addition, we prove universality results establishing zero-free regions for random matrices $W$ with i.i.d.\ subexponential entries. 
\end{abstract}
\thispagestyle{empty}
\tableofcontents
\thispagestyle{empty}
\newpage
\setcounter{page}{1}

\section{Introduction}

The permanent of a matrix $A$,
\[
\per(A) = \sum_{\sigma} \prod_i A_{i \sigma(i)}, \]
is a central object in theoretical computer science, combinatorics, and other areas. Despite its simple definition, computing the permanent of an $n\times n$ matrix is $\#P$-hard in the worst case \cite{valiant1979complexity}, and this hardness is widely believed to persist even for approximate computation in broad regimes. At the same time, there are also natural classes of matrices for which the permanent can be efficiently approximated --- notably, including matrices $A$ with nonnegative entries \cite{jerrum1989approximating,jerrum1996markov,jerrum2004polynomial}.

A particularly compelling motivation for studying the permanent comes from the study of linear optics and the \emph{boson sampling} problem \cite{aaronson2011computational}. In this setting, output probabilities of a quantum experiment are expressed in terms of permanents of random matrices, and the conjectured classical hardness of approximating these quantities is one of the main pieces of evidence for quantum computational advantage. Understanding when such permanents can nevertheless be approximated classically is therefore closely tied to the limits of classical simulation of quantum systems. See, e.g., \cite{anand2025simulating,eldar2018approximating,ji2021approximating,bouland2025exponential} for some relevant work and further discussion.

There is a natural analogy with cryptography here. To understand the security of a cryptographic primitive, one develops attacks and then tries to understand why those attacks fail beyond a certain point. The ``primitive'' is the conjectured \emph{average-case} hardness of approximating the permanent of a random matrix, and the ``attacks'' are efficient approximation algorithms. So far, the most powerful line of attack has been based on algorithmic Taylor expansions, as we discuss next.

\paragraph{Previous work.}
Barvinok \cite{barvinok2016computing,barvinok2019computing} showed that it is possible to approximate the permanent of some matrices with positive and negative entries --- those which are sufficiently diagonally dominant. In general, Barvinok's interpolation method reduces approximate counting to proving that a suitable partition function has no zeros in a region connecting a trivial point to the point of interest. Random matrices are very far from diagonally dominant, but Eldar and Mehraban \cite{eldar2018approximating} gave a new analysis of the zeros which shows that interpolation can be used to approximate the permanent of a random matrix with a mean of order $1/\mathrm{polyloglog}(n)$.

Using a different approach (not based on complex analysis), Ji, Jin, and Lu \cite{ji2021approximating} gave an improved approximation algorithm which estimates the permanent of a matrix with a mean of order $1/\mathrm{polylog}(n)$ in quasipolynomial time.
However, the limit of $1/\mathrm{polylog}(n)$ seems fundamental to their approach:
``with our technique only, it is rather hard to go beyond the $1/\mathrm{polylog}(n)$ barrier and essential new ideas seem necessary if this is ever possible.'' \cite{ji2021approximating}

\paragraph{Our results.}
In this paper, we revisit the interpolation approach to approximating the permanent, and show how to obtain dramatically stronger results. The core mathematical problem is to understand the complex zeros of the random polynomial
\[
\per(zJ+W),
\]
where $J$ is the all-ones matrix and $W$ is an $n \times n$ random matrix with independent mean-zero entries. This polynomial interpolates between the easily understood matrix $J$ (at the point $z = \infty$) and the random matrix $W$ (at $z = 0$). If one can show that $\per(zJ+W)$ is zero-free along a path from large $|z|$ down to the desired scale, then Barvinok's method yields an efficient approximation algorithm for the permanent of a random matrix whose entries have a small nonzero bias.

Our first main result concerns the complex Gaussian setting.
When the entries of $W$ are standard complex Gaussians, we prove that with high probability all zeros of $\per(zJ+W)$ lie within a disk of radius $\tilde O(n^{-1/3})$.  Prior to this work, it was not even known whether zeros of constant magnitude typically occur.

As a consequence, we obtain a polynomial time approximation algorithm when the bias of the entries is $\Omega(n^{-1/3 + \beta})$ for any $\beta > 0$. We include a concrete theorem statement below for clarity.
\begin{thm}[Short version of Theorem~\ref{thm:degree-accuracy-tradeoff}]
Let $\gamma, \beta > 0$ and $\delta > 0$. There exists a deterministic algorithm such that
for all sufficiently large $n \ge n_0(\gamma,\beta,\delta)$, 
the algorithm outputs an \(n^{-\gamma}\)-approximation to
\(\log \per(Jz + W)\) for any \(|z|\ge n^{-1/3+\beta}\) 
with probability at least $1 - \delta$ (uniformly over $z$) and
with runtime
\( n^{O(\gamma)}. \)
\end{thm}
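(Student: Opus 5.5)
We plan to derive the statement from Barvinok's interpolation method together with the zero-free region described above: with high probability, every zero of $p(z) := \per(zJ+W)$ lies in the disk $|z| \le C n^{-1/3}\log^{c} n$. Since $p$ has degree $n$ and leading coefficient $n!$, we work with the reversed polynomial
\[
q(w) := w^n\, p(1/w) = \per(J + wW),
\]
which satisfies $q(0) = n! \neq 0$. Its zeros are the reciprocals of the zeros of $p$. On the high-probability event $\mathcal{E}$ that all zeros of $p$ satisfy $|z| \le r := Cn^{-1/3}\log^c n$, the polynomial $q$ is zero-free on the disk $|w| < 1/r$.

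Fix $z$ with $|z| \ge n^{-1/3+\beta}$ and set $w_0 = 1/z$. Then $|w_0| \le n^{1/3-\beta}$, so $|w_0| \le (1/r)\, n^{-\beta/2}$ for large $n$. Writing $\log q(w) = \log n! + \sum_{\zeta} \log(1 - w/\zeta)$ over the zeros $\zeta$ of $q$, each satisfying $|\zeta| \ge 1/r$, we expand each logarithm in a Taylor series. Truncating at order $m$ gives an error of at most
\[
n \sum_{k>m} \frac{(|w_0| r)^k}{k} \le \frac{2n\, n^{-\beta m/2}}{m}.
\]
Choosing $m = \lceil 2(\gamma+1)/\beta \rceil + 1$ makes this error at most $n^{-\gamma}$. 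Here $m$ is a constant depending on $\gamma$ and $\beta$.

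Finally, $\log \per(zJ+W) = n\log z + \log q(1/z)$, where the term $n\log z$ is exact. The branch of the logarithm is fixed by continuity along the segment from $0$ to $w_0$, which lies in the zero-free disk. The algorithm is therefore:
\begin{enumerate}
\item compute the Taylor coefficients of $\log q$ at $0$ up to order $m$;
\item evaluate the truncated series at $w_0$;
\item add $n\log z$.
\end{enumerate}

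For step 1, the $k$-th derivative of $q$ at $0$ is a sum over partial permutations: $q^{(k)}(0)/k!$ equals $(n-k)!$ times the sum, over injective partial maps of size $k$, of the product of the corresponding entries of $W$. This sum can be computed exactly in time $n^{O(k)}$. The standard Newton identities (Barvinok's recursion) then convert the first $m$ derivatives of $q$ into those of $\log q$ in $\mathrm{poly}(m)$ arithmetic operations. Since $m$ is a constant, the total runtime is $n^{O(m)}$. The algorithm is deterministic, and its success only requires the event $\mathcal{E}$. Because $\mathcal{E}$ does not depend on $z$, the guarantee holds uniformly over all admissible $z$ with probability at least $1-\delta$, for $n \ge n_0$.

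\textbf{Main obstacle: the runtime.} The calculation above gives runtime $n^{O(m)}$ with $m \approx \gamma/\beta$, whereas the statement claims $n^{O(\gamma)}$. This would be consistent if the $O(\cdot)$ is allowed to depend on $\beta$, which the phrase ``short version'' suggests. Otherwise, I would try to get a per-term ratio of $n^{-\Omega(1)}$ independent of $\beta$, for instance by exploiting that only the bulk of the zeros matters and that the bulk sits at scale $n^{-1/2}$. I expect this step to be the delicate one. The genuinely hard mathematical input, the zero-free region of radius $\tilde O(n^{-1/3})$, is taken from the main theorem.
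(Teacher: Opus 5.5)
Your argument is correct as far as it goes. The reversal $q(w)=\per(J+wW)$, the Barvinok truncation on the zero-free disk, the exact computation of the first $m$ coefficients, and uniformity in $z$ via a $z$-independent event are all fine. But it only gives runtime $n^{O((\gamma+1)/\beta)}$. The statement intends an exponent independent of $\beta$: the full Theorem~\ref{thm:degree-accuracy-tradeoff} truncates at degree $d=\lceil 6\gamma+3\rceil$, and $\beta$ enters only through $n_0$. Your zero-freeness-only route is exactly what the paper does in the universality setting (Corollary~\ref{cor:universality-algorithmic}), and there it accordingly claims only $n^{O(\gamma/\beta)}$. So the obstacle you flag is a genuine gap, and reading the $O(\cdot)$ as $\beta$-dependent does not match the intended claim.

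The missing ingredient is a bound on the Taylor coefficients themselves, not on the zeros.
\begin{itemize}
\item Writing $\log(\per(J+wW)/n!)=\sum_k b_{k,n}w^k$, Proposition~\ref{prop:true-log-perm-coeff} shows $\mathbb E|b_{k,n}|^2\le C_k n^{1-k}$. The proof expresses $(n-m)!\,n^m/n!$ in the basis $\binom{m}{q}$ with coefficients $O_k(n^{-(q-1)})$ (Lemma~\ref{lem:rho-binomial}), then expands the logarithm over connected decorated hypergraphs.
\item Hence for $|w|\le n^{1/3-\beta}$ the degree-$k$ term is $O_{\mathbb P}(\sqrt{C_k}\,n^{1/2-k/6-\beta k})$. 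This gains a factor $n^{-1/6}$ per degree however small $\beta$ is.
\item Lemma~\ref{lem:fixed-degree-truncation-quant} then splits the series into three blocks. Degrees $k\le d$ are computed. The block $d<k\le M$ is controlled by Cauchy--Schwarz and Chebyshev using the variance bound. Only the tail $k>M$ uses the zero-free region, exactly as in your argument.
\item Your $m\approx 2(\gamma+1)/\beta$ is essentially the paper's $M$. The point is that the coefficients between $d$ and $M$ never have to be computed, because they are small with high probability.
\end{itemize}

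Your proposed fix via the bulk of the zeros cannot close the gap. The zero-free theorem does not exclude a zero $\rho$ with $|\rho|$ of order $n^{1/3}$ (up to logarithms). The bulk results also leave $\varepsilon n$ zeros unaccounted for. A single such zero already contributes about $n^{-\beta k}/k$ to the degree-$k$ term at $|w_0|=n^{1/3-\beta}$, so any zero-by-zero estimate forces degree $\gtrsim\gamma/\beta$. The $\beta$-free gain comes from cancellation in the power sums $\sum_j\rho_j^{-k}$, which is exactly what the second-moment computation captures.
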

The algorithm is based on Taylor expansion to degree $O(\gamma)$; note that $\gamma > 0$ is chosen by the user, so our algorithm can achieve arbitrarily small inverse polynomial accuracy.
This substantially improves on the state of the art, where such an approximation guarantee was only known for biases on the order of $1/\mathrm{polylog}(n)$, and with quasipolynomial runtime \cite{ji2021approximating}. 
We also prove a complementary result showing that the bulk of the zeros, namely $(1-\epsilon)n$ of them, have magnitude $\Theta(n^{-1/2})$. Thus, while all zeros lie in a disk of radius about $n^{-1/3}$, most zeros are in fact concentrated at the smaller scale $n^{-1/2}$. This gives a conceptual check on the scope of our algorithmic result: the zero-free region we prove is genuinely nontrivial, but it remains outside the regime where one might expect to contradict the conjectured average-case hardness of approximating the permanent.

Beyond the Gaussian case, we establish universality by showing that zero-free regions at scale $n^{-1/4}$ continue to exist for much more general random matrices with i.i.d.\ subexponential entries. 
Our methods also extend beyond permanents --- in particular, we establish analogous zero-free regions for the hardcore model on general graphs with complex vertex fugacities. As we shall see later, the permanent morally corresponds to hardcore model on the line graph of $K_{n,n}$, also known as the monomer-dimer model on $K_{n,n}$.





\subsection{Further related work}
The study of complex zeros of partition functions has a long history in statistical physics and combinatorics, beginning with seminal work of Lee and Yang \cite{lee1952statistical} which established a zero-free region for the Ising model and established connections between zeros and phase transitions. In the case of monomer--dimer systems, the classical theorem of Heilmann and Lieb \cite{heilmann1972theory} plays an analogous role. More broadly, a large literature has developed around zero-free regions for independence, matching, and related polynomials, together with their algorithmic consequences; see for example \cite{brenner1959relations,shearer1985problem,scott2005repulsive,chudnovsky2007roots,bissacot2011improvement,patelregts,peters2019conjecture,liu2019ising}.

There is a substantial literature on approximation algorithms for the permanent outside the random small-bias setting. For nonnegative matrices, we previously mentioned the breakthrough FPRAS of Jerrum, Sinclair, and Vigoda \cite{jerrum1989approximating,jerrum1996markov,jerrum2004polynomial}. Deterministic approximation via Bethe-type variational methods and related ideas has also been studied extensively, see for example \cite{linial1998deterministic,chertkov2013approximating,gurvits2014bounds,anari2025tight,zhou2026complex}. Other recent work has considered structured matrix classes, such as positive semidefinite matrices \cite{ebrahimnejad2025approximability}, as well as exponential time algorithms \cite{clifford2018classical}.

On the probabilistic side, several works study the size and anticoncentration of the permanent of random matrices \cite{tao2009permanent,Tao2010MO,kwan2022permanent,hunter2025exponential,aaronson2011computational}. More broadly, our results fit into a growing body of work connecting the geometry of complex zeros to computational tractability and hardness in counting problems. In particular, for spin systems and graph polynomials, recent works have related the presence of zeros or zero-dense regions to phase transitions, failure of correlation decay, and hardness of approximation --- see, e.g., \cite{sly2010computational,buys2022lee,de2024zeros,galanis2022complexity,bezakova2021complexity,bencs2025complex}.

Our result fits into a line of recent works concerning algorithms for disordered systems. Perhaps most closely relevant, Bencs, Huang, Lee, Liu, and Regts \cite{bencs2025zeros} studied zero-free regions and quasipolynomial-time approximation in mean-field spin glasses using Barvinok's approach combined with a reweighted second moment calculation. In their context the reweighting plays a less critical role: it is used to improve a $O(1)$ bound on the number of zeros, which already gives useful algorithmic consequences, to a zero-freeness result. Another technically relevant work is due to Mohanty and Rajaraman \cite{mohanty2025eigenvalue}, where they analyzed eigenvalues of random matrices by viewing the corresponding determinant as a random polynomial and analyzing its zeros. These works fit into a broader literature rigorously analyzing how algorithmic thresholds shift (compared to worst-case/uniqueness thresholds) when there are random weights in the model --- see, e.g., \cite{gheissari2019spectral, eldan2022spectral,adhikari2024spectral,anari2024trickle,anari2024universality,chen2025rapid,anari2022entropic,kunisky2024optimality}.

Finally, our motivation is tied to the literature on boson sampling and quantum advantage, where permanents of random matrices appear as output amplitudes or probabilities. This has led to sustained interest in the average-case complexity of the permanent, in the power and limits of classical simulation, and in identifying regimes where structured instances admit efficient classical algorithms --- see, e.g., \cite{aaronson2011computational,harrow2020classical,clifford2018classical,anand2025simulating}.
\section{Overview}\label{sec:overview}
\subsection{Riemann sphere perspective}
The following perspective formalizes the point $z = \infty$ and is helpful for keeping track of the scale of different quantities in the analysis. 
We are interested in locating the zero set of the degree $n$ polynomial
\[ z \mapsto \per(zJ + W), \]
which is a zero-dimensional affine variety.
It is natural to homogenize and rephrase the question in terms of the corresponding zero-dimensional projective variety
\[ \mathcal V = \{ [z_1:z_2] : \per(z_1 J + z_2 W) = 0 \} \subset \mathbb P^1 \]
which lives inside the Riemann sphere $\mathbb P^1 = \{ [z:1] : z \in \mathbb C \} \cup \{[1:0]\}$. 
This gives us two symmetrical ways to think about the zero set:
\begin{enumerate}
    \item The zeros of the projective version are in one-to-one correspondence to the original affine version under the map $[z_1:z_2] \to z_1/z_2$, because at the new point at $\infty$, i.e. at $[1:0]$, we have that $\per(J) = n! \ne 0$. 
    \item On the other hand, for generic $W$ we have that $\per(W) \ne 0$, so the zero set is also in one-to-one correspondence with the zero-dimensional affine variety $\{ z : \per(J + zW) = 0 \}$ under the map $[z_1:z_2] \to z_2/z_1$.
\end{enumerate}
\emph{We will almost always use the latter ``inverted'' coordinate system in what follows.} The reason is our original goal, which is to prove there are no zeros of the original polynomial of, e.g., scale $1/n^{1/4}$ or larger, is more naturally rephrased by centering about the point at infinity: we will show there are no solutions of $\per(J + zW) = 0$ such that $|z| \ll n^{1/4}$. 
\subsection{Previous approaches}
The best zero-freeness results known so far came from the original work of Eldar and Mehraban \cite{eldar2018approximating}. By using Jensen's formula from complex analysis, combined with an upper
bound on the sensitivity of the permanent \cite{rempala1999limiting}, they proved (Equation (28) of \cite{eldar2018approximating}) that for any $r > 0$,
\begin{equation} 
\mathbb E \# \{|z| < r : \per(J + zW) = 0 \} \le 4r^2. 
\end{equation}
Depending on the scale of $r$ we choose, this result has various consequences:
\begin{itemize}
    \item \textbf{Zero-free region, Proposition 8 of \cite{eldar2018approximating} part 1.} For any $r < 1/2$, with probability at least $1 - 4r^2$, there are no roots within distance $r$ of the origin.
    \item \textbf{Region with few zeros, Proposition 8 of \cite{eldar2018approximating} part 2.} For any $r > 1$, with probability at least $1 - 1/r$ there are at most $4r^3$ roots within distance $r$ of the origin. This result is used to obtain their main algorithmic result, by arguing they can pick a path likely to avoid these roots and then iteratively Taylor expand along this path.
    \item \textbf{Region with $o(n)$ zeros.} For any $r = o(\sqrt{n})$, with probability $1 - o(1)$ there are at most $o(n)$ roots within distance $r$ of the origin. This can also be deduced by combining coefficient bounds in \cite{eldar2018approximating} with results from the theory of random polynomials (see Theorem 4 of  \cite{hughes2008zeros}).
\end{itemize}
It turns out that the last estimate for $o(n)$ zeros is qualitatively sharp (see Section~\ref{sec:permanent-stability}). However, the former two estimates are the ones which are relevant to algorithms;
To apply Barvinok's method \cite{barvinok2016computing}, we need to have a guarantee for a region that there are \emph{no} zeros (or very few, so that we may hopefully pick a further subregion which is truly zero-free, as done in \cite{eldar2018approximating}). 




\paragraph{Algorithm of Ji, Jin, and Lu \cite{ji2021approximating}.} Interestingly, the best current algorithm for this problem does not use the zero-free machinery at all. Instead, they use the more direct interpretation of $\per(J + zW)$ as a generating function for subpermanents of $W$:
\[ \frac{1}{n!} \per(J + zW) = \frac{1}{n!} \sum_{\sigma} \prod_{i = 1}^n (J + zW)_{i \sigma(i)} = \sum_{k = 0}^n \frac{(n - k)!}{n!} z^k \sum_{B \subset_{k} W} \per(B) \]
where $B$ ranges over $k \times k$ submatrices of $W$. The number of such submatrices $B$ is ${n \choose k}^2 = \left(\frac{(n!)}{(n - k)! k!}\right)^2$, so the standard deviation of the $k$th coefficient is
\begin{equation}\label{eqn:naive-expansion-var}
\sqrt{\Var\left(\frac{(n - k)!}{n!} \sum_{B \subset_{k} W} \per(B)\right)} = \frac{1}{k!}. 
\end{equation}
Because this decays as $1/k! \le (e/k)^k$, we can hope that if $z = O(\mathrm{polylog}(n))$, then truncating
the expansion after $O(\mathrm{polylog}(n))$ terms will achieve small error; this is rigorously justified in \cite{ji2021approximating} and gives a quasipolynomial time algorithm in this regime. 

\subsection{New approach}
We show the zero-free region is actually much larger than previously known, and thereby obtain stronger algorithmic consequences. Our analysis is inspired by an ansatz which suggests that a large cancellation may occur (Equation~\eqref{eqn:guess-cancellation} below). The following analogy is helpful for understanding where this guess comes from.

\paragraph{Analogy between permanent and hardcore model.} 
By definition,
\[ \per(J + zW) = \sum_{\sigma} \prod_i (J + zW)_{i\sigma(i)} \] 
Consider expanding out the terms on the right hand side. For each row $i$, each term encodes a choice of:
\begin{itemize}
    \item A potential column $\sigma(i)$ to match row $i$ with.
    \item Whether to include this edge or not ($zW_{i\sigma(i)}$ if yes, or $1$ if not).
\end{itemize}
This means that every term in the sum can be mapped to a corresponding matching of the complete $n \times n$ bipartite graph. Now, observe that for every matching $M$ in the bipartite graph, there are $(n - |M|)!$ permutations in the sum each generating the matching one time. So we can rewrite
\begin{align}\label{eqn:nonidealized} 
\frac{1}{n!} \per(J + zW) &= \sum_M \frac{(n - |M|)!}{n!} \prod_{(i,j) \in M} (zW_{ij})
\end{align}
Pretend that we can replace $(n - |M|!)/n!$ in the expression for $per(J + zW)/n!$ in \eqref{eqn:nonidealized} by $1/n^{|M|}$ --- then we get the partition function of a hardcore model on the line graph of the complete bipartite graph $K_{n,n}$ with complex fugacities:
\[ Z(z) = \sum_M \prod_{(i,j) \in M} (z W_{ij}/n)\]
This model is also known as a \emph{monomer-dimer model}. The hardcore constraint encodes that each vertex in the original graph will be adjacent to at most one edge from the matching $M$. We call this hardcore model the \emph{idealized model} of the permanent; it makes the next step much more transparent. 
\begin{rem}[Comparison with Heilmann-Lieb theorem \cite{heilmann1972theory}]
Suppose $W_{ij} = 1$ for all $i,j$, then this is the matching polynomial of the complete bipartite graph. By the Heilmann-Lieb theorem \cite{heilmann1972theory}, all of the zeros of $Z(z)$ would be along the negative real axis. More generally, the Heilmann-Lieb theorem applies when $W_{ij} \ge 0$. However, we are interested in weights with mean zero so the behavior will be quite different. 
\end{rem}

\paragraph{Cluster expansion of idealized hardcore model.}
Note that $Z(0) = 1$ because the empty set is the only size 0 matching.
Using the identity $\log(1 + x) = x - x^2/2 + x^3/3 - \cdots$ we can expand the partition function
as a \emph{formal power series}\footnote{The Taylor series $\log(1 + x)$ does converge when $|x| < 1$, but we will be considering the resulting series expansion in situations far outside this regime --- where, e.g., $|Z(z)| = \Omega(e^{n^{1/10}}) \gg 1$.} in $z$ and obtain
\[ \log Z(z) = (z/n) \sum_{ij} W_{ij} - (z/n)^2\left(\sum_v W_{ij}^2/2 + \sum_{ijk : i \ne k }W_{ij} W_{kj} + \sum_{ijk : j \ne k} W_{ij}W_{ik} \right) + O_n(z^3). \]
For the hardcore model, there is a systematic way to compute any of the formal power series coefficients in terms of a sum over connected diagrams, which leads this series to be called the \emph{cluster expansion} \cite{friedli2017statistical}. See Figure~\ref{fig:cluster-diagrams}.

We can compute that the first term in the expansion of $\log Z$ has variance $O(z^2)$ and the second term has variance $O(z^4/n)$. More generally, from the explicit cluster expansion formula (see Section~\ref{sec:preliminaries}) we know that the $k$th term has variance $O_k(z^{2k}/n^{k - 1})$. 
If $|z|^4 \ll n$, the contribution of the second term in the expansion is $o(1)$, which suggests we \emph{might} be able to make the approximations
\begin{equation} \log \frac{\per(J + zW)}{n!}  \approx \log Z(z) \approx (z/n) \sum_{ij} W_{ij} \, ? \label{eqn:guess}
\end{equation}

\begin{figure}[t]
\centering
\begin{tikzpicture}[x=1cm,y=0.9cm,
  leftv/.style={circle,draw,inner sep=1pt,minimum size=4.5mm,fill=blue!8},
  rightv/.style={circle,draw,inner sep=1pt,minimum size=4.5mm,fill=green!8},
  edgehi/.style={very thick,red},
  edgeghost/.style={gray!35},
  lab/.style={font=\small},
  orderlab/.style={font=\bfseries}
]

\newcommand{\twobytwo}[3]{%
  \node[leftv]  (#1L1) at (#2,#3+0.9) {};
  \node[leftv]  (#1L2) at (#2,#3) {};
  \node[rightv] (#1R1) at (#2+1.4,#3+0.9) {};
  \node[rightv] (#1R2) at (#2+1.4,#3) {};
  \draw[edgeghost] (#1L1)--(#1R1);
  \draw[edgeghost] (#1L1)--(#1R2);
  \draw[edgeghost] (#1L2)--(#1R1);
  \draw[edgeghost] (#1L2)--(#1R2);
}

\newcommand{\onethreeleft}[3]{%
  \node[leftv]  (#1A)  at (#2,#3+0.7) {};
  \node[rightv] (#1B1) at (#2+1.5,#3+1.4) {};
  \node[rightv] (#1B2) at (#2+1.5,#3+0.7) {};
  \node[rightv] (#1B3) at (#2+1.5,#3) {};
  \draw[edgeghost] (#1A)--(#1B1);
  \draw[edgeghost] (#1A)--(#1B2);
  \draw[edgeghost] (#1A)--(#1B3);
}

\newcommand{\threeoneright}[3]{%
  \node[rightv] (#1A)  at (#2+1.5,#3+0.7) {};
  \node[leftv]  (#1B1) at (#2,#3+1.4) {};
  \node[leftv]  (#1B2) at (#2,#3+0.7) {};
  \node[leftv]  (#1B3) at (#2,#3) {};
  \draw[edgeghost] (#1B1)--(#1A);
  \draw[edgeghost] (#1B2)--(#1A);
  \draw[edgeghost] (#1B3)--(#1A);
}

\begin{scope}[xshift=1.8cm]
  \node[orderlab] at (0.7,5.1) {order $1$};
  \node[orderlab] at (7.53,5.1) {order $2$};

  \twobytwo{a}{0.0}{3.0}
  \draw[edgehi] (aL1)--(aR1);
  \node[lab] at (0.7,2.25) {single edge};

  \twobytwo{b}{4.0}{3.0}
  \draw[edgehi,double distance=1.5pt,double=red] (bL1)--(bR1);
  \node[lab,align=center] at (4.7,2.15) {same edge\\twice};

  \twobytwo{c}{6.8}{3.0}
  \draw[edgehi] (cL1)--(cR1);
  \draw[edgehi] (cL1)--(cR2);
  \node[lab,align=center] at (7.5,2.15) {share a row};

  \twobytwo{d}{9.6}{3.0}
  \draw[edgehi] (dL1)--(dR1);
  \draw[edgehi] (dL2)--(dR1);
  \node[lab,align=center] at (10.3,2.15) {share a column};
\end{scope}
\node[orderlab] at (7,0.95) {order $3$};

\twobytwo{e}{0.0}{-1.0}
\draw[edgehi,double distance=2.3pt,double=red] (eL1)--(eR1);
\draw[edgehi] (eL1)--(eR1);
\node[lab,align=center] at (0.7,-1.75) {same edge\\three times};

\twobytwo{f}{2.7}{-1.0}
\draw[edgehi,double distance=1.5pt,double=red] (fL1)--(fR1);
\draw[edgehi] (fL1)--(fR2);
\node[lab,align=center] at (3.4,-1.75) {double edge +\\row-adjacent};

\twobytwo{g}{5.4}{-1.0}
\draw[edgehi,double distance=1.5pt,double=red] (gL1)--(gR1);
\draw[edgehi] (gL2)--(gR1);
\node[lab,align=center] at (6.1,-1.75) {double edge +\\column-adjacent};

\twobytwo{h}{8.1}{-1.0}
\draw[edgehi] (hL1)--(hR1);
\draw[edgehi] (hL1)--(hR2);
\draw[edgehi] (hL2)--(hR1);
\node[lab,align=center] at (8.8,-1.75) {$K_{2,2}$ -- \\ one edge};

\onethreeleft{i}{10.8}{-1.15}
\draw[edgehi] (iA)--(iB1);
\draw[edgehi] (iA)--(iB2);
\draw[edgehi] (iA)--(iB3);
\node[lab,align=center] at (11.55,-1.75) {left 3-star};

\threeoneright{j}{13.7}{-1.15}
\draw[edgehi] (jB1)--(jA);
\draw[edgehi] (jB2)--(jA);
\draw[edgehi] (jB3)--(jA);
\node[lab,align=center] at (14.45,-1.75) {right 3-star};

\end{tikzpicture}
\caption{Connected diagrams contributing to the first three orders of the formal expansion of $\log Z(z)$ for the idealized monomer--dimer model on $K_{n,n}$.} 
\label{fig:cluster-diagrams}
\end{figure}
We emphasize that the validity of neglecting all of the high-degree terms is not obvious, though it will ultimately be justified by our proof.
Many important perturbative expansions do \emph{not} converge\footnote{For example, the Dyson series in QED (Quantum Electrodynamics, the fundamental theory governing the behavior of light) is known to have a radius of convergence equal to zero based on physical considerations \cite{dyson1952divergence}. }, and are at best \emph{asymptotic series expansions} (see, e.g., the discussion in Chapter 13.1 of \cite{talagrand_qft}). 
The basic reason these types of perturbative expansions can fail to converge is the dependence of the coefficient $a_k$ on $k$. Usually there is a \emph{combinatorial explosion} in the number of terms contributing to the $k$th coefficient, so that the magnitude of $|a_k|$ can grow like $k!$ or worse. The dependence on $k$ can be neglected for $k$ small and $n$ large, but the power series for the cluster expansion ranges from $k = 1$ to $\infty$ for any finite $n$. 

\begin{rem}[Cancellations in cluster expansion]
As stated above, the variance of the $k$th coefficient in the Taylor series for $\log Z$ decays rapidly with $n$ for fixed $k$. However, in the polynomial expansion of $Z$ they do not --- they are $\Theta_k(1)$ by a calculation analogous to \eqref{eqn:naive-expansion-var}. This discrepancy indicates that massive cancellations occurred when we expanded the formal series for $\log(1 + x)$, and gives some indication why we can dramatically improve over the approach of \cite{ji2021approximating}.
\end{rem}

\paragraph{Convergence and zero-free region.} The mathematical heart of this work is to prove that for the log-permanent, the series expansion \emph{does} have a large (polynomial in $n$) radius of convergence. Unfortunately, standard methods for analyzing the convergence of the cluster expansion such as the Koteck\'y-Preiss (KP) condition completely fail at this task; they can show at best that our power series converges at some radius of the order of $O(1)$, which is already known \cite{eldar2018approximating}. The KP condition and similar methods aim to directly show that the power series is absolutely convergent by comparing the magnitude of various terms in the expansion --- this will not work well in our situation, where we want to take advantage of large cancellations caused by the random sign/phase of the weights.

Since the permanent itself is analytic, for the log-permanent to have a convergent expansion we need to show that there are no small \emph{complex zeros} $z$ such that $\per(J + zW) = 0$. Note that because the permanent is a degree $n$ polynomial, we are guaranteed to have $n$ zeros somewhere, but we want to show that they are far away from the origin (or, in the original coordinate system, the point at infinity).

A natural way to prove such a zero-free region exists is by applying Jensen's formula from complex analysis, which is an explicit relation between the location of the zeros of a holomorphic function to its averaged rate of growth. In particular,  if we consider any $R > r$ and count the number of zeros $z_1,\ldots,z_m$ of $\per(J + zW)$ inside the smaller disc of radius $r$, we find
\[ \#\{ |z| < r : \per(J + zW) = 0 \}  \le  \frac{1}{2\pi \log(R/r)} \int_0^{2\pi} \log |\per(J + Re^{i\theta} W)/n!| \, d\theta \]
This observation was already at the heart of the proof of \cite{eldar2018approximating}.

\paragraph{Gain via cancellation.} Heuristically, based upon \eqref{eqn:guess} we would guess that for $|z| \ll n^{1/4}$,
\begin{equation}\label{eqn:guess-cancellation}
\frac{1}{2\pi} \int_0^{2\pi} \log  |\per(J + Re^{i\theta} W)/n!| \approx \frac{1}{2\pi}  \Re \int_0^{2\pi} (R e^{i\theta}/n)\sum_{ij} W_{ij} d\theta = 0 
\end{equation}
since $\int e^{i\theta} d\theta = 0$, which would give us the desired zero-free region for $|z| \ll n^{1/4}$. Taking advantage of this cancellation over $\theta$ is the key improvement over the previous analysis \cite{eldar2018approximating}. To make this calculation rigorous, we need to verify \eqref{eqn:guess} is true. Since \eqref{eqn:guess} says that $\frac{\per(J + zW)}{n!} \approx e^{(z/n)\sum_{ij} W_{ij}}$, we can try to make our argument rigorous by bounding the absolute variance of the quotient:
\[ \mathbb E \left|\frac{\per(J + zW)}{n!}e^{(-z/n)\sum_{ij} W_{ij}} - 1\right|^2 = o(1). \]
The left hand side essentially reduces to the partition function of an \emph{auxiliary} statistical mechanics system, which requires some nontrivial combinatorics to evaluate. We leave further details to Section~\ref{sec:easiest}, but ultimately this works and proves the most basic version of our result. 

\begin{figure}
    \centering
    \includegraphics[width=\linewidth]{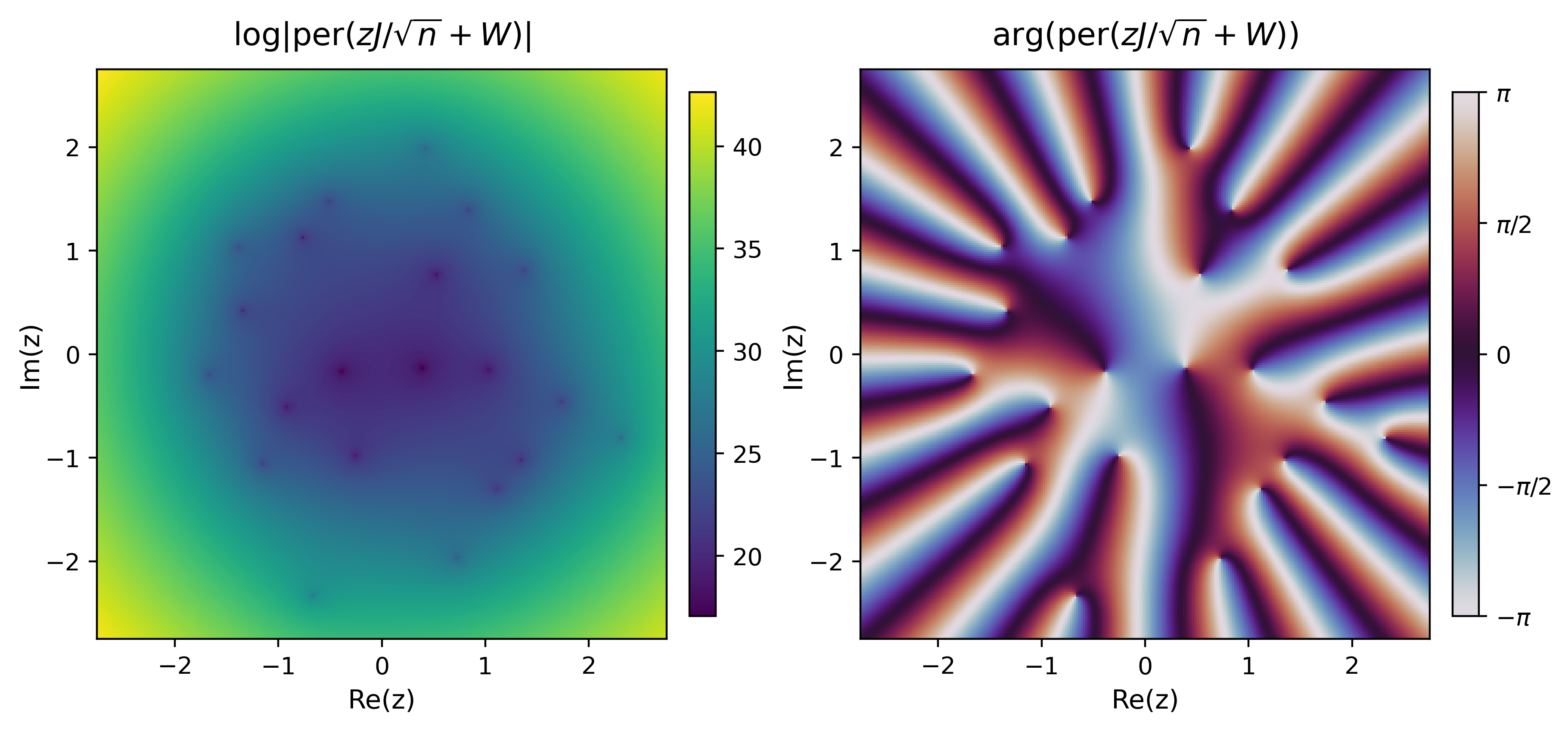}
    \caption{Magnitude (left) and phase (right) of the function $\per(zJ/\sqrt{n} + W)$ of a $21 \times 21$ dimensional random matrix $W$ with i.i.d.\ standard complex Gaussian entries. Zeros are visible as dark blue dots on the left, and by argument principle considerations on the right.}
    \label{fig:placeholder}
\end{figure}

\paragraph{Generalization to hardcore model on arbitrary graphs.} At the level of the above heuristics, the permanent is essentially a monomer-dimer model on the complete bipartite graph. It is relatively straightforward to generalize the heuristic calculations to show that similar cancellations occur for the monomer-dimer model and more generally the hardcore model on \emph{any} graph $G = (V,E)$ with maximum degree $\Delta \ll |V|$. We verify this prediction rigorously in Section~\ref{sec:hardcore-first}. It is worth noting that although the heuristic predictions work the same between permanent and hardcore model, the behavior of the high-degree terms in their expansions is genuinely different --- so the fundamental combinatorics in the proof of the results also ends up to be fairly disparate between the two settings.

\paragraph{Universality.} For both the permanent and the general hardcore model, we are able to show that the zero-free regions derived in the case of $N_{\mathbb C}(0,1)$ entries generalizes to any i.i.d. subexponential distribution. This is fairly involved, because some of the formulas we used in the above heuristic derivation fail to hold, even when the distribution of the entries are \emph{real} Gaussian instead of complex (see Section~\ref{sec:nonuniversal}). We omit the details here, but broadly speaking we are able to show using a softer analysis that the key ``cancellation'' over $\theta$ still occurs in this general setting. See Sections~\ref{sec:permanent-universality} and \ref{sec:hardcore-universality}.

\paragraph{Second-order reweighting for the permanent.} When the entries are $N_{\mathbb C}(0,1)$, the heuristic analysis above naturally generalizes to the setting where $|z| \ll n^{1/3}$ by taking into account the second term in the cluster expansion as well as the first one. The rigorous analysis of this is again fairly involved, but we are able to verify this in the case of the permanent, as well as for its ``idealized'' analogue, the monomer-dimer model on the complete bipartite graph. See Section~\ref{sec:permanent-second}.

\paragraph{Upper bound on radius of convergence via anticoncentration.} Based on a reduction from \cite{eldar2018approximating}, if we could improve our approximation algorithm to succeed when $|z| = n^{1/2 + \epsilon}$ for any $\epsilon > 0$, then we would be able to approximate the permanent of a random matrix better
than expected, which would break the complexity-theoretic assumptions used to argue for the hardness of boson sampling in \cite{aaronson2011computational}. Interestingly, we are able to prove that $(1 - \epsilon)n$ of the zeros of $\per(J + zW)$ are at scale $\Theta(\sqrt{n})$, forming a direct obstruction to convergence of the Taylor series expansion. So if it is possible to break the conjectured hardness of approximating the permanent, new algorithmic ideas would have to be required. 

At a technical level, this result would be much easier to prove if we knew the conjectured anticoncentration of the permanent to be true. However, existing results fall far short of proving this conjecture (see \cite{aaronson2011computational,tao2009permanent,Tao2010MO}) and, in particular, seem to be too weak for our present application. In the end, we are able to recover the result unconditionally by deriving an inductive lower bound on $\mathbb E \log |\per(W)|^2$ (Lemma~\ref{lem:gaussian-per-log-lower}).

\subsection{Organization}
After going through some mathematical preliminaries (Section~\ref{sec:preliminaries}),
we go on to develop our main results for the permanent. In Section~\ref{sec:easiest}, to introduce our method we go through the proof that for complex Gaussians, the radius of convergence is at least $O(n^{1/4})$. This initial result is subsumed by the more sophisticated arguments in the next two sections: in Section~\ref{sec:permanent-second}, we show that for the complex Gaussian case we can improve the radius of convergence to $O(n^{1/3})$, and in Section~\ref{sec:permanent-universality}, we show that the $O(n^{1/4})$-radius zero-free result is true for arbitrary subexponential distributions.

We then proceed to establish analogous results in the setting of the hardcore model on a general graph. We prove that analogous zero-free results hold for the hardcore model on any graph of maximum degree $\Delta$ with subexponential weights --- first in Section~\ref{sec:hardcore-first} for the case of complex Gaussian weights, and then in Section~\ref{sec:hardcore-universality} for general subexponential weights. Conceptually, these developments are parallel to the analysis of the permanent, although the technical details differ significantly.

In Section~\ref{sec:permanent-stability} we analyze the relationship between our findings with the conjectured average-case hardness permanent: we show that if the radius of convergence can be improved to $\omega(\sqrt{n})$, then we can estimate the permanent on a classical computer better than is predicted possible by \cite{aaronson2011computational}, and we show that under the anticoncentration conjecture about Gaussian permanents, the radius of convergence \emph{cannot} be improved beyond $\Theta(\sqrt{n})$.

In Appendix~\ref{apdx:simulations}, we give the results of numerical experiments for small values of $n$. In Appendix~\ref{app:bethe}, we discuss a connection between the formulas we establish and the Bethe approximation used in variational inference. In Appendix~\ref{apdx:coeff-bounds}, we include estimates on the size of the low-degree terms in the cluster expansion, and its permanent analogue.

\section{Preliminaries}
\label{sec:preliminaries}
\subsection{Complex analysis}
See the textbook \cite{stein2010complex} for a more comprehensive reference.

\paragraph{Jensen's formula.} We make extensive use of Jensen's formula from complex analysis, which says that if $f$ is an analytic function, $r > 0$, and $a_1,\ldots,a_N$ are the complex zeros of $f$ inside of the disc of radius $r$ about the origin, repeated according to their multiplicities, then
\[ \log |f(0)| = -\sum_{k = 1}^N \log \frac{r}{|a_k|} + \frac{1}{2\pi} \int_0^{2\pi} \log |f(re^{i\theta})| d\theta. \]
So $\log |f(0)|$ is determined by the locations of the singularities within the disc, and by its average value on the boundary.  A commonly used consequence of this result is that if we consider any $R > r$ and apply the formula at radius $R$ while still considering zeros $a_1,\ldots,a_N$ inside the smaller disc of radius $r$, we find
\begin{equation}\label{eq:jensen-variant}
m \log(R/r) \le \sum_{k = 1}^N \log \frac{R}{|a_k|} \le  \frac{1}{2\pi} \int_0^{2\pi} \log |f(Re^{i\theta})| d\theta - \log |f(0)| 
\end{equation}
which is a convenient way to upper bound the number of zeros $N$. 

The precise way we will use Jensen's formula is stated in the following lemma.
Let $D(0,r) = \{ z : |z| < r \}$.
For a holomorphic function \(f\not\equiv 0\), write \(N_f(r)\) for the number of
zeros of \(f\) in \(\mathbb D(0,r)\), counted with multiplicity. When we apply the below result, we often normalize $g$ so that $|g(0)| = 1$ and hence $\log |g(0)| = 0$.
\begin{lem}\label{lem:core-jensen-app}
Let $R > r > 0$ and let $f$ and $h$ be random functions which are almost surely holomorphic on an open set containing $D(0,R)$ and such that $f(0) \ne 0$ almost surely. Let $g = e^{h} f$, then
\[ \mathbb E\,N_{f}(r)
\le
\frac{1}{4\pi \log(R/r)}
\int_{0}^{2\pi} \log \mathbb E |g(Re^{i\theta})|^2 d\theta - \mathbb E \log |g(0)|. \]
\end{lem}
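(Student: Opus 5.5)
The plan is to apply Jensen's formula pathwise to $g$ instead of $f$, take expectations, and then pass the expectation inside the logarithm by concavity. The key observation is that $e^{h}$ is holomorphic and never vanishes on a neighborhood of $\overline{D(0,R)}$. So $g = e^h f$ is holomorphic there with exactly the same zeros, with multiplicities, as $f$; in particular $N_g(r) = N_f(r)$ and $g(0) \ne 0$ almost surely. Applying the variant \eqref{eq:jensen-variant} of Jensen's formula to $g$ for a fixed realization gives
\[
N_f(r)\,\log(R/r) \;\le\; \frac{1}{2\pi}\int_0^{2\pi} \log|g(Re^{i\theta})|\,d\theta \;-\; \log|g(0)| .
\]

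Next I take expectations and exchange $\mathbb E$ with $\int d\theta$. Since $\log|g|$ may be $-\infty$ at isolated points and has no a priori integrability, I would justify the exchange by splitting $\log = \log^+ - \log^-$. Tonelli applies to each part separately. The $\log^+$ part is controlled by $\tfrac12\log^+\mathbb E|g|^2$; if that is infinite on a set of positive measure the bound is trivial. The $\log^-$ part enters with the favorable sign, so at worst it only strengthens the inequality. Then, for each fixed $\theta$, Jensen's inequality for the concave function $\log$ gives
\[
\mathbb E\log|g(Re^{i\theta})| = \tfrac12\,\mathbb E\log|g(Re^{i\theta})|^2 \le \tfrac12\log\mathbb E|g(Re^{i\theta})|^2 .
\]
Combining and dividing by $\log(R/r)$ yields
\[
\mathbb E N_f(r) \le \frac{1}{4\pi\log(R/r)}\int_0^{2\pi}\log\mathbb E|g(Re^{i\theta})|^2\,d\theta \;-\; \frac{\mathbb E\log|g(0)|}{\log(R/r)} .
\]

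The last step, and the only delicate one, is to match the constant term with the statement as worded, where $-\mathbb E\log|g(0)|$ appears without the factor $1/\log(R/r)$. There are three cases.
\begin{itemize}
\item In the normalization the paper says it will use, $|g(0)|=1$, this term vanishes and the two forms coincide exactly.
\item More generally, the stated form follows whenever $-\mathbb E\log|g(0)|$ and $1-1/\log(R/r)$ have the same sign, for instance when $\mathbb E\log|g(0)|\le 0$ and $R\ge e r$.
\item Otherwise, one can rescale $g$ by a deterministic constant chosen so that $\mathbb E\log|g(0)| = 0$. This changes $\tfrac{1}{4\pi}\int\log\mathbb E|g|^2$ by exactly $\log|c|$, matching the Jensen form.
\end{itemize}
The main obstacle is therefore not analytic but bookkeeping. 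Outside the regime above, the literal statement need not follow; for example, a small constant $f$ with $\log(R/r)<1$ violates it. I would accordingly present the proof through the divided form and record explicitly the reduction to the stated form under $|g(0)|=1$, which is how the lemma is applied later.
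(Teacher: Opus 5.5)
Your proposal follows essentially the paper's argument: apply Jensen's formula \eqref{eq:jensen-variant} to $g=e^{h}f$, which has the same zeros as $f$, take expectations, and use concavity of $\log$ to bound $\mathbb{E}\log|g(Re^{i\theta})|^2$ by $\log\mathbb{E}|g(Re^{i\theta})|^2$. Your bookkeeping remark is also correct: this argument only yields the bound with the entire right-hand side, including $-\mathbb{E}\log|g(0)|$, divided by $\log(R/r)$, and your constant-$f$ example shows the display as literally written can fail; that divided form is exactly what the paper uses in Theorem~\ref{thm:few-small-zeros-clean}, and it coincides with the stated form under the normalization $|g(0)|=1$ used in the other applications.
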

\begin{proof}
By construction, $f$ and $g$ have the same zeros, so it is equivalent to bound $\mathbb E N_g(r)$. Also, $\log |g(z)| = \frac{1}{2} \log |g(z)|^2$.
So the expected number of zeros can be bounded by Jensen's formula \eqref{eq:jensen-variant}, using that
\[ \mathbb{E}_{\theta} \mathbb{E}\log|g(R e^{i\theta})|^{2} \le \mathbb{E}_{\theta} \log \mathbb{E}|g(R e^{i\theta})|^{2} \]
by Jensen's inequality. Here the expectation over $\theta$ is just the uniform measure from $0$ to $2\pi$. This proves the result.

For an alternative proof, we can apply Jensen's formula directly to $f$, and use that 
\[ \frac{1}{2\pi} \int \log |e^h(Re^{i\theta})| d\theta = \log |e^{h(0)}| \]
since $\log |e^h|$ is a harmonic function. Therefore, the net contribution of $h$ to the right hand side of Jensen's formula is zero, so we can add zero to both sides and then the conclusion follows by Jensen's inequality as above.
\end{proof}

\paragraph{Integrating the logarithmic derivative.} We will also use the standard fact that a nowhere-vanishing analytic function
admits a holomorphic logarithm on any simply connected domain (Theorem 6.2 of \cite{stein2010complex}). Explicitly,
if $\Omega \subset \mathbb{C}$ is simply connected and $f$ is holomorphic on
$\Omega$ with $f(z)\neq 0$ for all $z\in \Omega$, then there exists a
holomorphic function $g$ on $\Omega$ such that
\[
f(z)=e^{g(z)} \qquad (z\in \Omega).
\]
Equivalently, $g'(z)=f'(z)/f(z)$. In particular, fixing a base point
$z_0\in\Omega$ and a choice of logarithm $g(z_0)=\log f(z_0)$, one has
\begin{equation}
\label{eq:log-by-integral}
g(z)=\log f(z_0)+\int_{\gamma} \frac{f'(w)}{f(w)}\,dw,
\end{equation}
where $\gamma$ is any piecewise $C^1$ path in $\Omega$ from $z_0$ to $z$.
Because $\Omega$ is simply connected and $f'/f$ is holomorphic, the integral is
path-independent. 
\subsection{Hardcore model and cluster expansion}
Let $G = (V, E)$ be a finite graph and let $\lambda = (\lambda_v)_{v \in V} \in \mathbb{C}^V$ be a collection of complex fugacities assigned to its vertices. The hardcore model is a statistical mechanics model defined on the collection of all independent sets $\mathcal{I}(G)$ of the graph. As a reminder, an independent set $I$ is a subset of vertices such that no two elements of $I$ are adjacent in the graph. We can think of the hardcore model as a ``gas'' of particles: each vertex can be occupied by at most one particle, and two particles cannot be at the same or at adjacent vertices. Each particle can be thought of as a hard sphere of radius $1$, and the adjacency constraint is to prevent spheres from overlapping. See \cite{perkins2023five,friedli2017statistical} for more background and context.

The partition function $Z(\lambda)$ of the hardcore model is the sum of the weights of all valid independent sets:
\begin{equation}
    Z(\lambda) = \sum_{I \in \mathcal{I}(G)} \prod_{v \in I} \lambda_v.
\end{equation}
Observe from the definition that $Z(0) = 1$, since the product is taken over the empty set.

\paragraph{Cluster Expansion}

The cluster expansion provides a formal power series for $\log Z(\lambda)$. To formulate it, we define a \textit{cluster} as a multiset of vertices, which we can represent as a tuple of multiplicities $\mathbf{m} = (m_v)_{v \in V} \in (\mathbb{Z}_{\geq 0})^V$. 

For any given multiset $\mathbf{m}$, we construct an \textit{incompatibility graph} $H[\mathbf{m}]$ by taking $m_v$ distinguishable copies of each vertex $v \in V$, and drawing an edge between two copies if they correspond to the exact same vertex or to adjacent vertices in $G$.

The cluster expansion of the log partition function is given by:
\begin{equation}
    \log Z(\lambda) = \sum_{\mathbf{m} \neq \mathbf{0}} \frac{1}{\mathbf{m}!} \phi(H[\mathbf{m}]) \prod_{v \in V} \lambda_v^{m_v}
\end{equation}
where $\mathbf{m}! = \prod_{v \in V} m_v!$, and $\phi(H)$ is the Ursell function. For any graph $H = (W, E_H)$, the Ursell function evaluates the sum over all its connected spanning subgraphs:
\begin{equation}
    \phi(H) = \sum_{\substack{A \subseteq E_H \\ (W, A) \text{ is connected}}} (-1)^{|A|}
\end{equation}
Note that if the incompatibility graph $H[\mathbf{m}]$ is disconnected, $\phi(H[\mathbf{m}]) = 0$, which ensures the expansion \emph{only sums over ``connected'' clusters}.

We can think of the cluster expansion as essentially a generalized version of the identity
\[ \log(1 + z) = z - z^2/2 + z^3/3 + \cdots \]
which is convergent when $|z| < 1$. In fact, this is exactly the cluster expansion of the hardcore model on a single-vertex graph, because the corresponding Ursell function is $(-1)^{m - 1} (m - 1)!$.  

\paragraph{First few terms of cluster expansion.}
Let $\lambda_v = \lambda W_v$. The first couple of terms of the cluster expansion are
\[ \log Z(\lambda) = \lambda \sum_v W_v - \lambda^2(\sum_v W_v^2/2 + \sum_{u \sim v}W_uW_v) + O(\lambda^3). \]
Here for the formal power series, when we write $O(\lambda^3)$ it only means that all remaining terms in the formal expansion have degree $3$ or higher.

\paragraph{Koteck\'y--Preiss condition.}
A standard sufficient condition for absolute convergence of the cluster expansion is the Koteck\'y--Preiss (KP) criterion \cite{friedli2017statistical,perkins2023five}. As emphasized earlier, this condition will \emph{not} give very impressive results if directly applied to a hardcore model with random fugacities, since it does not take advantage of phase cancellations. However, the KP condition will end up to be useful to handle auxiliary models which show up in our mathematical analysis.

\begin{prop}[Koteck\'y--Preiss criterion]
Assume there exists a collection of numbers $(a_v)_{v\in V}$ with $a_v \ge 0$ such that for every $v\in V$,
\begin{equation}\label{eq:KP-hardcore}
    \sum_{u \in \{v\}\cup N(v)} |\lambda_u| e^{a_u} \le a_v,
\end{equation}
where $N(v)$ denotes the set of neighbors of $v$ in $G$. Then the cluster expansion for $\log Z(\lambda)$ converges absolutely. More precisely, for every $v\in V$,
\begin{equation}\label{eq:KP-bound-hardcore}
    \sum_{\mathbf{m}:\, m_v\ge 1}
    \frac{1}{\mathbf{m}!}\,
    \bigl|\phi(H[\mathbf{m}])\bigr|
    \prod_{u\in V} |\lambda_u|^{m_u}
    \le a_v.
\end{equation}
In particular, $Z(\lambda)\neq 0$, and hence $\log Z(\lambda)$ is well-defined and analytic in this region.
\end{prop}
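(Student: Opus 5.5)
We would prove the Koteck\'y--Preiss criterion in its hardcore form by an induction on the number of vertices that controls ratios of partition functions, in the style of Dobrushin/Shearer. For $S \subseteq V$ let $Z_S$ denote the hardcore partition function on $G[S]$ with the same fugacities. For $v \in S$ we have the recursion
\[ Z_S = Z_{S\setminus\{v\}} + \lambda_v Z_{S \setminus (\{v\}\cup N(v))}. \]
The first goal is the inductive claim that for all $S$ and $v \in S$, $Z_S \ne 0$ and
\[ \Bigl|\log \frac{Z_S}{Z_{S\setminus\{v\}}}\Bigr| \le a_v, \]
with a suitable branch of the logarithm. To prove it, write $\{v\}\cup N(v)$ restricted to $S$ as $v=u_0,u_1,\ldots,u_k$ and telescope $Z_{S\setminus\{v\}}/Z_{S\setminus(\{v\}\cup N(v))}$ as a product of ratios, each removing one neighbor. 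By induction each factor is bounded in modulus by $e^{a_{u_i}}$. This gives
\[ \Bigl|\frac{\lambda_v Z_{S\setminus(\{v\}\cup N(v))}}{Z_{S\setminus \{v\}}}\Bigr| \le |\lambda_v| \prod_{i\ge1} e^{a_{u_i}}. \]
We then want $|\log(1+x)| \le a_v$ for $x$ of this size. To get the sharp constant we would instead run the induction on the stronger statement, namely that the logarithm of the ratio is given by the absolutely convergent polymer sum over clusters containing $v$ and supported in $S$.

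\textbf{Main route (cluster-sum induction).} The cleaner route, and the one we would follow, is the Fern\'andez--Procacci/Ueltschi tree-graph argument applied directly to the bound \eqref{eq:KP-bound-hardcore}. First we use the Penrose tree-graph inequality $|\phi(H)| \le \#\{\text{spanning trees of } H\}$. Then we rewrite the sum over clusters $\mathbf m$ with $m_v \ge 1$, weighted by $1/\mathbf m!$, as a sum over labeled trees rooted at a copy of $v$, in which each child of a vertex $u$ is an element of $\{u\}\cup N(u)$. This reduces \eqref{eq:KP-bound-hardcore} to bounding a generating function of rooted trees, $T_v \le |\lambda_v|\exp(\sum_{u \in \{v\}\cup N(v)} T_u)$.

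We would prove this by induction on tree depth. Define $T_v^{(d)}$ as the sum over trees of depth at most $d$. The claim $T_u^{(d)} \le |\lambda_u| e^{a_u}$ follows from
\[ T_v^{(d+1)} \le |\lambda_v| \exp\Bigl(\sum_{u} T_u^{(d)}\Bigr) \le |\lambda_v|\exp\Bigl(\sum_u |\lambda_u|e^{a_u}\Bigr) \le |\lambda_v| e^{a_v}, \]
where the last step is exactly \eqref{eq:KP-hardcore}. The sum of clusters containing $v$ is then at most $\sum_{u\in\{v\}\cup N(v)} T_u$, which by the hypothesis is at most $a_v$. This yields \eqref{eq:KP-bound-hardcore}.

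\textbf{Main obstacle.} The delicate step is the combinatorial bookkeeping in the previous paragraph. One must check that the factor $1/\mathbf m!$ together with the tree count exactly matches the exponential generating function of rooted trees. One must also check that each vertex's children range over $\{u\}\cup N(u)$, including repeats of $u$ itself, since copies of the same vertex are incompatible. These two facts are what produce the exponential and the closed neighborhood in \eqref{eq:KP-hardcore}.

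\textbf{Conclusion.} Once absolute convergence holds uniformly on the polydisc $\{|\lambda'_u|\le|\lambda_u|\}$, the series defines a holomorphic $F$ there. It agrees with $\log Z$ as a formal power series near $0$, so $e^F=Z$ by analytic continuation. Hence $Z(\lambda)\ne0$ and $\log Z$ is analytic.
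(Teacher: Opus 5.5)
Your main route is correct. The Penrose tree-graph bound, pinning at a copy of $v$, and the depth induction $T_v^{(d)}\le|\lambda_v|e^{a_v}$ form the standard proof in the references the paper cites \cite{friedli2017statistical,perkins2023five}; the paper itself states this proposition without proof. One small imprecision: pinning at a copy of $v$ gives weight $m_v/\mathbf{m}!$ rather than exactly $1/\mathbf{m}!$, which is harmless since $m_v\ge 1$, and the cleanest final chain is $\sum_{\mathbf m:\,m_v\ge1}\cdots\le T_v\le|\lambda_v|e^{a_v}\le a_v$.
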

Note that by summing \eqref{eq:KP-bound-hardcore} over vertices $v$, we have in particular that
\[ |\log Z(\lambda)| \le \sum_v a_v, \]
so the KP condition is very useful for bounding partition functions.

\paragraph{Spanning tree bound.} The following lemma provides a useful upper bound on the magnitude of the Ursell function. 
\begin{lem}[Lemma 3.4 of \cite{pfister1991large}]
\label{lem:ursell-tree-bound}
Let \(H=(V,E)\) be a finite connected graph.
Then
\[
|\phi(H)|\le \tau(H),
\]
where \(\tau(H)\) denotes the number of spanning trees of \(H\).
In particular, if \(|V|=k\), then
\[
|\phi(H)|\le k^{k-2}.
\]
\end{lem}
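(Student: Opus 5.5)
The plan is to prove the tree bound through the Penrose partition scheme, which maps connected spanning subgraphs onto spanning trees, and then finish with Cayley's formula. We fix an ordering of the vertices of $H$ and a total order on the edges. For each spanning tree $T$ we use the Penrose map to define an edge set $R(T)\supseteq T$: $R(T)$ adds to $T$ every edge $\{u,v\}\notin T$ that does not cause a certain conflict. Concretely, with $d_T(\cdot)$ the distance from a fixed root in $T$, those are the edges with either $d_T(u)=d_T(v)$, or $d_T(v)=d_T(u)+1$ and $u$ precedes the $T$-parent of $v$.

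The central claim is that the family of connected spanning subgraphs $A\subseteq E$ is the disjoint union, over spanning trees $T$, of the Boolean intervals $[T,R(T)]=\{A: T\subseteq A\subseteq R(T)\}$. We establish this by defining the inverse map: for a connected $A$ we take its BFS tree from the root, breaking ties by the fixed vertex ordering. We then check two things. First, $A$ lies in $[T_A,R(T_A)]$. Second, every $A'$ in $[T,R(T)]$ has $T_{A'}=T$, because the extra edges allowed in $R(T)$ change neither BFS distances nor parent choices. This verification is the main obstacle, and it comes down to the careful case analysis of Penrose's construction. If that becomes cumbersome, the fallback is to cite the general partition-scheme theorem and check only the interval property.

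Given the partition, the signed sum over each interval is
\[ \sum_{T\subseteq A\subseteq R(T)}(-1)^{|A|}=(-1)^{|T|}\,\mathbf 1[R(T)=T], \]
since a nonempty Boolean lattice has vanishing alternating sum. Hence $|\phi(H)|=\#\{T: R(T)=T\}\le\tau(H)$. For $|V|=k$, $H$ is a subgraph of $K_k$, so $\tau(H)\le\tau(K_k)=k^{k-2}$ by Cayley's formula. This gives both claims.
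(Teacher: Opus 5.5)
Your argument is correct and is essentially the route the paper has in mind: the paper gives no proof of its own, citing Pfister and remarking that the bound follows from the Penrose spanning-tree representation of the Ursell function, which is exactly your partition of the connected spanning subgraphs into Boolean intervals $[T,R(T)]$, followed by Cayley's formula. One small consistency fix: your condition that $u$ precedes the $T$-parent of $v$ only gives the interval property if the BFS tie-break selects the \emph{latest} admissible predecessor in the fixed ordering (otherwise change the condition to ``$u$ comes after the $T$-parent of $v$'').
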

This estimate can be recovered as a corollary of the fact that Ursell coefficients admit a representation in terms of spanning trees \cite{penrose1963convergence,pfister1991large}.
\subsection{Wick's formula}
Wick's formula gives a combinatorial expression for Gaussian moments in terms
of sums over pairings. This is a clean example in which moments reduce to
partition functions of an auxiliary system, and it plays an important role in
the second-order analysis of the permanent
(Section~\ref{sec:permanent-second}).

\paragraph{Real case.}
Let $(G_u)_{u\in I}$ be a centered real Gaussian process. Then all odd moments
vanish. Moreover, for any even collection $u_1,\dots,u_{2m}$,
\[
\mathbb{E}\!\left[\prod_{j=1}^{2m} G_{u_j}\right]
=
\sum_{P\in\mathcal{P}_{2m}}
\prod_{\{a,b\}\in P}\mathbb{E}[G_{u_a}G_{u_b}],
\]
where $\mathcal{P}_{2m}$ denotes the set of pairings of $\{1,\dots,2m\}$.

\paragraph{Complex case.}
Let $(G_u)_{u\in I}$ be a centered circularly symmetric complex Gaussian process.
Then
\[
\mathbb{E}\!\left[\prod_{i=1}^m G_{u_i}\prod_{j=1}^n \overline{G_{v_j}}\right]=0
\qquad\text{if } m\neq n,
\]
and when $m=n$,
\[
\mathbb{E}\!\left[\prod_{i=1}^m G_{u_i}\prod_{j=1}^m \overline{G_{v_j}}\right]
=
\sum_{\sigma\in S_m}\prod_{i=1}^m
\mathbb{E}[G_{u_i}\overline{G_{v_{\sigma(i)}}}].
\]

\subsection{Combinatorial identities}
We will use a standard generating function identity for permutations --- see Chapter 2 of \cite{stanley2011enumerative}. It follows from the formula for the number of derangements $D_m$ of $[m]$,
\[ D_{m}=m!\sum_{j=0}^{m}\frac{(-1)^{j}}{j!}, \] 
which is Example 2.2.1 of \cite{stanley2011enumerative}. As a reminder, a derangement is a permutation with no fixed points.
\begin{lem}\label{lem:fixedpts-gen}
Let $\pi\sim\mathsf{Unif}(S_{n})$ and let $\fix(\pi):=\#\{i\in[n]:\pi(i)=i\}$
be the number of fixed points of $\pi$. Then for every $t\in\mathbb{C}$,
\[
\mathbb{E}_{\pi}[t^{\fix(\pi)}]=\sum_{k=0}^{n}\frac{(t-1)^{k}}{k!}.
\]
As a consequence, if $t > 1$ then $\mathbb{E}_{\pi}[t^{\fix(\pi)}] \le e^{t - 1}$.
\end{lem}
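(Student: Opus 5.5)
The plan is to write $\mathbb{E}_\pi[t^{\fix(\pi)}]$ as an expectation of a product over fixed points and then use inclusion--exclusion. Write $t = 1 + (t-1)$, so that
\[ t^{\fix(\pi)} = \prod_{i : \pi(i) = i} \bigl(1 + (t-1)\bigr) = \sum_{S \subseteq \{ i : \pi(i) = i\}} (t-1)^{|S|}. \]
Taking expectations and exchanging the (finite) sums gives
\[ \mathbb{E}_\pi[t^{\fix(\pi)}] = \sum_{S \subseteq [n]} (t-1)^{|S|}\, \Pr_\pi[\pi(i) = i \ \forall i \in S]. \]
For $|S| = k$ this probability is $(n-k)!/n!$, because $\pi$ must fix $S$ and permute the remaining $n-k$ points arbitrarily. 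There are $\binom{n}{k}$ such sets, so the $k$th term equals $\binom{n}{k}\frac{(n-k)!}{n!}(t-1)^k = \frac{(t-1)^k}{k!}$. This yields the identity for every $t \in \mathbb{C}$, since both sides are polynomials in $t$ and the computation is purely algebraic.

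Alternatively, the identity follows from the derangement formula cited just before the lemma. The number of permutations with exactly $j$ fixed points is $\binom{n}{j} D_{n-j}$. Substituting $D_{n-j}$ and regrouping by $k = j + \ell$ recovers the same binomial expansion. I would present the direct subset argument, since it avoids reindexing.

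For the consequence, take $t > 1$. Then every term $(t-1)^k/k!$ is nonnegative, so the finite sum is at most the full series $\sum_{k \ge 0} (t-1)^k/k! = e^{t-1}$.

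No step is a real obstacle. The only point needing care is the exchange of sums, which is trivially valid for finite sums, together with the count $(n-k)!$ of permutations fixing a given $k$-set.
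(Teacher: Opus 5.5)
Your proof is correct, but your main argument takes a different route from the paper's.

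The paper partitions permutations by their \emph{exact} fixed-point set. It writes $\mathbb{E}_\pi[t^{\fix(\pi)}] = \frac{1}{n!}\sum_{k} \binom{n}{k} D_{n-k} t^k$, substitutes the cited derangement formula $D_m = m!\sum_j (-1)^j/j!$, and then regroups the double sum by the binomial theorem to produce $(t-1)^m/m!$.

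You instead expand $t^{\fix(\pi)} = \sum_{S \subseteq \mathrm{Fix}(\pi)} (t-1)^{|S|}$. This lets you sum over sets that are merely required to be fixed, so you only need the trivial count $(n-k)!$ of permutations fixing a given $k$-set. In effect you compute the factorial moments $\mathbb{E}_\pi\binom{\fix(\pi)}{k} = 1/k!$ directly.

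Your route is more self-contained: it avoids the derangement formula altogether (which is itself usually proved by exactly this inclusion--exclusion) and needs no reindexing. The paper's route is shorter on the page given the cited formula, and it is precisely the alternative you sketch in your second paragraph. The deduction of $\mathbb{E}_\pi[t^{\fix(\pi)}] \le e^{t-1}$ for $t>1$, by dropping nonnegative tail terms of the exponential series, is the same in both.
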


\begin{proof}
By decomposing a permutation into its fixed points and a derangement of the remaining points, we find
\[
\mathbb{E}_{\pi}[t^{\fix(\pi)}]=\frac{1}{n!}\sum_{k=0}^{n}\binom{n}{k}D_{n-k}t^{k}.
\]
Using the derangement identity above gives
\begin{align*}
\mathbb{E}_{\pi}[t^{\fix(\pi)}] & =\sum_{k=0}^{n}\frac{t^{k}}{k!}\sum_{j=0}^{n-k}\frac{(-1)^{j}}{j!}
  =\sum_{m=0}^{n}\sum_{k=0}^{m}\frac{t^{k}}{k!}\frac{(-1)^{m-k}}{(m-k)!}
  =\sum_{m=0}^{n}\frac{1}{m!}\sum_{k=0}^{m}\binom{m}{k}t^{k}(-1)^{m-k} \\
  &=\sum_{m=0}^{n}\frac{(t-1)^{m}}{m!},
\end{align*}
and the inequality when $t > 1$ follows by Taylor series expansion of the exponential.
\end{proof}
\begin{rem}
This generating function identity is one example illustrating how the combinatorics of summing over permutations has nice algebraic properties. Because the general hardcore model (which instead sums over matchings on a general graph) does not have this nice permutation structure, we will see that our analysis changes significantly.
\end{rem}

\paragraph{Stirling's formula.}
Since we will be working with permutations, it is helpful to remember Stirling's approximation:
\[ \log(n!) = n\log(n/e) + \log(2\pi n)/2 + o(1) \]
as $n \to \infty$.

\section{First-order permanent analysis}\label{sec:easiest}
In this section, we show how to analyze the zero-free region for matrices with complex Gaussian entries up to $|z| = \tilde{O}(n^{1/4})$. This result will be improved in later sections using more sophisticated arguments, so it is helpful to go through this simpler result first.

\paragraph{Setup.}
Let \(W \in \C^{n\times n}\) be a random matrix with i.i.d. entries
$W_{ij}\sim \CN(0,1)$,
where by \(\CN(0,1)\) we mean the standard complex Gaussian with density
$\frac{1}{\pi}e^{-|w|^2}$ for $w\in\C$.
Let \(J\) denote the \(n\times n\) all-ones matrix, and define
\[
P_n(z)
:=\frac{1}{n!}\per(J+zW)
=\E_{\sigma\sim S_n}\Big[\prod_{i=1}^n \bigl(1+zW_{i,\sigma(i)}\bigr)\Big].
\]
We also define the global linear statistic
\[
D_1(W):=\frac{1}{n}\sum_{i,j=1}^n W_{ij},
\]
and the first-order reweighted permanent
\[
X^{(1)}(z):=P_n(z)\exp\bigl(-zD_1(W)\bigr).
\]

\paragraph{First moment.}
Observe that 
$\mathbb E X^{(1)}(z) = 1$
due to the circular symmetry of the complex Gaussian.  Explicitly, if we Taylor expand
\[
X^{(1)}(z)=\sum_{k\ge 0} c_k(W)\, z^k,
\]
then each \(c_k\) is a homogeneous polynomial of degree \(k\) in the entries of \(W\), so
$c_k(e^{i\theta}W)=e^{ik\theta}c_k(W)$.
Since \(e^{i\theta}W \stackrel{d}{=} W\) for every \(\theta\in\R\), we obtain
\[
\E[c_k(W)] = \E[c_k(e^{i\theta}W)] = e^{ik\theta}\E[c_k(W)].
\]
Hence \(\E[c_k(W)]=0\) for all \(k\ge 1\), while \(c_0(W)=1\).
\subsection{Tilted measure and exact moment formula}

We define the tilted expectation \(\widehat{\E}\) by
\[
\widehat{\E}[f(W)]
:=
\frac{\E\!\left[f(W)\exp\!\bigl(-2\Re(zD_1(W))\bigr)\right]}
{\E\!\left[\exp\!\bigl(-2\Re(zD_1(W))\bigr)\right]}.
\]

\begin{lem}[Linear tilt of a complex Gaussian product measure]
\label{lem:linear_tilt_clean}
Under the tilted law \(\widehat{\E}\), the entries \(W_{ij}\) remain independent complex Gaussians, each with variance \(1\), vanishing pseudo-variance, and common mean
\[
\widehat{\E}[W_{ij}]
=
-\frac{\overline z}{n}.
\]
In other words, under $\widehat{\E}$ we have $W_{ij}\sim \CN\!\left(-\frac{\overline z}{n},\,1\right)$ i.i.d. 
Moreover,
\[
M(z):=\E\!\left[\exp\!\bigl(-2\Re(zD_1(W))\bigr)\right]=e^{|z|^2}.
\]
\end{lem}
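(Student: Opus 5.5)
The tilt factor factorizes over the entries, so the plan is to reduce everything to a single-coordinate computation. First write
\[
-2\Re(zD_1(W)) = -\frac{2}{n}\sum_{i,j}\Re(zW_{ij}) = \sum_{i,j}\Bigl(-\tfrac{2}{n}\Re(zW_{ij})\Bigr),
\]
so that
\[
\exp\bigl(-2\Re(zD_1(W))\bigr)=\prod_{i,j}\exp\bigl(-\tfrac{2}{n}\Re(zW_{ij})\bigr).
\]
Because the $W_{ij}$ are independent under $\E$ and the density of the vector $W$ is a product, the tilted density, proportional to $\prod_{ij}\frac1\pi e^{-|w_{ij}|^2}e^{-\frac2n\Re(zw_{ij})}$, is again a product. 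The tilted entries are therefore independent, with each marginal being the one-dimensional tilt of $\CN(0,1)$ by $e^{-\frac2n\Re(zw)}$. The normalizer $M(z)$ likewise factorizes as the $n^2$-th power of the single-entry normalizer.

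For a single entry, set $a=-\overline z/n$. The key step is to complete the square:
\[
|w-a|^2 = |w|^2 - 2\Re(\overline a w) + |a|^2 = |w|^2 + \tfrac{2}{n}\Re(zw) + \tfrac{|z|^2}{n^2},
\]
using $\overline a = -z/n$. Hence
\[
\frac1\pi e^{-|w|^2-\frac2n\Re(zw)} = e^{|z|^2/n^2}\cdot \frac1\pi e^{-|w-a|^2}.
\]
The right-hand factor is exactly the density of $\CN(a,1)$. This identifies the tilted marginal as $\CN(-\overline z/n,1)$: it is a translate of the standard complex Gaussian, so its variance is $1$, its pseudo-variance $\E[(W-a)^2]$ is $0$, and its mean is $-\overline z/n$. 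Integrating the same identity shows that the single-entry normalizer equals $e^{|z|^2/n^2}$. Taking the product over the $n^2$ entries then gives $M(z)=e^{n^2\cdot|z|^2/n^2}=e^{|z|^2}$.

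No step here is a real obstacle. The only place requiring care is the sign and conjugation bookkeeping in $\Re(zw)=\Re(\overline{\overline z}\,w)$ when completing the square, and I would double-check it by verifying the mean directly. Differentiating $\log\E e^{\Re(\overline b W)}=|b|^2/4$ with respect to $b$, and taking $b=-2\overline z/n$, gives tilted mean $b/2=-\overline z/n$, which is consistent with the completed-square computation.
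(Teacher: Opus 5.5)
Your proposal is correct and follows essentially the same route as the paper. You factorize the tilt over entries, complete the square for a single entry to identify the tilted marginal as $\CN(-\overline z/n,1)$ with normalizer $e^{|z|^2/n^2}$, and take the product over the $n^2$ entries to get $M(z)=e^{|z|^2}$.
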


\begin{proof}
Since the entries are independent, it suffices to work entrywise. Write
$u:=z/n$, so 
\[
2\Re(zD_1(W))
=
2\Re\!\left(\frac{z}{n}\sum_{i,j}W_{ij}\right)
=
\sum_{i,j}2\Re(uW_{ij}).
\]
Hence
\[
\exp\!\bigl(-2\Re(zD_1(W))\bigr)
=
\prod_{i,j}\exp\!\bigl(-2\Re(uW_{ij})\bigr).
\]
Therefore the tilt factorizes over entries, so under the tilted law the entries remain independent.

Now fix one entry \(W\sim \CN(0,1)\), with density \(\pi^{-1}e^{-|w|^2}\). Under the tilt by \(e^{-2\Re(uw)}\), its density becomes proportional to
\[
e^{-|w|^2}e^{-2\Re(uw)}
=
e^{-(|w|^2+2\Re(uw))}
=
e^{-|w+\overline u|^2+|u|^2}.
\]
Thus, after normalization, the tilted law is exactly \(\CN(-\overline u,1)\). Since \(u=z/n\), this gives
\[
\widehat{\E}[W_{ij}] = -\frac{\overline z}{n}.
\]
The variance and pseudo-variance are unchanged because only the mean is shifted.

Finally, the normalizing constant is
\[
M(z)
=
\prod_{i,j}\E\bigl[e^{-2\Re(uW_{ij})}\bigr].
\]
For one standard complex Gaussian \(W\),
$\E[e^{-2\Re(uW)}]
=
e^{|u|^2}$,
by the same completion-of-squares calculation. Therefore
$M(z)=\bigl(e^{|u|^2}\bigr)^{n^2}
=
e^{n^2|z|^2/n^2}
=
e^{|z|^2}$.
\end{proof}

\begin{lem}[Exact second moment]
\label{lem:exact_moment_clean}
Let
\[
\alpha:=\frac{|z|^2}{n},
\qquad
\kappa=1-\alpha.
\]

Then
\[
\E\bigl[|X^{(1)}(z)|^2\bigr]
=
e^{|z|^2}\,\kappa^{2n}\sum_{k=0}^n \frac{1}{k!}
\left(\frac{|z|^2}{\kappa^2}\right)^k.
\]
\end{lem}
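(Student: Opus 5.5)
The plan is to separate the reweighting factor from the permanent using the tilted measure of Lemma~\ref{lem:linear_tilt_clean}, and then compute a second moment under a shifted Gaussian, which reduces to counting fixed points of a uniformly random permutation. First I write
\[
|X^{(1)}(z)|^2 = |P_n(z)|^2 \exp\bigl(-2\Re(zD_1(W))\bigr),
\]
so that by the definition of $\widehat{\E}$ and Lemma~\ref{lem:linear_tilt_clean},
\[
\E|X^{(1)}(z)|^2 = M(z)\,\widehat{\E}|P_n(z)|^2 = e^{|z|^2}\,\widehat{\E}|P_n(z)|^2 .
\]
It then remains to show that $\widehat{\E}|P_n(z)|^2 = \kappa^{2n}\sum_{k=0}^n \frac{1}{k!}(|z|^2/\kappa^2)^k$.

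Under $\widehat{\E}$, Lemma~\ref{lem:linear_tilt_clean} lets me write $W_{ij} = -\overline z/n + G_{ij}$ with $G_{ij}\sim\CN(0,1)$ i.i.d. The key observation is that
\[
1+zW_{ij} = 1-|z|^2/n + zG_{ij} = \kappa + zG_{ij},
\]
and here $\kappa$ is real. Expanding $|P_n|^2$ as a double average over independent uniform $\sigma,\tau\in S_n$ gives
\[
\widehat{\E}|P_n(z)|^2=\E_{\sigma,\tau}\,\E_G\prod_{i=1}^n(\kappa+zG_{i\sigma(i)})\overline{(\kappa+zG_{i\tau(i)})}.
\]
Entries in different rows are distinct and independent. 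Within row $i$, the two factors involve the same entry exactly when $\sigma(i)=\tau(i)$, so the $G$-expectation factorizes over rows:
\begin{itemize}
\item if $\sigma(i)=\tau(i)$, the row contributes $\E|\kappa+zG|^2=\kappa^2+|z|^2$, using $\E G=0$ and $\E|G|^2=1$;
\item if $\sigma(i)\neq\tau(i)$, the two entries are distinct and independent with mean zero, so the row contributes $\kappa\cdot\kappa=\kappa^2$.
\end{itemize}

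Hence
\[
\widehat{\E}|P_n(z)|^2=\kappa^{2n}\,\E_{\sigma,\tau}\bigl[t^{\#\{i:\sigma(i)=\tau(i)\}}\bigr],\qquad t=1+\frac{|z|^2}{\kappa^2}.
\]
The count $\#\{i:\sigma(i)=\tau(i)\}$ equals $\fix(\tau^{-1}\sigma)$, and $\tau^{-1}\sigma$ is uniform on $S_n$. Lemma~\ref{lem:fixedpts-gen} then gives $\sum_{k=0}^n (t-1)^k/k! = \sum_k \frac{1}{k!}(|z|^2/\kappa^2)^k$, which yields the formula.

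There is no serious obstacle. The one point needing care is the bookkeeping in the row-by-row factorization: different rows never share an entry, and within a row the cross term vanishes unless the two columns coincide. The degenerate case $\kappa=0$ makes the right-hand side an indeterminate form; I would read the formula as the polynomial identity $\sum_k \kappa^{2n-2k}|z|^{2k}/k!$, which is what the computation above actually produces.
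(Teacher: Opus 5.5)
Your proposal is correct and follows essentially the same route as the paper: tilt by Lemma~\ref{lem:linear_tilt_clean} to pull out $M(z)=e^{|z|^2}$, shift to $1+zW_{ij}=\kappa+z\eta_{ij}$, factor the Gaussian expectation so that shared entries contribute $\kappa^2+|z|^2$ and unshared ones $\kappa^2$, then apply Lemma~\ref{lem:fixedpts-gen} to the fixed points of a uniform permutation. Grouping the factorization by rows rather than by edges, as you do, is only a bookkeeping variant. Your remark on reading the formula as $\sum_k \kappa^{2n-2k}|z|^{2k}/k!$ when $\kappa=0$ is a harmless extra point of care that the paper omits.
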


\begin{proof}
By definition of the tilted expectation,
\[
\E\bigl[|X^{(1)}(z)|^2\bigr]
=
\E\!\left[|P_n(z)|^2e^{-2\Re(zD_1(W))}\right]
=
M(z)\,\widehat{\E}\bigl[|P_n(z)|^2\bigr].
\]
By Lemma~\ref{lem:linear_tilt_clean}, \(M(z)=e^{|z|^2}\), so it remains to compute \(\widehat{\E}[|P_n(z)|^2]\).

Under \(\widehat{\E}\), write
\[
W_{ij}=-\frac{\overline z}{n}+\eta_{ij},
\]
where \(\eta_{ij}\) are i.i.d. \(\CN(0,1)\). Then
\[
1+zW_{ij}
=
1-\frac{|z|^2}{n}+z\eta_{ij}
=
\kappa+z\eta_{ij}.
\]
Hence
$P_n(z)
=
\E_{\sigma\sim S_n}\Big[\prod_{i=1}^n (\kappa+z\eta_{i,\sigma(i)})\Big]$
and so
\[
\widehat{\E}[|P_n(z)|^2]
=
\E_{\sigma,\tau\sim S_n}
\widehat{\E}\Bigg[
\prod_{i=1}^n (\kappa+z\eta_{i,\sigma(i)})
\prod_{j=1}^n (\kappa+\overline z\,\overline{\eta_{j,\tau(j)}})
\Bigg].
\]

Fix \(\sigma,\tau\in S_n\). Let
$F(\sigma,\tau):=\{i\in[n]:\sigma(i)=\tau(i)\}$
and
$m:=|F(\sigma,\tau)|$.
Since \(\sigma\) and \(\tau\) are permutations, the edge sets
\[
\{(i,\sigma(i)):1\le i\le n\},
\qquad
\{(j,\tau(j)):1\le j\le n\}
\]
intersect exactly on the \(m\) common edges indexed by \(F(\sigma,\tau)\).

Because the \(\eta_{ab}\) are independent, centered, and satisfy
\[
\widehat{\E}[\eta_{ab}\overline{\eta_{cd}}]=\delta_{ac}\delta_{bd},
\qquad
\widehat{\E}[\eta_{ab}\eta_{cd}]=0,
\]
the expectation factors edge-by-edge. For an edge appearing in only one of the two products, the only surviving contribution comes from taking the constant term \(\kappa\). For a common edge, the contribution is
\[
\widehat{\E}\bigl[(\kappa+z\eta)(\kappa+\overline z\,\overline\eta)\bigr]
=
\kappa^2+|z|^2.
\]
Thus
\[
\widehat{\E}\Bigg[
\prod_{i=1}^n (\kappa+z\eta_{i,\sigma(i)})
\prod_{j=1}^n (\kappa+\overline z\,\overline{\eta_{j,\tau(j)}})
\Bigg]
=
(\kappa^2+|z|^2)^m\,\kappa^{2(n-m)}.
\]
Equivalently,
\[
\widehat{\E}[\cdots]
=
\kappa^{2n}\left(1+\frac{|z|^2}{\kappa^2}\right)^m.
\]
Therefore
\[
\widehat{\E}[|P_n(z)|^2]
=
\kappa^{2n}\,
\E_{\sigma,\tau}\!\left[\left(1+\frac{|z|^2}{\kappa^2}\right)^m\right].
\]

Now if \(\sigma,\tau\) are independent uniform permutations, then \(\pi:=\sigma\tau^{-1}\) is a uniform permutation, and
$m=\#\{i:\sigma(i)=\tau(i)\}=\#\{i:\pi(i)=i\}$ is
the number of fixed points of a uniform permutation. 
Applying Lemma~\ref{lem:fixedpts-gen} with
$t=1+\frac{|z|^2}{\kappa^2}$
gives
\[
\widehat{\E}[|P_n(z)|^2]
=
\kappa^{2n}
\sum_{k=0}^n \frac{1}{k!}
\left(\frac{|z|^2}{\kappa^2}\right)^k.
\]
Multiplying by \(M(z)=e^{|z|^2}\) completes the proof.
\end{proof}

\subsection{Variance asymptotics}

\begin{thm}
\label{thm:first_order_complex_clean}
If \(|z|=o(n^{1/4})\), then
\[
\E\bigl[|X^{(1)}(z)|^2\bigr]=1+o(1).
\]
\end{thm}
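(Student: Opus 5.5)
We start from the exact formula of Lemma~\ref{lem:exact_moment_clean},
\[
\E\bigl[|X^{(1)}(z)|^2\bigr]=e^{|z|^2}\kappa^{2n}\sum_{k=0}^n \frac{1}{k!}\Bigl(\frac{|z|^2}{\kappa^2}\Bigr)^k,
\qquad \kappa=1-\frac{|z|^2}{n},
\]
and show that it is $1+o(1)$ by computing its logarithm. Write $s:=|z|^2$, so the hypothesis says $s=o(n^{1/2})$, and hence $\alpha=s/n=o(n^{-1/2})$ and $\kappa\to1$. There are three factors: $e^{s}$, $\kappa^{2n}$, and the truncated exponential sum $S:=\sum_{k\le n}t^k/k!$ with $t:=s/\kappa^2$.

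\textbf{Step 1 (the prefactor).} Since $\alpha\to0$, we may expand $\log\kappa=-\alpha-\alpha^2/2+O(\alpha^3)$, giving
\[
2n\log\kappa=-2s-\frac{s^2}{n}+O\Bigl(\frac{s^3}{n^2}\Bigr).
\]
Because $s=o(\sqrt n)$, we have $s^2/n=o(1)$, and hence also $s^3/n^2=o(1)$. So $\log(e^s\kappa^{2n})=-s+o(1)$.

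\textbf{Step 2 (the sum).} Here $t=s(1-\alpha)^{-2}=s+2s^2/n+O(s^3/n^2)=s+o(1)$. It remains to check that truncating at $k=n$ is harmless, i.e.\ $S=e^{t}(1-\epsilon)$ with $\epsilon\to0$. Since $t=o(\sqrt n)\ll n$, the tail satisfies
\[
\sum_{k>n}\frac{t^k}{k!}\le \frac{t^{n+1}}{(n+1)!}\cdot\frac{1}{1-t/(n+2)}\le \Bigl(\frac{et}{n+1}\Bigr)^{n+1}\cdot 2,
\]
which is superexponentially small. Relative to $e^t$ it is at most $e^{-t}(et/n)^{n+1}=o(1)$, indeed exponentially small in $n$, using $t\le n/e^2$. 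Hence $\log S=t+o(1)=s+o(1)$.

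\textbf{Step 3 (combine).} Adding the pieces, $\log\E|X^{(1)}|^2=-s+s+o(1)=o(1)$, which gives $1+o(1)$. The only delicate point is the cancellation between $e^{s}\kappa^{2n}$ and the sum: the leading terms $\pm s$ cancel exactly. The second-order remainders are $-s^2/n$ from $\kappa^{2n}$ and $+2s^2/n$ from $t$, which do not cancel and leave a net error of order $s^2/n=|z|^4/n$. This is precisely why the scale is $n^{1/4}$, and it is where the hypothesis $|z|=o(n^{1/4})$ is used. The remaining care is to make the $o(1)$ terms uniform, for instance for $s\le\epsilon_n\sqrt n$ with $\epsilon_n\to0$. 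This is routine, since all the expansions hold with explicit constants once $\alpha\le1/2$.
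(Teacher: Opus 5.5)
Your proposal is correct and follows the paper's proof essentially line for line. You use the same exact formula, expand $e^{|z|^2}\kappa^{2n}$ and the truncated exponential sum separately, cancel the $\pm|z|^2$ terms, and are left with a net exponent of order $|z|^4/n$. Your explicit factorial/geometric-series bound on the tail $\sum_{k>n} t^k/k!$ is simply a quantitative version of the paper's remark that the Poisson$(y)$ probability of exceeding $n$ is $o(1)$.
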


\begin{proof}
Since \(|z|=o(n^{1/4})\), we have
$\frac{|z|^4}{n}\to 0$ and
$\alpha = |z|^2/n \to 0$.
By Lemma~\ref{lem:exact_moment_clean},
\[
\E\bigl[|X^{(1)}(z)|^2\bigr]
=
e^{|z|^2}\kappa^{2n}\sum_{k=0}^n \frac{y^k}{k!},
\qquad
y:=\frac{|z|^2}{\kappa^2}.
\]

We first expand the prefactor. Since \(\alpha\to 0\),
$\log(1-\alpha)=-\alpha-\frac{\alpha^2}{2}+O(\alpha^3)$,
so
\begin{align*}
|z|^2+2n\log(1-\alpha)
=
|z|^2-2n\alpha-n\alpha^2+O(n\alpha^3)
=
-|z|^2-\frac{|z|^4}{n}
+O\!\left(\frac{|z|^6}{n^2}\right).
\end{align*}
Hence
\[
e^{|z|^2}\kappa^{2n}
=
\exp\!\left(
-|z|^2-\frac{|z|^4}{n}
+O\!\left(\frac{|z|^6}{n^2}\right)
\right).
\]

Next,
\[
y
=
\frac{|z|^2}{(1-\alpha)^2}
=
|z|^2(1+2\alpha+O(\alpha^2))
=
|z|^2+\frac{2|z|^4}{n}
+O\!\left(\frac{|z|^6}{n^2}\right).
\]
Also \(y=o(n)\), because \(|z|^2=o(n^{1/2})\). Therefore the truncated exponential sum satisfies
\[
\sum_{k=0}^n \frac{y^k}{k!}
=
e^y(1-o(1)).
\]
Indeed, the omitted tail $e^{-y} \sum_{k = n + 1}^{\infty} y^k/k!$ is the probability that a Poisson\((y)\) random variable exceeds \(n\), which is  \(o(1)\) when \(y=o(n)\) --- e.g., by Chebyshev or Chernoff bounds.
Thus
\[
\sum_{k=0}^n \frac{y^k}{k!}
=
\exp\!\left(
|z|^2+\frac{2|z|^4}{n}
+O\!\left(\frac{|z|^6}{n^2}\right)
\right)(1-o(1)).
\]

Combining the two estimates,
\begin{align*}
\E\bigl[|X^{(1)}(z)|^2\bigr]
&=
\exp\!\left(
-|z|^2-\frac{|z|^4}{n}
+O\!\left(\frac{|z|^6}{n^2}\right)
\right)
\cdot
\exp\!\left(
|z|^2+\frac{2|z|^4}{n}
+O\!\left(\frac{|z|^6}{n^2}\right)
\right)(1-o(1)) \\
&=
\exp\!\left(
\frac{|z|^4}{n}
+O\!\left(\frac{|z|^6}{n^2}\right)
\right)(1-o(1)).
\end{align*}
Since \(|z|^4/n\to 0\), the exponent tends to \(0\). Therefore
$\E\bigl[|X^{(1)}(z)|^2\bigr]=1+o(1)$ as claimed.
\end{proof}

From the above result, we can immediately conclude the corresponding zero-free region by applying Jensen's formula (more precisely, Lemma~\ref{lem:core-jensen-app}), which in turn yields an algorithm via Barvinok interpolation. Since we are about to improve on the above result, we leave further details to the next section.

\section{Second-order permanent analysis}\label{sec:permanent-second}
We now proceed to strengthen the result in the previous section by taking into account the quadratic term in the expansion of the log-permanent. 
In order to make the calculations clearer, it is helpful to write the second-order reweighting in a geometric way.

\paragraph{Second-order reweighting in terms of subspaces.}
Let $W\in\mathbb{C}^{n\times n}$ have i.i.d. entries $W_{ij}\sim\mathcal{N}_{\mathbb{C}}(0,\nu)$.
Let $\mathbb{C}^{n\times n}=\mathcal{H}_{0}\oplus\mathcal{H}_{r}\oplus\mathcal{H}_{c}\oplus\mathcal{H}_{m}$
where
\begin{align*}
\mathcal{H}_{0} & :=\operatorname{span}\{J\},\\
\mathcal{H}_{r} & :=\operatorname{span}\{x\mathbf{1}^{T}:x\in\mathbb{C}^{n},\mathbf{1}^{T}x=0\},\\
\mathcal{H}_{c} & :=\operatorname{span}\{\mathbf{1}y^{T}:y\in\mathbb{C}^{n},\mathbf{1}^{T}y=0\},\\
\mathcal{H}_{m} & :=\operatorname{span}\{M\in\mathbb{C}^{n\times n}:M\mathbf{1}=0,\mathbf{1}^{T}M=0\},
\end{align*}
and $P_{0},P_{r},P_{c},P_{m}$ are the corresponding orthogonal projections.
Define $P_{\parallel}:=P_{0}+P_{r}+P_{c}$, so that $P_{m}+P_{\parallel}=I$.
Let $w=\operatorname{vec}(W)\in\mathbb{C}^{n^{2}}$ and $a:=\frac{1}{n}\mathbf{1}$,
and write 
\[
D_{1}(W):=a^{T}w,\quad D_{2}(W):=w^{T}Bw,
\]
where $B:=-\frac{1}{n}P_{\parallel}+\frac{1}{n(n-1)}P_{m}$. These two terms are exactly what one gets by Taylor expanding the log-permanent: explicitly,
 $D_{1}(W)=(\log G_{W}(z))'(0)$ and $D_{2}(W)=(\log G_{W}(z))''(0)$,
where 
\[ G_{W}(z)=\frac{1}{n!}\per(J+zW). \]
Given these definitions, the second-order reweighted permanent is given by
\[
X_{W}^{(2)}(z):=\exp\left(-zD_{1}(W)-\frac{z^{2}}{2}D_{2}(W)\right)G_{W}(z).
\]

\paragraph{First moment.} Note that by circular symmetry of the complex Gaussian law, \(\E X_W^{(2)}(z)=1\). Indeed, if
\[
X_W^{(2)}(z)=\sum_{k\ge 0} c_k(W)\, z^k,
\]
then each \(c_k\) is a homogeneous polynomial of degree \(k\) in the entries of \(W\), so
$c_k(e^{i\theta}W)=e^{ik\theta}c_k(W)$.
Since \(e^{i\theta}W \stackrel{d}{=} W\) for every \(\theta\in\R\), we obtain
\[
\E[c_k(W)] = \E[c_k(e^{i\theta}W)] = e^{ik\theta}\E[c_k(W)].
\]
Hence \(\E[c_k(W)]=0\) for all \(k\ge 1\), while \(c_0(W)=1\). Therefore $\E X_W^{(2)}(z)=1$.
\subsection{Reweighted second moment analysis}
\subsubsection{Exact formulas}
In this section, we show how to give a nice formula for the reweighted second moment by reinterpreting it in terms of a ``tilted'' Gaussian distribution and applying Wick's formula.

\paragraph{Tilted expectation.}
Define the tilted expectation
\[
\widehat{\mathbb{E}}[f(W)]:=\frac{\mathbb{E}[e^{-2\operatorname{Re}(zD_{1})-\operatorname{Re}(z^{2}D_{2})}f(W)]}{\mathbb{E}[e^{-2\operatorname{Re}(zD_{1})-\operatorname{Re}(z^{2}D_{2})}]}.
\]
Also, for $t,s\in\mathbb{C}^{n^{2}}$, define
\[
M(t,s):=\mathbb{E}\exp\left(t^{T}w+s^{T}\overline{w}-\operatorname{Re}(z^{2}w^{T}Bw)\right).
\]

\begin{prop}
We have
\[
\mathbb{E}|X_{W}^{(2)}(z)|^{2}=M(-za,-\overline{z}a)\mathbb{E}_{\sigma,\tau\sim\mathsf{Unif}(S_{n})}\left[\widehat{\mathbb{E}}\left[\prod_{i=1}^{n}(1+zw_{e_{i}})(1+\overline{z}\overline{w_{f_{i}}})\right]\right]
\]
where $e_{i}=(i,\sigma(i))$, $f_{i}=(i,\tau(i))$.
\end{prop}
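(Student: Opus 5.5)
The identity is a direct algebraic rewriting, parallel to the first step of the proof of Lemma~\ref{lem:exact_moment_clean}. The plan is to expand $|X^{(2)}_W(z)|^2$, separate the reweighting factor from the permanent part, identify the normalizing constant of the tilt with $M(-za,-\overline z a)$, and then use Fubini to move the average over $\sigma,\tau$ outside the tilted expectation.

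First, since $|e^{u}|^2=e^{2\Re u}$,
\[
|X_W^{(2)}(z)|^2=\exp\!\left(-2\Re(zD_1)-\Re(z^2D_2)\right)|G_W(z)|^2 .
\]
Second, write $2\Re(zD_1)=zD_1+\overline{zD_1}=z\,a^Tw+\overline z\,a^T\overline w$, using that $a=\frac1n\mathbf 1$ is real. Setting $t=-za$ and $s=-\overline z a$ gives
\[
-2\Re(zD_1)=t^Tw+s^T\overline w ,
\]
and $\Re(z^2D_2)=\Re(z^2w^TBw)$. Hence the denominator in the definition of $\widehat{\E}$ is exactly $M(-za,-\overline z a)$. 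We need this quantity to be finite, which is a Gaussian integrability check. Since $B$ has operator norm $O(1/n)$, the quadratic term is dominated by $e^{-|w|^2/\nu}$ whenever $|z|^2\nu\|B\|<1$, for example when $|z|\ll\sqrt n$. I would either note this assumption or treat $M$ as possibly $+\infty$, in which case the identity holds trivially in $[0,\infty]$. With finiteness in hand,
\[
\E|X^{(2)}_W|^2=M(-za,-\overline z a)\,\widehat{\E}\,|G_W(z)|^2 .
\]

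Third, expand
\[
G_W(z)=\E_{\sigma}\prod_i\bigl(1+zw_{(i,\sigma(i))}\bigr),
\]
so that
\[
|G_W|^2=G_W\overline{G_W}=\E_{\sigma,\tau}\prod_i\bigl(1+zw_{e_i}\bigr)\bigl(1+\overline z\,\overline{w_{f_i}}\bigr),
\]
with $\sigma,\tau$ independent and uniform. Because $S_n$ is finite, this is a finite sum, and exchanging it with $\widehat{\E}$ only requires each summand to be $\widehat{\E}$-integrable. Each summand is a polynomial in $w$, and under the tilt (whenever $M<\infty$ in a neighborhood, as above) $w$ remains a Gaussian-type measure with all moments finite. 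This yields the stated formula.

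The only genuine obstacle is the integrability in the second step, meaning the requirement that the quadratic tilt $-\Re(z^2w^TBw)$ not destroy Gaussian decay. I would handle it by diagonalizing $B$ on the orthogonal decomposition $\mathcal H_0\oplus\mathcal H_r\oplus\mathcal H_c\oplus\mathcal H_m$, where it acts as the scalar $-1/n$ on $\mathcal H_\parallel$ and as $1/(n(n-1))$ on $\mathcal H_m$. Then $|\Re(z^2w^TBw)|\le|z|^2|w|^2/n$, and this is dominated by $|w|^2/\nu$ in the regime of interest.
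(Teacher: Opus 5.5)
Your proposal is correct and is essentially the paper's argument. The paper first expands $|G_W|^2$ into the average over $(\sigma,\tau)$ and then factors $M(-za,-\overline z a)$ out of each term $T_{\sigma,\tau}(z)$; you factor out $M$ first and then expand, which is the same computation in a different order. Your integrability remark is a worthwhile addition: finiteness requires $\alpha=\nu|z|^2/n<1$, which matches the condition $S\succ0$ the paper imposes in the next lemma. However, your fallback of letting $M=+\infty$ make the identity ``trivially'' true does not work, since $\widehat{\E}$ is then undefined, so you should simply state the assumption.
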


\begin{proof}
Note that 
\[
\mathbb{E}|X_{W}^{(2)}(z)|^{2}=\frac{1}{(n!)^{2}}\sum_{\sigma,\tau\in S_{n}}T_{\sigma,\tau}(z)=\mathbb{E}_{\sigma,\tau\sim\mathsf{Unif}(S_{n})}[T_{\sigma,\tau}(z)],
\]
where
\begin{align*}
T_{\sigma,\tau}(z) & :=\mathbb{E}\left[\exp\left(-2\operatorname{Re}(za^{T}w)-\operatorname{Re}(z^{2}w^{T}Bw)\right)\prod_{i=1}^{n}(1+zw_{e_{i}})(1+\overline{z}\overline{w_{f_{i}}})\right]\\
 & =M(-za,-\overline{z}a)\widehat{\mathbb{E}}\left[\prod_{i=1}^{n}(1+zw_{e_{i}})(1+\overline{z}\overline{w_{f_{i}}})\right].
\end{align*}
\end{proof}
\begin{lem}
Let $S:=I-\nu^{2}|z|^{4}B^{2}$, $K:=BS^{-1}$. If $S\succ0$, then
\[
M(t,s)=\det(S)^{-1/2}\exp\left(\nu t^{T}S^{-1}s-\frac{\nu^{2}}{2}\overline{z}^{2}t^{T}Kt-\frac{\nu^{2}}{2}z^{2}s^{T}Ks\right).
\]
\end{lem}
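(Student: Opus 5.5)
The plan is to reduce to a one-dimensional Gaussian integral by diagonalizing $B$, then do that scalar integral by completing the square in real coordinates.

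\textbf{Diagonalization.} The projections $P_0,P_r,P_c,P_m$ are orthogonal projections onto subspaces spanned by real vectors, so they are real symmetric matrices. Hence $B=-\frac1n P_\parallel+\frac{1}{n(n-1)}P_m$ is real symmetric, and we can write $B=O\Lambda O^T$ with $O$ real orthogonal and $\Lambda=\mathrm{diag}(b_k)$. Real orthogonal matrices are unitary and preserve the bilinear pairing $t^Tw$. So under $w\mapsto O^Tw$ the law $\CN(0,\nu I)$ is invariant, and $t^Tw$, $s^T\overline w$ and $w^TBw$ turn into the same expressions with $t,s,B$ replaced by $O^Tt$, $O^Ts$, $\Lambda$. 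The right-hand side of the claimed formula is invariant in the same way, because $S$ and $K$ are functions of $B$ and $\det S$ is unchanged. The entries of $O^Tw$ are i.i.d. $\CN(0,\nu)$, so $M(t,s)$ factorizes into a product of scalar factors. It therefore suffices to prove the scalar identity: for $w\sim\CN(0,\nu)$, $c=z^2b$ with $b$ real, and $1-\nu^2|c|^2>0$,
\[
\E\exp\!\bigl(tw+s\overline w-\Re(cw^2)\bigr)=(1-\nu^2|c|^2)^{-1/2}\exp\!\left(\frac{\nu ts-\frac{\nu^2}{2}\overline c\,t^2-\frac{\nu^2}{2}c\,s^2}{1-\nu^2|c|^2}\right).
\]
Granting this, multiplying over the eigenvalues reassembles the matrix formula. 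The eigenvalues of $S$ are $1-\nu^2|z|^4b_k^2$, and those of $K=BS^{-1}$ are $b_k/(1-\nu^2|z|^4b_k^2)$. Since $\overline c\,t^2=\overline z^{2}b\,t^2$, the scalar exponent matches the matrix exponent term by term.

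\textbf{Scalar computation.} Write $w=x+iy$ with $x,y\sim N(0,\nu/2)$ independent, and put $c=\alpha+i\beta$. Then $\Re(cw^2)=\alpha(x^2-y^2)-2\beta xy$. The full quadratic part of the exponent is $-\tfrac12 v^TQv$ for $v=(x,y)$, where
\[
Q=\tfrac{2}{\nu}I+2\begin{pmatrix}\alpha&-\beta\\-\beta&-\alpha\end{pmatrix}.
\]
The eigenvalues of $Q$ are $\frac{2}{\nu}(1\pm\nu|c|)$, so $Q\succ0$ exactly when $\nu|c|<1$, which is the hypothesis $S\succ0$. The determinant ratio is
\[
\det\bigl(\tfrac{\nu}{2}Q\bigr)=1-\nu^2(\alpha^2+\beta^2)=1-\nu^2|c|^2,
\]
which produces the factor $(1-\nu^2|c|^2)^{-1/2}$.

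\textbf{Linear term.} The linear part of the exponent is $L^Tv$ with $L=(t+s,\,i(t-s))$. Completing the square gives $\exp(\tfrac12L^TQ^{-1}L)$. For complex $t,s$ this is justified because both sides are entire in $(t,s)$ once $Q\succ0$, so we may prove it for real $L$ and extend by analytic continuation. The remaining work is to expand $\tfrac12L^TQ^{-1}L$ in $t,s$. Using
\[
Q^{-1}=\frac{\nu}{2(1-\nu^2|c|^2)}\Bigl(I-\nu\begin{pmatrix}\alpha&-\beta\\-\beta&-\alpha\end{pmatrix}\Bigr),
\]
one should recover $\nu ts$ and $-\tfrac{\nu^2}{2}(\overline c\,t^2+c\,s^2)$ in the numerator. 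As sanity checks, $c=0$ recovers $\E e^{tw+s\overline w}=e^{\nu ts}$, and the case $t=s=0$ recovers the determinant computed above.

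The main obstacle is this last step: the bookkeeping that converts the real quadratic form $\tfrac12L^TQ^{-1}L$ back into the $t,s,c,\overline c$ form with the correct conjugations. A cleaner alternative is to expand $\exp(-\Re(cw^2))$ in powers and evaluate the complex Wick moments $\E[w^a\overline w^{b}]$ directly. The diagonalization step and the convergence condition are routine.
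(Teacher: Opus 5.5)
Your proposal is correct and follows the same route as the paper: diagonalize $B$ by a real orthogonal matrix, factor $M(t,s)$ into independent scalar complex Gaussian integrals, and reassemble the exponents into $S^{-1}$ and $K=BS^{-1}$. The paper states the scalar formula without deriving it. The step you flag as the main obstacle closes immediately: with $A$ the trace-free $2\times 2$ block in $Q$, one has $A^2=|c|^2 I$, $L^TL=4ts$ and $L^TAL=2(\overline{c}\,t^2+c\,s^2)$, which gives exactly the claimed exponent.
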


\begin{proof}
Diagonalize $B=U\Lambda U^{T}$ with real orthogonal $U$. Set $g:=U^{T}w$,
$t':=U^{T}t$, $s':=U^{T}s$. Since $w\sim\mathcal{N}_{\mathbb{C}}(0,\nu I)$
and $U$ is orthogonal, $g_{k}$ are independent $\mathcal{N}_{\mathbb{C}}(0,\nu)$.
The integral factorizes into one-dimensional complex Gaussian integrals:
\[
M(t,s)=\prod_{k=1}^{N}I_{\lambda_{k}}(t_{k}',s_{k}')
\]
with $N = n^2$ and
\begin{align*}
I_{\lambda}(t,s) & :=\mathbb{E}\exp\left(tg+s\overline{g}-\frac{1}{2}z^{2}\lambda g^{2}-\frac{1}{2}\overline{z}^{2}\lambda\overline{g}^{2}\right)\\
 & =\frac{1}{\sqrt{1-\nu^{2}|z|^{4}\lambda^{2}}}\exp\left(\frac{\nu ts}{1-\nu^{2}|z|^{4}\lambda^{2}}-\frac{\nu^{2}}{2}\frac{\overline{z}^{2}\lambda t^{2}}{1-\nu^{2}|z|^{4}\lambda^{2}}-\frac{\nu^{2}}{2}\frac{z^{2}\lambda s^{2}}{1-\nu^{2}|z|^{4}\lambda^{2}}\right).
\end{align*}
Then
\[
\sum_{k=1}^{N}\frac{\nu t_{k}'s_{k}'}{1-\nu^{2}|z|^{4}\lambda_{k}^{2}}=\nu t^{T}U(I-\nu^{2}|z|^{4}\Lambda^{2})^{-1}U^{T}s=\nu t^{T}S^{-1}s,
\]
\[
\sum_{k=1}^{N}-\frac{\nu^{2}}{2}\frac{\overline{z}^{2}\lambda_{k}t_{k}^{2}}{1-\nu^{2}|z|^{4}\lambda_{k}^{2}}=-\frac{\nu^{2}}{2}\overline{z}^{2}t^{T}U\Lambda(I-\nu^{2}|z|^{4}\Lambda^{2})^{-1}U^{T}t=-\frac{\nu^{2}}{2}\overline{z}^{2}t^{T}Kt,
\]
\[
\sum_{k=1}^{N}-\frac{\nu^{2}}{2}\frac{z^{2}\lambda_{k}s_{k}^{2}}{1-\nu^{2}|z|^{4}\lambda_{k}^{2}}=-\frac{\nu^{2}}{2}z^{2}s^{T}Ks,
\]
so multiplying over $k$ yields the desired matrix form. 
\end{proof}
Let 
\[ \alpha:=\nu|z|^{2}/n \]
be a parameter which will ultimately be required to be smaller than some absolute constant. 
Since $S=(1-\alpha^{2})P_{\parallel}+\left(1-\frac{\alpha^{2}}{(n-1)^{2}}\right)P_{m}$,
we know that $S\succ0$ provided we require $\alpha<1$. 
\begin{lem}
Let 
\begin{equation}\label{eqn:beta-kappa}
\beta:=\alpha/(1-\alpha) \quad\text{and}\quad \kappa:=1-\beta=\frac{1-2\alpha}{1-\alpha}. 
\end{equation}
For $\alpha<1/2$, we have $M(-za,-\overline{z}a)=\det(S)^{-1/2}e^{n\beta}$,
\[
\widehat{\mathbb{E}}[w]=-\frac{\overline{z}\nu}{1-\alpha}a,\quad\Gamma:=\widehat{\mathbb{E}}[(w-\widehat{\mathbb{E}}w)(\overline{w-\widehat{\mathbb{E}}w})^{T}]=\nu S^{-1},\quad\Delta:=\widehat{\mathbb{E}}[(w-\widehat{\mathbb{E}}w)(w-\widehat{\mathbb{E}}w)^{T}]=-\nu^{2}\overline{z}^{2}K.
\]
Consequently, with $\eta:=w-\widehat{\mathbb{E}}w$, for every coordinate
$e$,
\[
1+zw_{e}=\kappa+z\eta_{e},\quad1+\overline{z}\overline{w_{e}}=\kappa+\overline{z}\overline{\eta_{e}}.
\]
\end{lem}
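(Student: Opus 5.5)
The plan is to read everything off the closed form for $M(t,s)$ in the preceding lemma. Two facts do most of the work: $a=\frac1n\mathbf 1$ lies in $\mathcal H_0$, and $B$, $S$, $K$ all act as scalars on $\mathcal H_0$. On $\mathcal H_0\subset\mathcal H_\parallel$ we have $B=-\frac1n$ and $S=1-\alpha^2$, because $\nu^2|z|^4/n^2=\alpha^2$. Hence $S^{-1}a=\frac{1}{1-\alpha^2}a$ and $Ka=-\frac{1}{n(1-\alpha^2)}a$. We also need $a^Ta=n^2\cdot\frac1{n^2}=1$.

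\textbf{Step 1 (normalizing constant).} Substitute $t=-za$, $s=-\overline z a$ into the formula for $M$. The cross term is $\nu t^TS^{-1}s=\nu|z|^2/(1-\alpha^2)=n\alpha/(1-\alpha^2)$. Each quadratic term equals $-\frac{\nu^2}{2}|z|^4\,a^TKa=\frac{n\alpha^2}{2(1-\alpha^2)}$, and there are two of them. Summing gives
\[
\frac{n\alpha(1+\alpha)}{1-\alpha^2}=\frac{n\alpha}{1-\alpha}=n\beta,
\]
which yields $M(-za,-\overline z a)=\det(S)^{-1/2}e^{n\beta}$.

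\textbf{Step 2 (tilted mean and covariances).} The tilted law is the Gaussian law multiplied by $e^{t^Tw+s^T\overline w-\operatorname{Re}(z^2w^TBw)}$ at $(t,s)=(-za,-\overline z a)$. So $\log M(t,s)$ is the cumulant generating function of the tilted law in the variables $(w,\overline w)$, recentred at this point.

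Differentiating once in $t$ gives
\[
\widehat{\mathbb E}[w]=\nu S^{-1}s-\nu^2\overline z^2Kt.
\]
At our point this equals
\[
-\frac{\nu\overline z}{1-\alpha^2}a-\frac{\nu\overline z\,\alpha}{1-\alpha^2}a=-\frac{\nu\overline z}{1-\alpha}a.
\]
The mixed second derivative $\partial_t\partial_s\log M=\nu S^{-1}$ gives $\Gamma$. The pure second derivative $\partial_t^2\log M=-\nu^2\overline z^2K$ gives $\Delta$.

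The only point needing care is legitimacy of differentiating in complex $(t,s)$. The integrand is a Gaussian with exponent whose real quadratic part is $-\frac1\nu|w|^2-\operatorname{Re}(z^2w^TBw)$. Since $S\succ0$, this quadratic form is negative definite, so $M$ is finite and entire in $(t,s)$ near the point. The closed form therefore agrees with the integral by analytic continuation, and differentiation under the integral is justified.

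\textbf{Step 3 (shifted factors).} For any coordinate $e$ we have $a_e=\frac1n$, so
\[
z\,\widehat{\mathbb E}[w_e]=-\frac{\nu|z|^2}{n(1-\alpha)}=-\beta.
\]
Hence $1+zw_e=1-\beta+z\eta_e=\kappa+z\eta_e$, and the conjugate identity follows in the same way.

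The hypothesis $\alpha<1/2$ enters in two places: it ensures $\alpha<1$, hence $S\succ0$, and it makes $\kappa>0$.

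The main obstacle is the bookkeeping in Step 2. One has to identify which derivative of $\log M$ corresponds to which moment, and conjugation conventions ($t$ pairs with $w$, $s$ with $\overline w$) are easy to get wrong. As a cross-check I would complete the square directly in the real $2n^2$-dimensional representation restricted to $\mathcal H_0$, where the tilt acts nontrivially only on the mean.
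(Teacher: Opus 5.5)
Your proposal is correct and matches the paper's proof. You substitute $(t,s)=(-za,-\overline{z}a)$ into the closed form for $M$, use that $a\in\mathcal{H}_0$ is an eigenvector of $S^{-1}$ and $K$ with $a^Ta=1$, and read off $\widehat{\mathbb{E}}[w]$, $\Gamma$ and $\Delta$ as first and second derivatives of $\log M$ at that point; your explicit derivative computations and the remark that $S\succ 0$ makes $M$ finite and holomorphic in $(t,s)$ only fill in details the paper leaves implicit.
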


\begin{proof}
We apply the previous lemma with $(t,s)=(-za,-\overline{z}a)$. Note
that
\[
S^{-1}=\frac{1}{1-\alpha^{2}}P_{\parallel}+\frac{1}{1-\alpha^{2}/(n-1)^{2}}P_{m}
\]
from which it follows that $S^{-1}a=(1-\alpha^{2})^{-1}a$. Also,
$Ba=-\frac{1}{n}a$ and $a^{T}a=1$, so we get
\[
M(-za,-\overline{z}a)=\det(S)^{-1/2}\exp\left(\frac{\nu|z|^{2}}{1-\alpha^{2}}+\frac{\nu^{2}|z|^{4}}{n(1-\alpha^{2})}\right)=\det(S)^{-1/2}e^{n\beta}.
\]
The mean and covariances are the first/second derivatives of $\log M(t-za,s-\overline{z}a)$
at $(0,0)$. The identity for $\kappa$ follows from $z\widehat{\mathbb{E}}[w_{e}]=-\nu|z|^{2}/(n(1-\alpha))=-\beta$.
\end{proof}
Fix $(\sigma,\tau)$ and set 
\[
X_{i}:=\eta_{e_{i}}\quad\text{and}\quad Y_{j}:=\overline{\eta_{f_{j}}}.
\]
Then under the tilted law, $(X_{1},\ldots,X_{n},Y_{1},\ldots,Y_{n})$
is centered Gaussian with covariances
\[
\widehat{\mathbb{E}}[X_{i}Y_{j}]=\Gamma_{e_{i},f_{j}},\quad\widehat{\mathbb{E}}[X_{i}X_{i'}]=\Delta_{e_{i},e_{i'}}\quad\widehat{\mathbb{E}}[Y_{j}Y_{j'}]=\overline{\Delta_{f_{j},f_{j'}}}.
\]
From Wick's formula, we get the following. 
\begin{lem} \label{lem:wick} 
\[
\widehat{\mathbb{E}}\left[\prod_{i=1}^{n}(\kappa+zX_{i})\prod_{j=1}^{n}(\kappa+\overline{z}Y_{j})\right]=\kappa^{2n}\sum_{M\in\mathcal{M}}\prod_{e\in M}w_{e},
\]
where $\mathcal{M}$ is the set of partial matchings on $\{1_{t},\ldots,n_{t},1_{s},\ldots,n_{s}\}$,
and edge weights are
\[
w_{(i_{t},j_{s})}=\frac{|z|^{2}\Gamma_{e_{i},f_{j}}}{\kappa^{2}},\quad w_{(i_{t},i'_{t})}=\frac{z^{2}\Delta_{e_{i},e_{i'}}}{\kappa^{2}},\quad w_{(j_{s},j'_{s})}=\frac{\overline{z}^{2}\overline{\Delta_{f_{j},f_{j'}}}}{\kappa^{2}}.
\]
Here, $i_{t}$ corresponds to the $i$th term in $\prod_{i=1}^{n}(\kappa+zX_{i})$
while $j_{s}$ corresponds to the $j$th term in $\prod_{j=1}^{n}(\kappa+\overline{z}Y_{j})$. 
\end{lem}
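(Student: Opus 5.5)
The plan is to expand both products multilinearly and then apply Wick's formula (the real/general Gaussian version over pairings) to each resulting monomial. First, distributing the products, I write
\[
\prod_{i=1}^{n}(\kappa+zX_{i})\prod_{j=1}^{n}(\kappa+\overline{z}Y_{j})=\sum_{A\subseteq[n],\,C\subseteq[n]}\kappa^{2n-|A|-|C|}z^{|A|}\overline{z}^{|C|}\prod_{i\in A}X_{i}\prod_{j\in C}Y_{j}.
\]
Next I take the tilted expectation term by term. Under $\widehat{\mathbb{E}}$ the vector $(X_1,\ldots,X_n,Y_1,\ldots,Y_n)$ is a centered (jointly) Gaussian family, though not circularly symmetric, since $\Delta\neq0$. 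I will view it as a real Gaussian vector in the real and imaginary parts. By multilinearity, Wick's pairing formula extends to complex linear combinations of a real centered Gaussian process, giving
\[
\widehat{\mathbb{E}}\Big[\prod_{u\in A_t\cup C_s}V_u\Big]=\sum_{P\text{ perfect matching of }A_t\cup C_s}\ \prod_{\{u,u'\}\in P}\widehat{\mathbb{E}}[V_uV_{u'}].
\]
Here $A_t=\{i_t:i\in A\}$, $C_s=\{j_s:j\in C\}$, and $V_{i_t}=X_i$, $V_{j_s}=Y_j$. The pair covariances are exactly $\Gamma_{e_i,f_j}$ for a $t$–$s$ pair, $\Delta_{e_i,e_{i'}}$ for a $t$–$t$ pair, and $\overline{\Delta_{f_j,f_{j'}}}$ for an $s$–$s$ pair, by the covariance table preceding the lemma. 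Odd-size sets give zero automatically.

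Then I reindex. The pairs $(A,C,P)$ with $P$ a perfect matching of $A_t\cup C_s$ are in bijection with partial matchings $M\in\mathcal{M}$ of $\{1_t,\ldots,n_t,1_s,\ldots,n_s\}$: from $M$ one recovers $A,C$ as the covered vertices. For a given $M$, the prefactor is $\kappa^{2n-2|M|}z^{|A|}\overline{z}^{|C|}$.

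Finally I distribute the powers of $z$, $\overline z$ and $\kappa^{-2}$ over the edges. Each edge covers two vertices and consumes one factor $\kappa^{-2}$. A $t$–$s$ edge carries $z\overline{z}=|z|^2$, a $t$–$t$ edge carries $z^2$, and an $s$–$s$ edge carries $\overline z^2$. This yields $\kappa^{2n}\prod_{e\in M}w_e$ with the stated weights.

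The only real point needing care is justifying Wick's formula in this non-circular complex setting. The complex Wick formula stated in the preliminaries does not apply, so I will reduce to the real version by writing each $V_u$ as a linear combination of real jointly Gaussian coordinates and using multilinearity of both sides. Joint Gaussianity under the tilt follows from the tilted density being an exponential of a quadratic in $w$, with $S\succ0$ ensuring it is a genuine Gaussian; this is the setting of the preceding lemma.
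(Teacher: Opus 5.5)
Your proposal is correct and follows essentially the same route as the paper. Both expand the product over subsets of $\{1_t,\ldots,n_t,1_s,\ldots,n_s\}$, apply the pairing form of Wick's formula to the centered jointly Gaussian family, and regroup (subset, pairing) pairs as partial matchings in which each edge $\{u,v\}$ carries $\xi_u\xi_v\widehat{\mathbb{E}}[V_uV_v]$ with $\xi=z/\kappa$ or $\overline{z}/\kappa$. Your explicit reduction of Wick's formula to the real case, for this non-circular complex family via real and imaginary parts and multilinearity, is a detail the paper leaves implicit.
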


\begin{proof}
Let
\[
V_{i_{t}}:=X_{i},\quad V_{j_{s}}:=Y_{j},\quad\xi_{i_{t}}:=\frac{z}{\kappa},\quad\xi_{j_{s}}:=\frac{\overline{z}}{\kappa}.
\]
Then
\[
\prod_{i=1}^{n}(\kappa+zX_{i})\prod_{j=1}^{n}(\kappa+\overline{z}Y_{j})=\kappa^{2n}\prod_{u\in U}(1+\xi_{u}V_{u}),
\]
where $U=\{1_{t},\ldots,n_{t},1_{s},\ldots,n_{s}\}$. Expanding, 
\[
\widehat{\mathbb{E}}\left[\prod_{u\in U}(1+\xi_{u}V_{u})\right]=\sum_{A\subset U}\left(\prod_{u\in A}\xi_{u}\right)\widehat{\mathbb{E}}\left[\prod_{u\in A}V_{u}\right].
\]
Since $V$ is centered jointly Gaussian, Wick's formula gives that the expectation
vanishes for $|A|$ odd, and for $|A|=2m$,
\[
\widehat{\mathbb{E}}\left[\prod_{u\in A}V_{u}\right]=\sum_{P\in\mathcal{P}(A)}\prod_{\{u,v\}\in P}\widehat{\mathbb{E}}[V_{u}V_{v}],
\]
with $\mathcal{P}(A)$ the matchings/pairings of $A$. Summing over
all $A$ and pairings is equivalent to summing over all partial matchings
$M$ on $U$; each unmatched vertex contributes 1, and each matched
edge contributes
\[
\xi_{u}\xi_{v}\widehat{\mathbb{E}}[V_{u}V_{v}].
\]
For $(i_{t},j_{s})$ this is $\frac{|z|^{2}}{\kappa^{2}}\widehat{\mathbb{E}}[X_{i}Y_{j}]=\frac{|z|^{2}\Gamma_{e_{i},f_{j}}}{\kappa^{2}}$;
similarly for $(i_{t},i'_{t})$ and $(j_{s},j_{s}')$ one gets the
stated $tt$ and $ss$ weights. Hence
\[
\widehat{\mathbb{E}}\left[\prod_{i=1}^{n}(\kappa+zX_{i})\prod_{j=1}^{n}(\kappa+\overline{z}Y_{j})\right]=\kappa^{2n}\sum_{M\in\mathcal{M}}\prod_{e\in M}w_{e},
\]
\end{proof}
\begin{rem}[Edge weights are real]
Since \(\Gamma=\nu S^{-1}\) and \(\Delta=-\nu^{2}\overline z^{\,2}K\), with
\(S^{-1}\) and \(K\) real symmetric matrices, all of these edge weights are
in fact real. More explicitly,
\[
w_{(i_t,j_s)}=\frac{\nu |z|^{2}}{\kappa^{2}}(S^{-1})_{e_i,f_j}\in\mathbb{R},
\qquad
w_{(i_t,i'_t)}=w_{(j_s,j'_s)}
=-\frac{\nu^{2}|z|^{4}}{\kappa^{2}}K_{e_i,e_{i'}}\in\mathbb{R}.
\]
\end{rem}
\subsubsection{Variance asymptotics}
From the above application of Wick's formula, we have reduced the problem of computing the second moment
to combinatorial sums over matchings. These sums are suitable for analysis via the cluster expansion and KP condition.
\begin{lem}[cluster expansion for matching partition function] \label{lem:cluster}
Let $G=(V,E)$ be a finite simple graph with complex activities $w=(w_{e})_{e\in E}$
and matching partition function
\[
Z_{G}(w):=\sum_{M\text{ is a matching in }G}\prod_{e\in M}w_{e}.
\]
Define
\[
\delta:=\max_{v\in V}\sum_{e\ni v}|w_{e}|.
\]
If $\delta\leq1/(4e)$, then $Z_{G}(w)\neq0$ and the branch of $\log Z_{G}$
with $\log Z_{G}(0)=0$ satisfies
\[
\left|\log Z_{G}(w)-\sum_{e\in E}w_{e}\right|\leq\frac{|V|}{4}\cdot\frac{(2e\delta)^{2}}{1-2e\delta}\leq2e^{2}|V|\delta^{2}.
\]
\end{lem}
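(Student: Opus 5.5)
The plan is to view the matching partition function as a hardcore model on the line graph $L(G)$, whose vertices are the edges $e\in E$ with fugacities $\lambda_e=w_e$. Two edges are adjacent in $L(G)$ when they share an endpoint, so $Z_G(w)=Z_{L(G)}(w)$. Then apply the Koteck\'y--Preiss criterion with weights $a_e:=c|w_e|$ for a constant $c$ to be chosen.

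\textbf{Checking the KP condition.} For an edge $e=\{u,v\}$, the closed neighbourhood of $e$ in $L(G)$ is contained in the set of edges incident to $u$ together with those incident to $v$. Hence
\[
\sum_{f\in\{e\}\cup N(e)}|w_f|e^{a_f}\le 2\delta\, e^{c\delta}.
\]
This naive choice of $a_e$ does not close: the right side must be at most $a_e=c|w_e|$, which fails when $|w_e|$ is tiny. So I will use vertex-based weights instead, taking $a_e:=c\,(\text{something depending on }\delta)$. One option is the uniform choice $a_e=a$ for all $e$, which requires $2\delta e^{a}\le a$; taking $a=1$ needs $\delta\le 1/(2e)$, which is fine. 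However, a uniform $a$ gives $\sum_e a_e=|E|a$, which is not $O(|V|\delta^2)$.

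To get the $|V|$-scaling I will instead bound the higher-order part directly. The plan is:
\begin{enumerate}
\item Use KP with $a_e=1$ (valid for $\delta\le 1/(2e)$) only to get absolute convergence, non-vanishing, and analyticity in $w$ on the polydisc where $\delta\le 1/(4e)$.
\item Bound the degree $\ge 2$ part of the series by a scaling argument.
\end{enumerate}

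\textbf{The scaling argument.} Consider $F(t):=\log Z_G(tw)$ for complex $t$. The condition $\delta(tw)=|t|\delta\le 1/(2e)$ holds for $|t|\le R:=1/(2e\delta)\ge 2$. On that disc $F$ is analytic, with $F(0)=0$ and $F'(0)=\sum_e w_e$ (the cluster expansion's linear term). I want $|F(t)|\le M$ for $|t|\le R$. Cauchy estimates on the Taylor coefficients then give
\[
|F(1)-F'(0)|\le\sum_{k\ge2}M R^{-k}=\frac{M R^{-2}}{1-1/R},
\]
and with $1/R=2e\delta$ this is $M(2e\delta)^2/(1-2e\delta)$. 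So I need $M\le |V|/4$.

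\textbf{Main obstacle: the bound $M\le|V|/4$.} A bound on $|\log Z|$ at the boundary $|t|\delta=1/(2e)$ of order $|V|$ rather than $|E|$ is exactly what is required. For this I would use KP with weights tied to vertices: $a_e:=\tfrac12\big(\rho_u+\rho_v\big)$, or simply $a_e:=|w_e|\,e^{?}$ scaled appropriately. The cleanest option is $a_e:=\mu|w_e|\cdot(\text{const})$, using that in this case the KP sum is bounded as $\sum_{f\ni u}|w_f|e^{a_f}+\sum_{f\ni v}|w_f|e^{a_f}\le 2\delta e^{\max a}$.

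Since $2\delta e^{\max a}$ is not proportional to $|w_e|$, I would switch to the refined, edge-weighted KP form: $\sum_{f\sim e}|\lambda_f|e^{a_f+d_f}\le a_e$, giving $\sum_{\mathbf m\ni e}\cdots\le a_e$ with the activity of $e$ factored out. In this form the per-edge bound on the clusters containing $e$ is $|w_e|\,e^{a}$ rather than $a$. Summing over edges then gives $\sum_e |w_e|e^{a}\le e^{a}\tfrac12\sum_v\sum_{e\ni v}|w_e|\le e^{a}|V|\delta/2$. At the boundary $\delta|t|=1/(2e)$ with $a=1$, this yields $M\le e\cdot|V|/(4e)=|V|/4$, which is exactly the stated constant.

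Finally, the last inequality in the statement follows from $\delta\le1/(4e)$, which gives $1-2e\delta\ge1/2$, so the bound is at most $\tfrac{|V|}{4}\cdot 2\cdot 4e^2\delta^2=2e^2|V|\delta^2$.
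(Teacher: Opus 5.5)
Your argument is correct, but it reaches the $|V|$-scaling by a different route from the paper.

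\textbf{The paper's route.} The paper also uses KP on $L(G)$ only to get zero-freeness of $Z(t)=Z_G(tw)$ for $|t|\le 1/(2e\delta)$. It then uses that $Z(t)$ is a polynomial of degree at most the matching number, hence at most $|V|/2$. It factors $Z(t)=\prod_{j\le N}(1-t/r_j)$ with $|r_j|\ge 1/(2e\delta)$ and uses $\sum_e w_e=-\sum_j r_j^{-1}$. Finally it bounds each $|\log(1-1/r_j)+1/r_j|$ by $\eta^2/(2(1-\eta))$, where $\eta=2e\delta$.

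\textbf{Your route.} You invoke the strong, per-polymer form of the Koteck\'y--Preiss theorem: the sum of $|\text{cluster terms}|$ over clusters containing $e$ is at most $|\lambda_e|e^{a_e}$ (e.g.\ Friedli--Velenik, Theorem 5.4). With $a\equiv 1$ this gives $\sup_{|t|\le R}|\log Z_G(tw)|\le eR\sum_e|w_e|\le eR|V|\delta/2=|V|/4$ for $R=1/(2e\delta)$. Cauchy estimates on that disc then give exactly the same constant as the paper.

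\textbf{What each buys.}
\begin{itemize}
\item The paper needs only the zero-free conclusion of KP. That is the version stated in its preliminaries, whose conclusion $\le a_v$ would, as you noticed, only give $|E|$-scaling. It pays for this with a degree bound that is special to the matching polynomial.
\item Your route needs the stronger KP conclusion and does not use any polynomial or degree structure. It also naturally gives the bound $2e^2\delta\sum_e|w_e|/(1-2e\delta)$ in terms of the total activity. Since $\sum_e|w_e|\le|V|\delta/2$, this implies the stated one.
\end{itemize}

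\textbf{Presentation.} In a write-up, state the strong KP theorem you use precisely, since it is not the one quoted in the paper. Remove the abandoned attempts and the undefined $d_f$ in ``$e^{a_f+d_f}$''. The only condition you actually need is $\sum_{f\in N_{L(G)}[e]}|tw_f|\,e\le 1$, which holds because this sum is at most $2|t|\delta$.
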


\begin{proof}
A matching in $G$ is exactly an independent set in the line graph
$L(G)$ (vertices of $L(G)$ are edges of $G$, and adjacency means
sharing an endpoint). In particular, we can rewrite $Z_{G}$ as a
multivariate independence polynomial
\[
Z_{G}(w)=Z_{L(G)}(\lambda),\quad\lambda_{e}:=w_{e},
\]
where $Z_{L(G)}$ is the multivariate independence polynomial. A standard
consequence of the KP condition (explicitly, Theorem 3.3 in \cite{perkins2023five}) gives that if for some $b\geq0$ and every vertex $v$ we have
\[
\sum_{u\in N_{L(G)}(v)\cup\{v\}}|\lambda_{u}|e^{1+b}\leq1,
\]
then the cluster expansion for $\log Z_{L(G)}(\lambda)$ converges
absolutely and, in particular, $Z_{L(G)}(\lambda)\neq0$. For an edge
$e=xy\in E(G)$ (a vertex in $L(G)$), its closed neighborhood in
$L(G)$ corresponds to edges of $G$ incident to $x$ or $y$. Hence
\[
\sum_{u\in N_{L(G)}(e)\cup\{e\}}|\lambda_{u}|\leq\sum_{f\ni x}|w_{f}|+\sum_{f\ni y}|w_{f}|\leq2\delta.
\]
So the KP condition holds (with $b=0$) whenever $2\delta e\leq1$,
i.e. $\delta\leq1/(2e)$. When this holds, $Z_{G}(w)\neq0$. Define
$Z(t):=Z_{G}(tw)$. By the KP condition (with $b=0$) applied to $tw$,
we have $Z(t)\neq0$ for all $|t|\leq1/(2e\delta)$. $Z(t)$ is a
polynomial of degree at most the maximum matching size, so $\deg Z(t)\leq|V|/2$.
Factor
\[
Z(t)=\prod_{j=1}^{N}(1-t/r_{j})
\]
with $N\leq|V|/2$. Zero-freeness on $|t|\leq1/(2e\delta)$ implies
$|r_{j}|\geq1/(2e\delta)$, i.e. $|1/r_{j}|\leq2e\delta=:\eta$. Then
\[
\log Z(1)-\sum_{e\in E}w_{e}=\sum_{j=1}^{N}\left(\log\left(1-\frac{1}{r_{j}}\right)+\frac{1}{r_{j}}\right)=-\sum_{j=1}^{N}\sum_{k\geq2}\frac{1}{kr_{j}^{k}}.
\]
Taking absolute values and using $\sum_{k\geq2}\eta^{k}/k\leq\eta^{2}/(2(1-\eta))$
gives
\[
\left|\log Z_{G}(w)-\sum_{e\in E}w_{e}\right|\leq N\cdot\frac{\eta^{2}}{2(1-\eta)}\leq\frac{|V|}{2}\cdot\frac{(2e\delta)^{2}}{2(1-2e\delta)}=\frac{|V|}{4}\cdot\frac{(2e\delta)^{2}}{1-2e\delta}.
\]
Under the assumption that $\delta\leq1/(4e)$, this is bounded above
by $2e^{2}|V|\delta^{2}$. 
\end{proof}
Write $F(\sigma,\tau):=\{i\in[n]:\sigma(i)=\tau(i)\}$, and let $m:=|F(\sigma,\tau)|$.
For $i\in F(\sigma,\tau)$, $e_{i}=f_{i}$, and the diagonal $ts$-edge
weight is
\[
w_{\text{diag}}:=\frac{|z|^{2}\Gamma_{e,e}}{\kappa^{2}}.
\]
Note that this is independent of $e$, and positive provided that $\alpha < 1$, because $\Gamma_{e,e} = \nu (S^{-1})_{ee}$ and $S$ is positive definite. 
\begin{lem}
\label{lem:diagonal-decomposition} Fix $(\sigma,\tau)$ and define
\[
Z_{\sigma,\tau}:=\sum_{M\in\mathcal{M}}\prod_{e\in M}w_{e}.
\]
Then
\[
Z_{\sigma,\tau}=\sum_{S\subset F(\sigma,\tau)}w_{\mathrm{diag}}^{|S|}Z_{\sigma,\tau}^{(S)}
\]
where $Z_{\sigma,\tau}^{(S)}$ is the matching partition function
on the reduced graph obtained by starting from the complete graph
on $\{1_{t},\ldots,n_{t},1_{s},\ldots,n_{s}\}$ then deleting $\{i_{t},i_{s}:i\in S\}$
and deleting the remaining edges $\{(i_{t},i_{s}):i\in F(\sigma,\tau)\backslash S\}$. 
\end{lem}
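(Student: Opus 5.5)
This is a purely combinatorial identity about matchings, and the plan is to partition the matchings in $\mathcal{M}$ according to which diagonal edges they use. By a \emph{diagonal edge} we mean an edge $(i_t,i_s)$ with $i\in F(\sigma,\tau)$. For such $i$ we have $e_i=f_i$, so the weight of this edge is $|z|^2\Gamma_{e_i,e_i}/\kappa^2 = w_{\mathrm{diag}}$, the same for every $i\in F(\sigma,\tau)$. Note that $\Gamma_{e,e}=\nu (S^{-1})_{ee}$ does not depend on $e$: for the stated formula for $S^{-1}$, each diagonal entry of $P_\parallel$ and of $P_m$ is the same constant by the row/column permutation symmetry. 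Recall also that $\mathcal{M}$ is the set of all partial matchings of the complete graph on $U=\{1_t,\dots,n_t,1_s,\dots,n_s\}$, with the weights of Lemma~\ref{lem:wick}.

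Given $M\in\mathcal{M}$, let
\[ S(M):=\{i\in F(\sigma,\tau):(i_t,i_s)\in M\}. \]
Then $M$ splits uniquely as $M=M_{\mathrm{diag}}\sqcup M'$, where $M_{\mathrm{diag}}=\{(i_t,i_s):i\in S(M)\}$. Since $M$ is a matching, no edge of $M'$ touches a vertex $i_t$ or $i_s$ with $i\in S(M)$. By definition of $S(M)$, no edge of $M'$ is a diagonal edge $(i_t,i_s)$ with $i\in F(\sigma,\tau)\setminus S(M)$. Hence $M'$ is a matching in the reduced graph $G^{(S)}$ of the statement, with $S=S(M)$.

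Conversely, fix $S\subset F(\sigma,\tau)$ and any matching $M'$ of $G^{(S)}$. Then $M'\cup\{(i_t,i_s):i\in S\}$ is a matching of the complete graph, because $M'$ avoids all vertices indexed by $S$. Moreover, its associated set is exactly $S$: the only diagonal edges it contains are those indexed by $S$, since $G^{(S)}$ has had the other diagonal edges deleted. These two maps are mutually inverse, so they give a bijection
\[ \mathcal{M}\;\longleftrightarrow\;\bigsqcup_{S\subset F(\sigma,\tau)}\{\text{matchings of }G^{(S)}\}. \]

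Under this bijection, multiplicativity of the weights gives
\[ \prod_{e\in M}w_e = w_{\mathrm{diag}}^{|S|}\prod_{e\in M'}w_e . \]
Every edge of $G^{(S)}$ keeps its original weight, so summing over $M'$ gives $Z^{(S)}_{\sigma,\tau}$, and summing over $S$ gives the claim. There is no real obstacle here. The only point needing care is checking that the decomposition is unique, which is exactly why the unused diagonal edges must be deleted from $G^{(S)}$ rather than merely left available.
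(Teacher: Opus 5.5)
Your proposal is correct and follows the paper's own argument: partition the matchings by the set $S(M)$ of diagonal edges $(i_t,i_s)$, $i\in F(\sigma,\tau)$, that they use, and factor out $w_{\mathrm{diag}}^{|S|}$. You also spell out the inverse map and explain why the unused diagonal edges have to be deleted for the decomposition to be unique, details the paper's proof leaves implicit.
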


\begin{proof}
The diagonal edges $(i_{t},i_{s})$ for $i\in F(\sigma,\tau)$ are
pairwise disjoint. Partition matchings by
\[
S(M):=\{i\in F(\sigma,\tau):(i_{t},i_{s})\in M\}.
\]
If $S(M)=S$, then the matching contributes $w_{\mathrm{diag}}^{|S|}$
from these chosen edges, and the remaining edges form a matching on
the stated reduced graph. Summing over all such matchings gives $Z_{\sigma,\tau}^{(S)}$. 
\end{proof}
\begin{lem}
[Reduced-graph bound] \label{lem:reduced-graph-local}
Assume $\alpha\leq1/4$ and write
\[
\lambda:=\frac{1}{1-\alpha^{2}},\quad\mu:=\frac{1}{1-\alpha^{2}/(n-1)^{2}},\quad K:=BS^{-1}.
\]
Define
\[
u:=\frac{\nu^{2}|z|^{4}}{\kappa^{2}}\left(\frac{\lambda}{n^{3}}+\frac{\mu}{n^{3}(n-1)}\right).
\]
Then the total sum of the weights of all internal $tt$ edges and
internal $ss$ edges in the unreduced graph is given by
\[
\Sigma_{tt+ss}(n)=-n(n-1)u=-\frac{\alpha^{2}}{\kappa^{2}}\left((n-1)\lambda+\mu\right).
\]
Moreover, for every $(\sigma,\tau)$ and $S\subset F(\sigma,\tau)$,
\[
\delta_{(S)}:=\max_{v\in V^{(S)}}\sum_{e\in E^{(S)}:e\ni v}|w_{e}|\leq C\alpha^{2},
\]
\[
\sum_{e\in E^{(S)}}w_{e}\leq\Sigma_{tt+ss}(n)+C\alpha^{2}|S|+C\alpha^{3}n,
\]
where $(V^{(S)},E^{(S)})$ is the reduced graph from the previous
lemma. 
\end{lem}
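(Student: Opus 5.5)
The proof is a direct computation: write the entries of $S^{-1}$ and $K=BS^{-1}$ in closed form in the edge coordinates $e=(i,j)$, then count edges. Since $B$ and $S$ are diagonal in the decomposition $P_\parallel+P_m$,
\[
S^{-1}=\lambda P_\parallel+\mu P_m=\mu I+(\lambda-\mu)P_\parallel,
\qquad
K=-\frac{\lambda}{n}P_\parallel+\frac{\mu}{n(n-1)}P_m .
\]
The projections have explicit entries. Indeed, $P_0+P_r$ averages over columns within a row and $P_0+P_c$ averages over rows within a column. This gives
\[
(P_\parallel)_{(i,j),(k,l)}=\frac{\delta_{ik}}{n}+\frac{\delta_{jl}}{n}-\frac{1}{n^2},
\qquad
(P_m)_{(i,j),(k,l)}=\delta_{ik}\delta_{jl}-(P_\parallel)_{(i,j),(k,l)}.
\]
Everything below is read off from these formulas, using $\nu|z|^2=\alpha n$ and $\nu^2|z|^4=\alpha^2n^2$.

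\emph{Step 1: the $tt$/$ss$ weights and $\Sigma_{tt+ss}$.} For $i\ne i'$, the edges $e_i=(i,\sigma(i))$ and $e_{i'}=(i',\sigma(i'))$ share neither a row nor a column, because $\sigma$ is a permutation. Hence $(P_\parallel)_{e_i,e_{i'}}=-1/n^2$ and $(P_m)_{e_i,e_{i'}}=1/n^2$, so
\[
K_{e_i,e_{i'}}=\frac{\lambda}{n^3}+\frac{\mu}{n^3(n-1)} .
\]
By the Remark on real weights, each internal $tt$ edge therefore has weight exactly $-u$, independently of $\sigma$. The same holds for $ss$ edges, independently of $\tau$. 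There are $\binom n2$ edges of each type, so $\Sigma_{tt+ss}(n)=-n(n-1)u$. Substituting $\nu^2|z|^4=\alpha^2n^2$ gives the second expression in the statement. We also record $u\le C\alpha^2/n^2$, using $\alpha\le 1/4$ so that $\kappa\ge 2/3$ and $\lambda,\mu\le 2$.

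\emph{Step 2: the $ts$ weights.} For $e_i\neq f_j$, the identity term $\mu I$ drops out. This is the key point, since the large diagonal weight $w_{\mathrm{diag}}\approx\alpha n$ only appears on edges with $i\in F(\sigma,\tau)$, and those are exactly the edges handled separately in Lemma~\ref{lem:diagonal-decomposition}. What remains is
\[
w_{(i_t,j_s)}=\frac{\alpha n}{\kappa^2}(\lambda-\mu)(P_\parallel)_{e_i,f_j}.
\]
Here $\lambda-\mu=O(\alpha^2)$, and $(P_\parallel)_{e_i,f_j}$ is $O(1/n)$ if $e_i,f_j$ share a row or column and $O(1/n^2)$ otherwise. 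For fixed $i$, at most two $j$'s fall into the first case: $j=i$ shares the row, and $j=\tau^{-1}(\sigma(i))$ shares the column. So the $ts$ weights at any vertex sum to at most $2\cdot O(\alpha^3)+n\cdot O(\alpha^3/n)=O(\alpha^3)$.

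\emph{Step 3: the two claimed bounds.} For the degree bound, each vertex carries at most $n-1$ internal edges of weight $u$, contributing $O(\alpha^2)$, plus $O(\alpha^3)$ from $ts$ edges. This gives $\delta_{(S)}\le C\alpha^2$, and deleting vertices or edges only decreases it.

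For the total-weight bound, split $\sum_{e\in E^{(S)}}w_e$ into internal edges and off-diagonal $ts$ edges.
\begin{itemize}
\item The internal edges of the reduced graph are those of the unreduced graph minus the at most $2|S|(n-1)$ edges touching the deleted vertices $\{i_t,i_s: i\in S\}$. Each removed edge has weight $-u<0$, so the internal sum is at most $\Sigma_{tt+ss}(n)+2|S|nu\le\Sigma_{tt+ss}(n)+C\alpha^2|S|$.
\item The off-diagonal $ts$ edges contribute in absolute value at most $n\cdot O(\alpha^3)=C\alpha^3 n$, by summing the per-vertex bound of Step~2 over the $n$ vertices $i_t$.
\end{itemize}

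The only delicate point is Step 2. One must see that off the diagonal $S^{-1}$ differs from a multiple of the identity only by $(\lambda-\mu)P_\parallel$, which has size $O(\alpha^2)$. One must also count correctly that each $e_i$ meets only $O(1)$ of the $f_j$ in a row or column. Everything else is bookkeeping with the explicit projection entries.
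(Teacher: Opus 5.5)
Your proposal is correct and follows the paper's proof essentially step for step: the explicit entries of $P_\parallel$ and $K$ give the constant $tt$/$ss$ weight $-u$; the identity $S^{-1}=\mu I+(\lambda-\mu)P_\parallel$ away from the diagonal, together with at most two row- or column-sharing $f_j$ per $e_i$, controls the $ts$ edges; and your count of at most $2|S|(n-1)$ deleted negative-weight edges is the paper's $us(2n-s-1)$ computation. One slip: because $\nu^2|z|^4=\alpha^2n^2$, you have $u=\frac{\alpha^2}{\kappa^2 n}\bigl(\lambda+\frac{\mu}{n-1}\bigr)$, so the right bound is $u\le C\alpha^2/n$ rather than $C\alpha^2/n^2$ (the latter would contradict $\Sigma_{tt+ss}(n)\approx-\alpha^2 n$); this is exactly the bound your later estimates $(n-1)u=O(\alpha^2)$ and $2|S|nu\le C\alpha^2|S|$ use, so nothing else breaks.
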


\begin{proof}
\textbf{Step 1: exact formula for $\Sigma_{tt+ss}(n)$.} For $i\neq j$,
permutation edges $e_{i},e_{j}$ have distinct row and distinct column,
hence
\[
(P_{\parallel})_{e_{i},e_{j}}=-\frac{1}{n^{2}},\quad(P_{m})_{e_{i},e_{j}}=+\frac{1}{n^{2}}.
\]
Since
\[
K=\Big(-\frac{\lambda}{n}\Big)P_{\parallel}+\Big(\frac{\mu}{n(n-1)}\Big)P_{m},
\]
we get
\[
K_{e_{i},e_{j}}=\frac{\lambda}{n^{3}}+\frac{\mu}{n^{3}(n-1)}.
\]
Substituting this into $w_{(i_{t},j_{t})}=\frac{z^{2}\Delta_{e_{i},e_{j}}}{\kappa^{2}}$
shows that every off-diagonal $tt$ edge has the same weight $-u$.
Similarly, every off-diagonal $ss$ edge has weight $-u$. Therefore
\[
\Sigma_{tt+ss}(n)=\binom{n}{2}(-u)+\binom{n}{2}(-u)=-n(n-1)u.
\]
Now use $\alpha=\nu|z|^{2}/n$:
\[
n(n-1)u=n(n-1)\frac{\nu^{2}|z|^{4}}{\kappa^{2}}\left(\frac{\lambda}{n^{3}}+\frac{\mu}{n^{3}(n-1)}\right)=\frac{\alpha^{2}}{\kappa^{2}}\big((n-1)\lambda+\mu\big).
\]
Hence
\[
\Sigma_{tt+ss}(n)=-\frac{\alpha^{2}}{\kappa^{2}}\big((n-1)\lambda+\mu\big).
\]

\textbf{Step 2: bound $\delta_{(S)}$. }Let $G^{\ast}$ be the graph
before deleting vertices in $S$, but after deleting all diagonal
edges $(i_{t},i_{s})$ for $i\in F(\sigma,\tau)$. Deleting vertices/edges
can only decrease incident absolute mass, so it is enough to bound
the weighted degree in $G^{\ast}$. Take a vertex $i_{t}$. For $j\neq i$,
\[
|K_{e_{i},e_{j}}|=\frac{\lambda}{n^{3}}+\frac{\mu}{n^{3}(n-1)}\le\frac{4}{n^{3}}
\]
(using $\lambda,\mu\le2$ for $\alpha\le1/4$). Thus
\[
\sum_{j\neq i}|w_{(i_{t},j_{t})}|\le(n-1)\frac{\nu^{2}|z|^{4}}{\kappa^{2}}\frac{4}{n^{3}}=\frac{4\alpha^{2}}{\kappa^{2}}\frac{n-1}{n}\le16\alpha^{2}
\]
(since $\kappa\ge1/2$). On remaining $ts$ edges we have $e_{i}\neq f_{j}$,
so
\[
(S^{-1})_{e_{i},f_{j}}=(\lambda-\mu)(P_{\parallel})_{e_{i},f_{j}}
\]
because $S^{-1}=\mu I+(\lambda-\mu)P_{\parallel}$ and off-diagonal
entries of $I$ vanish. Hence
\[
\sum_{j:(i_{t},j_{s})\in E^{\ast}}|w_{(i_{t},j_{s})}|\le\frac{\nu|z|^{2}}{\kappa^{2}}|\lambda-\mu|\sum_{j}|(P_{\parallel})_{e_{i},f_{j}}|.
\]
For fixed $i$,
\[
(P_{\parallel})_{(r,c),(r',c')}=\frac{1}{n}\mathbf{1}_{r=r'}+\frac{1}{n}\mathbf{1}_{c=c'}-\frac{1}{n^{2}},
\]
so at most two $j$ are ``special'' (same row and/or same column),
and one gets
\[
\sum_{j}|(P_{\parallel})_{e_{i},f_{j}}|\le\frac{3}{n}.
\]
Also $|\lambda-\mu|\le C\alpha^{2}$ for $\alpha\le1/4$. Therefore
\[
\sum_{j:(i_{t},j_{s})\in E^{\ast}}|w_{(i_{t},j_{s})}|\le\frac{\alpha n}{\kappa^{2}}\cdot(2\alpha^{2})\cdot\frac{3}{n}\le24\alpha^{3}.
\]
It follows that
\[
\sum_{e\ni i_{t}}|w_{e}|\le16\alpha^{2}+24\alpha^{3}\le40\alpha^{2}.
\]
By symmetry the same bound holds for every $i_{s}$. Hence for every
reduced graph, $\delta_{(S)}\le40\alpha^{2}.$

\textbf{Step 3: bound the linear sum $\sum_{e\in E^{(S)}}w_{e}$.}
Let $s:=|S|$. Decompose:
\[
\sum_{e\in E^{(S)}}w_{e}=\Sigma_{tt+ss}^{(S)}+\Sigma_{ts}^{(S)}.
\]
There are $n-s$ $t$-vertices and $n-s$ $s$-vertices left, so
\[
\Sigma_{tt+ss}^{(S)}=-2\binom{n-s}{2}u=-(n-s)(n-s-1)u.
\]
Thus
$\Sigma_{tt+ss}^{(S)}-\Sigma_{tt+ss}(n)=u\big[n(n-1)-(n-s)(n-s-1)\big]=us(2n-s-1)$.
Since $u\le C\alpha^{2}/n$, we get
\[
\Sigma_{tt+ss}^{(S)}\le\Sigma_{tt+ss}(n)+C\alpha^{2}s.
\]
For the $ts$ part, we have
\[
\Sigma_{ts}^{(S)}\le|\Sigma_{ts}^{(S)}|\le\frac{\nu|z|^{2}}{\kappa^{2}}|\lambda-\mu|\sum_{(i,j)\in I_{S}}|(P_{\parallel})_{e_{i},f_{j}}|,
\]
where $I_{S}$ is the set of remaining allowed $ts$ pairs. Removing
vertices/edges only decreases the sum, so
\[
\sum_{(i,j)\in I_{S}}|(P_{\parallel})_{e_{i},f_{j}}|\le\sum_{i=1}^{n}\sum_{j=1}^{n}|(P_{\parallel})_{e_{i},f_{j}}|\le n\cdot\frac{3}{n}=3.
\]
Hence
$|\Sigma_{ts}^{(S)}|\le\frac{\alpha n}{\kappa^{2}}\cdot(2\alpha^{2})\cdot3\le C\alpha^{3}$.
Therefore
\[
\sum_{e\in E^{(S)}}w_{e}=\Sigma_{tt+ss}^{(S)}+\Sigma_{ts}^{(S)}\le\Sigma_{tt+ss}(n)+C\alpha^{2}|S|+C\alpha^{3}n.
\]
This proves the lemma.
\end{proof}
\begin{lem}
There exist absolute $c_{0},C>0$ such that if $\alpha\leq c_{0}$,
then
\[
\log\mathbb{E}_{\sigma,\tau}[Z_{\sigma,\tau}]\leq\Sigma_{tt+ss}(n)+w_{\mathrm{diag}}+C\alpha^{3}n.
\]
\end{lem}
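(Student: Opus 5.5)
Fix $(\sigma,\tau)$ and write $m=|F(\sigma,\tau)|$. The plan is to bound $Z_{\sigma,\tau}$ using Lemma~\ref{lem:diagonal-decomposition}, then average over $(\sigma,\tau)$ using the fixed-point generating function of Lemma~\ref{lem:fixedpts-gen}.

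\textbf{Step 1: bound each reduced partition function.} For each $S\subset F(\sigma,\tau)$, Lemma~\ref{lem:reduced-graph-local} gives $\delta_{(S)}\le C\alpha^2$. Once $c_0$ is small, this is at most $1/(4e)$, so Lemma~\ref{lem:cluster} applies on the reduced graph, which has $|V^{(S)}|\le 2n$. It yields
\[
\bigl|\log Z^{(S)}_{\sigma,\tau}-\textstyle\sum_{e\in E^{(S)}}w_e\bigr|\le 2e^2\cdot 2n\cdot C\alpha^4\le C\alpha^3 n .
\]
Since $\log|Z|=\Re\log Z$ and all weights are real, we get $|Z^{(S)}_{\sigma,\tau}|\le \exp(\sum_{e}w_e+C\alpha^3 n)$. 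The linear-sum bound in Lemma~\ref{lem:reduced-graph-local} then gives
\[
|Z^{(S)}_{\sigma,\tau}|\le \exp\bigl(\Sigma_{tt+ss}(n)+C\alpha^2|S|+C\alpha^3 n\bigr).
\]

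\textbf{Step 2: sum over $S$.} By Lemma~\ref{lem:diagonal-decomposition}, together with $w_{\rm diag}>0$,
\[
|Z_{\sigma,\tau}|\le e^{\Sigma_{tt+ss}(n)+C\alpha^3 n}\sum_{S\subset F}\bigl(w_{\rm diag}e^{C\alpha^2}\bigr)^{|S|}=e^{\Sigma_{tt+ss}(n)+C\alpha^3 n}\bigl(1+w_{\rm diag}e^{C\alpha^2}\bigr)^{m}.
\]

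\textbf{Step 3: average over $(\sigma,\tau)$.} As in Lemma~\ref{lem:exact_moment_clean}, $m$ is distributed as the number of fixed points of a uniform permutation. Lemma~\ref{lem:fixedpts-gen} with $t=1+w_{\rm diag}e^{C\alpha^2}>1$ bounds the average of $t^m$ by $\exp(w_{\rm diag}e^{C\alpha^2})$. Hence
\[
\log\E_{\sigma,\tau}[Z_{\sigma,\tau}]\le \Sigma_{tt+ss}(n)+w_{\rm diag}+w_{\rm diag}(e^{C\alpha^2}-1)+C\alpha^3 n .
\]
(The left side is real and positive because it is a second moment up to positive factors; in any case we bound it by the average of $|Z_{\sigma,\tau}|$.)

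\textbf{Step 4: absorb the error term.} It remains to show $w_{\rm diag}(e^{C\alpha^2}-1)\le C\alpha^3 n$. We have
\[
w_{\rm diag}=\frac{\nu|z|^2}{\kappa^2}(S^{-1})_{ee}\le \frac{\alpha n\cdot 2}{\kappa^2}\le C\alpha n,
\]
so this term is at most $C\alpha n\cdot C\alpha^2=C\alpha^3 n$, as required. Enlarging $C$ and shrinking $c_0$ so that all the smallness conditions hold ($\delta\le 1/(4e)$, $\alpha\le 1/4$, $\kappa\ge 1/2$) finishes the proof.

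The main obstacle is Step 1: applying Lemma~\ref{lem:cluster} uniformly over all reduced graphs. The check is that its error $O(|V|\delta^2)$, which is of order $\alpha^4 n$, fits inside the allowed $C\alpha^3 n$, and that the $C\alpha^2|S|$ excess stays harmless after averaging. Both are handled above, because only the linear-in-$S$ growth matters and it is absorbed through the fixed-point generating function.
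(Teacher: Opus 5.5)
Your proof is correct and follows the paper's argument essentially step for step. You bound each $Z^{(S)}_{\sigma,\tau}$ via Lemma~\ref{lem:cluster} and Lemma~\ref{lem:reduced-graph-local}, sum over $S$ with Lemma~\ref{lem:diagonal-decomposition} to get the factor $(1+w_{\mathrm{diag}}e^{C\alpha^2})^m$, average with Lemma~\ref{lem:fixedpts-gen}, and absorb $w_{\mathrm{diag}}(e^{C\alpha^2}-1)=O(\alpha^3 n)$ using $w_{\mathrm{diag}}\le C\alpha n$. The only cosmetic difference is that you work with $|Z^{(S)}_{\sigma,\tau}|$ and use positivity of $\mathbb{E}_{\sigma,\tau}[Z_{\sigma,\tau}]$ as a normalized second moment, whereas the paper observes that real weights plus the KP condition make each $Z^{(S)}_{\sigma,\tau}$ positive.
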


\begin{proof}
Fix $(\sigma,\tau)$ and $S\subset F(\sigma,\tau)$. By Lemma \ref{lem:reduced-graph-local},
$\delta_{(S)}\leq C\alpha^{2}$. For $\alpha\leq c_{0}$ small, $\delta_{(S)}\leq1/(4e)$,
so Lemma \ref{lem:cluster} applies to $Z_{\sigma,\tau}^{(S)}$. Then\footnote{The following inequalities are between real-valued quantities, since the weights in the partition function are real. The cluster expansion with the KP condition ensures $Z_{\sigma,\tau}^{(S)} > 0$.}
\[
\log Z_{\sigma,\tau}^{(S)}\leq\sum_{e\in E^{(S)}}w_{e}+Cn\delta_{(S)}^{2}\leq\sum_{e\in E^{(S)}}w_{e}+Cn\alpha^{4}.
\]
Using Lemma \ref{lem:reduced-graph-local}, 
\[
\log Z_{\sigma,\tau}^{(S)}\leq\Sigma_{tt+ss}(n)+C\alpha^{2}|S|+C\alpha^{3}n.
\]
Exponentiating and summing via Lemma \ref{lem:diagonal-decomposition}
gives
\begin{align*}
Z_{\sigma,\tau} & \leq\exp\left(\Sigma_{tt+ss}(n)+C\alpha^{3}n\right)\sum_{S\subset F(\sigma,\tau)}(w_{\mathrm{diag}}e^{C\alpha^{2}})^{|S|}\\
 & =\exp\left(\Sigma_{tt+ss}(n)+C\alpha^{3}n\right)(1+w_{\mathrm{diag}}e^{C\alpha^{2}})^{m}.
\end{align*}
Using $m=\operatorname{fix}(\tau^{-1}\sigma)$ and averaging in $(\sigma,\tau)$,
we get by Lemma~\ref{lem:fixedpts-gen} that
\[
\mathbb{E}_{\sigma,\tau}(1+w_{\mathrm{diag}}e^{C\alpha^{2}})^{m}\leq\exp(w_{\mathrm{diag}}e^{C\alpha^{2}}).
\]
Hence
\[
\log\mathbb{E}_{\sigma,\tau}[Z_{\sigma,\tau}]\leq\Sigma_{tt+ss}(n)+w_{\mathrm{diag}}e^{C\alpha^{2}}+C\alpha^{3}n.
\]
Because $w_{\mathrm{diag}}\le C\alpha n$ and $e^{C\alpha^{2}}=1+O(\alpha^{2})$,
\[
w_{\mathrm{diag}}e^{C\alpha^{2}}=w_{\mathrm{diag}}+O(\alpha^{2}w_{\mathrm{diag}})=w_{\mathrm{diag}}+O(\alpha^{3}n).
\]
The conclusion then follows. 
\end{proof}
\begin{thm}
There exist absolute constants $c,C>0$ such that if $\alpha:=\nu|z|^{2}/n\leq c$,
then
\[
\log\mathbb{E}|X_{W}^{(2)}(z)|^{2}\leq C\frac{\nu^{3}|z|^{6}}{n^{2}}.
\]
\end{thm}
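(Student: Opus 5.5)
The plan is to assemble the pieces already proved. By the Proposition and Lemma~\ref{lem:wick},
\[
\mathbb{E}|X_W^{(2)}(z)|^2 = M(-za,-\overline z a)\,\kappa^{2n}\,\mathbb{E}_{\sigma,\tau}[Z_{\sigma,\tau}],
\]
and the subsequent lemma gives $M(-za,-\overline z a)=\det(S)^{-1/2}e^{n\beta}$. Taking logarithms and applying the last lemma (valid for $\alpha\le c_0$) yields
\[
\log\mathbb{E}|X_W^{(2)}(z)|^2 \le -\tfrac12\log\det S + n\beta + 2n\log\kappa + \Sigma_{tt+ss}(n) + w_{\mathrm{diag}} + C\alpha^3 n .
\]
Since $\alpha^3 n=\nu^3|z|^6/n^2$, it suffices to show that the first five terms sum to $O(n\alpha^3)$. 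The terms of order $n\alpha$ and $n\alpha^2$ should cancel exactly.

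Next I expand each term to second order in $\alpha$.

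\textbf{Determinant.} $S$ has eigenvalue $1-\alpha^2$ on $\operatorname{range}P_\parallel$, which has dimension $2n-1$, and eigenvalue $1-\alpha^2/(n-1)^2$ on $\operatorname{range}P_m$, which has dimension $(n-1)^2$. Hence
\[
-\tfrac12\log\det S=\tfrac{2n-1}{2}\alpha^2+\tfrac12\alpha^2+O(n\alpha^4)=n\alpha^2+O(n\alpha^4).
\]

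\textbf{Exponential prefactor.} From $\beta=\alpha/(1-\alpha)$ we get $n\beta=n\alpha+n\alpha^2+O(n\alpha^3)$.

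\textbf{The $\kappa^{2n}$ factor.} Since $\kappa=1-\beta$, we have $\log\kappa=-\beta-\beta^2/2+O(\beta^3)=-\alpha-\tfrac32\alpha^2+O(\alpha^3)$. So $2n\log\kappa=-2n\alpha-3n\alpha^2+O(n\alpha^3)$.

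\textbf{Internal edges.} Lemma~\ref{lem:reduced-graph-local} gives $\Sigma_{tt+ss}(n)=-\frac{\alpha^2}{\kappa^2}((n-1)\lambda+\mu)$. Using $\lambda,\mu=1+O(\alpha^2)$ and $\kappa^{-2}=1+O(\alpha)$, this is $-n\alpha^2+O(n\alpha^3)$.

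\textbf{Diagonal weight.} Here $w_{\mathrm{diag}}=\nu|z|^2(S^{-1})_{ee}/\kappa^2$. Using $S^{-1}=\mu I+(\lambda-\mu)P_\parallel$, we get $(S^{-1})_{ee}=\mu+(\lambda-\mu)(2/n-1/n^2)=1+O(\alpha^2)$. Also $\kappa^{-2}=1+2\alpha+O(\alpha^2)$. Therefore $w_{\mathrm{diag}}=n\alpha+2n\alpha^2+O(n\alpha^3)$.

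Summing the five contributions, the coefficient of $n\alpha$ is $1-2+1=0$. The coefficient of $n\alpha^2$ is $1+1-3-1+2=0$. Everything left is $O(n\alpha^3)$, which together with the $C\alpha^3n$ term gives $\log\mathbb{E}|X_W^{(2)}(z)|^2\le C\nu^3|z|^6/n^2$ for $\alpha\le c:=\min(c_0,1/4)$.

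The main obstacle is bookkeeping rather than any new idea. The expansion of $\kappa^{-2}$ inside $w_{\mathrm{diag}}$ must be carried to the correct order, because its $2n\alpha^2$ contribution is exactly what cancels the leftover $-2n\alpha^2$ from $n\beta+2n\log\kappa$ and the determinant. I also need to check that no stray error of size $O(\alpha^2)$ survives without a factor of $n$, since such an error would not be dominated by $n\alpha^3$ when $\nu|z|^2<1$.

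The determinant computation already handles this: the $-\alpha^2/2$ from $\dim P_\parallel=2n-1$ is offset by the $(n-1)^2\cdot\alpha^2/(2(n-1)^2)=\alpha^2/2$ coming from the $P_m$ block. For the other terms, I will verify that each error is genuinely of the form $O(\alpha^2)\cdot(\text{leading term})$ with leading term $O(n\alpha)$, i.e.\ $O(n\alpha^3)$. This uses the exact forms of $\lambda$, $\mu$, $\kappa$ in place of loose $O(\cdot)$ bounds.
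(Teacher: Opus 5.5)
Your proposal is correct and follows the paper's proof: the same factorization $\mathbb{E}|X_W^{(2)}(z)|^2=M(-za,-\overline z a)\,\kappa^{2n}\,\mathbb{E}_{\sigma,\tau}[Z_{\sigma,\tau}]$, the same five terms, and the same cancellation of the $n\alpha$ and $n\alpha^2$ orders, with your coefficients all checking out. The only difference is bookkeeping. The paper pairs $-\tfrac12\log\det S$ with $\Sigma_{tt+ss}(n)$, and it packages $n\beta+2n\log\kappa+n\alpha/\kappa^2$ as $n\psi(\alpha)$ with $\psi(\alpha)=\tfrac43\alpha^3+O(\alpha^4)$, whereas you expand each term separately and sum the coefficients.
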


\begin{proof}
We have $\mathbb{E}|X_{W}^{(2)}(z)|^{2}=M(-za,-\overline{z}a)\kappa^{2n}\mathbb{E}_{\sigma,\tau}[Z_{\sigma,\tau}]$,
so
\[
\log\mathbb{E}|X_{W}^{(2)}(z)|^{2}\leq-\frac{1}{2}\log\det(S)+(n\beta+2n\log\kappa+w_{\mathrm{diag}})+\Sigma_{tt+ss}(n)+C\alpha^{3}n.
\]
We can now show how terms at order $\alpha^2 n$ cancel out, giving the desired $O(\alpha^3 n)$ bound.
Using
\[
\det(S)=(1-\alpha^{2})^{2n-1}\left(1-\frac{\alpha^{2}}{(n-1)^{2}}\right)^{(n-1)^{2}},
\]
one gets
$-\frac{1}{2}\log\det(S)=\alpha^{2}n+O(\alpha^{4}n)$.
By Lemma \ref{lem:reduced-graph-local}, 
\[
\Sigma_{tt+ss}(n)=-\frac{\alpha^{2}}{\kappa^{2}}((n-1)\lambda+\mu).
\]
Since $\lambda=1+O(\alpha^{2})$, $\mu=1+O(\alpha^{2}/(n-1)^{2})$,
and $1/\kappa^{2}=1+2\alpha+O(\alpha^{2})$, we find that
$\Sigma_{tt+ss}(n)=-\alpha^{2}n+O(\alpha^{3}n)$.
Hence by cancellation we obtain
\[
-\frac{1}{2}\log\det(S)+\Sigma_{tt+ss}(n)=O(\alpha^{3}n).
\]
Also, for any coordinate $e$,
\[
\Gamma_{e,e}=\nu\left(\mu+(\lambda-\mu)\left(\frac{2}{n}-\frac{1}{n^{2}}\right)\right)=\nu(1+O(\alpha^{2}/n)),
\]
so 
\[
w_{\mathrm{diag}}=\frac{|z|^{2}\Gamma_{e,e}}{\kappa^{2}}=\frac{\alpha n}{\kappa^{2}}+O(\alpha^{3}).
\]
Recalling the definition of $\beta$ and $\kappa$ from \eqref{eqn:beta-kappa}, we let 
\[
\psi(\alpha):=\frac{\alpha}{1-\alpha}+2\log\frac{1-2\alpha}{1-\alpha}+\frac{\alpha}{\left(\frac{1-2\alpha}{1-\alpha}\right)^{2}}.
\]
and observe that
\[
n\beta+2n\log\kappa+\frac{\alpha n}{\kappa^{2}}=n\psi(\alpha)=O(\alpha^{3}n)
\]
because, by explicit Taylor series expansion, $\psi(\alpha)=\frac{4}{3}\alpha^{3}+O(\alpha^{4})$ near 0.
Thus, 
\[
\log\mathbb{E}|X_{W}^{(2)}(z)|^{2} = O(\alpha^{3}n).
\]
\end{proof}

\subsection{Zero-free region and other applications}
\begin{cor}
Fix $\beta\in(0,1/3)$ and set $R_{n}:=\frac{n^{1/3-\beta}}{\sqrt{\nu}}.$
Then for every fixed $\varepsilon\in(0,1)$ and all sufficiently large
$n$,
\[
\mathbb{P}[G_{W}\text{ has a zero in }\mathbb{D}(0,(1-\varepsilon)R_{n})]=O_{\varepsilon}(n^{-6\beta}).
\]
On this event, for every $\rho \in\mathbb{C}$ with $|\rho|^{-1}<(1-\varepsilon)R_{n}$,
Barvinok interpolation outputs a deterministic $e^{\pm\eta}$-multiplicative
approximation to $\per(\rho J+W)$ in time $n^{O(m)}$, where
\[
m=O\left(\frac{1}{\varepsilon}\log\left(\frac{n}{\eta\varepsilon}\right)\right).
\]
That is, the mean-shift scale improves to $|\rho|\geq(1-\varepsilon)^{-1}\sqrt{\nu}n^{-1/3+\beta}$. 
\end{cor}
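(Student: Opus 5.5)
The plan is to apply Lemma~\ref{lem:core-jensen-app} to $f=G_W$ and $g=X_W^{(2)}=e^{h}G_W$, where $h(z)=-zD_1(W)-\frac{z^2}{2}D_2(W)$ is a polynomial in $z$ and hence entire. Take $r=(1-\varepsilon)R_n$ and $R=R_n$. At these radii $\alpha=\nu R_n^2/n=n^{-1/3-2\beta}\to 0$, so for large $n$ we have $\alpha\le c$ and the preceding theorem applies on the whole circle $|z|=R$. It gives
\[
\log\mathbb{E}|X_W^{(2)}(Re^{i\theta})|^2\le C\nu^3R^6/n^2=Cn^{-6\beta}
\]
uniformly in $\theta$. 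Also $g(0)=G_W(0)=1$, so $\log|g(0)|=0$. Lemma~\ref{lem:core-jensen-app} then yields
\[
\mathbb{E}N_{G_W}((1-\varepsilon)R_n)\le\frac{2\pi\cdot Cn^{-6\beta}}{4\pi\log(1/(1-\varepsilon))}=O_\varepsilon(n^{-6\beta}),
\]
and Markov's inequality, $\mathbb{P}[N\ge1]\le\mathbb{E}N$, gives the probability bound. We should also check that $G_W(0)\ne0$ almost surely and that the relevant functions are holomorphic near the closed disc; both are immediate.

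For the algorithmic part, work on the event that $G_W$ has no zeros in $\mathbb{D}(0,(1-\varepsilon)R_n)$. Given $\rho$ with $|\rho|^{-1}<(1-\varepsilon)R_n$, write $\per(\rho J+W)=\rho^n\per(J+\rho^{-1}W)=\rho^n n!\,G_W(1/\rho)$. This reduces the problem to approximating $\log G_W(z_0)$ at $z_0=1/\rho$, which lies inside the zero-free disc. Set $r'=(1-\varepsilon)R_n$. Then $\log G_W$ is analytic on $\mathbb{D}(0,r')$, and we can write $\log G_W(z)=-\sum_j\log(1-z/\zeta_j)$ over the $n$ zeros $\zeta_j$, each of which satisfies $|\zeta_j|\ge r'$. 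This is the same argument used in the proof of Lemma~\ref{lem:cluster}.

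Truncating the Taylor series at degree $m$ gives error at most $\sum_j\sum_{k>m}|z_0/\zeta_j|^k/k\le n\,q^{m+1}/(1-q)$, where $q=|z_0|/r'<1$. The difficulty is that $q$ can be arbitrarily close to $1$. The standard Barvinok remedy handles this: compose with a polynomial map $\phi$ of degree $O(\frac{1}{\varepsilon}\log\frac{1}{\varepsilon})$ that sends a disc of radius $1+\Omega(\varepsilon)$ into a slight enlargement of the zero-free region, with $\phi(0)=0$ and $\phi(1)=z_0$. Alternatively, one can apply the statement at a slightly smaller $\varepsilon'$, with $\varepsilon$ split in two so that $q\le1-\varepsilon/2$. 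Either way the geometric ratio becomes $1-\Omega(\varepsilon)$, and taking $m=O(\frac1\varepsilon\log\frac{n}{\eta\varepsilon})$ makes the error at most $\eta$.

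It remains to compute the coefficients. The first $m$ Taylor coefficients of $G_W$ are $\frac{(n-k)!}{n!}\sum_{|B|=k}\per(B)$, a sum over $k\times k$ submatrices, and each can be computed in time $n^{O(k)}$. The coefficients of $\log G_W$ then follow from Newton's identities in time $\mathrm{poly}(m)$, so the total running time is $n^{O(m)}$.

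The main obstacle is none of the probabilistic steps, which are direct once the preceding theorem is in hand. It is the bookkeeping in the interpolation step: getting the $1/\varepsilon$ dependence in $m$ requires handling a target point near the boundary of the zero-free disc, either through Barvinok's polynomial change of variables or by shrinking $\varepsilon$ by a constant factor.
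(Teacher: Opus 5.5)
Your proposal is correct and follows the paper's argument. The paper's proof is exactly your first paragraph: apply the second-moment theorem on $|z|=R_n$, where $\alpha=n^{-1/3-2\beta}\to0$, to get $\log\mathbb E|X_W^{(2)}|^2\le Cn^{-6\beta}$, then use Lemma~\ref{lem:core-jensen-app} with $g(0)=1$ and finish with Markov. The paper gives no proof of the interpolation claim. Your fix of rerunning the zero count at $\varepsilon/2$, so that $|z_0|/r'\le 1-\varepsilon/2$, is the right way to get $m=O(\frac1\varepsilon\log\frac{n}{\eta\varepsilon})$; the event as literally stated gives no margin.

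Drop the polynomial change-of-variables alternative, though. Mapping into an enlargement of the zero-free region is not allowed, since the enlargement may contain zeros. And when the zero-free region is the disc $\mathbb D(0,r')$, Schwarz's lemma forces $|\phi(1)|\le r'/\rho$ for any holomorphic $\phi$ with $\phi(0)=0$ and $\phi(\mathbb D(0,\rho))\subseteq\mathbb D(0,r')$. So no such $\phi$ can beat the linear map $t\mapsto tz_0$.
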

\begin{proof}
Plugging in $z = Re^{i \theta}$ and averaging over $\theta$, the previous theorem gives
\[ \mathbb{E}_{\theta} \log\mathbb{E}|X_{W}^{(2)}(R e^{i\theta})|^{2}\leq C\frac{\nu^{3}R^{6}}{n^{2}}. \]
So the expected number of zeros can be bounded by Jensen's formula \eqref{eq:jensen-variant}, using that
\[ \mathbb{E}_{\theta} \mathbb{E}\log|X_{W}^{(2)}(R e^{i\theta})|^{2} \le \mathbb{E}_{\theta} \log \mathbb{E}|X_{W}^{(2)}(R e^{i\theta})|^{2} \]
by Jensen's inequality. Finally, the high-probability statement follows from Markov's inequality. 
\end{proof}

\subsubsection{Second-order expansion}
From the above analysis, we directly have a simple approximate formula for the log-permanent in the regime $|z| \ll n^{1/3}$. This formula is somewhat analogous to the TAP approximation in spin glass theory \cite{thouless1977solution,aizenman1987some}, which was also originally derived from diagrammatic expansions.
\begin{cor}[Second-order log-permanent expansion]
\label{cor:weak-lln-permanent}
Let $W$ be as defined above, and suppose $z \in \mathbb{C}$ depends on $n$ such that $\nu^3|z|^6/n^2 \to 0$ as $n \to \infty$. Then $X_W^{(2)}(z)$ converges in probability to $1$:
\[
X_W^{(2)}(z) \xrightarrow{\mathbb{P}} 1.
\]
Consequently, the log-permanent admits the asymptotic expansion
\[
\log \per(J+zW) = \log(n!) + zD_1(W) + \frac{z^2}{2}D_2(W) + O_\mathbb{P}\left( \frac{\nu^{3/2}|z|^3}{n} \right),
\]
where the implicit constant in the $O_\mathbb{P}$ term depends only on the absolute constants from the second moment bound.
\end{cor}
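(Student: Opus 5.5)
The plan is to upgrade the second moment bound of the preceding theorem into an $L^2$ bound on $X_W^{(2)}(z)-1$, and then take logarithms. Since $\E X_W^{(2)}(z)=1$ by circular symmetry, we have
\[
\E|X_W^{(2)}(z)-1|^2=\E|X_W^{(2)}(z)|^2-1\le \exp\!\Big(C\frac{\nu^3|z|^6}{n^2}\Big)-1 .
\]
The hypothesis $\nu^3|z|^6/n^2\to0$ forces $\alpha=\nu|z|^2/n=(\nu^3|z|^6/n^3)^{1/3}\to0$, so $\alpha\le c$ for large $n$ and the theorem applies. Let $\epsilon_n:=\nu^3|z|^6/n^2$. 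For large $n$ the right side is at most $2C\epsilon_n$, so $\E|X_W^{(2)}-1|^2\le 2C\epsilon_n\to0$. Chebyshev's inequality then gives $X_W^{(2)}(z)\to1$ in probability, and more quantitatively $|X_W^{(2)}(z)-1|=O_{\mathbb P}(\epsilon_n^{1/2})=O_{\mathbb P}(\nu^{3/2}|z|^3/n)$.

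For the log expansion, I work on the event $\{|X_W^{(2)}(z)-1|\le 1/2\}$, whose probability tends to one. On this event I use the principal branch, $|\log X_W^{(2)}(z)|\le 2|X_W^{(2)}(z)-1|$, which is $O_{\mathbb P}(\nu^{3/2}|z|^3/n)$. Unwinding the definition $X_W^{(2)}(z)=\exp(-zD_1-\tfrac{z^2}{2}D_2)\per(J+zW)/n!$ gives
\[
\log\per(J+zW)=\log n!+zD_1(W)+\tfrac{z^2}{2}D_2(W)+\log X_W^{(2)}(z).
\]
This holds modulo $2\pi i$, or exactly if $\log\per$ is interpreted as the branch obtained by adding these terms.

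The step needing care is the branch of the logarithm. I would state the identity for $\log|\per|$ exactly, and for the complex logarithm interpret it as the branch defined through $X_W^{(2)}$. Alternatively, I would note that by the zero-free corollary, with high probability $G_W$ has no zeros in the disc of radius $|z|$ (after enlarging slightly, since $|z|\ll n^{1/3}$). Then $\log G_W$ is holomorphic there by \eqref{eq:log-by-integral}, normalized by $\log G_W(0)=0$. Along the ray $t\mapsto tz$ for $t\in[0,1]$, the functions $\log G_W(tz)-tzD_1-\tfrac{t^2z^2}{2}D_2$ and $\log X_W^{(2)}(tz)$ agree at $t=0$ and differ by a continuous multiple of $2\pi i$, so they coincide.

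The remaining difficulty is that the endpoint bound only controls $|X-1|$ at the single point $z$. The bound applies to every $tz$ with $\epsilon_n$ monotone in $t$, but a pointwise statement does not immediately give uniformity over the ray. For the principal-branch interpretation this is unnecessary, so I would present the result with the branch defined via $X_W^{(2)}$ and remark on the holomorphic branch separately.
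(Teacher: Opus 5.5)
Your proposal is correct and follows the paper's proof. In both, $\E X_W^{(2)}(z)=1$ together with the second-moment theorem gives $\E|X_W^{(2)}(z)-1|^2=O(\nu^3|z|^6/n^2)$, Chebyshev gives $X_W^{(2)}(z)=1+O_{\mathbb P}(\nu^{3/2}|z|^3/n)$, and the principal logarithm of $X_W^{(2)}$ is unwound into the stated expansion. You also make explicit two points the paper leaves implicit: that the hypothesis forces $\alpha=\nu|z|^2/n\to0$, so the theorem applies, and that the identity for $\log\per$ holds with the branch defined through $X_W^{(2)}$ (or modulo $2\pi i$).
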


\begin{proof}
By construction, $\mathbb{E}[X_W^{(2)}(z)] = 1$. The preceding theorem establishes that
\[
\log \mathbb{E}\big|X_W^{(2)}(z)\big|^2 \le C \frac{\nu^3|z|^6}{n^2}
\]
for some absolute constant $C > 0$. Since $\nu^3|z|^6/n^2 = o(1)$ by assumption, we can exponentiate this bound to find the variance:
\[
\operatorname{Var}\big(X_W^{(2)}(z)\big) = \mathbb{E}\big|X_W^{(2)}(z)\big|^2 - \big|\mathbb{E}[X_W^{(2)}(z)]\big|^2 \le \exp\left(C \frac{\nu^3|z|^6}{n^2}\right) - 1 = O\left( \frac{\nu^3|z|^6}{n^2} \right).
\]
Because the variance vanishes as $n \to \infty$, Chebyshev's inequality immediately yields $X_W^{(2)}(z) \xrightarrow{\mathbb{P}} 1$. More precisely, it gives the fluctuation bound
\[
X_W^{(2)}(z) = 1 + O_\mathbb{P}\left( \frac{\nu^{3/2}|z|^3}{n} \right).
\]
Expanding the definition of $X_W^{(2)}(z)$, we have
\[
X_W^{(2)}(z) = \exp\left(-zD_1(W) - \frac{z^2}{2}D_2(W)\right) G_W(z),
\]
where $G_W(z) = \frac{1}{n!}\per(J+zW)$. Taking the principal branch of the logarithm on both sides (which is well-defined with probability approaching $1$ since $X_W^{(2)}(z) \xrightarrow{\mathbb{P}} 1$), and using the Taylor expansion $\log(1+x) = x + O(x^2)$ for small $x$, we obtain
\[
\log X_W^{(2)}(z) = O_\mathbb{P}\left( \frac{\nu^{3/2}|z|^3}{n} \right).
\]
Substituting the definition of $X_W^{(2)}(z)$ into the logarithm yields
\[
\log G_W(z) - zD_1(W) - \frac{z^2}{2}D_2(W) = O_\mathbb{P}\left( \frac{\nu^{3/2}|z|^3}{n} \right).
\]
Rearranging and replacing $\log G_W(z)$ with $\log \per(J+zW) - \log(n!)$ completes the proof.
\end{proof}


\begin{cor}[Central limit theorem]
Provided that $z_n = o(n^{1/3})$, we have the following convergence in distribution as $n \to \infty$:
\[
\frac{\log \per(J + z_n W) - \log(n!)}{z_n} \xrightarrow{d} \mathcal{N}_{\mathbb C}(0,\nu).
\]
\end{cor}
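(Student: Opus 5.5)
We derive the CLT directly from Corollary~\ref{cor:weak-lln-permanent}, which applies because $z_n = o(n^{1/3})$ gives $\nu^3|z_n|^6/n^2 \to 0$. Dividing its expansion by $z_n$ yields
\[
\frac{\log \per(J+z_nW) - \log(n!)}{z_n} = D_1(W) + \frac{z_n}{2}D_2(W) + O_{\mathbb P}\!\left(\frac{\nu^{3/2}|z_n|^2}{n}\right).
\]
The error term is $o_{\mathbb P}(1)$ because $|z_n|^2/n \to 0$. The proof therefore reduces to three claims: (i) $D_1(W)\sim\CN(0,\nu)$ exactly, (ii) $z_n D_2(W) \to 0$ in probability, and (iii) Slutsky's lemma combines these. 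We use the branch of $\log\per$ selected in the corollary, i.e.\ $\log(n!)+\log G_W$ with $\log G_W$ continued from $0$. This branch is well defined on the high-probability zero-free event.

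For (i), $D_1(W) = \frac{1}{n}\sum_{ij}W_{ij}$ is a linear combination of $n^2$ i.i.d.\ $\CN(0,\nu)$ variables with coefficient vector $a$, and $\|a\|^2 = n^2/n^2 = 1$. Hence $D_1(W)\sim \CN(0,\nu)$ exactly for every $n$, and no limit argument is needed.

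For (ii), we compute $\E|D_2(W)|^2$ with $D_2 = w^TBw$, using the complex Wick formula. Because the pseudo-covariance vanishes, $\E[D_2]=0$. The only surviving pairings give
\[
\E|w^TBw|^2 = \sum B_{ab}\overline{B_{cd}}\,\E[w_aw_b\overline{w_c}\,\overline{w_d}] = 2\nu^2\|B\|_F^2.
\]
Since $B = -\frac{1}{n}P_\parallel + \frac{1}{n(n-1)}P_m$, the rank of $P_\parallel$ is $2n-1$ and the rank of $P_m$ is $(n-1)^2$. This gives
\[
\|B\|_F^2 = \frac{2n-1}{n^2} + \frac{(n-1)^2}{n^2(n-1)^2} = O(1/n).
\]
Thus $\E|z_nD_2|^2 = O(\nu^2|z_n|^2/n)$, which tends to $0$ since $|z_n| = o(n^{1/3})$. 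Chebyshev's inequality then gives $z_nD_2(W)\to 0$ in probability.

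The main point of care is the branch of the logarithm together with the division by $z_n$. If $z_n$ does not tend to infinity, the $O_{\mathbb P}$ remainder divided by $z_n$ is still $O_{\mathbb P}(|z_n|^2/n)$ as written, so this case is fine. The boundary case $z_n = 0$ is excluded implicitly. The substantive step is checking that the $O_{\mathbb P}$ error in the corollary really scales as $|z_n|^3/n$ uniformly, which holds by the stated Chebyshev bound $X_W^{(2)}(z) = 1+O_{\mathbb P}(\nu^{3/2}|z|^3/n)$. With all three pieces in place, Slutsky's lemma yields convergence in distribution to $\CN(0,\nu)$.
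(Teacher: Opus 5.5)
Your proposal is correct and follows the paper's proof essentially step for step. You divide the expansion of Corollary~\ref{cor:weak-lln-permanent} by $z_n$, note $D_1(W)\sim\CN(0,\nu)$ exactly, show $\tfrac{z_n}{2}D_2(W)=o_{\mathbb P}(1)$ from $\E|D_2(W)|^2=O(\nu^2/n)$, and finish with Slutsky. Two small remarks: your Wick computation $\E|D_2(W)|^2=2\nu^2\|B\|_F^2=4\nu^2/n$ has the right constant (the paper's intermediate expression $\nu^2\operatorname{Tr}(B^2)$ is off by a factor of $2$, though it reports the same value $4\nu^2/n$); and your identification of the corollary's branch with the one continued from $0$ is true with high probability but deserves a line of justification, e.g.\ a uniform bound on $|X_W^{(2)}-1|$ over the disk of radius $|z_n|$ obtained from subharmonicity.
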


\begin{proof}
By Corollary~\ref{cor:weak-lln-permanent}, we have the asymptotic expansion
\[
\log \per(J+z_n W) - \log(n!) = z_n D_1(W) + \frac{z_n^2}{2}D_2(W) + O_\mathbb{P}\left(\frac{|z_n|^3}{n}\right).
\]
Dividing by $z_n$, we obtain
\[
\frac{\log \per(J+z_n W) - \log(n!)}{z_n} = D_1(W) + \frac{z_n}{2}D_2(W) + O_\mathbb{P}\left(\frac{|z_n|^2}{n}\right).
\]
By definition, $D_1(W) = \frac{1}{n} \sum_{i,j} W_{ij}$. Because the entries $W_{ij}$ are i.i.d. circularly symmetric $\mathcal{N}_{\mathbb C}(0,\nu)$ variables, their scaled sum satisfies $D_1(W) \sim \mathcal{N}_{\mathbb C}(0,\nu)$ for all $n$. 

For the second-order term, $D_2(W) = w^T B w$ is a holomorphic quadratic form. Since $W$ is circularly symmetric, $\mathbb{E}[W_{ij}^2] = 0$, giving $\mathbb{E}[D_2(W)] = 0$. Its variance is given by $\nu^2 \operatorname{Tr}(B^2) = 4\nu^2/n$, meaning $D_2(W) = O_\mathbb{P}(n^{-1/2})$. 

Therefore, for $z_n = o(n^{1/3})$, the relative contribution of the second-order term is
\[
\frac{z_n}{2}D_2(W) = O_\mathbb{P}(|z_n|n^{-1/2}) = o_\mathbb{P}(n^{-1/6}) = o_\mathbb{P}(1).
\]
Similarly, the remainder term is
\[
O_\mathbb{P}\left(\frac{|z_n|^2}{n}\right) = o_\mathbb{P}(n^{-1/3}) = o_\mathbb{P}(1).
\]
Since all higher-order terms vanish in probability, the asymptotic distribution is purely determined by the leading term $D_1(W) \sim \mathcal{N}_{\mathbb C}(0,\nu)$.
\end{proof}

\subsubsection{Approximation algorithm}
\begin{prop}[Coefficient bound, Proposition~\ref{prop:true-log-perm-coeff} of Appendix~\ref{apdx:coeff-bounds}]
Let
\[
P_n(z):=\frac{1}{n!}\per(J+zW),
\qquad
\log P_n(z)=\sum_{k\ge1} b_{k,n} z^k,
\]
where \(W=(W_{ij})_{1\le i,j\le n}\) has i.i.d.\ standard complex Gaussian entries,
$W_{ij}\sim\CN(0,1)$.
Then there exists an absolute constant \(C>0\) such that for every integer \(k\ge1\), 
letting
\[
C_k = \exp\!\bigl(C\,k\,2^k\log k\bigr)
\]
we have that
\[
\mathbb E |b_{k,n}|^2 \le C_k\, n^{1-k}
\qquad\text{for all }n\ge 1.
\]
\end{prop}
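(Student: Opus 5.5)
The plan is to express $b_{k,n}$ through the moment–cumulant formula and then exploit the independence of entries to kill disconnected contributions.

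\textbf{Coefficients as cumulants.} Write $P_n(z)=\sum_{j\ge0} a_j z^j$ with
\[
a_j=\frac{(n-j)!}{n!}\sum_{|M|=j}\prod_{e\in M}W_e ,
\]
where $M$ ranges over matchings of $K_{n,n}$ of size $j$. The formal identity $\log(1+x)=\sum_{r\ge1}(-1)^{r-1}x^r/r$ gives $b_k$ as a polynomial in $a_1,\dots,a_k$:
\[
b_k=\sum_{r\ge 1}\frac{(-1)^{r-1}}{r}\sum_{j_1+\dots+j_r=k,\ j_i\ge1}\ \prod_{i=1}^r a_{j_i}.
\]
Expanding, $b_k$ becomes a sum over ordered tuples $(M_1,\dots,M_r)$ of matchings with $\sum_i|M_i|=k$. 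Each tuple carries the monomial $\prod_i\prod_{e\in M_i}W_e$ and the coefficient $\prod_i (n-|M_i|)!/n!$.

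\textbf{Grouping by union structure.} Group these tuples by the multiset of edges used. Then $\mathbb{E}|b_k|^2$ is a sum over pairs of such multisets whose Gaussian Wick pairing is nonzero. For complex Gaussians this requires the two multisets to coincide, so
\[
\mathbb{E}|b_k|^2=\sum_{\mathbf m}|c_{\mathbf m}|^2\,\mathbf m!,
\]
where $\mathbf m$ ranges over edge multisets of total size $k$ and $c_{\mathbf m}$ is the coefficient of $W^{\mathbf m}$ in $b_k$.

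\textbf{Vanishing of disconnected coefficients.} The key step is to show that $c_{\mathbf m}$ is exactly zero when the support of $\mathbf m$ splits into two vertex-disjoint parts, i.e.\ parts sharing no row or column. This is the usual cluster-expansion property of $\log$, but here it does not hold exactly, because the weights $(n-j)!/n!$ are not multiplicative. For the idealized weight $n^{-j}$, disconnected coefficients vanish exactly. For the true weights, the ratio
\[
\frac{(n-j_1-j_2)!\,n!}{(n-j_1)!\,(n-j_2)!}=1+O\!\left(\frac{j_1 j_2}{n}\right)
\]
gives only approximate factorization.

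I would handle this by writing $P_n(z)=\mathbb{E}_\sigma\prod_i(1+zW_{i\sigma(i)})$ and viewing $\log P_n$ as a joint cumulant expansion over the random permutation. The coefficient of $W^{\mathbf m}$ is then a joint cumulant of indicator variables $\mathbf 1[\sigma\supseteq \text{component}]$ across the connected components of $\mathrm{supp}(\mathbf m)$. For $q$ components with $k$ edges total, such a mixed cumulant is $O_k\!\left(n^{-k-(q-1)}\right)$: each component costs $n^{-|\text{component}|}$, and the mixing of permutation indicators gains an extra $1/n$ per additional component. This cumulant decay for uniform permutations is the step I expect to be the main obstacle. I would prove it by writing the indicator products through falling factorials and using the standard fact that joint cumulants of $\mathbf 1[\sigma(i_a)=j_a]$ over disjoint blocks are given by Möbius inversion of the $1/(n)_s$ weights. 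The Möbius inversion produces cancellations, leaving a factor $n^{-(q-1)}$ times the number of partitions. That number of partitions, at most $k^k$, is absorbed into $C_k$.

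\textbf{Counting.} A support with $v$ vertices and $q$ components has at most $k$ edges, with $v\le k+q$ vertices when it is a forest-like structure, and $v\le k+q$ in general since each component has at least as many edges as vertices minus one. The number of such supports is at most $n^{v}\cdot 2^{O(k^2)}$; more carefully, choosing a multigraph on labelled vertices with $k$ edges gives $(k^2)^k\le e^{Ck\log k}$ shapes. Each coefficient satisfies $|c_{\mathbf m}|^2\le e^{C k2^k\log k}\,n^{-2k-2(q-1)}$, where the $2^k$ accounts for the sum over compositions $j_1+\dots+j_r=k$ and the subsets of parts. The factor $\mathbf m!$ is at most $k!$. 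Multiplying,
\[
\mathbb{E}|b_k|^2\lesssim_k n^{k+q}\,n^{-2k-2q+2}=n^{2-k-q}\le n^{1-k},
\]
since $q\ge1$. Tracking the combinatorial factors crudely gives $C_k=\exp(Ck2^k\log k)$.
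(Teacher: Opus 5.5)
Your skeleton matches the paper's. Gaussian orthogonality reduces $\mathbb E|b_{k,n}|^2$ to $\sum_{\mathbf m}|c_{\mathbf m}|^2\mathbf m!$. A support with $q$ row/column components has $O_k(n^{k+q})$ placements. A coefficient bound $|c_{\mathbf m}|\le C_k n^{-k-(q-1)}$ then gives $n^{2-k-q}\le n^{1-k}$. Indeed, the paper's $n^{-k}\Psi_n(X)$ is exactly the joint cumulant of the indicators $\mathbf 1[\sigma(i)=j]$, $(i,j)\in X$.

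\textbf{The gap.} That coefficient bound is the entire content of the proposition, and your sketch does not prove it. In the M\"obius sum $\sum_{\pi}(-1)^{|\pi|-1}(|\pi|-1)!\prod_{B\in\pi}\mathbf 1[B\text{ a matching}]/(n)_{|B|}$, every nonzero term has size $\asymp n^{-k}$. So ``the M\"obius inversion produces cancellations, leaving $n^{-(q-1)}$ times the number of partitions'' only restates the claim. For three pairwise disjoint edges, terms of order $n^{-3}$ must collapse to $4/(n^3(n-1)(n-2))$; nothing in your outline explains a cancellation of $q-1$ orders of magnitude.

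\textbf{The missing idea} is the paper's Lemma~\ref{lem:rho-binomial}. Expand $\log\bigl(n^s/(n)_s\bigr)=\sum_{\ell\ge2}\theta_{\ell,n}\binom{s}{\ell}$ with $\theta_{\ell,n}=O_k(n^{-(\ell-1)})$. Then $1/(n)_{|S|}=n^{-|S|}\prod_{T\subseteq S,\,|T|\ge 2}(1+\xi_{|T|,n})$ with $\xi_{\ell,n}=O_k(n^{-(\ell-1)})$. Write the matching constraint as $\prod_{\{e,f\}}(1+f(e,f))$. The logarithm (Lemma~\ref{lem:connected-hypergraph-expansion}) then becomes $n^{-k}$ times a sum over \emph{connected} decorated hypergraphs. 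Joining $q$ pair-components forces correction hyperedges with $\sum_a(|T_a|-1)\ge q-1$, and that is the source of the factor $n^{-(q-1)}$.

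Two further corrections. First, the coefficient is not a joint cumulant of the component indicators $\mathbf 1[\sigma\supseteq U_j]$. For a component with at least two edges that indicator is identically zero, since its edges share a row or column, so the cumulant would vanish. The right object is the joint cumulant of the $k$ individual edge indicators.

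Second, for repeated edges the permutation-cumulant picture does not give the $n^{-|\mathbf m|}$ scaling you use. Cumulants with repeated copies of $\mathbf 1[\sigma\ni e]$ are only of order $n^{-(\text{number of distinct edges})}$. For example, writing $\kappa_r$ for the $r$th cumulant of $\mathbf 1[\sigma\ni e]$, the coefficient of $z^3W_e^3$ is $\kappa_1/3-\kappa_2/2+\kappa_3/6=n^{-3}/3$, although each $\kappa_r\approx 1/n$. With only the weaker bound, the count already fails for $\mathbf m=k\cdot e$.

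Stay instead with your ordered-tuples-of-matchings formula. That is M\"obius inversion over partitions of the $k$ labelled copies, with blocks forbidden to contain two copies of one edge. Each copy then carries its own factor $1/n$, and ``same edge'' is just one more pair-link. The hypergraph connectivity argument then gives $n^{-k-(q-1)}$ for multisets as well. Done this way, your multiset bookkeeping is more careful than the paper's: its expansion is indexed by sets $X$ and so omits diagonal terms such as $-\frac{1}{2n^2}\sum_e W_e^2$ in $b_{2,n}$.
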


\begin{lem}[Quantitative fixed-degree truncation for \(d\ge 2\)]
\label{lem:fixed-degree-truncation-quant}
Let
\[
P_n(z):=\frac{1}{n!}\per(J+zW),
\qquad
\log P_n(z)=\sum_{k\ge1} b_{k,n} z^k,
\]
where \(W\in\C^{n\times n}\) has i.i.d.\ entries
$W_{ij}\sim\mathcal N_{\C}(0,1)$.
Fix \(d\ge2\) and \(\beta\in(0,1/3)\), and set
$L_n:=n^{1/3-\beta}$.
Let \(M\) be any fixed integer such that
\[
M>\max\!\left\{d,\frac{2}{\beta}-1\right\}.
\]
Then there exists a constant \(A_\beta>0\) such that for all sufficiently
large \(n\) and all \(t>0\),
\[
\mathbb P\!\left[
\sup_{|z|\le L_n}
\left|
\log P_n(z)-\sum_{k=1}^{d} b_{k,n}z^k
\right|
>
t+\frac{n(2n^{-\beta/2})^{M+1}}{(M+1)(1-2n^{-\beta/2})}
\right]
\]
\[
\le
A_\beta n^{-3\beta}
+
\frac{M-d}{t^{2}}
\sum_{k=d+1}^{M} C_k\, n^{\,1-k/3-2\beta k},
\]
where 
$C_k=\exp\!\bigl(C\,k\,2^k\log k\bigr)$
are the constants from Proposition~\ref{prop:true-log-perm-coeff}.
\end{lem}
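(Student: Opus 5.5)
Split the remainder into three pieces: the tail beyond degree $M$, the middle coefficients $d<k\le M$, and an exceptional event where the zero-free region fails. Let $E$ be the event that $P_n$ has no zeros in $\mathbb D(0,2L_n)$, or more precisely in $\mathbb D(0, n^{1/3-\beta/2}/2)$. The zero-free corollary of the second-order analysis, applied with $\beta/2$ in place of $\beta$ (radius $n^{1/3-\beta/2}$, with $\varepsilon$ fixed, say $1/2$), gives $\mathbb P[E^c]=O(n^{-3\beta})$. This is the $A_\beta n^{-3\beta}$ term.

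\textbf{Tail beyond degree $M$.} On $E$, $\log P_n$ is holomorphic on $\mathbb D(0,\rho)$ with $\rho:=n^{1/3-\beta/2}/2$. Since $P_n$ is a polynomial of degree $\le n$ with $P_n(0)=1$, we can write $P_n(z)=\prod_{j=1}^{N}(1-z/r_j)$ with $N\le n$ and all $|r_j|\ge \rho$. Hence
\[
b_{k,n}=-\frac1k\sum_j r_j^{-k}, \qquad |b_{k,n}|\le \frac{n}{k\rho^{k}},
\]
exactly as in the proof of Lemma~\ref{lem:cluster}. For $|z|\le L_n$ we have $|z|/\rho\le 2n^{-\beta/2}$, so summing the geometric tail gives
\[
\sum_{k\ge M+1}|b_{k,n}||z|^k\le \frac{n(2n^{-\beta/2})^{M+1}}{(M+1)(1-2n^{-\beta/2})},
\]
which is the deterministic term in the statement. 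I would double-check here that the constant $2$ in the ratio $L_n/\rho$ matches the chosen radius. The condition $M>2/\beta-1$ ensures this term is $o(1)$, although that fact is not needed for the inequality itself.

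\textbf{Middle coefficients $d<k\le M$.} Here I would avoid using $E$ and apply Chebyshev/Markov directly. Since
\[
\sup_{|z|\le L_n}\Bigl|\sum_{k=d+1}^M b_{k,n}z^k\Bigr|\le \sum_{k=d+1}^{M}|b_{k,n}|L_n^k,
\]
Cauchy--Schwarz gives
\[
\Bigl(\sum_{k=d+1}^{M}|b_{k,n}|L_n^k\Bigr)^2\le (M-d)\sum_{k=d+1}^{M}|b_{k,n}|^2L_n^{2k}.
\]
Taking expectations and using Proposition~\ref{prop:true-log-perm-coeff}, $\mathbb E|b_{k,n}|^2L_n^{2k}\le C_k n^{1-k}n^{2k/3-2\beta k}=C_k n^{1-k/3-2\beta k}$. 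Markov's inequality on the square then bounds the probability that this middle sum exceeds $t$ by the stated expression.

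\textbf{Combining.} The event in the statement is contained in $E^c$ together with the event that the middle sum exceeds $t$, so a union bound finishes the proof.

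The main subtlety is that $b_{k,n}$ are defined as formal coefficients. I must make sure the series identity $\log P_n=\sum b_{k,n}z^k$ holds on $\mathbb D(0,L_n)$ on $E$, and that the bound from Proposition~\ref{prop:true-log-perm-coeff} is unconditional (not restricted to $E$). Both hold because the $b_{k,n}$ are polynomial in $W$ and, on $E$, the power series converges on the disc.
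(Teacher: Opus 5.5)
Your proposal is correct and follows the paper's proof essentially step for step. The exceptional event comes from the zero-free corollary applied at $\beta/2$ with $\varepsilon=1/2$, giving radius $r_n=\tfrac12 n^{1/3-\beta/2}$ and probability $O(n^{-3\beta})$. The tail beyond degree $M$ is controlled by the root-factorization bound $|b_{k,n}|\le n/(k r_n^k)$ with $L_n/r_n=2n^{-\beta/2}$, which matches the constant in the deterministic term. The middle block is handled by Cauchy--Schwarz, Markov, and Proposition~\ref{prop:true-log-perm-coeff}, and a union bound combines the pieces.
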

\begin{proof}
Set
\[
T_{d,n}(z):=\sum_{k=1}^{d} b_{k,n} z^k.
\]
Apply the zero-free corollary with \(\beta/2\) in place of \(\beta\) and
\(\varepsilon=1/2\). Then there exists a constant \(A_\beta>0\) such that,
for all sufficiently large \(n\),
$\mathbb P(E_n^c)\le A_\beta n^{-3\beta}$,
where \(E_n\) denotes the event that \(P_n\) has no zeros in
$\mathbb D(0,r_n)$
with 
$r_n:=\frac12\,n^{1/3-\beta/2}$.

On \(E_n\), we may factor
\[
P_n(z)=\prod_{j=1}^{n}\left(1-\frac{z}{\rho_j}\right),
\qquad |\rho_j|\ge r_n,
\]
and hence for \(|z|<r_n\),
\[
\log P_n(z)
=
-\sum_{j=1}^{n}\sum_{k\ge1}\frac{1}{k}\left(\frac{z}{\rho_j}\right)^k.
\]
Therefore, for every \(k\ge1\),
\[
b_{k,n}=-\frac1k\sum_{j=1}^{n}\rho_j^{-k},
\qquad
|b_{k,n}|
\le
\frac{n}{k\,r_n^k}.
\]
Now define
\[ \rho_n:=\frac{L_n}{r_n}=2n^{-\beta/2}. \]
Since \(M>\frac{2}{\beta}-1\), we have
\[
n\rho_n^{M+1}
=
2^{M+1}n^{\,1-\frac{\beta}{2}(M+1)}
\to 0.
\]

It follows that, on \(E_n\), uniformly for \(|z|\le L_n\),
\begin{align*}
\left|\sum_{k>M} b_{k,n} z^k\right|
&\le
\sum_{k>M}\frac{n}{k}\left(\frac{|z|}{r_n}\right)^k \\
&\le
\sum_{k>M}\frac{n}{k}\rho_n^k 
\le
\frac{n\,\rho_n^{M+1}}{(M+1)(1-\rho_n)} =
\frac{n(2n^{-\beta/2})^{M+1}}{(M+1)(1-2n^{-\beta/2})}.
\end{align*}

Thus it remains to control the finite block
\[
\sum_{k=d+1}^{M} b_{k,n} z^k.
\]
By the triangle inequality,
\[
\sup_{|z|\le L_n}\left|\sum_{k=d+1}^{M} b_{k,n} z^k\right|
\le
\sum_{k=d+1}^{M}|b_{k,n}|\,L_n^k.
\]
Hence, by Cauchy--Schwarz and Markov's inequality,
\begin{align*}
&\mathbb P\!\left[
\sup_{|z|\le L_n}\left|\sum_{k=d+1}^{M} b_{k,n} z^k\right|>t
\right] \\
&\qquad\le
\frac{1}{t^2}
\mathbb E\!\left(\sum_{k=d+1}^{M}|b_{k,n}|L_n^k\right)^2 \le
\frac{M-d}{t^2}
\sum_{k=d+1}^{M}\mathbb E|b_{k,n}|^2\,L_n^{2k}.
\end{align*}
By Proposition~\ref{prop:true-log-perm-coeff},
$\mathbb E|b_{k,n}|^2\le C_k n^{1-k}$,
where
$C_k=\exp\!\bigl(C\,k\,2^k\log k\bigr)$.
Since \(L_n=n^{1/3-\beta}\), we obtain
\[
\mathbb E|b_{k,n}|^2\,L_n^{2k}
\le
C_k n^{1-k} n^{2k/3-2\beta k}
=
C_k\,n^{\,1-k/3-2\beta k}.
\]
Therefore
\[
\mathbb P\!\left[
\sup_{|z|\le L_n}\left|\sum_{k=d+1}^{M} b_{k,n} z^k\right|>t
\right]
\le
\frac{M-d}{t^2}
\sum_{k=d+1}^{M} C_k\, n^{\,1-k/3-2\beta k}.
\]

Combining this with the deterministic tail bound on \(E_n\), and then
adding the probability of \(E_n^c\), yields
\[
\mathbb P\!\left[
\sup_{|z|\le L_n}
\left|
\log P_n(z)-\sum_{k=1}^{d} b_{k,n}z^k
\right|
>
t+\frac{n(2n^{-\beta/2})^{M+1}}{(M+1)(1-2n^{-\beta/2})}
\right]
\]
\[
\le
A_\beta n^{-3\beta}
+
\frac{M-d}{t^{2}}
\sum_{k=d+1}^{M} C_k\, n^{\,1-k/3-2\beta k},
\]
which proves the lemma.

\end{proof}
\begin{thm}[Degree--accuracy tradeoff for the truncated log expansion]
\label{thm:degree-accuracy-tradeoff}
Let
\[
P_n(z):=\frac{1}{n!}\per(J+zW),
\qquad
\log P_n(z)=\sum_{k\ge1} b_{k,n} z^k,
\]
where \(W\in\C^{n\times n}\) has i.i.d.\ entries
$W_{ij}\sim \mathcal N_{\C}(0,1)$.
Fix \(\beta\in(0,1/3)\). For each integer \(d\ge2\), define
\[
\Gamma_{d,\beta}:=\frac{d+1}{6}+\beta(d+1)-\frac12.
\]

Then for every \(d\ge2\), every \(\gamma<\Gamma_{d,\beta}\), and every
\(\delta\in(0,1)\), there exists \(N=N(\beta,d,\gamma,\delta)\) such that
for all \(n\ge N\), with probability at least \(1-\delta\), the branch of
\(\log P_n\) determined by \(\log P_n(0)=0\) is well-defined on
$\mathbb D\!\bigl(0,n^{1/3-\beta}\bigr)$,
and satisfies
\[
\sup_{|z|\le n^{1/3-\beta}}
\left|
\log P_n(z)-\sum_{k=1}^{d} b_{k,n} z^k
\right|
\le n^{-\gamma}.
\]

In particular, to obtain an additive \(n^{-\gamma}\)-approximation to
\(\log P_n(z)\) uniformly for \(|z|\le n^{1/3-\beta}\), taking $d = \lceil 6 \gamma + 3 \rceil$, the corresponding
degree-\(d\) truncation gives the desired additive approximation in time
\(n^{O(\gamma)}\).
\end{thm}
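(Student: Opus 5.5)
The plan is to derive Theorem~\ref{thm:degree-accuracy-tradeoff} directly from Lemma~\ref{lem:fixed-degree-truncation-quant}, choosing $t$ and $M$ so that both the deterministic tail and the probability bound on the right-hand side become small. Fix $d\ge 2$, $\gamma<\Gamma_{d,\beta}$, and $\delta\in(0,1)$. Take $M$ to be a fixed integer with
\[
M>\max\Bigl\{d,\ \tfrac{2}{\beta}-1,\ \tfrac{2(1+\gamma)}{\beta}\Bigr\}.
\]
With this choice the tail term $\frac{n(2n^{-\beta/2})^{M+1}}{(M+1)(1-2n^{-\beta/2})}=O(n^{1-\beta(M+1)/2})$ is at most $\tfrac12 n^{-\gamma}$ once $n$ is large. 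Now set $t:=\tfrac12 n^{-\gamma}$. Then the event in the lemma contains the failure event for the theorem's bound $n^{-\gamma}$. Well-definedness of the branch of $\log P_n$ on $\mathbb D(0,n^{1/3-\beta})$ holds on the zero-free event $E_n$ used inside the lemma's proof, since $r_n=\tfrac12 n^{1/3-\beta/2}>n^{1/3-\beta}$ for large $n$. The probability of $E_n^c$ is already included in the $A_\beta n^{-3\beta}$ term.

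Next, I will check that the probability bound tends to zero. The first term $A_\beta n^{-3\beta}$ vanishes as $n\to\infty$. The second term is at most
\[
4(M-d)\,n^{2\gamma}\sum_{k=d+1}^{M} C_k\,n^{1-k/3-2\beta k}.
\]
Since $M$ is fixed, the constants $C_k$ are bounded by a constant depending only on $(\beta,d,\gamma)$. The exponent $1-k/3-2\beta k$ is decreasing in $k$, so the largest term is $k=d+1$, giving exponent $2\gamma+1-\tfrac{d+1}{3}-2\beta(d+1)=2(\gamma-\Gamma_{d,\beta})<0$. This is exactly where the definition of $\Gamma_{d,\beta}$ enters: the condition $\gamma<\Gamma_{d,\beta}$ is precisely what makes this worst term decay. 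Hence the total failure probability is $o(1)$, and there is an $N$ beyond which it is less than $\delta$.

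For the algorithmic statement, I will check that $d=\lceil 6\gamma+3\rceil$ satisfies $\gamma<\Gamma_{d,\beta}$. We have $\Gamma_{d,\beta}>\tfrac{d+1}{6}-\tfrac12\ge \tfrac{6\gamma+4}{6}-\tfrac12=\gamma+\tfrac16>\gamma$, so this holds for every $\beta>0$. The coefficients $b_{1,n},\dots,b_{d,n}$ can then be computed exactly, in the style of Barvinok, from the Taylor coefficients of $P_n$. Each such coefficient is $\frac{(n-k)!}{n!}\sum_{B}\per(B)$ over $k\times k$ submatrices, and there are $\binom{n}{k}^2$ submatrices each with a $k!$-time permanent, for total cost $n^{O(d)}=n^{O(\gamma)}$. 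Converting these to the coefficients of the logarithm uses Newton's identities in $O(d^2)$ operations. Returning to the original coordinate $z\mapsto 1/z$ then gives the short version of the theorem.

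There is no serious obstacle: all the analytic content is already in Lemma~\ref{lem:fixed-degree-truncation-quant}. The only care needed is bookkeeping, namely keeping $M$ and the $C_k$ fixed independently of $n$, and verifying the monotonicity of the exponent so that $k=d+1$ is the dominant term.
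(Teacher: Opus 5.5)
Your proposal is correct and follows essentially the same route as the paper: apply Lemma~\ref{lem:fixed-degree-truncation-quant} with $t=n^{-\gamma}/2$, take $M$ large enough that the deterministic tail is below $n^{-\gamma}/2$, and use $\gamma<\Gamma_{d,\beta}$ to make the dominant $k=d+1$ term of the probabilistic block decay. Your explicit checks fill in details the paper leaves implicit: the exponent $2(\gamma-\Gamma_{d,\beta})$, well-definedness of the branch via $r_n>n^{1/3-\beta}$ on $E_n$, and the fact that $d=\lceil 6\gamma+3\rceil$ satisfies $\gamma<\Gamma_{d,\beta}$.
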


\begin{proof}
Apply Lemma~\ref{lem:fixed-degree-truncation-quant}. Choose \(M\) sufficiently large so that the
deterministic tail term in that lemma is \(O(n^{-\gamma})\). Since
\(\gamma<\Gamma_{d,\beta}\), the probabilistic contribution coming from the
block \(k=d+1,\dots,M\) is also \(o(1)\) when evaluated at threshold
\(t=n^{-\gamma}/2\). Hence, after increasing \(N\) if necessary, the lemma
implies that with probability at least \(1-\delta\),
\[
\sup_{|z|\le n^{1/3-\beta}}
\left|
\log P_n(z)-\sum_{k=1}^{d} b_{k,n} z^k
\right|
\le n^{-\gamma}.
\]
This proves the approximation statement.

The runtime claim is immediate from the fact that \(b_{1,n},\dots,b_{d,n}\)
are recovered from the first \(d\) coefficients of \(P_n\), and these can be
computed exactly in time \(n^{O(d)}\) by summing permanents of
\(k\times k\) submatrices for \(k\le d\).
\end{proof}
\subsection{Second-order monomer-dimer analysis}\label{subsec:second-monomer-dimer}
\paragraph{Hardcore model on the line graph of \(K_{n,n}\).}
Let \(G=(V,E)\) be a finite graph. Recall that the \emph{hardcore partition function} on \(G\) with
vertex fugacities \(x=(x_v)_{v\in V}\in\C^V\) is
\[
Z_G(x):=\sum_{I\in \mathcal I(G)} \prod_{v\in I} x_v,
\]
where \(\mathcal I(G)\) denotes the collection of independent sets of \(G\).

In the present section we take
\[
G_n:=L(K_{n,n}),
\]
the line graph of the complete bipartite graph \(K_{n,n}\). In other words, we consider the monomer-dimer model on $K_{n,n}$.
We identify the vertex set
of \(G_n\) with the edge set of \(K_{n,n}\), namely
\[
V(G_n)=[n]\times[n],
\]
and two vertices \((i,j)\) and \((i',j')\) are adjacent in \(G_n\) iff they share a
row or a column:
\[
(i,j)\sim(i',j')
\qquad\Longleftrightarrow\qquad
i=i' \text{ or } j=j'.
\]
Under this identification, independent sets in \(G_n\) are exactly matchings in
\(K_{n,n}\). Therefore the hardcore partition function on \(G_n\) is precisely the
matching generating polynomial of \(K_{n,n}\):
\[
Z_{G_n}(x)
=
\sum_{M\ \mathrm{matching\ in}\ K_{n,n}} \prod_{e\in M} x_e.
\]

We study this partition function under random complex fugacities \(x=zW\), where
\(W=(W_{ij})_{1\le i,j\le n}\) has i.i.d.\ entries \(W_{ij}\sim\CN(0,1)\), together
with its second-order reweighting
\[
F_n^{(2)}(x):=Z_{G_n}(x)\exp\!\bigl(-L_1(x)-L_2(x)\bigr).
\]

\begin{thm}[Second-order bound on \(L(K_{n,n})\), Theorem~\ref{thm:linegraph-unconditional}]
There exist absolute constants \(c,C>0\) such that the following holds.

Let
\[
G_n:=L(K_{n,n}),
\qquad
F_n^{(2)}(x):=Z_{G_n}(x)\exp\!\bigl(-L_1(x)-L_2(x)\bigr),
\]
where
\[
L_1(x):=\sum_{(i,j)\in[n]\times[n]} x_{ij},
\qquad
L_2(x):=-\frac12\sum_{(i,j)}x_{ij}^2-\sum_{(i,j)\sim(i',j')}x_{ij}x_{i'j'}.
\]
Let \(W=(W_{ij})\) have i.i.d.\ entries \(W_{ij}\sim \CN(0,1)\), and define
\[
X_n^{(2)}(z):=F_n^{(2)}(zW).
\]
If
$n|z|^2\le c$,
then
\[
\log \E\bigl[|X_n^{(2)}(z)|^2\bigr]
\le
C\,n^4\,|z|^6.
\]
\end{thm}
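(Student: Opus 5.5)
The plan is to mirror the permanent analysis of Section~\ref{sec:permanent-second}. Write $x=zW$ and $w=\operatorname{vec}(W)$. Then $L_1(zW)=z\,\mathbf 1^T w$ and $L_2(zW)=z^2 w^T Q w$, where $Q$ is the real symmetric matrix
\[
Q=-\tfrac12 I-\tfrac12 A,
\]
with $A$ the adjacency matrix of $L(K_{n,n})$, so that each adjacent pair is counted once.

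\textbf{Step 1 (tilt).} Write
\[
\E|X_n^{(2)}|^2=M\cdot\widehat\E\bigl[|Z_{G_n}(zW)|^2\bigr],
\qquad
M=\E\exp\bigl(-2\Re(z\mathbf 1^Tw)-2\Re(z^2w^TQw)\bigr).
\]
Diagonalize $Q$ (its eigenvalues are explicit, since $A$ acts as scalars on $\mathcal H_0,\mathcal H_r,\mathcal H_c,\mathcal H_m$). Each one-dimensional Gaussian integral then gives
\[
M=\det(S)^{-1/2}e^{(\cdots)},\qquad S=I-4|z|^4Q^2,
\]
and the tilted law is Gaussian with mean $\widehat\E w=-\bar z\,S'^{-1}\mathbf 1$, covariance $\Gamma\approx S^{-1}$, and pseudo-covariance $\Delta\approx -2\bar z^2QS^{-1}$. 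The relevant scale is $\|Q\|\approx n$ and $|z|^2n\le c$, so the tilt is nondegenerate. Substituting $1+x_e\to\kappa_e+z\eta_e$, where $\kappa_e=1-\Theta(|z|^2 n)$ is a uniform constant by symmetry, we expand
\[
Z_{G_n}\overline{Z_{G_n}}=\sum_{M_1,M_2}\prod_{e\in M_1}(\kappa+z\eta_e)\prod_{f\in M_2}(\kappa+\bar z\bar\eta_f)
\]
and apply Wick's formula as in Lemma~\ref{lem:wick}. The result is an auxiliary partition function: a sum over pairs of matchings $(M_1,M_2)$ of $K_{n,n}$, each weighted by $\kappa^{|M_1|+|M_2|}$ times a matching partition function on $M_1\sqcup M_2$ with real edge weights $|z|^2\Gamma$ and $z^2\Delta$. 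This is no longer a fixed-size system as it was for permutations.

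\textbf{Step 2 (auxiliary system via KP).} I would instead regard the whole object as a polymer or hardcore gas on copies $e_t,f_s$ of $[n]^2$. Here the $tt$ and $ss$ hardcore constraints come from matchings, and the Wick pairings give two-body interactions. After separating the diagonal pairings $e_t=f_s$ (weight $\approx|z|^2/\kappa^2$, analogous to $w_{\rm diag}$), the combined system becomes a single hardcore model on the doubled line graph with a soft pair interaction. For this I would apply the Koteck\'y--Preiss criterion with weights $|\lambda_u|\approx |z|^2$ and neighbourhoods of size $O(n)$; the condition $n|z|^2\le c$ is exactly what KP needs. KP then gives a convergent cluster expansion
\[
\log(\text{aux})=\sum_{\text{clusters}},
\]
and I would compute the clusters of order up to two in the activities exactly, bounding the remainder by $\sum_v a_v$ restricted to clusters of size at least $3$. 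That remainder is $O(n^2\cdot n^2|z|^6)=O(n^4|z|^6)$, since there are $n^2$ sites, and the cluster weight contributes $(n|z|^2)^2|z|^2$.

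\textbf{Step 3 (cancellation).} The main obstacle is to show that all contributions of order $|z|^2$ and $n^2|z|^4$ cancel exactly. These come from $-\tfrac12\log\det S$, from the mean-shift exponent in $M$, from $2\log\kappa$ per site, from the diagonal pairing activity, and from the order-two clusters (the hardcore overlaps of $tt$ and $ss$ types, together with the $\Delta$ pairings). I would organize this as in the computation of $\psi(\alpha)$. Expand every term to order $|z|^4$ in closed form using the explicit spectrum of $A$, namely the eigenvalue $2n-2$ on $\mathcal H_0$, $n-2$ on $\mathcal H_r$ and $\mathcal H_c$, and $-2$ on $\mathcal H_m$. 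Then verify the identity by noting that it is forced: $L_1,L_2$ are by construction the first two cumulants of $\log Z_{G_n}$, so the order-$\le 2$ terms of $\log\E|X^{(2)}|^2$ viewed as a formal series in $(z,\bar z)$ must vanish. This follows from $\E X^{(2)}=1$ and the fact that $\log X^{(2)}=O(z^3)$ formally, so $\E|X^{(2)}|^2-1$ begins at order $|z|^6$.

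With this structural argument in hand, only the uniform bound on the third-and-higher-order remainder from KP is needed, and that remainder gives $Cn^4|z|^6$.
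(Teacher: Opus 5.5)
\textbf{There is a genuine error in Step 1, and it breaks Step 2 as written.}

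You carried the structure of the permanent over to the matching polynomial. In $\frac{1}{n!}\per(J+zW)$ every row contributes a factor $1+zW_{i\sigma(i)}$. By contrast, $Z_{G_n}(zW)=\sum_M\prod_{e\in M}zw_e$ has no ``$1+$'': an edge of a matching contributes $zw_e$ and nothing else.

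Under the tilt, the mean shift lies in $\mathcal H_0$, where $2Q$ has eigenvalue $-(2n-1)$. This gives $zw_e=-\beta+z\eta_e$ with $\beta=|z|^2/(1-(2n-1)|z|^2)=O(|z|^2)$. Even the size of the shift is off in your formula: it is $O(|z|^2)$, not $\Theta(n|z|^2)$. The correct expansion is
$\widehat{\mathbb E}|Z|^2=\sum_{M_1,M_2}\widehat{\mathbb E}\prod_{e\in M_1}(-\beta+z\eta_e)\prod_{f\in M_2}(-\beta+\bar z\,\overline{\eta_f})$,
not a sum with factors $\kappa+z\eta_e$ where $\kappa\approx 1$.

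Taken literally, your auxiliary system weights $(M_1,M_2)$ by $\kappa^{|M_1|+|M_2|}\approx 1$. It is then roughly the squared number of matchings of $K_{n,n}$. No gas with activities $O(|z|^2)$ on a graph of degree $O(n)$ can represent it, so your KP claim in Step 2 contradicts Step 1. You also cannot pull out $(-\beta)^{|M_1|+|M_2|}$ as in the permanent case: the Wick pair weights would then become $|z|^2\Gamma/\beta^2\sim|z|^{-2}$.

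\textbf{How to repair it.} Keep $-\beta$ as the activity of an \emph{unpaired} copy, and treat the whole sum over $(M_1,M_2,\text{Wick pairing})$ as a single polymer gas on two layers of $L(K_{n,n})$. Each layer carries its own hardcore constraint, of degree $2n-2$. The polymers are:
\begin{itemize}
\item unpaired copies, with weight $-\beta$;
\item diagonal pairs $\{e_t,e_s\}$, with weight $|z|^2(S^{-1})_{ee}=|z|^2(1+O(n|z|^4))$;
\item off-diagonal $ts$, $tt$, $ss$ pairs, with weights $|z|^2(S^{-1})_{ef}$ and $-2|z|^4(QS^{-1})_{ee'}$. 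Summed over partners, these are $O(|z|^2(n|z|^2)^2)$.
\end{itemize}

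For this gas:
\begin{itemize}
\item KP does hold when $n|z|^2\le c$.
\item Clusters with at least three polymers, or containing an off-diagonal pair, contribute $O(n^4|z|^6)$.
\item Your formal argument that the $t$ and $t^2$ coefficients vanish (with $t=|z|^2$) is valid. Indeed $X^{(2)}(z)=1+\sum_{k\ge3}c_k(W)z^k$ gives $\mathbb E|X^{(2)}|^2=1+\sum_{k\ge3}\mathbb E|c_k|^2t^k$, and every piece is analytic in $t$ for $|t|<c/n$.
\end{itemize}
You must still bound the Taylor tails beyond $t^2$ of the explicit pieces. For example, the mean-shift exponent $n^2t/(1-(2n-1)t)$ and the single-copy term $-2n^2\beta$ each carry $\Theta(n^4t^3)$. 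This is routine, but it does not come ``from KP''.

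\textbf{Comparison with the paper.} Once repaired, your route is a genuine alternative to the paper's, which never tilts. The paper instead:
\begin{itemize}
\item splits $F_n^{(2)}$ into exact-support pieces $\mathcal C_U$, and uses orthogonality to get a gas with \emph{nonnegative} weights $\mathbb E|\mathcal C_U|^2$, so compatibility can simply be dropped;
\item gets the cancellation for $|U|\le 2$ from $(1+s)e^{-s+s^2/2}=1+O(s^3)$;
\item handles $|U|\ge 3$ with Gaussian Poincar\'e and a Hubbard--Stratonovich decoupling. This produces factorial weights $\prod_i(r_i-1)!\prod_j(c_j-1)!$, which are summed with a Catalan plane-tree count.
\end{itemize}
Your tilt absorbs $L_1+L_2$ exactly and avoids that counting lemma. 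The price is the explicit spectral computation and a KP estimate with signed activities.
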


We defer the proof of this result to Appendix~\ref{apdx:monomer-dimer-second-order}.

From this result the zero-free region follows by the same application of Jensen's formula. It is also straightforward to derive analogous corollaries in terms of algorithmic guarantees, etc.\ as in the permanent case --- we omit the details.

\section{Universality of first-order zero-free region}\label{sec:permanent-universality}
In this section, we show how to prove a zero-free bound for random matrices with general sub-exponential entries. This requires a new argument, because some of the approximations for the log-permanent obtained earlier are \emph{not} universal --- see the discussion in Section~\ref{sec:nonuniversal}.

\paragraph{Setup.} Let $W$ be an $n\times n$ complex random matrix with i.i.d. entries
such that $\mathbb{E}[W_{ij}]=0$, $\mathbb{E}|W_{ij}|^{2}=\nu$,
and we have the sub-exponential condition
\[
\sup_{\phi\in[0,2\pi)}\Vert\operatorname{Re}(e^{-i\phi}W_{ij})\Vert_{\psi_{1}}\leq K\sqrt{\nu}.
\]
In particular, this covers complex Gaussians and all real sub-exponential
distributions. Let $S_{W}:=\frac{1}{n}\sum_{i,j}W_{ij}$ and define
\[
G_{W}(z):=\frac{1}{n!}\per(J+zW),\quad X_{W}(z):=e^{-zS_{W}}G_{W}(z).
\]
Notice that the zeros of $G_{W}$ and $X_{W}$ coincide. Define $M(u):=\mathbb{E}[e^{uW_{ij}+\overline{u}\overline{W_{ij}}}]$
and consider the tilted expectation 
\[
\mathbb{E}_{u}[F(W)]:=\frac{\mathbb{E}[F(W)e^{uW_{ij}+\overline{u}\overline{W_{ij}}}]}{M(u)}.
\]
Then set
\[
\mu(u):=\mathbb{E}_{u}[W],\quad\sigma^{2}(u):=\mathbb{E}_{u}[|W-\mu(u)|^{2}].
\]
Note that $\mu(u)=\partial_{u}\Psi(u)$ and $\sigma^{2}(u)=\partial_{u}\partial_{\overline{u}}\Psi(u)$
where $\Psi(u)=\log M(u)$. 
\subsection{Exact formulas}

\begin{prop}
It holds that
\[
\mathbb{E}|X_{W}(z)|^{2}=M(-z/n)^{n^{2}}|1+z\mu(-z/n)|^{2n}\sum_{k=0}^{n}\frac{1}{k!}\left(\frac{|z|^{2}\sigma^{2}(-z/n)}{|1+z\mu(-z/n)|^{2}}\right)^{k}.
\]
\end{prop}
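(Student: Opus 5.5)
We follow the template of Lemma~\ref{lem:exact_moment_clean}, replacing the Gaussian completion-of-squares by a general exponential tilt. First we write
\[
|X_W(z)|^2 = |G_W(z)|^2 \exp\bigl(-zS_W - \overline{z}\,\overline{S_W}\bigr) = |G_W(z)|^2 \prod_{i,j} e^{uW_{ij}+\overline{u}\overline{W_{ij}}}, \qquad u:=-z/n .
\]
Taking expectations and using independence of the entries gives
\[
\mathbb{E}|X_W(z)|^2 = M(u)^{n^2}\,\widehat{\mathbb{E}}|G_W(z)|^2 ,
\]
where $\widehat{\mathbb{E}}$ is the product tilt. Under $\widehat{\mathbb{E}}$ the entries are again i.i.d., each with law given by the one-entry tilted measure $\mathbb{E}_u$. (We implicitly need $M(u)<\infty$, which the subexponential hypothesis provides for $|u|$ small, i.e.\ $|z|\lesssim n/(K\sqrt{\nu})$; if $M=\infty$ both sides are infinite.)

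Next we center each entry under the tilted law, writing $W_{ij}=\mu(u)+\eta_{ij}$ with $\mathbb{E}_u\eta_{ij}=0$ and $\mathbb{E}_u|\eta_{ij}|^2=\sigma^2(u)$. Then $1+zW_{ij}=\kappa+z\eta_{ij}$ with $\kappa:=1+z\mu(u)$, and
\[
\widehat{\mathbb{E}}|G_W(z)|^2=\mathbb{E}_{\sigma,\tau}\,\widehat{\mathbb{E}}\Bigl[\prod_i(\kappa+z\eta_{i\sigma(i)})\prod_j(\overline{\kappa}+\overline{z}\,\overline{\eta_{j\tau(j)}})\Bigr].
\]
For fixed $(\sigma,\tau)$, the edges of each permutation are distinct, so each entry appears at most once in each product. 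An entry appearing only in the $\sigma$-product contributes $\kappa$, one appearing only in the $\tau$-product contributes $\overline{\kappa}$, because the $\eta$ are centered and independent. An entry appearing in both contributes
\[
\widehat{\mathbb{E}}\bigl[(\kappa+z\eta)(\overline{\kappa}+\overline{z}\overline{\eta})\bigr]=|\kappa|^2+|z|^2\sigma^2(u).
\]
This uses only the mean and absolute second moment, so no pseudo-variance enters; this is exactly why no circular symmetry is needed. The inner expectation is therefore $|\kappa|^{2n}(1+|z|^2\sigma^2/|\kappa|^2)^{m}$ with $m=\fix(\sigma\tau^{-1})$.

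Finally, Lemma~\ref{lem:fixedpts-gen} with $t=1+|z|^2\sigma^2(u)/|\kappa|^2$ gives $\mathbb{E}_\pi t^{\fix(\pi)}=\sum_{k=0}^n \frac{1}{k!}\bigl(|z|^2\sigma^2/|\kappa|^2\bigr)^k$, and multiplying by $M(u)^{n^2}|\kappa|^{2n}$ yields the claimed formula. If $\kappa=0$, the same computation holds directly with the $|\kappa|^{2(n-m)}$ form; alternatively one argues by continuity.

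The only delicate point is bookkeeping with the complex tilt. We must check that $e^{uW+\overline{u}\overline{W}}=e^{2\Re(uW)}$ is a positive real weight, so $\mathbb{E}_u$ is a genuine probability measure. We must also check that the mean of $W$ under it is $\mu(u)$ (so that $\kappa=1+z\mu(-z/n)$), and that $|\kappa|^2$ rather than $\kappa^2$ appears, since the conjugate product gives $\overline{\kappa}$.
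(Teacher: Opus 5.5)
Your proposal is correct and matches the paper's proof. Both tilt by $e^{2\Re(uW_{ab})}$ with $u=-z/n$, factor the expectation entrywise into $M(u)$, $M(u)(1+z\mu(u))$, its conjugate, and $M(u)\bigl(|1+z\mu(u)|^2+|z|^2\sigma^2(u)\bigr)$ according to whether $(a,b)$ is used by neither, one, or both permutations, and finish with Lemma~\ref{lem:fixedpts-gen}; your product-tilt-and-centering packaging (as in Lemma~\ref{lem:exact_moment_clean}) and your remarks on finiteness of $M(u)$ and on the case $1+z\mu(u)=0$ are harmless additions that the paper leaves implicit.
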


\begin{proof}
Write
\[
G_{W}(z)=\frac{1}{n!}\sum_{\sigma\in S_{n}}\prod_{i=1}^{n}(1+zW_{i,\sigma(i)}).
\]
 Let $u:=-z/n$. Then 
\[
\mathbb{E}|X_{W}(z)|^{2}=\frac{1}{(n!)^{2}}\sum_{\sigma,\tau\in S_{n}}\mathbb{E}\left[\prod_{a,b=1}^{n}\exp\left(uW_{a,b}+\overline{u}\overline{W_{a,b}}\right)\prod_{i=1}^{n}(1+zW_{i,\sigma(i)})(1+\overline{z}\overline{W_{i,\tau(i)}})\right].
\]
Fix $\sigma$ and $\tau$. Let $k:=\#\{i\in[n]:\sigma(i)=\tau(i)\}$.
Each pair $(a,b)$ which is used by neither permutation (there are
$n^{2}-2n+k$ such pairs) contributes 
\[
\mathbb{E}[\exp(uW_{a,b}+\overline{u}\overline{W_{a,b}})]=M(u)
\]
to the product. Each pair $(a,b)$ which is used by both permutation
(i.e., $b=\sigma(a)=\tau(a)$) (there are $k$ such pairs) contributes
\begin{align*}
\mathbb{E}[\exp(uW_{a,b}+\overline{u}\overline{W_{a,b}})|1+zW_{a,b}|^{2}] & =M(u)\mathbb{E}_{u}[|1+zW_{a,b}|^{2}]\\
 & =M(u)\mathbb{E}_{u}[|1+z\mu(u)+z(W_{a,b}-\mu(u))|^{2}]\\
 & =M(u)(|1+z\mu(u)|^{2}+|z|^{2}\sigma^{2}(u)).
\end{align*}
Each pair $(a,b)$ which is used by $\sigma$ but not $\tau$ (there
are $n-k$ such pairs) contributes
\begin{align*}
\mathbb{E}[\exp(uW_{a,b}+\overline{u}\overline{W_{a,b}})(1+zW)] & =M(u)\mathbb{E}_{u}[1+zW]\\
 & =M(u)(1+z\mu(u)).
\end{align*}
Similarly, each pair $(a,b)$ which is used by $\tau$ but not $\sigma$
(there are $n-k$ such pairs) contributes $M(u)(1+\overline{z}\overline{\mu(u)}).$
Thus, the expectation term for a fixed pair of permutations $(\sigma,\tau)$
is 
\[
M(u)^{n^{2}}|1+z\mu(u)|^{2(n-k)}\left(|1+z\mu(u)|^{2}+|z|^{2}\sigma^{2}(u)\right)^{k}=M(u)^{n^{2}}|1+z\mu(u)|^{2n}t^{k},
\]
where $t:=1+\frac{|z|^{2}\sigma^{2}}{|1+z\mu(u)|^{2}}.$ Note that
$k=\fix(\pi)$ where $\pi=\sigma^{-1}\tau\sim\mathsf{Unif}(S_{n})$,
so using the probability generating function for $\fix(\pi)$ from Lemma~\ref{lem:fixedpts-gen} gives
\[
\mathbb{E}|X_{W}(z)|^{2}=M(u)^{n^{2}}|1+z\mu(u)|^{2n}\mathbb{E}_{\pi}[t^{\fix(\pi)}]
\]
which yields the desired formula. 
\end{proof}
\begin{lem}
There exists $r_{0}(K)>0$ such that $F(a,b):=\log\mathbb{E}e^{aW_{ij}+b\overline{W_{ij}}}$
    is holomorphic on the bidisc $D:=\{(a,b)\in\mathbb{C}^{2}:|a|\leq r_{0}(K)/\sqrt{\nu},|b|\leq r_{0}(K)/\sqrt{\nu}\}$
and has an absolutely convergent power series
\[
F(a,b)=\sum_{p+q\geq2}f_{pq}a^{p}b^{q}
\]
where $|f_{pq}|\leq\frac{1}{2}r_{0}(K)^{-(p+q)}\nu^{(p+q)/2}$. 
\end{lem}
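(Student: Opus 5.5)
Write $W=W_{ij}$ and $L(a,b):=\mathbb{E}e^{aW+b\overline{W}}$. The plan is to bound the moment generating function on a slightly larger polydisc, show it stays near $1$ there, take the principal logarithm, and then read off the Taylor coefficients from Cauchy's estimates. The vanishing of the terms with $p+q\le 1$ follows from $L(0,0)=1$ and $\mathbb{E}W=\mathbb{E}\overline W=0$.

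\emph{Step 1: exponential moment bound.} For $|a|,|b|\le s/\sqrt\nu$, write $aW+b\overline W=(a+\bar b)\operatorname{Re}W+i(a-\bar b)\operatorname{Im}W$. Both $\operatorname{Re}W=\operatorname{Re}(e^{-i0}W)$ and $\operatorname{Im}W=\operatorname{Re}(e^{-i\pi/2}W)$ have $\psi_1$ norm at most $K\sqrt\nu$. Hence
\[
|aW+b\overline W|\le 2s\nu^{-1/2}(|\operatorname{Re}W|+|\operatorname{Im}W|),
\]
and by Cauchy--Schwarz
\[
\mathbb{E}e^{|aW+b\overline W|}\le \bigl(\mathbb{E}e^{4s\nu^{-1/2}|\operatorname{Re}W|}\bigr)^{1/2}\bigl(\mathbb{E}e^{4s\nu^{-1/2}|\operatorname{Im}W|}\bigr)^{1/2}.
\]
This is finite, and in fact at most $2$, once $4sK\le 1$ by the definition of $\psi_1$. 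Dominated convergence then shows that $L$ is holomorphic in $(a,b)$ on the open bidisc of radius $s_0/\sqrt\nu$, where $s_0=c/K$. The same domination bounds every partial derivative.

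\emph{Step 2: $L$ stays close to $1$.} Use the bound $|e^x-1-x|\le \tfrac{|x|^2}{2}e^{|x|}$. Since $\mathbb{E}(aW+b\overline W)=0$, this gives
\[
|L(a,b)-1|\le \tfrac12\,\mathbb{E}\bigl[|aW+b\overline W|^2e^{|aW+b\overline W|}\bigr].
\]
By Cauchy--Schwarz together with the standard sub-exponential moment bound $\mathbb{E}|\operatorname{Re}W|^4\lesssim K^4\nu^2$, the right-hand side is at most $C K^2 s^2$ whenever $|a|,|b|\le s/\sqrt\nu$ and $s\le s_0/2$. Choosing $s_1=c'/K$ small makes $|L-1|\le 1/2$ on the closed bidisc of radius $s_1/\sqrt\nu$. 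The principal branch $F=\log L$ is then holomorphic there, with $|F|\le \log 2<1$ (using $|\log(1+w)|\le 2|w|\le 1$ for $|w|\le1/2$ suffices too, giving $|F|\le 1$). Also $F(0,0)=0$. Moreover $\partial_aF(0,0)=\mathbb{E}W=0$ and $\partial_bF(0,0)=\mathbb{E}\overline W=0$, so all terms with $p+q\le1$ vanish.

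\emph{Step 3: Cauchy estimates.} Apply the two-variable Cauchy inequality on the polydisc of radius $\rho=s_1/\sqrt\nu$. This gives
\[
|f_{pq}|\le \sup|F|\cdot\rho^{-(p+q)}\le s_1^{-(p+q)}\nu^{(p+q)/2}.
\]
To get the factor $\tfrac12$, shrink the radius. Set $r_0:=s_1/2$ and use Cauchy on radius $\rho$ but state the result for radius $r_0/\sqrt\nu$. Then $|f_{pq}|\le 2^{-(p+q)}r_0^{-(p+q)}\nu^{(p+q)/2}\le \tfrac14 r_0^{-(p+q)}\nu^{(p+q)/2}$, since $p+q\ge2$. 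Absolute convergence on the closed bidisc of radius $r_0/\sqrt\nu$ follows from the geometric bound $\sum_{p+q\ge2}\tfrac14\cdot 1<\infty$ after the factor $2^{-(p+q)}$. Holomorphy on $D$ comes from Step 2.

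The main obstacle is Step 2: obtaining the bound $|L-1|\le 1/2$ with a scale that depends only on $K$, uniformly in the complex directions. The rotation-uniform $\psi_1$ hypothesis is used exactly there.
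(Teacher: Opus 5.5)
Your proposal is correct and follows essentially the same route as the paper: the moment generating function stays within $\tfrac12$ of $1$ on a $K$-dependent bidisc, you take the principal logarithm, the linear terms vanish, and Cauchy estimates finish. The differences are that the paper controls $|M-1|$ by summing the moment series with $\mathbb{E}|W|^m\le (CK\sqrt{\nu}\,m)^m$ rather than via the $\psi_1$ exponential bound and $|e^x-1-x|\le\tfrac{|x|^2}{2}e^{|x|}$, and that applying Cauchy's estimate on twice the final radius gives the factor $\tfrac12$ and absolute convergence on the closed bidisc directly; your sentence about $\sum\tfrac14\cdot 1$ should say that the terms are bounded by $2^{-(p+q)}$. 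One minor slip: $aW+b\overline{W}=(a+b)\operatorname{Re}W+i(a-b)\operatorname{Im}W$, not $(a+\bar b)\operatorname{Re}W+i(a-\bar b)\operatorname{Im}W$, but you only use the modulus bound $(|a|+|b|)(|\operatorname{Re}W|+|\operatorname{Im}W|)$, so nothing changes.
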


\begin{proof}
From the directional sub-exponential condition with $\phi=0$ and
$\phi=\pi/2$, we get
\[
\Vert\operatorname{Re}(W_{ij})\Vert_{\psi_{1}}\leq K\sqrt{\nu},\quad\Vert\operatorname{Im}(W_{ij})\Vert_{\psi_{1}}\leq K\sqrt{\nu}.
\]
Since $|W|\leq|\operatorname{Re}(W_{ij})|+|\operatorname{Im}(W_{ij})|$,
we get $\Vert|W_{ij}|\Vert_{\psi_{1}}\leq C_{1}K\sqrt{\nu}$ for some
absolute constant $C_{1}$. By a standard fact about moments of sub-exponential
variables, we get
\[
\mathbb{E}|W_{ij}|^{m}\leq(C_{3}K\sqrt{\nu}m)^{m}\quad\text{for all }m\ge1.
\]
Fix $r>0$ and assume $a,b\in\mathbb{C}$ satisfy $|a|,|b|\leq r/\sqrt{\nu}$.
Then
\[
|aW_{ij}+b\overline{W_{ij}}|\leq(|a|+|b|)|W_{ij}|\leq\frac{2r}{\sqrt{\nu}}|W_{ij}|.
\]
Using $m!\geq(m/e)^{m}$, we get
\[
\frac{\mathbb{E}|aW_{ij}+b\overline{W_{ij}}|^{m}}{m!}\leq\frac{(2r/\sqrt{\nu})^{m}(C_{3}K\sqrt{\nu}m)^{m}}{m!}\leq(2eC_{3}Kr)^{m}.
\]
Choosing $r_{0}(K):=(8eC_{3}K)^{-1}$, we have for $|a|,|b|\leq r_{0}(K)/\sqrt{\nu}$,
\[
\sum_{m\geq0}\frac{\mathbb{E}|aW_{ij}+b\overline{W_{ij}}|^{m}}{m!}\leq\sum_{m\geq0}2^{-m}<\infty,
\]
so $M(a,b):=\mathbb{E}e^{aW_{ij}+b\overline{W_{ij}}}=\sum_{m\geq0}\frac{\mathbb{E}(aW_{ij}+b\overline{W_{ij}})^{m}}{m!}$
converges absolutely and locally uniformly, so $M$ is holomorphic
on $D$. Because $\mathbb{E}W_{ij}=0$, 
\[
M(a,b)-1=\sum_{m\geq2}\frac{\mathbb{E}(aW_{ij}+b\overline{W_{ij}})^{m}}{m!},
\]
so 
$|M(a,b)-1|\leq\sum_{m\geq2}2^{-m}=\frac{1}{2}$.
Shrinking $r_{0}(K)$ by an absolute factor if needed, we may ensure
$|M(a,b)-1|\leq1/4$ on $D$, so that $M$ stays in a simply connected
neighborhood of 1. This ensures that $F(a,b)=\log M(a,b)$ is holomorphic
on $D$ and hence $F$ admits a power series
\[
F(a,b)=\sum_{p,q\geq0}f_{pq}a^{p}b^{q}.
\]
Also, $f_{00}=F(0,0)=0$, and
\[
f_{10}=\partial_{a}F(0,0)=\frac{\mathbb{E}W_{ij}}{M(0,0)}=0,\quad f_{01}=\partial_{b}F(0,0)=\frac{\mathbb{E}\overline{W_{ij}}}{M(0,0)}=0,
\]
so the sum goes over $p+q\geq2$. Applying Cauchy coefficient estimates
on $D$ gives
\[
|f_{pq}|\leq\sup_{a,b\in D}|F(a,b)|\left(\frac{\sqrt{\nu}}{r_{0}(K)}\right)^{p+q}\leq\frac{1}{2}\left(\frac{\sqrt{\nu}}{r_{0}(K)}\right)^{p+q},
\]
which finishes the proof.
\end{proof}
\subsection{Averaged second-moment estimate}
\begin{thm}
There exist constants $c_{\ast}(K),C_{\ast}(K)>0$ such that for every
$R>0$ with $\nu R^{2}/n\leq c_{\ast}(K)$, 
\[
\mathbb{E}_{\theta}\log\mathbb{E}|X_{W}(Re^{i\theta})|^{2}\leq\frac{C_{\ast}(K)\nu^{2}R^{4}}{n}.
\]
\end{thm}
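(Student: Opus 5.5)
Starting from the exact formula in the preceding proposition, with $u=-z/n$, $z=Re^{i\theta}$, $\mu=\mu(u)$, $\sigma^2=\sigma^2(u)$, I bound the truncated exponential sum by the full series and write
\[
\log\mathbb{E}|X_W(z)|^2\le n^2\Psi(u)+2n\log|1+z\mu|+\frac{|z|^2\sigma^2}{|1+z\mu|^2},
\]
where $\Psi(u)=\log M(u)=F(u,\overline u)$. The point is that we do not need pointwise cancellation (which fails for non-Gaussian entries); we only need cancellation after averaging over $\theta$. So I expand each of the three terms in powers of $u,\overline u$ using the preceding lemma, which gives $F(a,b)=\sum_{p+q\ge2}f_{pq}a^pb^q$ with $|f_{pq}|\le \frac12 r_0^{-(p+q)}\nu^{(p+q)/2}$. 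Then I identify which monomials survive $\mathbb{E}_\theta$. A monomial $z^p\overline z^q$ averages to zero unless $p=q$. The hypothesis $\nu R^2/n\le c_*$ guarantees $|u|\le r_0/\sqrt\nu$ and that all series converge.

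\textbf{First term.} We have $n^2\Psi(u)=n^2\sum f_{pq}u^p\overline u^q$, and after averaging only the diagonal terms $n^2\sum_{p\ge1}f_{pp}|u|^{2p}$ remain. The $p=1$ term is $n^2 f_{11}R^2/n^2=\nu R^2$, since $f_{11}=\partial_a\partial_bF(0,0)=\mathbb{E}|W|^2=\nu$. The $p=2$ term is $O(n^2\nu^2R^4/n^4)$, and higher terms are geometrically smaller. So the first term averages to $\nu R^2+O(\nu^2R^4/n^2)$.

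\textbf{Second term.} Write $\mu(u)=\partial_aF(u,\overline u)=\sum_{p+q\ge2}pf_{pq}u^{p-1}\overline u^q$. Its leading part is $2f_{20}u+f_{11}\overline u=2f_{20}u+\nu\overline u$, and the remainder is $O(\nu^{3/2}|u|^2)$. Hence
\[
z\mu=-\nu|z|^2/n-2f_{20}z^2/n+O(\nu^{3/2}R^3/n^2).
\]
Then $2n\log|1+z\mu|=2n\Re\big(z\mu-(z\mu)^2/2+\dots\big)$, and I track the $\theta$-average:
\begin{itemize}
\item The linear part averages to $-2\nu R^2$, because the $z^2$ term averages to zero.
\item The cubic remainder contributes $2n\cdot O(\nu^{3/2}R^3/n^2)$, but its phase-$0$ part is what matters. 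The $z^p\overline z^q$ terms with $p=q$ come from $f_{pq}$ with $p-1+q$ odd only after multiplying by $z$. Concretely, $z\,u^{p-1}\overline u^q\propto z^p\overline z^q$ is diagonal when $p=q$, giving $p+q$ even and $\ge4$, hence of size $O(\nu^2R^4/n^3)$; after the factor $2n$ this is $O(\nu^2R^4/n^2)$.
\item The quadratic term $n\Re(z\mu)^2$ has average $n(\nu^2R^4+4|f_{20}|^2R^4)/n^2=O(\nu^2R^4/n)$, since $|f_{20}|\le\nu$ (or directly $|\mathbb{E}W^2|\le\nu$).
\end{itemize}

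\textbf{Third term.} $\sigma^2(u)=\partial_a\partial_bF=\nu+O(\nu^{3/2}|u|)$. The diagonal-in-phase part of the $O(|u|)$ correction vanishes, so the leading correction after averaging is $O(\nu^2|u|^2)$. Combined with $|1+z\mu|^{-2}=1+2\nu R^2/n+O(\nu^2R^4/n^2)+\text{oscillating}$, this gives $\nu R^2+O(\nu^2R^4/n)$ after averaging. Cross terms between oscillating factors, such as $z^2$ from $\mu$ times $\overline z^2$ from $|1+z\mu|^{-2}$, are each $O(\nu^2R^4/n)$.

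\textbf{Conclusion.} Summing, the $\nu R^2$ contributions cancel ($+1-2+1$), leaving $O_K(\nu^2R^4/n)$. Here I use $\nu R^2/n\le c_*$ to sum the tails, and the constant $f_{20}$ enters only through $|f_{20}|\le\nu$, which is where the $K$-dependence comes in.

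\textbf{Main obstacle.} The main difficulty is the bookkeeping of the non-diagonal, Gaussian-absent terms, chiefly the pseudo-variance $f_{20}=\mathbb{E}W^2/2$ and the third cumulants $f_{21},f_{12}$. Pointwise in $\theta$, these give contributions of order $\sqrt\nu R\cdot(\nu R^2/n)$ or $\nu R^2$ that are not small. I must verify that every such contribution either has nonzero net phase, and so integrates to zero, or is of order $\nu^2R^4/n$. A cleaner way to organize this is to view $\Phi(z,\overline z)$, the right-hand side of the bound above, as a real-analytic function $\sum c_{pq}z^p\overline z^q$. Then $\mathbb{E}_\theta\Phi=\sum_p c_{pp}R^{2p}$, and the task reduces to bounding the diagonal coefficients: $c_{11}=0$ exactly, and $|c_{pp}|\le C(K)^p\nu^pn^{1-p}$ for $p\ge2$. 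I would establish this coefficient bound by Cauchy estimates on a polydisc of radius $\asymp\sqrt{n/\nu}$ in each of $z$ and $\overline z$ treated independently. On that polydisc, $n^2\Psi$, $n\log(1+z\mu)$ and the third term are each $O(n)$, which gives $|c_{pp}|\lesssim n(\nu/n)^p$. This yields the claim for $p\ge2$ once $c_{11}=0$ is checked by hand.
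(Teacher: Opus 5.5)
Your proposal is correct, and your term-by-term computation is essentially the paper's proof. You use the same upper bound $n^2\Psi(u)+2n\log|1+z\mu|+|z|^2\sigma^2/|1+z\mu|^2$ obtained from $\sum_{k\le n}t^k/k!\le e^t$, the same observation that only phase-zero monomials survive $\mathbb{E}_\theta$, and the same $+1-2+1$ cancellation of the $\nu R^2$ terms. The paper packages the middle term as $w_\theta=-\alpha+\widetilde w_\theta$ with $\sup_\theta|\widetilde w_\theta|\le C_K\alpha$ and $|\mathbb{E}_\theta\widetilde w_\theta|=O_K(\alpha^2)$, then applies $\log|1+w|=\Re w+O(|w|^2)$; this is your bookkeeping.

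The Cauchy-estimate reorganization you sketch is a genuinely different and slicker way to finish, and it works. Because $\overline{F(a,b)}=F(\bar b,\bar a)$, one has $\overline{\mu(u)}=\partial_bF(u,\bar u)$. So the right-hand side extends to a function $\Phi(z_1,z_2)$ holomorphic on $|z_1|,|z_2|\le c\sqrt{n/\nu}$. There $|z_1\partial_aF|$ and $|z_2\partial_bF|$ (evaluated at $(-z_1/n,-z_2/n)$) stay small, and $|\Phi|\le Cn$. This gives $|c_{pp}|\le Cn(C\nu/n)^p$ in one stroke. Together with $c_{00}=0$ and $c_{11}=\nu-\nu-\nu+\nu=0$, you never have to chase the oscillating cross terms (those involving $f_{20}$ or third cumulants) individually. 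What the paper's explicit route buys in return is the separate leading asymptotics $\nu R^2$, $-2\nu R^2$, $\nu R^2$ of the three pieces.

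One correction to your commentary: the $K$-dependence does not enter through $f_{20}$, since $|f_{20}|=|\mathbb{E}W^2|/2\le\nu/2$ for any distribution. It enters through $r_0(K)$, which controls all coefficients $f_{pq}$ with $p+q\ge3$ and hence the admissible $c_*(K)$ and the constant $C_*(K)$.
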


\begin{proof}
Set 
\[
z_{\theta}:=Re^{i\theta},\quad u_{\theta}:=-\frac{R}{n}e^{i\theta},\quad\alpha:=\frac{\nu R^{2}}{n}.
\]
Using the formula for $\mathbb{E}|X_{W}(z)|^{2}$ and $\sum_{k=0}^{n}\frac{t^{k}}{k!}\leq\exp(t)$,
we have
\[
\log\mathbb{E}|X_{W}(z_{\theta})|^{2}\leq n^{2}\Psi(u_{\theta})+2n\log|1+z_{\theta}\mu(u_{\theta})|+\frac{|z_{\theta}|^{2}\sigma^{2}(u_{\theta})}{|1+z\mu(u_{\theta})|^{2}}.
\]
By the previous lemma, there exist $r_{0}(K)>0$ such that
\[
\Psi(u)=\sum_{p+q\ge2}c_{pq}u^{p}\overline{u}^{q}
\]
and the series converges absolutely for $|u|\leq r_{0}(K)/\sqrt{\nu}$
and the coefficients satisfy
\begin{equation}
|c_{pq}|\leq\frac{1}{2}r_{0}(K)^{-(p+q)}\nu^{(p+q)/2}.\label{eq:coef}
\end{equation}
Also, note that $c_{11}=\nu$ and $c_{20}=\kappa/2$ where $\kappa=|\mathbb{E}W^{2}|\leq\nu$. 

\textbf{Step 1: the $n^{2}\Psi$ term. }Averaging over $\theta$,
only the $p=q$ terms survive, so 
\[
\mathbb{E}_{\theta}[n^{2}\Psi(u_{\theta})]=\nu R^{2}+n^{2}\sum_{m\geq2}c_{mm}\left(\frac{R}{n}\right)^{2m}.
\]
Choose $c_{\ast}(K)\leq r_{0}(K)^{2}/4$. By (\ref{eq:coef}), for
all $\alpha\leq c_{\ast}(K)$, we have
\begin{align*}
\mathbb{E}_{\theta}[n^{2}\Psi(u_{\theta})] & \leq\nu R^{2}+\frac{1}{2}n^{2}\sum_{m\geq2}\left(\frac{\sqrt{\nu}}{r_{0}(K)}\right)^{2m}\left(\frac{R}{n}\right)^{2m}\\
 & =\nu R^{2}+\frac{1}{2}n^{2}\sum_{m\geq2}\left(\frac{\alpha}{nr_{0}(K)^{2}}\right)^{m}
  =\nu R^{2}+O_{K}\left(\frac{\nu^{2}R^{4}}{n^{2}}\right)
\end{align*}
since $\frac{\alpha}{nr_{0}(K)^{2}}\leq\frac{c_{\ast}(K)}{nr_{0}(K)^{2}}\leq\frac{1}{4n}<1$. 

\textbf{Step 2: the $2n\log|1+z\mu(u_{\theta})|$ term. }Since $\mu(u)=\partial_{u}\Psi(u)$,
we have
\[
\mu(u)=\sum_{p+q\ge1}m_{pq}u^{p}\overline{u}^{q},\quad m_{pq}:=(p+1)c_{p+1,q}.
\]
From (\ref{eq:coef}) and $p+1\leq2^{p+q+1},$ we have
\[
|m_{pq}|\leq B_{1}(K)A_{1}(K)^{p+q}\nu^{(p+q+1)/2},\quad p+q\geq1.
\]
Set $\eta:=\sqrt{\nu}R/n$. We have
\[
w_{\theta}:=z_{\theta}\mu(u_{\theta})=\sum_{p+q\geq1}a_{pq}e^{i(p-q+1)\theta},\quad a_{pq}:=(-1)^{p+q}m_{pq}\frac{R^{p+q+1}}{n^{p+q}}.
\]
By the bound on $|m_{pq}|$, we have
\[
|a_{pq}|\leq C_{K}\alpha(C_{K}\eta)^{p+q-1}.
\]
The $(p,q)=(0,1)$ term equals $a_{01}=-m_{01}\frac{R^{2}}{n}=-c_{11}\frac{R^{2}}{n}=-\alpha$,
so write
\[
w_{\theta}=-\alpha+\widetilde{w}_{\theta},\quad\widetilde{w}_{\theta}:=\sum_{(p,q)\neq(0,1)}a_{pq}e^{i(p-q+1)\theta}.
\]
Choose $c_{\ast}(K)$ small enough so that $C_{K}\sqrt{c_{\ast}(K)}\leq1/4$.
Then $C_{K}\eta=C_{K}\sqrt{\alpha/n}\leq C_{K}\sqrt{\alpha}\leq1/4$,
so 
\[
\sup_{\theta}|\widetilde{w}_{\theta}|\leq\sum_{(p,q)\neq(0,1)}|a_{pq}|\leq C_{K}\alpha.
\]
Also,
\[
\mathbb{E}_{\theta}\widetilde{w}_{\theta}=\sum_{(p,q)\neq(0,1),p-q+1=0}a_{pq}=\sum_{p\geq1}a_{p,p+1},
\]
so 
\begin{align*}
|\mathbb{E}_{\theta}\widetilde{w}_{\theta}| \leq\sum_{p\geq1}|a_{p,p+1}|
 & \leq C_{K}\alpha\sum_{p\geq1}(C_{K}\eta)^{2p}\\
 & =O_{K}(\alpha\eta^{2})
 =O_{K}(\frac{\alpha^{2}}{n})
 =O_{K}(\alpha^{2}).
\end{align*}
Therefore, 
\[
\mathbb{E}_{\theta}\widetilde{w}_{\theta}=O_{K}(\alpha^{2}),\quad\sup_{\theta}|\widetilde{w}_{\theta}|\leq C_{K}\alpha.
\]
Since $\alpha\leq c_{\ast}(K)$, after shrinking $c_{\ast}(K)$ again
if needed, we can ensure that $\sup_{\theta}|\widetilde{w}_{\theta}|\leq1/2$.
For $|w|\leq1/2$, $\log|1+w|=\operatorname{Re}(w)+O(|w|^{2})$. Hence
\[
\mathbb{E}_{\theta}\log|1+w_{\theta}|=\operatorname{Re}(\mathbb{E}_{\theta}w_{\theta})+O(\mathbb{E}_{\theta}|w_{\theta}|^{2})=-\alpha+O_{K}(\alpha^{2}).
\]
It follows that 
\[
\mathbb{E}_{\theta}[2n\log|1+z_{\theta}\mu(u_{\theta})|]=-2\nu R^{2}+O_{K}\left(\frac{\nu^{2}R^{4}}{n}\right).
\]

\textbf{Step 3: the rational term. }Define
\[
T_{3}:=\mathbb{E}_{\theta}\left[\frac{|z_{\theta}|^{2}\sigma^{2}(u_{\theta})}{|1+z\mu(u_{\theta})|^{2}}\right]=R^{2}\mathbb{E}_{\theta}\left[\frac{\sigma^{2}(u_{\theta})}{|1+w_{\theta}|^{2}}\right].
\]
Set $\sigma^{2}(u)=\nu+\widetilde{\sigma}(u)$, and use 
\[
\sigma^{2}(u)=\partial_{u}\partial_{\overline{u}}\Psi(u)=\nu+\sum_{p+q\geq1}s_{pq}u^{p}\overline{u}^{q},\quad s_{pq}:=(p+1)(q+1)c_{p+1,q+1}.
\]
From (\ref{eq:coef}) and $(p+1)(q+1)\leq4^{p+q+1},$ we have
\[
|s_{pq}|\leq B_{2}(K)A_{2}(K)^{p+q}\nu^{(p+q+2)/2},\quad p+q\geq1.
\]
Also, 
\[
\widetilde{\sigma}(u_{\theta})=\sum_{p+q\geq1}b_{pq}e^{i(p-q)\theta},\quad b_{pq}:=(-1)^{p+q}s_{pq}\left(\frac{R}{n}\right)^{p+q},
\]
and $|b_{pq}|\leq C_{K}\nu(C_{K}\eta)^{p+q}.$ By shrinking $c_{\ast}(K)$
if needed, assume that $C_{K}\eta\leq1/4$. Then
\begin{align*}
\sup_{\theta}|\widetilde{\sigma}(u_{\theta})| & \leq\sum_{p+q\geq1}|b_{pq}|\\
 & \leq C_{K}\nu\sum_{p+q\geq1}(C_{K}\eta)^{p+q}\\
 & \leq C_{K}\nu\sum_{m\geq1}\sum_{p+q=m}(C_{K}\eta)^{m}\\
 & =C_{K}\nu\sum_{m\geq1}(m+1)(C_{K}\eta)^{m}
  =C_{K}\nu\left(\frac{1}{(1-C_{K}\eta)^{2}}-1\right) \leq C'_{K}\nu\eta,
\end{align*}
and 
\[
\mathbb{E}_{\theta}\widetilde{\sigma}(u_{\theta})=\sum_{p\geq1}b_{pp}=O_{K}(\nu\eta^{2})=O_{K}\left(\nu\frac{\alpha}{n}\right).
\]
For $|w|\leq1/2$, Taylor expansion of $D(w):=|1+w|^{-2}$ gives
\[
D(w)=1-(w+\overline{w})+R_{D}(w),\quad|R_{D}(w)|\leq C_{D}|w|^{2}.
\]
So
\[
\frac{T_{3}}{R^{2}}=A_{0}+A_{1}+A_{2},
\]
where $A_{0}:=\mathbb{E}_{\theta}[\nu+\widetilde{\sigma}(u_{\theta})]$,
$A_{1}:=-\mathbb{E}_{\theta}[(\nu+\widetilde{\sigma}(u_{\theta}))(w_{\theta}+\overline{w_{\theta}})]$,
$A_{2}:=\mathbb{E}_{\theta}[(\nu+\widetilde{\sigma}(u_{\theta}))R_{D}(w_{\theta})]$.
For $A_{0}$, we have
\[
A_{0}=\nu+O_{K}\left(\nu\frac{\alpha}{n}\right).
\]
For $A_{1}$, we write $w_{\theta}=-\alpha+\widetilde{w}_{\theta}$.
We showed before that $\mathbb{E}_{\theta}\widetilde{w}_{\theta}=O_{K}(\alpha^{2})$
and $\sup_{\theta}|\widetilde{w}_{\theta}|\leq C_{K}\alpha$, so
\[
-\nu\mathbb{E}_{\theta}[w_{\theta}+\overline{w_{\theta}}]=2\nu\alpha+O_{K}(\nu\alpha^{2}).
\]
Also, using $\mathbb{E}_{\theta}|w_{\theta}+\overline{w_{\theta}}|\leq2\sup_{\theta}|w_{\theta}|\leq2\alpha+2\sup_{\theta}|\widetilde{w}_{\theta}|=O_{K}(\alpha)$,
we get
\[
|\mathbb{E}_{\theta}[\widetilde{\sigma}(u_{\theta})(w_{\theta}+\overline{w_{\theta}})]|\leq\sup_{\theta}|\widetilde{\sigma}(u_{\theta})|\,\mathbb{E}_{\theta}|w_{\theta}+\overline{w_{\theta}}|=O_{K}\left(\nu\alpha\sqrt{\frac{\alpha}{n}}\right).
\]
Thus
\[
A_{1}=2\nu\alpha+O_{K}(\nu\alpha^{2})+O_{K}\left(\nu\alpha\sqrt{\frac{\alpha}{n}}\right).
\]
For $A_{2}$, we use $|R_{D}(w_{\theta})|\leq C_{D}|w_{\theta}|^{2}$;
then using $\mathbb{E}_{\theta}|w_{\theta}|^{2}\leq\sup_{\theta}|w_{\theta}|^{2}=O_{K}(\alpha^{2})$,
we have
\begin{align*}
|A_{2}| & \leq\left(\nu+\sup_{\theta}|\widetilde{\sigma}(u_{\theta})|\right)C_{D}\mathbb{E}_{\theta}|w_{\theta}|^{2}\\
 & \leq\nu(1+C_{K}\eta)O_{K}(\alpha^{2})\\
 & =O_{K}(\nu\alpha^{2}).
\end{align*}
Combining, we get
\[
\frac{T_{3}}{R^{2}}=\nu+2\nu\alpha+O_{K}\left(\nu\alpha^{2}+\nu\alpha\sqrt{\frac{\alpha}{n}}+\nu\frac{\alpha}{n}\right).
\]
By choosing $c_{\ast}(K)$ to be smaller than some absolute constant
smaller than 1, $\alpha\leq c_{\ast}(K)$ implies the bracketed terms
are $O_{K}(\nu\alpha)$. Thus, 
\[
\frac{T_{3}}{R^{2}}=\nu+O_{K}\left(\nu\alpha\right).
\]
It follows that $T_{3}=\nu R^{2}+O_{K}(\nu\alpha R^{2})=\nu R^{2}+O_{K}(\nu^{2}R^{4}/n)$. 

\textbf{Step 4: combine. }It follows that for all $\alpha\leq c_{\ast}(K)$,
we have
\begin{align*}
\mathbb{E}_{\theta}\log\mathbb{E}|X_{W}(Re^{i\theta})|^{2} & =(\nu R^{2})+\left(-2\nu R^{2}+O_{K}\left(\frac{\nu^{2}R^{4}}{n}\right)\right)+\left(\nu R^{2}+O_{K}\left(\frac{\nu^{2}R^{4}}{n}\right)\right)\\
 & =O_{K}\left(\frac{\nu^{2}R^{4}}{n}\right).
\end{align*}
This is the desired conclusion. 
\end{proof}
This gives a zero-free disk with high probability, which we discuss next.
\subsection{Zero-free region and algorithm}
\begin{thm}
Let $0<r<R$ and assume $\nu R^{2}/n\leq c_{\ast}(K)$. Then 
\[
\mathbb{P}[N_{G_{W}}(r)\geq1]\leq\frac{C_{\ast}(K)\nu^{2}R^{4}}{2n\log(R/r)}.
\]
\end{thm}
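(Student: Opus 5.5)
The plan is to combine Jensen's formula, in the form of Lemma~\ref{lem:core-jensen-app}, with the averaged second-moment estimate just proved, and finish with Markov's inequality. Take $f = G_W$ and $h(z) = -zS_W$, so that $g = e^{h}f = X_W$. Both are almost surely entire: $G_W$ is a polynomial and $h$ is linear. The hypothesis $f(0)\neq 0$ holds because $G_W(0) = \per(J)/n! = 1$. The zeros of $X_W$ and $G_W$ coincide, and $X_W(0) = 1$, so $\mathbb{E}\log|g(0)| = 0$.

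Lemma~\ref{lem:core-jensen-app} then gives
\[
\mathbb{E}\,N_{G_W}(r) \le \frac{1}{4\pi\log(R/r)}\int_0^{2\pi}\log\mathbb{E}|X_W(Re^{i\theta})|^2\,d\theta = \frac{1}{2\log(R/r)}\,\mathbb{E}_\theta\log\mathbb{E}|X_W(Re^{i\theta})|^2 .
\]
Here $\mathbb{E}_\theta$ denotes the uniform average, and $\frac{1}{4\pi}\int_0^{2\pi} = \frac12\mathbb{E}_\theta$. Under the assumption $\nu R^2/n \le c_\ast(K)$, the averaged second-moment theorem bounds the right-hand average by $C_\ast(K)\nu^2R^4/n$. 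Hence
\[
\mathbb{E}\,N_{G_W}(r) \le \frac{C_\ast(K)\nu^2R^4}{2n\log(R/r)}.
\]

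Since $N_{G_W}(r)$ is a nonnegative integer, Markov's inequality gives $\mathbb{P}[N_{G_W}(r)\ge 1] \le \mathbb{E}N_{G_W}(r)$, which is the claimed bound.

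The only point needing care is applying Jensen's inequality (concavity of $\log$) pointwise in $\theta$, which requires $\mathbb{E}|X_W(Re^{i\theta})|^2<\infty$. This finiteness is exactly what the exact-formula proposition provides for $|u| = R/n$ within the radius of holomorphy of $M$. That condition holds because $\nu R^2/n \le c_\ast(K)$ implies $R/n \le r_0(K)/\sqrt{\nu}$ for $n\ge1$, after possibly shrinking $c_\ast$. The integrability of $\log|X_W|$ needed to interchange $\mathbb{E}$ and the $\theta$-integral is standard for polynomials times exponentials, since $\log|p|$ is locally integrable.
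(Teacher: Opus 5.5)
Your proposal is correct and is essentially the paper's proof. Like the paper, you use Markov's inequality, apply Lemma~\ref{lem:core-jensen-app} to $X_W=e^{-zS_W}G_W$ (which has the same zeros as $G_W$ and satisfies $X_W(0)=1$), and then insert the averaged second-moment theorem; your added remarks on the finiteness of $\mathbb{E}|X_W|^2$ and the Fubini interchange are details the paper leaves implicit.
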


\begin{proof}
We have $N_{G_{W}}(r)=N_{X_{W}}(r)$, so Markov's inequality gives
\[
\mathbb{P}[N_{G_{W}}(r)\geq1]=\mathbb{P}[N_{X_{W}}(r)\geq1]\leq\mathbb{E}[N_{X_{W}}(r)].
\]
Because $X_W(0) = \frac{n!}{n!} = 1$, using Lemma~\ref{lem:core-jensen-app} (the consequence of Jensen's formula) gives
\[
\mathbb{E}[N_{X_{W}}(r)]\leq\frac{1}{2\log(R/r)}\mathbb{E}_{\theta}\log\mathbb{E}|X_{W}(Re^{i\theta})|^{2}.
\]
By the previous theorem, we have
\[
\mathbb{E}[N_{X_{W}}(r)]\leq\frac{C_{\ast}(K)\nu^{2}R^{4}}{2n\log(R/r)}.
\]
\end{proof}
\begin{thm}
Fix $\beta\in(0,1/4)$ and set $R_{n}:=\frac{n^{1/4-\beta}}{\sqrt{\nu}}.$
Then for every fixed $\varepsilon\in(0,1)$ and all sufficiently large
$n$,
\[
\mathbb{P}[G_{W}\text{ has a zero in }\mathbb{D}(0,(1-\varepsilon)R_{n})]=O_{\varepsilon}(n^{-4\beta}).
\]
\end{thm}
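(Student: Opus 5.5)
The plan is to apply the preceding theorem (the bound $\mathbb{P}[N_{G_W}(r)\ge 1]\le C_\ast(K)\nu^2R^4/(2n\log(R/r))$) with a carefully chosen outer radius. We set $r := (1-\varepsilon)R_n$ and take the outer radius to be $R := R_n$ itself. The ratio is then $R/r = (1-\varepsilon)^{-1}$, so $\log(R/r) = \log\frac{1}{1-\varepsilon}$. This is a positive constant depending only on $\varepsilon$.

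First we check the hypothesis $\nu R^2/n\le c_\ast(K)$. With $R=R_n=n^{1/4-\beta}/\sqrt{\nu}$ we get
\[
\frac{\nu R_n^2}{n}=\frac{n^{1/2-2\beta}}{n}=n^{-1/2-2\beta}.
\]
This tends to $0$, so the hypothesis holds for all sufficiently large $n$ (depending on $K$ and $\beta$).

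Next we evaluate the bound:
\[
\frac{\nu^2R_n^4}{n}=\frac{n^{1-4\beta}}{n}=n^{-4\beta}.
\]
Plugging this in gives
\[
\mathbb{P}[N_{G_W}((1-\varepsilon)R_n)\ge1]\le \frac{C_\ast(K)}{2\log\frac{1}{1-\varepsilon}}\,n^{-4\beta}=O_{\varepsilon}(n^{-4\beta}),
\]
where the implicit constant also depends on $K$, which is treated as fixed. The event that $G_W$ has a zero in $\mathbb{D}(0,(1-\varepsilon)R_n)$ is exactly the event $N_{G_W}((1-\varepsilon)R_n)\ge1$, so this is the claim.

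There is no real obstacle here. The only point needing care is that we use the outer radius $R_n$ rather than something larger. Taking $R$ of order $n^{1/2}$ would still satisfy the hypothesis, but it would ruin the $n^{-4\beta}$ rate, because the bound scales like $R^4$. Dependence on $\varepsilon$ enters only through $1/\log\frac{1}{1-\varepsilon}$, which blows up as $\varepsilon\to0$; this is consistent with the $O_\varepsilon$ in the statement. The zero-freeness of $G_W$ at the origin, needed for Jensen's formula, is already built into the preceding theorem via $X_W(0)=1$.
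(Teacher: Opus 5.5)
Your proposal is correct and follows the paper's proof: apply the preceding Jensen-based zero-count bound with outer radius $R_n$, check that $\nu R_n^2/n = n^{-1/2-2\beta}\to 0$, and use $\nu^2R_n^4/n = n^{-4\beta}$. The only difference is that the paper takes the inner radius $(1-\varepsilon/2)R_n$ and uses that $\mathbb{D}(0,(1-\varepsilon)R_n)$ is contained in that disk, whereas you take $r=(1-\varepsilon)R_n$ directly; this is equally valid.
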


\begin{proof}
Set $R:=R_{n}$ and $r:=(1-\varepsilon/2)R$. By the previous theorem,
\[
\mathbb{P}[G_{W}\text{ has a zero in }\mathbb{D}(0,r)]\leq\frac{C_{\ast}(K)\nu^{2}R^{4}}{2n\log(R/r)}=O_{\varepsilon}(n^{-4\beta}),
\]
since $\nu^{2}R^{4}=n^{1-4\beta}$ and $\log(R/r)=\log((1-\varepsilon/2)^{-1})=\Theta_{\varepsilon}(1)$.
\end{proof}
\begin{rem}
The result applies with a sequence $\beta_{n}\in(0,1/4)$ in place
of fixed $\beta$. Hence if $\beta_{n}\downarrow0$ and $\beta_{n}\log n\to\infty$,
then the algorithm works for $|\lambda|\apprge\sqrt{\nu}n^{-1/4+\beta_{n}}=\sqrt{\nu}n^{-1/4+o(1)}$.
The success probability is still $1-o(1)$ but no longer $1-1/\operatorname{poly}(n)$.
\end{rem}

\begin{cor}[Approximation algorithm]
\label{cor:universality-algorithmic}
Fix \(\beta\in(0,1/4)\) and \(\varepsilon\in(0,1)\).
Then for every fixed \(\gamma>0\), with probability
$1-O_{\varepsilon}(n^{-4\beta})$,
the following holds simultaneously for every \(z\in\C\) such that
\[
|z|\le (1-\varepsilon)\frac{n^{1/4-2\beta}}{\sqrt{\nu}}.
\]
There is a deterministic algorithm which outputs an
\(n^{-\gamma}\)-additive approximation to the branch of
\[
\log G_W(z)=\log\!\left(\frac{1}{n!}\per(J+zW)\right)
\]
determined by $\log G_W(0)=0$,
in time \(n^{O(m)}\), where
\[
m=\left\lceil \frac{\gamma+1}{\beta}\right\rceil.
\]
This algorithm is based on the degree-\(m\) truncated Taylor series of
\(\log G_W\).
\end{cor}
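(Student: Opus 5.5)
The approach mirrors Lemma~\ref{lem:fixed-degree-truncation-quant} and Theorem~\ref{thm:degree-accuracy-tradeoff}: first a deterministic bound on the Taylor tail coming from a zero-free disk, then a direct estimate on a finite block of low-order coefficients. Unlike the Gaussian case, no coefficient variance bound is available for general subexponential entries. I therefore plan to control every coefficient beyond degree $m$ purely through the zero-free disk. This is why the radius in the statement carries the extra factor $n^{-\beta}$.

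\textbf{Step 1: the good event.} Apply the previous theorem with the given $\beta$ and with $\varepsilon/2$ in place of $\varepsilon$. With probability $1-O_\varepsilon(n^{-4\beta})$, $G_W$ has no zeros in $\mathbb D(0,r_n)$, where
\[
r_n:=(1-\varepsilon/2)\,\frac{n^{1/4-\beta}}{\sqrt\nu}.
\]
Call this event $E_n$. On $E_n$ we factor $G_W(z)=\prod_{j=1}^n(1-z/\rho_j)$ with $|\rho_j|\ge r_n$; this uses $G_W(0)=1$ and $\deg G_W\le n$. Hence the branch of $\log G_W$ with $\log G_W(0)=0$ is holomorphic on $\mathbb D(0,r_n)$, and
\[
\log G_W(z)=\sum_{k\ge1}b_kz^k,\qquad b_k=-\frac1k\sum_j\rho_j^{-k},\qquad |b_k|\le \frac{n}{k\,r_n^k}.
\]

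\textbf{Step 2: uniform truncation error.} Let $|z|\le L_n:=(1-\varepsilon)n^{1/4-2\beta}/\sqrt\nu$. Then
\[
\frac{|z|}{r_n}\le \rho:=\frac{1-\varepsilon}{1-\varepsilon/2}\,n^{-\beta}\le n^{-\beta}.
\]
The truncation error is therefore at most
\[
\sum_{k>m}\frac{n}{k}\rho^k\le \frac{n\,\rho^{m+1}}{1-\rho}\le 2n^{1-\beta(m+1)}.
\]
With $m=\lceil(\gamma+1)/\beta\rceil$ we get $\beta(m+1)\ge\gamma+1+\beta$, so the error is at most $2n^{-\gamma-\beta}\le n^{-\gamma}$ for large $n$. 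This bound holds simultaneously for all $z$ in the disk, since the event $E_n$ does not depend on $z$.

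\textbf{Step 3: computing the truncation.} Expand
\[
G_W(z)=\sum_{k}a_kz^k,\qquad a_k=\frac{(n-k)!}{n!}\sum_{B\subset_k W}\per(B).
\]
Each $a_k$ with $k\le m$ is a sum of $\binom nk^2$ permanents of $k\times k$ matrices. Each such permanent costs $k!\,k$ operations, so all of $a_0,\dots,a_m$ are computed exactly in time $n^{O(m)}$. The coefficients $b_1,\dots,b_m$ are then obtained from $a_0=1,a_1,\dots,a_m$ by the Newton-type recursion
\[
k\,b_k=k\,a_k-\sum_{j=1}^{k-1}j\,b_j\,a_{k-j},
\]
which follows from $G_W'=G_W(\log G_W)'$ and takes $O(m^2)$ arithmetic operations. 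Outputting $\sum_{k\le m}b_kz^k$ gives the claimed algorithm.

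The main obstacle I anticipate is Step 2. It must be uniform in $z$, and it rests entirely on the deterministic bound $|b_k|\le n/(k r_n^k)$. This bound is crude: it costs a factor $n$, which forces the geometric ratio down to $n^{-\beta}$ and hence the shrunken radius $n^{1/4-2\beta}$. I would check carefully that the constants $(1-\varepsilon)/(1-\varepsilon/2)<1$ and the factor $1/(1-\rho)$ are absorbed for large $n$, and that the branch of $\log G_W$ defined via the factorization agrees with the one normalized by $\log G_W(0)=0$.
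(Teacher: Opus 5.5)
Your proposal is correct and follows the paper's proof essentially step for step. You get zero-freeness on $\mathbb D(0,(1-\varepsilon/2)R_n)$ from the preceding theorem, bound the Taylor tail by $n\rho^{m+1}/(1-\rho)$ with $\rho=\frac{1-\varepsilon}{1-\varepsilon/2}n^{-\beta}$ and $m=\lceil(\gamma+1)/\beta\rceil$, and then compute the first $m$ coefficients exactly from $k\times k$ subpermanent sums followed by the Newton recursion; your explicit substitution of $\varepsilon/2$ into the previous theorem is, if anything, slightly cleaner than the paper's direct citation of it for the radius-$(1-\varepsilon/2)R_n$ disk.
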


\begin{proof}
Set
\[
R_n:=\frac{n^{1/4-\beta}}{\sqrt{\nu}},
\qquad
r_n:=(1-\varepsilon/2)R_n,
\qquad
L_n:=(1-\varepsilon) n^{-\beta} R_n
=\frac{(1-\varepsilon)n^{1/4-2\beta}}{\sqrt{\nu}}.
\]

By the previous theorem, for all sufficiently large \(n\),
\[
\mathbb P\!\left[G_W \text{ has a zero in } \mathbb D(0,r_n)\right]
=O_{\varepsilon}(n^{-4\beta}).
\]
Condition on the complementary event. Then \(G_W\) is zero-free on
\(\mathbb D(0,r_n)\), so there is a holomorphic branch
\[
f(z):=\log G_W(z)=\sum_{k\ge1} a_k z^k
\]
on \(\mathbb D(0,r_n)\), normalized by \(f(0)=0\).

Now fix \(z\) with \(|z|\le L_n\). Then
\[
\frac{|z|}{r_n}
\le
\frac{(1-\varepsilon)n^{-\beta}R_n}{(1-\varepsilon/2)R_n}
=
\rho_{\varepsilon}\, n^{-\beta},
\qquad
\rho_{\varepsilon}:=\frac{1-\varepsilon}{1-\varepsilon/2}<1.
\]
Let
\[
T_m(z):=\sum_{k=1}^{m} a_k z^k
\]
be the degree-\(m\) truncation of the Taylor series of \(\log G_W\).
By Barvinok interpolation \cite{barvinok2016computing},
\[
|f(z)-T_m(z)|
\le
\frac{n}{m+1}\,
\frac{(\rho_{\varepsilon}n^{-\beta})^{m+1}}
{1-\rho_{\varepsilon}n^{-\beta}}.
\]
Choose
\[
m=\left\lceil \frac{\gamma+1}{\beta}\right\rceil.
\]
Then \(\beta(m+1)\ge \gamma+1+\beta\), so for all sufficiently large \(n\),
\[
\frac{n}{m+1}\,
\frac{(\rho_{\varepsilon}n^{-\beta})^{m+1}}
{1-\rho_{\varepsilon}n^{-\beta}}
\le
C_{\varepsilon,\beta,\gamma}\, n^{1-\beta(m+1)}
\le
n^{-\gamma}.
\]
Hence
\[
|\log G_W(z)-T_m(z)|\le n^{-\gamma}.
\]

Next, write
\[
G_W(z)=\sum_{k=0}^{n} g_k z^k,
\qquad g_0=1.
\]
As before,
\[
g_k=\frac{(n-k)!}{n!}
\sum_{\substack{I,J\subseteq[n]\\ |I|=|J|=k}}
\per(W_{I,J}),
\]
so \(g_0,\dots,g_m\) can be computed exactly in time \(n^{O(m)}\),
because each \(k\times k\) permanent is computable in \(O(k2^k)\) time
via Ryser's formula, and there are \(\binom{n}{k}^2\) choices of
\((I,J)\). The coefficients \(a_1,\dots,a_m\) are then recovered from
\[
k g_k=\sum_{j=1}^{k} j a_j g_{k-j},
\]
that is,
\[
a_k=g_k-\frac1k\sum_{j=1}^{k-1} j a_j g_{k-j},
\]
in additional polynomial time in \(m\).

Therefore \(T_m(z)\) is an \(n^{-\gamma}\)-additive approximation to
\(\log G_W(z)\) on the specified branch. The total running time is
\(n^{O(m)}\), which is \(n^{O(\gamma/\beta)}\).
\end{proof}

\subsection{Discussion: some non-universal formulas}\label{sec:nonuniversal}

When the entries of \(W\) are no longer circular complex Gaussian, there
are distribution-dependent corrections which appear in
\(\log \per(J+zW)\). What happens is that the first-order reweighting
\[
X_W(z):=e^{-zS_W}G_W(z)
\]
is no longer exactly centered at \(1\): already its first moment acquires
a nontrivial quadratic correction. The easiest way to see this is in the
case of real Gaussian entries.

\paragraph{Example: real Gaussian.}
Suppose \(W=(W_{ij})_{1\le i,j\le n}\) has i.i.d.\ entries
\(W_{ij}\sim N(0,1)\), and let \(\xi\sim N(0,1)\) denote a single scalar
entry. Set
\[
u:=-\frac{z}{n}.
\]
Since
\[
G_W(z)=\frac{1}{n!}\sum_{\sigma\in S_n}\prod_{i=1}^n (1+zW_{i,\sigma(i)}),
\]
we have
\[
\E[X_W(z)]
=
\frac{1}{n!}\sum_{\sigma\in S_n}
\E\!\left[
\prod_{a,b=1}^n e^{uW_{ab}}
\prod_{i=1}^n (1+zW_{i,\sigma(i)})
\right].
\]
Fix a permutation \(\sigma\in S_n\). By independence, the expectation
factors over the matrix entries.

Each entry \((a,b)\) not used by \(\sigma\) contributes
\[
\E[e^{u\xi}]=e^{u^2/2}.
\]
Each entry \((a,\sigma(a))\) used by \(\sigma\) contributes
\[
\E[e^{u\xi}(1+z\xi)]
=
\E[e^{u\xi}] + z\,\E[\xi e^{u\xi}]
=
e^{u^2/2}(1+uz),
\]
where we used Stein's lemma in the form
$\E[\xi e^{u\xi}]=u\,\E[e^{u\xi}]$.
There are \(n^2-n\) unused entries and \(n\) used entries, so every
permutation contributes
\[
\bigl(e^{u^2/2}\bigr)^{n^2-n}\bigl(e^{u^2/2}(1+uz)\bigr)^n
=
e^{n^2u^2/2}(1+uz)^n.
\]
Therefore
\[
\E[X_W(z)]
=
e^{n^2u^2/2}(1+uz)^n
=
\exp\!\left(\frac{z^2}{2}\right)\left(1-\frac{z^2}{n}\right)^n \to e^{-z^2/2}
\]
as $n\to\infty$.
So even for Gaussian entries, the first-order reweighting is not exactly
centered unless one is in the circular complex Gaussian setting.

The general mechanism, which we can guess from the cluster expansion (consider the mean of the second-order term), is that the leading discrepancy is governed by the
\emph{holomorphic second moment}
\[
\kappa:=\E[\xi^2].
\]
For a real standard Gaussian, \(\kappa=1\), which produces the quadratic
correction above. For a circular complex Gaussian, \(\kappa=0\), and in
fact \(\E[X_W(z)]=1\) exactly.
The next lemma makes this precise. Note that this lemma is included for expository purposes --- it is not used in the proof of any other results.

\begin{lem}[Quadratic correction to the first moment]
\label{lem:first-moment-kappa}
Let \(\xi\) be a complex-valued random variable with \(\E[\xi]=0\), and
assume that the holomorphic moment generating function
\[
M(u):=\E[e^{u\xi}]
\]
is analytic and nonvanishing in a neighborhood of \(u=0\). Define
\[
K(u):=\log M(u),
\qquad
\kappa:=\E[\xi^{2}],
\]
where the branch of \(\log\) is chosen so that \(K(0)=0\).

Let \(W=(W_{ij})_{1\le i,j\le n}\) be an \(n\times n\) random matrix with
i.i.d.\ entries distributed as \(\xi\), and set
\[
G_W(z):=\frac{1}{n!}\per(J+zW),
\qquad
S_W:=\frac1n\sum_{i,j=1}^n W_{ij},
\qquad
X_W(z):=e^{-zS_W}G_W(z).
\]
Then, for all \(z\) sufficiently close to \(0\),
\[
\E[X_W(z)]
=
\exp\!\Bigl(
n^{2}K(-z/n)+n\log\!\bigl(1+zK'(-z/n)\bigr)
\Bigr).
\]
In particular,
\[
\log \E[X_W(z)]
=
-\frac{\kappa}{2}z^{2}+O(z^{3}).
\]
\end{lem}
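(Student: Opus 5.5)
The plan is to repeat the real-Gaussian computation above for a general law, replacing Stein's lemma by the tilted mean $K'(u)$. Write $u:=-z/n$, so that $e^{-zS_W}=\prod_{a,b}e^{uW_{ab}}$. Expanding $G_W$ as an average over $\sigma\in S_n$ gives
\[
\E[X_W(z)]=\frac{1}{n!}\sum_{\sigma}\E\Big[\prod_{a,b}e^{uW_{ab}}\prod_i(1+zW_{i\sigma(i)})\Big].
\]
For $z$ small enough that $|u|$ lies in the neighborhood where $M$ is analytic, all these expectations are finite and factor over entries by independence. This finiteness is the only point needing care: the hypothesis says $M(u)=\E e^{u\xi}$ is analytic near $0$, which I interpret as $\E e^{|u||\xi|}<\infty$ for small $u$. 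That interpretation justifies differentiating under the expectation, $M'(u)=\E[\xi e^{u\xi}]$.

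Next I evaluate each entry's contribution. An entry not used by $\sigma$ contributes $M(u)$. An entry $(i,\sigma(i))$ contributes
\[
\E[e^{u\xi}(1+z\xi)]=M(u)+zM'(u)=M(u)\bigl(1+zK'(u)\bigr),
\]
using $K'=M'/M$, which is valid since $M\neq0$. Every $\sigma$ uses exactly $n$ entries, so each $\sigma$ contributes $M(u)^{n^2}(1+zK'(u))^n$, independent of $\sigma$. The average over $S_n$ is therefore this same quantity. For $z$ near $0$ we have $zK'(u)\to0$, so the principal logarithm of $1+zK'(u)$ is defined. This yields $\exp\bigl(n^2K(-z/n)+n\log(1+zK'(-z/n))\bigr)$, which is the first claim.

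For the expansion I Taylor expand $K$. Since $K(0)=0$, $K'(0)=\E\xi=0$ and $K''(0)=\E\xi^2-(\E\xi)^2=\kappa$, we have $K(u)=\tfrac{\kappa}{2}u^2+O(u^3)$ and $K'(u)=\kappa u+O(u^2)$. Hence
\[
n^2K(-z/n)=\tfrac{\kappa}{2}z^2+O(z^3/n).
\]
Also $zK'(-z/n)=-\kappa z^2/n+O(z^3/n^2)$, so
\[
n\log\bigl(1+zK'(-z/n)\bigr)=-\kappa z^2+O(z^3/n)+O(z^4/n).
\]
Summing the two pieces gives $-\tfrac{\kappa}{2}z^2+O(z^3)$, where the $O(z^3)$ term is understood for fixed $n$ (and is in fact uniform in $n$). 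As a consistency check, $\kappa=1$ reproduces the real Gaussian case $\exp(z^2/2)(1-z^2/n)^n$.
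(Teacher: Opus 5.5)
Your proposal is correct and follows essentially the same route as the paper's proof: you factor over entries with $u=-z/n$, observe that every permutation contributes $M(u)^{n^2}\bigl(1+zK'(u)\bigr)^n$, and then Taylor expand $K$ around $0$. Your added remarks on integrability, on differentiating under the expectation, and on the branch of the logarithm only make explicit points that the paper leaves implicit.
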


\begin{proof}
Write
\[
G_W(z)=\frac1{n!}\sum_{\sigma\in S_n}\prod_{i=1}^n (1+zW_{i,\sigma(i)}).
\]
Let \(u:=-z/n\). Then
\[
\E[X_W(z)]
=
\frac1{n!}\sum_{\sigma\in S_n}
\E\!\left[
\prod_{a,b=1}^n e^{uW_{ab}}
\prod_{i=1}^n (1+zW_{i,\sigma(i)})
\right].
\]
Fix \(\sigma\in S_n\). Each entry \((a,b)\) not used by \(\sigma\) contributes
$\E[e^{u\xi}]=M(u)$.
Each used entry contributes
\[
\E[e^{u\xi}(1+z\xi)]
=
M(u)+z\,\E[\xi e^{u\xi}]
=
M(u)\bigl(1+zK'(u)\bigr),
\]
since
\[
K'(u)=\frac{M'(u)}{M(u)}=\frac{\E[\xi e^{u\xi}]}{\E[e^{u\xi}]}.
\]
Therefore every permutation contributes the same quantity,
\[
M(u)^{n^{2}}\bigl(1+zK'(u)\bigr)^n,
\]
and so
\[
\E[X_W(z)]
=
M(u)^{n^{2}}\bigl(1+zK'(u)\bigr)^n.
\]
Taking logarithms gives
\[
\log \E[X_W(z)]
=
n^{2}K(-z/n)+n\log\!\bigl(1+zK'(-z/n)\bigr).
\]

Now \(K(0)=0\), \(K'(0)=\E[\xi]=0\), and
$K''(0)=\E[\xi^2]=\kappa$.
Expanding near \(0\),
\[
K(u)=\frac{\kappa}{2}u^2+O(u^3),
\qquad
K'(u)=\kappa u+O(u^2).
\]
Substituting \(u=-z/n\), we get
$n^2K(-z/n)=\frac{\kappa}{2}z^2+O(z^3/n)$
and
$zK'(-z/n)=-\frac{\kappa}{n}z^2+O(z^3/n^2)$.
Hence
\[
n\log(1+zK'(-z/n))
=
-\kappa z^2+O(z^3/n),
\]
and combining the two terms yields
\[
\log \E[X_W(z)]
=
-\frac{\kappa}{2}z^2+O(z^3).
\]
This proves the claim.
\end{proof}

\section{Hardcore model with random fugacities}\label{sec:hardcore-first}
In the next two sections, we move on from the permanent and consider hardcore models on general graph. As a reminder (see the Overview), the permanent is roughly analogous to the hardcore model on the line graph of the complete bipartite graph $K_{n,n}$, which has maximum degree $n$ and ${n \choose 2}$ many vertices. So while $K_{n,n}$ is very dense, its line graph is somewhat sparser --- it has degree proportional to square-root of the number of vertices. On general graphs, this relative sparsity will play a crucial role in the analysis. 

\subsection{Model on a general graph}
Let $G = (V, E)$ be a finite graph with maximum degree $\Delta \ge 2$, with vertex fugacities $\lambda_v \in \mathbb C$.

Recall from earlier that the hardcore partition function $Z$ is the sum of the weights of all valid independent sets:
\begin{equation}
    Z = \sum_{I \in \mathcal{I}(G)} \prod_{v \in I} \lambda_v
\end{equation}
and the corresponding cluster expansion of the log partition function is given by:
\begin{equation}
    \log Z = \sum_{\mathbf{m} \neq \mathbf{0}} \frac{1}{\mathbf{m}!} \phi(H[\mathbf{m}]) \prod_{v \in V} \lambda_v^{m_v}
\end{equation}
where $\mathbf{m}! = \prod_{v \in V} m_v!$, and $\phi(H)$ is the Ursell function. For any graph $H = (W, E_H)$, the Ursell function evaluates the sum over all its connected spanning subgraphs:
\begin{equation}
    \phi(H) = \sum_{\substack{A \subseteq E_H \\ (W, A) \text{ is connected}}} (-1)^{|A|}
\end{equation}

\paragraph{Random fugacities.} We now focus on the case where $\lambda_v = \lambda W_v$ and each $W_v$ is an i.i.d. mean zero complex-valued random variable. The cluster expansion becomes:
\[ \sum_{k = 1}^{\infty} \lambda^k \sum_{|\mathbf{m}| = k} \frac{1}{\mathbf{m}!} \phi(H[\mathbf{m}]) \prod_{v \in V} W_v^{m_v} = \sum_{k = 1}^{\infty} a_k \lambda^k \]

\subsection{First-order reweighting}
Let $G = (V, E)$ be a finite graph with maximum degree $\Delta \ge 1$. We consider the hardcore model partition function with complex random fugacities $\lambda W_v$, where $W_v \sim \CN(0,1)$ are i.i.d. standard complex Gaussians. We define the reweighted partition function as:
\begin{equation}
    X(\lambda) = Z(\lambda W) \exp\left( -\lambda \sum_{v \in V} W_v \right)
\end{equation}

\begin{thm}
\label{thm:second_moment_clean}
Let $G=(V,E)$ be a finite graph with maximum degree $\Delta \ge 1$, and let
\[
Z(\lambda W)=\sum_{I\in\mathcal I(G)} \lambda^{|I|}\prod_{v\in I} W_v
\]
be the hardcore partition function with random fugacities $\lambda W_v$, where
$W_v \sim \CN(0,1)$ are i.i.d. standard complex Gaussians. Define
\[
X(\lambda):= Z(\lambda W)\exp\!\Bigl(-\lambda\sum_{v\in V} W_v\Bigr).
\]
Let $(G_n)_{n\ge1}$ be a sequence of graphs with vertex sets $V_n$ and maximum degrees $\Delta_n$, and let $\lambda_n\in\mathbb C$ be a sequence such that
\[
|\lambda_n|^4 \Delta_n |V_n| = o(1).
\]
Then
\[
\mathbb E\bigl[\,|X(\lambda_n)|^2\,\bigr]=1+o(1).
\]
\end{thm}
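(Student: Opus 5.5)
The plan is to repeat the tilting argument from Section~\ref{sec:easiest} with $\sum_v W_v$ in place of $nD_1(W)$. Write $a:=|\lambda|^2$ and drop the subscript $n$.

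\textbf{Step 1 (tilt).} The argument of Lemma~\ref{lem:linear_tilt_clean} applies verbatim. The tilt $e^{-2\Re(\lambda\sum_v W_v)}$ factorizes over vertices and has normalizing constant $e^{a|V|}$. Under the tilted law $\widehat{\E}$ the $W_v$ are i.i.d.\ $\CN(-\overline\lambda,1)$. Hence $\E|X(\lambda)|^2=e^{a|V|}\,\widehat{\E}|Z(\lambda W)|^2$, and under $\widehat{\E}$ we have $\lambda W_v=-a+\lambda\eta_v$ with $\eta_v\sim\CN(0,1)$ i.i.d.

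\textbf{Step 2 (exact formula).} Expand $|Z|^2=\sum_{I,J\in\mathcal I(G)}\prod_{v\in I}\lambda W_v\prod_{v\in J}\overline{\lambda W_v}$. The expectation factorizes over vertices:
\begin{itemize}
\item a vertex in $I\setminus J$ or in $J\setminus I$ contributes $-a$;
\item a vertex in $I\cap J$ contributes $\widehat{\E}|{-a}+\lambda\eta_v|^2=a^2+a$;
\item a vertex in neither contributes $1$.
\end{itemize}
Thus $\widehat{\E}|Z|^2$ is the partition function of a four-state spin model on $G$ with states $\{0,I,J,IJ\}$ and single-site weights $1,-a,-a,a+a^2$. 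The only hard constraint is that no edge $uv$ has both endpoints in $I$, or both in $J$.

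\textbf{Step 3 (cluster/Mayer expansion).} The single-site sum is $s:=1-a+a^2$. Write the constraint as $\prod_{uv\in E}(1+f_{uv})$, where $f_{uv}=-\mathbf 1[\text{conflict at }uv]$. Expanding this product gives
\[
\widehat{\E}|Z|^2=s^{|V|}\sum_{\Gamma}\prod_{\gamma\in\Gamma}w(\gamma),
\]
a polymer gas whose polymers $\gamma$ are connected edge sets of $G$ and which are compatible when vertex-disjoint. The polymer weight is
\[
w(\gamma)=s^{-|V(\gamma)|}\sum_{\text{states on }V(\gamma)}\prod_{v}(\text{site weight})\prod_{uv\in\gamma}f_{uv}.
\]
Every vertex of $\gamma$ must carry a nonzero state, so each vertex contributes a factor $O(a)$ and $|w(\gamma)|\le (Ca)^{|V(\gamma)|}$.

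The hypothesis $a^2\Delta|V|=o(1)$ together with $\Delta\le|V|$ gives $a\Delta=o(1)$. The number of connected edge sets with $k$ edges containing a given vertex is at most $(e\Delta)^k$, and $|V(\gamma)|\ge |\gamma|+1$ when $\gamma$ is a tree. So the KP condition holds with $a_v=O(a^2\Delta)$, provided one arranges the bound so that $\sum_{\gamma\ni v}|w(\gamma)|e^{|V(\gamma)|\cdot\text{small}}\lesssim a^2\Delta$.

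Summing the KP bound~\eqref{eq:KP-bound-hardcore} over vertices then gives $|\log(\widehat{\E}|Z|^2/s^{|V|})|=O(a^2\Delta|V|)=o(1)$. All quantities here are real and positive, since this is the square modulus of an expectation. Alternatively, I would first try the crude bound: only single-edge polymers matter at leading order, each contributing $O(a^2)$, and there are at most $\Delta|V|/2$ edges.

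\textbf{Step 4 (combine).} This yields
\[
\log\E|X|^2=a|V|+|V|\log(1-a+a^2)+o(1)=|V|\bigl(\tfrac{a^2}{2}+O(a^3)\bigr)+o(1).
\]
The right-hand side is $o(1)$ because $a^2|V|\le a^2\Delta|V|=o(1)$.

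\textbf{Main obstacle.} The hardest step is Step 3. There I must set up the polymer representation carefully enough that the signed weights are bounded by $(Ca)^{|V(\gamma)|}$ uniformly, and that the tree-counting or KP estimate gains the full factor $a^2\Delta$ per vertex rather than just $a\Delta$.
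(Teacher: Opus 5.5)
Steps 1, 2 and 4 of your argument are correct. The tilt gives $\E|X|^2=e^{a|V|}\widehat{\E}|Z|^2$. Your four-state model is exactly the pair model of Lemma~\ref{lem:pair-model-kp} with $a_L=a_R=-a$ and $a_B=a+a^2$; it is the complex-Gaussian case of the argument in Section~\ref{sec:hardcore-universality}. The final arithmetic $a|V|+|V|\log(1-a+a^2)=|V|(a^2/2+O(a^3))$ is also right.

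\textbf{The gap is Step 3.} Your polymers are connected \emph{edge} sets. Your only weight bound, $|w(\gamma)|\le(Ca)^{|V(\gamma)|}$, decays in the number of \emph{vertices}, yet you enumerate polymers by their number of edges. The relation $|V(\gamma)|\ge|\gamma|+1$ holds only for trees. A single connected vertex set $S$ carries up to $2^{|E(G[S])|}$ connected spanning edge sets, close to $2^{\binom{|S|}{2}}$ when $G[S]$ is a clique. In your argument each of these is bounded only by $(Ca)^{|S|}$.

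The hypothesis allows $\Delta\to\infty$ (e.g.\ $L(K_{n,n})$). In that regime, term-by-term absolute bounds on this gas do not give $\sum_{\gamma\ni v}|w(\gamma)|=O(a^2\Delta)$, so the KP condition is not verified. The phrase ``provided one arranges the bound'' is exactly the missing argument. Likewise, the fallback ``only single-edge polymers matter'' states the expected conclusion rather than proving it.

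\textbf{The missing idea is to resum before taking absolute values.} Compatibility depends only on the vertex support, so group edge-polymers by $S=V(\gamma)$ and write $\tilde w(S)$ for the regrouped weight.
\begin{itemize}
\item Fix an assignment $\sigma$ of nonzero states on $S$. The sum of $\prod_{uv\in\gamma}f_{uv}(\sigma)$ over connected spanning edge sets $\gamma$ is the Ursell function of the conflict graph on $S$.
\item By Lemma~\ref{lem:ursell-tree-bound}, this is at most the number of spanning trees of $G[S]$ in absolute value.
\item Summing over $S\ni v$ then counts subtrees of $G$ through $v$, giving $\sum_{S\ni v}|\tilde w(S)|e^{|S|}\le\sum_{r\ge 2}(e\Delta)^{r-1}(C'a)^r=O(a^2\Delta)$ once $a\Delta\le c$. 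Your hypothesis together with $\Delta\le|V|$ gives this.
\item KP with $a(S)=|S|$ then closes with total error $O(a^2\Delta|V|)$.
\end{itemize}
Equivalently, take as polymers the connected components of the occupied set, together with their labelings. This is Lemma~\ref{lem:pair-model-kp}, which gives $\log\widehat{\E}|Z|^2=|V|(a^2-a)+O(|V|\Delta a^2)$ directly. Hence $\log\E|X|^2=O(|V|\Delta a^2)=o(1)$.

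\textbf{Comparison with the paper.} With that repair your route is valid, but it differs from the paper's proof of this theorem.
\begin{itemize}
\item The paper does not tilt. It expands $X(\lambda)$ in monomials $W^m$ (see \eqref{eq:X-expansion-clean}) and uses orthogonality of circular Gaussian monomials to get $\E|X|^2=\sum_m|\lambda|^{2|m|}|Z_{G[\supp(m)]}(-m)|^2/m!$.
\item This is a polymer gas with \emph{nonnegative} weights, and the order-$a$ cancellation is built in: a lone vertex with $m_v=1$ has weight $|1-1|^2=0$.
\item The paper can therefore drop compatibility and bound by $\exp(\mathcal W)$, using only $|Z_{G[C]}(-m_C)|\le\prod_v(1+m_v)$ and lattice-animal counting. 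It also gets $\E|X|^2\ge 1$ for free. No convergent signed expansion is needed.
\end{itemize}
Your approach needs a genuine cluster expansion with sign control. In exchange, it does not rely on orthogonality of monomials, which is why it is the one that extends to non-circular entries in Theorem~\ref{thm:hardcore-universality}.
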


\begin{proof}
We first expand $X(\lambda)$ as a polynomial in the Gaussian variables $(W_v)_{v\in V}$.

\medskip
\noindent
\textbf{Step 1: Coefficient formula.}
Write
\[
Z(\lambda W)=\sum_{I\in\mathcal I(G)} \lambda^{|I|} W^I,
\qquad
W^I:=\prod_{v\in I} W_v,
\]
and
\[
\exp\!\Bigl(-\lambda\sum_{v\in V} W_v\Bigr)
=\prod_{v\in V}\Bigl(\sum_{r_v\ge0}\frac{(-\lambda W_v)^{r_v}}{r_v!}\Bigr).
\]
For a multi-index $m=(m_v)_{v\in V}\in(\mathbb Z_{\ge0})^V$, let
\[
|m|:=\sum_{v\in V} m_v,
\qquad
m!:=\prod_{v\in V} m_v!,
\qquad
\operatorname{supp}(m):=\{v\in V:\ m_v>0\}.
\]
Collecting the coefficient of $W^m:=\prod_v W_v^{m_v}$ gives
\[
X(\lambda)
=
\sum_{m\in(\mathbb Z_{\ge0})^V}
\lambda^{|m|}
\frac{(-1)^{|m|}}{m!}
\Biggl(
\sum_{\substack{I\in\mathcal I(G)\\ I\subseteq \operatorname{supp}(m)}}
\prod_{v\in I}(-m_v)
\Biggr)
W^m.
\]
The inner sum is exactly the hardcore partition function on the induced subgraph
$G[\operatorname{supp}(m)]$ with vertex fugacities $(-m_v)_{v\in\operatorname{supp}(m)}$, namely
\[
Z_{G[\operatorname{supp}(m)]}(-m)
:=
\sum_{I\in\mathcal I(G[\operatorname{supp}(m)])}
\prod_{v\in I} (-m_v).
\]
Therefore
\begin{equation}
\label{eq:X-expansion-clean}
X(\lambda)
=
\sum_{m\in(\mathbb Z_{\ge0})^V}
\lambda^{|m|}
\frac{(-1)^{|m|}}{m!}
Z_{G[\operatorname{supp}(m)]}(-m)\,W^m.
\end{equation}

\medskip
\noindent
\textbf{Step 2: Exact second-moment formula.}
Since the $W_v$ are independent standard complex Gaussians,
\[
\mathbb E\bigl[W^m \overline{W}^{\,m'}\bigr]=0
\qquad\text{unless }m=m',
\]
and
\[
\mathbb E\bigl[|W^m|^2\bigr]
=
\prod_{v\in V}\mathbb E\bigl[|W_v|^{2m_v}\bigr]
=
\prod_{v\in V} m_v!
=
m!.
\]
Thus, taking the expectation of the absolute square of \eqref{eq:X-expansion-clean},
all off-diagonal terms vanish and we obtain
\begin{equation}
\label{eq:second-moment-exact-clean}
\mathbb E\bigl[|X(\lambda)|^2\bigr]
=
\sum_{m\in(\mathbb Z_{\ge0})^V}
|\lambda|^{2|m|}
\frac{1}{m!}
\bigl|Z_{G[\operatorname{supp}(m)]}(-m)\bigr|^2.
\end{equation}
Every summand is nonnegative, and the $m=0$ term is equal to $1$.

\medskip
\noindent
\textbf{Step 3: Factorization over connected components.}
Fix $m$. Let $S:=\operatorname{supp}(m)$, and let
\[
G[S]=C_1 \sqcup \cdots \sqcup C_r
\]
be its decomposition into connected components. Since the hardcore partition function
factorizes over disjoint unions,
\[
Z_{G[S]}(-m)=\prod_{i=1}^r Z_{G[C_i]}(-m_{C_i}),
\]
where $m_{C_i}$ denotes the restriction of $m$ to $C_i$.
Also,
\[
|\lambda|^{2|m|}\frac{1}{m!}
=
\prod_{i=1}^r
\left(
|\lambda|^{2|m_{C_i}|}\frac{1}{m_{C_i}!}
\right).
\]
Hence the summand in \eqref{eq:second-moment-exact-clean} factors over connected components.

For a connected vertex set $C\subseteq V$ and a multiplicity vector
$m_C\in(\mathbb Z_{\ge1})^C$, define the polymer weight
\[
w(C,m_C)
:=
|\lambda|^{2|m_C|}
\frac{1}{m_C!}
\bigl|Z_{G[C]}(-m_C)\bigr|^2.
\]
Then \eqref{eq:second-moment-exact-clean} is exactly the partition function of a hard-core polymer gas:
\[
\mathbb E\bigl[|X(\lambda)|^2\bigr]
=
\sum_{\Gamma\ \text{compatible}}
\ \prod_{(C,m_C)\in\Gamma} w(C,m_C),
\]
where “compatible” means that the vertex sets of the polymers are pairwise disjoint.

Since all polymer weights are nonnegative, forgetting the compatibility constraint only enlarges the sum, and therefore
\begin{equation}
\label{eq:polymer-exp-upper-clean}
\mathbb E\bigl[|X(\lambda)|^2\bigr]
\le
\prod_{\substack{C\subseteq V\\ C\ \text{connected}}}
\ \prod_{m_C\in(\mathbb Z_{\ge1})^C}
\bigl(1+w(C,m_C)\bigr)
\le
\exp\!\Biggl(
\sum_{\substack{C\subseteq V\\ C\ \text{connected}}}
\ \sum_{m_C\in(\mathbb Z_{\ge1})^C}
w(C,m_C)
\Biggr).
\end{equation}
Let
\[
\mathcal W
:=
\sum_{\substack{C\subseteq V\\ C\ \text{connected}}}
\ \sum_{m_C\in(\mathbb Z_{\ge1})^C}
w(C,m_C).
\]
It remains to show that $\mathcal W=o(1)$.

\medskip
\noindent
\textbf{Step 4: Contributions of connected sets of size \(1\).}
If $C=\{v\}$, then
$Z_{G[\{v\}]}(-m_v)=1-m_v$.
Hence the contribution of a single vertex is
\[
\sum_{m\ge1} |\lambda|^{2m}\frac{|1-m|^2}{m!}
=
\sum_{m\ge2} |\lambda|^{2m}\frac{(m-1)^2}{m!}.
\]
For $|\lambda|\le1$, this is bounded by
\[
|\lambda|^4 \sum_{m\ge2}\frac{(m-1)^2}{m!}
\le c_1 |\lambda|^4
\]
for some absolute constant $c_1>0$. Summing over $v\in V$ gives
\[
\mathcal W_1 \le c_1 |V|\,|\lambda|^4.
\]

\medskip
\noindent
\textbf{Step 5: Contributions of connected sets of size \(2\).}
If $C=\{u,v\}$ is connected, then necessarily $u\sim v$, and
$Z_{G[C]}(-m_u,-m_v)=1-m_u-m_v$.
Therefore
\[
\sum_{m_u,m_v\ge1}
|\lambda|^{2(m_u+m_v)}
\frac{|1-m_u-m_v|^2}{m_u!m_v!}
\le c_2 |\lambda|^4
\]
for some absolute constant $c_2>0$ and all sufficiently small $|\lambda|$.
Summing over edges,
\[
\mathcal W_2
\le
c_2 |E|\,|\lambda|^4
\le
\frac{c_2}{2}|V|\Delta\,|\lambda|^4.
\]

\medskip
\noindent
\textbf{Step 6: Contributions of connected sets of size at least \(3\).}
Let $C\subseteq V$ be connected with $|C|=k\ge3$. By the triangle inequality,
\[
|Z_{G[C]}(-m_C)|
\le
\sum_{I\in\mathcal I(G[C])}\prod_{v\in I} m_v
\le
\sum_{S\subseteq C}\prod_{v\in S} m_v
=
\prod_{v\in C}(1+m_v).
\]
Hence
\[
w(C,m_C)
\le
\prod_{v\in C}
\left(
|\lambda|^{2m_v}\frac{(1+m_v)^2}{m_v!}
\right).
\]
Summing over $m_v\ge1$ independently,
\[
\sum_{m_C\in(\mathbb Z_{\ge1})^C} w(C,m_C)
\le
\left(
\sum_{m\ge1} |\lambda|^{2m}\frac{(1+m)^2}{m!}
\right)^k.
\]
For $|\lambda|\le1$, the one-variable sum is bounded by
\[
\sum_{m\ge1} |\lambda|^{2m}\frac{(1+m)^2}{m!}
\le c_3 |\lambda|^2
\]
for some absolute constant $c_3>0$. Therefore
\[
\sum_{m_C\in(\mathbb Z_{\ge1})^C} w(C,m_C)
\le
(c_3|\lambda|^2)^k.
\]

Now fix $k\ge3$. The number of connected induced subgraphs of $G$ with $k$ vertices
containing a given vertex is at most $(e\Delta)^{k-1}$, so the total number of connected
\(k\)-vertex subsets of \(V\) is at most
\[
\frac{|V|}{k}(e\Delta)^{k-1}
\le
|V|(e\Delta)^{k-1}.
\]
Hence
\[
\mathcal W_{\ge3}
\le
|V|\sum_{k\ge3}(e\Delta)^{k-1}(c_3|\lambda|^2)^k
=
\frac{|V|}{e\Delta}
\sum_{k\ge3}(c_3 e\Delta |\lambda|^2)^k.
\]

We claim that $\Delta |\lambda|^2\to0$ under the hypothesis
\(
|\lambda|^4 \Delta |V|=o(1).
\)
Indeed, since every graph satisfies \(|V|\ge \Delta+1\), we have
$|\lambda|^4 \Delta^2 \le |\lambda|^4 \Delta |V| = o(1)$,
and therefore \(\Delta |\lambda|^2=o(1)\).
So for all sufficiently large \(n\),
$c_3 e\Delta |\lambda|^2 \le \frac12$.
Thus the geometric series is bounded by a constant multiple of its first term:
\[
\mathcal W_{\ge3}
\le
c_4 |V| \Delta^2 |\lambda|^6
\]
for some absolute constant \(c_4>0\).

\medskip
\noindent
\textbf{Step 7: Conclusion.}
Combining the three estimates,
\[
\mathcal W
\le
c_1 |V||\lambda|^4
+
\frac{c_2}{2}|V|\Delta |\lambda|^4
+
c_4 |V|\Delta^2 |\lambda|^6.
\]
By the hypothesis,
$|V|\Delta |\lambda|^4=o(1)$,
and also
\[
|V|\Delta^2 |\lambda|^6
=
\bigl(|V|\Delta |\lambda|^4\bigr)\bigl(\Delta |\lambda|^2\bigr)
=o(1).
\]
Hence \(\mathcal W=o(1)\). Returning to \eqref{eq:polymer-exp-upper-clean},
\[
\mathbb E\bigl[|X(\lambda)|^2\bigr]
\le
\exp(\mathcal W)
=
1+o(1).
\]
On the other hand, the \(m=0\) term in \eqref{eq:second-moment-exact-clean} is \(1\), and all
other terms are nonnegative, so
$\mathbb E\bigl[|X(\lambda)|^2\bigr]\ge1$.
Therefore
\[
\mathbb E\bigl[|X(\lambda_n)|^2\bigr]=1+o(1),
\]
as claimed.
\end{proof}

\vspace{0.5cm}

To illustrate how this theorem is applied, we show how to recover a weaker version of the result from Theorem~\ref{thm:linegraph-unconditional}.

\begin{cor}[Weaker version of Theorem~\ref{thm:linegraph-unconditional}]
Let $G_n = L(K_{n,n})$ be the line graph of the complete bipartite graph, and let $W_{ij} \sim \CN(0,1)$.  Let $X_n(\lambda) = P_n(\lambda) \exp(-\lambda \sum_{i,j} W_{ij})$. If $|\lambda_n| = o(n^{-3/4})$, then $\mathbb{E}[|X_n(\lambda_n)|^2] = 1 + o(1)$.
\end{cor}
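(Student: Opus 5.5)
This corollary is a direct specialization of Theorem~\ref{thm:second_moment_clean} to the graph $G_n = L(K_{n,n})$. Here $P_n(\lambda)$ should be read as the monomer--dimer partition function $Z_{G_n}(\lambda W)$ of Section~\ref{subsec:second-monomer-dimer}, i.e.\ the sum over matchings $M$ of $K_{n,n}$ of $\prod_{e\in M}\lambda W_e$. Under this reading, $X_n(\lambda) = Z_{G_n}(\lambda W)\exp(-\lambda\sum_{v} W_v)$ is exactly the reweighted partition function $X(\lambda)$ of that theorem. The vertices of $G_n$ are indexed by $v=(i,j)\in[n]\times[n]$, and the $W_v = W_{ij}$ are i.i.d.\ $\CN(0,1)$, as the theorem requires.

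The only work is to compute the graph parameters and check the hypothesis $|\lambda_n|^4\Delta_n|V_n| = o(1)$.
\begin{itemize}
\item The vertex set of $L(K_{n,n})$ is the edge set of $K_{n,n}$, so $|V_n| = n^2$.
\item A vertex $(i,j)$ is adjacent exactly to the pairs $(i,j')$ with $j'\neq j$ and $(i',j)$ with $i'\neq i$. Hence $G_n$ is regular of degree $\Delta_n = 2(n-1)\le 2n$.
\end{itemize}
Therefore
\[
|\lambda_n|^4\Delta_n|V_n| \le 2n^3|\lambda_n|^4 = 2\bigl(n^{3/4}|\lambda_n|\bigr)^4,
\]
which is $o(1)$ precisely when $|\lambda_n| = o(n^{-3/4})$. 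Applying Theorem~\ref{thm:second_moment_clean} gives $\mathbb{E}|X_n(\lambda_n)|^2 = 1+o(1)$.

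There is no real obstacle here. The one point to state carefully is the identification of $P_n$ with the hardcore partition function on $L(K_{n,n})$, rather than with the normalized permanent $\per(J+zW)/n!$. The latter differs through the factors $(n-|M|)!/n!$, and the corollary should be understood as being about the idealized monomer--dimer model, as the comparison with Theorem~\ref{thm:linegraph-unconditional} indicates. It is also worth noting the scaling. Writing $\lambda = z/n$, the condition reads $|z| = o(n^{1/4})$, which matches the first-order permanent result of Theorem~\ref{thm:first_order_complex_clean}. In the same variables, the second-order bound $n^4|\lambda|^6\to 0$ of Theorem~\ref{thm:linegraph-unconditional} improves this to $|z| = o(n^{1/3})$, which explains why the present corollary is weaker.
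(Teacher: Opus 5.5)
Your proposal is correct and follows the paper's proof: identify $P_n$ with the hardcore (monomer--dimer) partition function on $L(K_{n,n})$, compute $|V_n|=n^2$ and $\Delta_n=2n-2$, and check that $|\lambda_n|^4\Delta_n|V_n|=O(n^3|\lambda_n|^4)=o(1)$ so that Theorem~\ref{thm:second_moment_clean} applies. Your reading of $P_n$ as the idealized matching polynomial rather than $\per(J+zW)/n!$ matches the paper's implicit reading.
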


\begin{proof}
The independent sets of $L(K_{n,n})$ correspond exactly to the matchings of $K_{n,n}$.

For the line graph $G_n = L(K_{n,n})$, the number of vertices is $|V_n| = n^2$. The maximum degree is the number of edges adjacent to a given edge in $K_{n,n}$, which is $\Delta_n = 2n - 2$. 

Applying Theorem \ref{thm:second_moment_clean}, the sufficient condition for the relative variance to vanish is $|\lambda_n|^4 (2n - 2) (n^2) = o(1)$, which is equivalent to $|\lambda_n|^4 n^3 = o(1)$, or $|\lambda_n| = o(n^{-3/4})$. 
\end{proof}



\section{Universality of the zero-free region for the hardcore model}\label{sec:hardcore-universality}
We now prove an analogue of Theorem~\ref{thm:second_moment_clean} for general i.i.d.\ complex vertex weights, without assuming rotational symmetry. The key input is cancellation after averaging over the phase \(\theta\), together with a cluster expansion for a suitable \emph{pair model} representing the second moment. Part of the technical estimate for the auxiliary cluster expansion is deferred to Lemma~\ref{thm:linegraph-unconditional} at the end.

\begin{thm}\label{thm:hardcore-universality}
Let \(G_n=(V_n,E_n)\) be a sequence of finite graphs with maximum degree \(\Delta_n\). Let \(W_v\) be i.i.d.\ complex random variables with
\[
\mathbb E[W_v]=0,
\qquad
\mathbb E|W_v|^2=\nu,
\]
and assume that for some \(K>0\),
\[
\sup_{\phi\in[0,2\pi)}
\bigl\|\operatorname{Re}(e^{-i\phi}W_v)\bigr\|_{\psi_1}
\le K\sqrt{\nu}.
\]
For \(\lambda\in\mathbb C\), define
\[
Z_G(\lambda W):=\sum_{I\in\mathcal I(G)} \lambda^{|I|}\prod_{v\in I} W_v,
\qquad
X_W(\lambda):=Z_G(\lambda W)\exp\!\Bigl(-\lambda\sum_{v\in V}W_v\Bigr).
\]
If \(R_n>0\) satisfies
\[
\nu^2 R_n^4 (1+\Delta_n)|V_n|=o(1),
\]
then
\[
\mathbb E_\theta \log \mathbb E_W \bigl|X_W(R_n e^{i\theta})\bigr|^2 = o(1),
\]
where \(\theta\sim\mathrm{Unif}[0,2\pi)\).
\end{thm}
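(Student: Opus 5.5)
Expanding $\mathbb E_W|X_W(\lambda)|^2$ exactly, mirroring the universality argument for the permanent, is the plan. Write $\lambda=Re^{i\theta}$ and $u:=-\lambda$. Expanding $Z_G(\lambda W)\overline{Z_G(\lambda W)}$ as a double sum over independent sets $I,I'$, every vertex contributes independently. A vertex outside $I\cup I'$ gives $M(u)=\mathbb E e^{uW+\bar u\overline W}$. A vertex in $I\setminus I'$ gives $M(u)(1+\lambda\mu(u))$, one in $I'\setminus I$ gives the conjugate factor, and one in $I\cap I'$ gives $M(u)(|1+\lambda\mu(u)|^2+|\lambda|^2\sigma^2(u))$. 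Here $\mu,\sigma^2$ are the tilted mean and variance as in the permanent universality section. Thus
\[
\mathbb E_W|X_W(\lambda)|^2=M(u)^{|V|}\,\mathcal Z(\lambda),\qquad
\mathcal Z(\lambda):=\sum_{I,I'\in\mathcal I(G)}\prod_{v\in I\setminus I'}a\prod_{v\in I'\setminus I}\bar a\prod_{v\in I\cap I'}b ,
\]
with $a:=1+\lambda\mu(u)$ and $b:=|a|^2+|\lambda|^2\sigma^2(u)$. The sum $\mathcal Z$ is the partition function of a \emph{pair model}: a hardcore model on $G\times\{1,2\}$ whose vertex $v$ has three occupied states $\{1\},\{2\},\{1,2\}$ with weights $a,\bar a,b$, and whose incompatibility is that copy $i$ at $u$ and copy $i$ at $w$ cannot both be occupied when $u\sim w$.

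Next I will show this model is essentially "$1+\epsilon$" per vertex after factoring out a product. I would rewrite the per-vertex activity as $1+a+\bar a+b=|1+a|^2+|\lambda|^2\sigma^2$. I would normalize by the product of single-site partition functions and treat the edge constraints as polymers in a cluster expansion, deferring the combinatorial estimate to the auxiliary lemma the paper mentions. A cleaner route is to expand $a=1+\lambda\mu$ and note $\lambda\mu(u)=-\nu R^2+O(R^3)$ is small, which makes each edge interaction small. Concretely, I would write $\log\mathcal Z=\sum_v\log(\text{site term})+(\text{edge corrections})$. Here the site term is $1+2\Re(\lambda\mu)+|\lambda|^2\sigma^2+\ldots$. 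The edge corrections are controlled by KP, with total size $O(|E|\,\nu^2R^4)$ plus higher clusters bounded by $|V|\Delta^{k-1}(C\nu R^2)^k$. The condition $\nu^2R^4\Delta|V|=o(1)$, together with $|V|\ge\Delta+1$, forces $\Delta\nu R^2=o(1)$, so KP applies.

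The cancellation then comes from averaging over $\theta$, exactly as in Steps 1--4 of the permanent universality theorem. Per vertex, $\log M(u)=\nu R^2+(\text{terms }c_{pq}u^p\bar u^q)$, and $\log|a|^2$ together with the $b$ term contributes $-2\nu R^2+\nu R^2$ at order $R^2$. These cancel. The surviving $\theta$-dependent terms of order $R^2$ are the $c_{20}u^2$-type holomorphic pieces, and they average to zero over $\theta$ because they carry phase $e^{\pm2i\theta}$. What remains per vertex is $O_K(\nu^2R^4)$, for a total of $O(|V|\nu^2R^4)$. The edge clusters give $O(|V|\Delta\nu^2R^4)$ plus higher orders. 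These are real but not necessarily oscillating, so they must simply be small, which the hypothesis ensures.

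The main obstacle is the edge/cluster part. An edge $u\sim w$ interacts only through double occupancy of the same copy, so its leading weight is $\approx|a|^2$-type products of order $1$ times the small deviation. I need to show the edge correction is truly $O(\nu^2R^4)$ and not $O(\nu R^2)$. The naive first-order edge term is $-(\lambda\cdot\ldots)$, linear in $\lambda$ and $\bar\lambda$ separately, and it averages to zero in $\theta$. So the polymer expansion must be organized around the $\theta$-average rather than absolute values at the first order, while KP is used for absolute bounds only from second order on. I would first try centering the pair model at the all-empty-relative reference (dividing by $e^{-\lambda\sum W}$ already does this), checking that the one-edge cluster's order-$R^2$ coefficient is a pure phase $e^{\pm2i\theta}$ or zero.
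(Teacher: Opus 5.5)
There is a genuine gap at the first step: your exact pair-model identity has the wrong single-site factors.

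You have imported the per-row factor $(1+zW_{i\sigma(i)})$ from the permanent computation. In $Z_G(\lambda W)$, however, an occupied vertex contributes $\lambda W_v$, not $1+\lambda W_v$; the ``$1$'' is already the unoccupied state $v\notin I$. So a vertex in $I\setminus I'$ contributes $M(u)\,\lambda\mu(u)$. A vertex in $I\cap I'$ contributes $M(u)\,|\lambda|^2\,\mathbb E_u|W|^2=M(u)\bigl(|\lambda\mu(u)|^2+|\lambda|^2\sigma^2(u)\bigr)$.

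Your formula already fails for a single isolated vertex. It gives $M(u)\bigl(|2+\lambda\mu|^2+|\lambda|^2\sigma^2\bigr)$, whereas $\mathbb E|(1+\lambda W)e^{-\lambda W}|^2=M(u)\bigl(|1+\lambda\mu|^2+|\lambda|^2\sigma^2\bigr)$. Your later ``site term'' $1+2\Re(\lambda\mu)+|\lambda|^2\sigma^2+\cdots$ is only consistent with $a=\lambda\mu$, so the proposal contradicts itself at this point.

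This is not cosmetic. With $a=1+\lambda\mu\approx 1$ and $b\approx 1$, the pair model has activities of order one, so KP does not apply and the edge corrections are of order one per edge. That is exactly why your last paragraph runs into a ``main obstacle'' and has to hope for an unproven first-order cancellation of edge terms under $\theta$-averaging.

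The fix is to use the correct activities $a_L=\lambda\mu(u)$, $a_R=\overline{a_L}$, $a_B=|\lambda|^2\,\mathbb E_u|W|^2$. All three are $O_K(\nu R^2)$ uniformly in $\theta$. Note that $\lambda\mu(u)=-\nu R^2-2c_{20}\lambda^2+O_K(\nu^{3/2}R^3)$, which is not $-\nu R^2+O(R^3)$ pointwise, but only its size matters for KP. With these activities the obstacle disappears. Since $\nu R^2\Delta=o(1)$, KP for the polymer gas of connected, internally admissible labelled clusters gives $\log Z_{\mathrm{pair}}=N(a_L+a_R+a_B)+\mathcal R$ with $|\mathcal R|\le C N\Delta\rho^2=O_K(N\Delta\nu^2R^4)$. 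This is an absolute bound, with no $\theta$-averaging needed for the edge clusters, and your stated cluster bounds are then correct.

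Averaging in $\theta$ is needed only for the linear part $N\Psi(u)+N(a_L+a_R+a_B)$. There the $\nu R^2$ terms cancel as $\nu R^2-\nu R^2-\nu R^2+\nu R^2=0$, and the non-resonant modes such as $e^{\pm 2i\theta}$ average out, leaving $O_K(N\nu^2R^4)$. That is the paper's argument. Your site-term normalization would also work once the activities are fixed, since $\log(1+a_L+a_R+a_B)=a_L+a_R+a_B+O(\rho^2)$.
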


\begin{lem}\label{lem:local-cumulant}
There exist constants \(r_0=r_0(K)>0\) and \(A_K,B_K>0\), depending only on \(K\), such that
\[
F(a,b):=\log \mathbb E e^{aW+b\overline W}
\]
is holomorphic on the bidisc
\[
D:=\left\{(a,b)\in\mathbb C^2:\ |a|,|b|\le \frac{r_0}{\sqrt{\nu}}\right\},
\]
and has an absolutely convergent expansion
\[
F(a,b)=\sum_{p+q\ge 2} c_{pq} a^p b^q
\]
with
\[
|c_{pq}|\le A_K\, B_K^{\,p+q}\,\nu^{(p+q)/2}.
\]
Moreover,
$c_{11}=\nu$.
\end{lem}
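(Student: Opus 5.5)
This lemma is essentially the earlier lemma from Section~\ref{sec:permanent-universality}, restated with the constants in the form needed here. I would reprove it along the same lines, with one extra step for the identity $c_{11}=\nu$.

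\textbf{Step 1 (moment bounds).} From the directional sub-exponential condition with $\phi=0$ and $\phi=\pi/2$ we get $\|\operatorname{Re}W\|_{\psi_1},\|\operatorname{Im}W\|_{\psi_1}\le K\sqrt\nu$. Hence $\||W|\|_{\psi_1}\le C_1K\sqrt\nu$. The standard moment characterization then gives
\[
\mathbb E|W|^m\le (C_3K\sqrt\nu\, m)^m \qquad\text{for all } m\ge1.
\]

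\textbf{Step 2 (holomorphy and closeness to $1$).} Take $|a|,|b|\le r/\sqrt\nu$. Using $|aW+b\overline W|\le 2r|W|/\sqrt\nu$ and $m!\ge(m/e)^m$, we get
\[
\frac{\mathbb E|aW+b\overline W|^m}{m!}\le (2eC_3Kr)^m .
\]
Choose $r_0=(8eC_3K)^{-1}$, shrunk by an absolute factor if needed. Then the series for $M(a,b)=\mathbb E e^{aW+b\overline W}$ converges absolutely and locally uniformly on $D$, so $M$ is holomorphic there. Since $\mathbb EW=0$, the $m=1$ term vanishes, and the remaining tail gives $|M-1|\le 1/4$ on $D$. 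Consequently the principal logarithm $F=\log M$ is holomorphic on $D$ and satisfies $\sup_D|F|\le \log(4/3)+\pi/6$, or more crudely $|F|\le 1/2$ after a further absolute shrinking of $r_0$.

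\textbf{Step 3 (coefficients).} Expand $F$ in a power series on the bidisc. We have $F(0,0)=0$, and the linear coefficients are $\partial_aF(0,0)=\mathbb EW=0$ and $\partial_bF(0,0)=\mathbb E\overline W=0$, so only terms with $p+q\ge2$ appear. The Cauchy estimates on the polydisc of radii $r_0/\sqrt\nu$ then give
\[
|c_{pq}|\le \sup_D|F|\,\bigl(\sqrt\nu/r_0\bigr)^{p+q}.
\]
This is the claimed bound with $A_K=\tfrac12$ and $B_K=r_0(K)^{-1}$.

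\textbf{Step 4 ($c_{11}$).} Since $M(0,0)=1$ and the first derivatives of $M$ vanish at the origin,
\[
c_{11}=\partial_a\partial_bF(0,0)=\frac{\partial_a\partial_bM}{M}-\frac{\partial_aM\,\partial_bM}{M^2}\bigg|_{0}=\mathbb E[W\overline W]=\mathbb E|W|^2=\nu .
\]

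The only delicate point is Step 2: one must make sure $M$ stays in a region where a single-valued logarithm exists, uniformly on $D$, so that the Cauchy bound uses a $K$-dependent constant only. The shrinking of $r_0$ by an absolute factor handles this.
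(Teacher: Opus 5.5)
Your proposal is correct and follows essentially the same route as the paper. The steps match: the $\psi_1$ moment bound $\mathbb E|W|^m\le (C_K\sqrt{\nu}\,m)^m$; absolute convergence of the exponential series with $|M-1|$ small, so the principal logarithm is holomorphic and bounded; Cauchy estimates on the bidisc; and $c_{11}=\partial_a\partial_b\log M(0,0)=\mathbb E|W|^2=\nu$. The only difference is that the paper first proves holomorphy on the bidisc of radius $2r_0/\sqrt{\nu}$ and then applies Cauchy's estimate on the smaller $D$, which also yields absolute convergence of the expansion on the closed $D$. Your choice $r_0=(8eC_3K)^{-1}$ already leaves this margin, since the termwise bound at radius $2r_0$ is still $2^{-m}$, so it is worth saying so explicitly.
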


\begin{proof}
Write \(W=X+iY\). Taking \(\phi=0\) and \(\phi=\pi/2\) in the hypothesis gives
\[
\|X\|_{\psi_1}\le K\sqrt{\nu},
\qquad
\|Y\|_{\psi_1}\le K\sqrt{\nu}.
\]
Since \(|W|\le |X|+|Y|\), we obtain
\[
\|W\|_{\psi_1}\le C_K\sqrt{\nu},
\]
for some constant \(C_K\) depending only on \(K\). Hence the standard
sub-exponential moment bound yields
\[
\mathbb E|W|^m \le (C_K \sqrt{\nu}\, m)^m,
\qquad m\ge 1.
\]

Set
\[
M(a,b):=\mathbb E e^{aW+b\overline W}.
\]
For \(s:=|a|+|b|\), we have
\[
\sum_{p,q\ge 0}
\frac{|a|^p |b|^q}{p!\,q!}\,\mathbb E|W|^{p+q}
=
\sum_{m\ge 0}\frac{s^m}{m!}\,\mathbb E|W|^m
\le
1+\sum_{m\ge 1}\frac{(C_K\sqrt{\nu}\, m s)^m}{m!}.
\]
Using \(m!\ge (m/e)^m\), we obtain
\[
\frac{(C_K\sqrt{\nu}\, m s)^m}{m!}
\le
(eC_K s\sqrt{\nu})^m.
\]
Therefore the series converges absolutely whenever \(s\le c_K/\sqrt{\nu}\) for
\(c_K>0\) sufficiently small. In particular, \(M(a,b)\) is holomorphic on a
neighborhood of a bidisc
\[
\bigl\{|a|,|b|\le 2r_0/\sqrt{\nu}\bigr\}
\]
for some \(r_0=r_0(K)>0\).

Shrinking \(r_0\) if necessary, the same estimate gives
\[
|M(a,b)-1|
\le
\sum_{m\ge 1}(eC_K(|a|+|b|)\sqrt{\nu})^m
\le \frac12
\]
throughout \(|a|,|b|\le 2r_0/\sqrt{\nu}\). Thus \(M\) has no zeros there, and
the principal branch
\[
F(a,b):=\log M(a,b)
\]
is holomorphic on that bidisc. Moreover \(|F(a,b)|\le C_K\) there.

Now apply Cauchy's estimate on the smaller bidisc
\[
D=\left\{|a|,|b|\le \frac{r_0}{\sqrt{\nu}}\right\}.
\]
If
\[
F(a,b)=\sum_{p,q\ge 0} c_{pq} a^p b^q,
\]
then
\[
|c_{pq}|
\le
C_K\Bigl(\frac{\sqrt{\nu}}{r_0}\Bigr)^{p+q}
=
A_K\,B_K^{\,p+q}\,\nu^{(p+q)/2}.
\]
Since \(M(0,0)=1\), we have \(F(0,0)=0\). Since \(\mathbb EW=0\),
\[
\partial_a F(0,0)=\mathbb EW=0,
\qquad
\partial_b F(0,0)=\mathbb E\overline W=0,
\]
so there are no linear terms. Finally,
\[
c_{11}
=
\partial_a\partial_b F(0,0)
=
\partial_a\partial_b \log M(0,0)
=
\mathbb E|W|^2-(\mathbb EW)(\mathbb E\overline W)
=
\nu.
\]
This proves the lemma.
\end{proof}

\begin{proof}[Proof of Theorem~\ref{thm:hardcore-universality}]
Write \(N:=|V_n|\), \(\Delta:=\Delta_n\), and \(R:=R_n\). Since
\[
\nu^2R_n^4(1+\Delta_n)|V_n|=o(1),
\]
we in particular have \(\nu R_n^2=o(1)\). Hence for all sufficiently large \(n\),
\[
R_n\le \frac{r_0}{\sqrt{\nu}},
\]
so Lemma~\ref{lem:local-cumulant} is applicable. It therefore suffices to prove
the claimed estimate for all large \(n\).

Fix such an \(n\), and set
\[
G=(V,E),\qquad N:=|V|,\qquad \lambda=Re^{i\theta},\qquad u:=-\lambda.
\]
Define
\[
M(a,b):=\mathbb E e^{aW+b\overline W},
\qquad
F(a,b):=\log M(a,b),
\]
and
\[
\mu(a,b):=\partial_a F(a,b),
\qquad
\tau(a,b):=\partial_a\partial_b F(a,b)+\partial_aF(a,b)\partial_bF(a,b).
\]
Equivalently,
\[
\mu(a,b)=\frac{\mathbb E\!\left[W e^{aW+b\overline W}\right]}{M(a,b)},
\qquad
\tau(a,b)=\frac{\mathbb E\!\left[|W|^2 e^{aW+b\overline W}\right]}{M(a,b)}.
\]
Along the slice \(b=\overline a\), abbreviate
\[
\Psi(u):=F(u,\overline u),\qquad
\mu(u):=\mu(u,\overline u),\qquad
\tau(u):=\tau(u,\overline u).
\]

\medskip
\noindent
\textbf{Step 1: exact pair-model representation.}
By definition,
\[
X_W(\lambda)=Z_G(\lambda W)\exp\!\Bigl(-\lambda\sum_{v\in V}W_v\Bigr),
\]
hence
\[
|X_W(\lambda)|^2
=
\sum_{I,J\in\mathcal I(G)}
\lambda^{|I|}\overline\lambda^{\,|J|}
\Bigl(\prod_{v\in I}W_v\Bigr)
\Bigl(\prod_{v\in J}\overline{W_v}\Bigr)
\exp\!\Bigl(
u\sum_{v\in V}W_v+\overline u\sum_{v\in V}\overline{W_v}
\Bigr).
\]
Taking expectation and using independence across vertices,
\begin{align*}
\mathbb E_W |X_W(\lambda)|^2
&=
\sum_{I,J\in\mathcal I(G)}
\prod_{v\in V}
\mathbb E\Bigl[
e^{uW_v+\overline u\,\overline W_v}
(\lambda W_v)^{\mathbf 1}_{v\in I}
(\overline\lambda\,\overline W_v)^{\mathbf 1}_{v\in J}
\Bigr] \\
&=
M(u,\overline u)^N
\sum_{I,J\in\mathcal I(G)}
a_L(u,\lambda)^{|I\setminus J|}
a_R(u,\lambda)^{|J\setminus I|}
a_B(u,\lambda)^{|I\cap J|},
\end{align*}
where
\[
a_L(u,\lambda):=\lambda\,\mu(u),\qquad
a_R(u,\lambda):=\overline\lambda\,\overline{\mu(u)},\qquad
a_B(u,\lambda):=|\lambda|^2\,\tau(u).
\]
Thus
\begin{equation}\label{eq:pair-factorization-clean-fixed}
\mathbb E_W |X_W(\lambda)|^2
=
e^{N\Psi(u)}\,
Z_{\mathrm{pair}}\bigl(a_L(u,\lambda),a_R(u,\lambda),a_B(u,\lambda)\bigr),
\end{equation}
where \(Z_{\mathrm{pair}}\) is the pair-model partition function from
Lemma~\ref{lem:pair-model-kp}.

\medskip
\noindent
\textbf{Step 2: local expansions and size bounds.}
By Lemma~\ref{lem:local-cumulant},
\[
F(a,b)=\sum_{p+q\ge 2} c_{pq}a^p b^q,
\qquad
|c_{pq}|\le A_K B_K^{p+q}\nu^{(p+q)/2},
\qquad
c_{11}=\nu.
\]
Hence
\[
\Psi(u)=\sum_{p+q\ge 2} c_{pq}u^p\overline u^{\,q}.
\]

Differentiating term-by-term,
\[
\mu(a,b)=\partial_aF(a,b)=\sum_{p+q\ge 1} m_{pq}a^p b^q,
\qquad
m_{pq}=(p+1)c_{p+1,q}.
\]
Using \((p+1)\le 2^{p+1}\), we may absorb the factor \(p+1\) into the geometric
constant and obtain
\[
|m_{pq}|\le A'_K (B'_K)^{p+q+1}\nu^{(p+q+1)/2}.
\]
In particular,
\[
m_{01}=c_{11}=\nu.
\]

Similarly,
\[
\partial_a\partial_b F(a,b)=\nu+\sum_{p+q\ge 1} d_{pq}a^p b^q,
\qquad
|d_{pq}|\le A'_K (B'_K)^{p+q+2}\nu^{(p+q+2)/2}.
\]
Since \(\partial_aF\) and \(\partial_bF\) have no constant term, their product
also has an absolutely convergent expansion whose coefficients obey the same
type of geometric bound after enlarging constants if necessary. Thus
\[
\tau(a,b)=\nu+\sum_{p+q\ge 1} t_{pq}a^p b^q,
\qquad
|t_{pq}|\le A''_K (B''_K)^{p+q+2}\nu^{(p+q+2)/2}.
\]

Therefore, after shrinking the allowed radius once more if necessary, there is a
constant \(c_K>0\) such that whenever \(|u|\le c_K/\sqrt{\nu}\),
\begin{equation}\label{eq:activity-size-clean-fixed}
|a_L(u,\lambda)|+|a_R(u,\lambda)|+|a_B(u,\lambda)|
\le C_K \nu R^2.
\end{equation}
Indeed, for \(|u|\le c_K/\sqrt{\nu}\) the above series are dominated by geometric
series in \(B_K\sqrt{\nu}|u|\).

\medskip
\noindent
\textbf{Step 3: angular averages and quadratic cancellation.}

\smallskip
\noindent
\emph{The \(e^{N\Psi(u)}\) term.}
Since \(u=-Re^{i\theta}\),
\[
\Psi(-Re^{i\theta})
=
\sum_{p+q\ge 2} c_{pq}(-1)^{p+q}R^{p+q}e^{i(p-q)\theta}.
\]
Averaging over \(\theta\), only the diagonal terms \(p=q\) survive:
\[
\mathbb E_\theta \Psi(-Re^{i\theta})
=
\sum_{p\ge 1} c_{pp}R^{2p}.
\]
Using \(c_{11}=\nu\) and the geometric bound on \(c_{pp}\), we get
\begin{equation}\label{eq:Psi-average-clean-fixed}
\mathbb E_\theta \Psi(-Re^{i\theta})
=
\nu R^2+O_K(\nu^2R^4).
\end{equation}

\smallskip
\noindent
\emph{The \(a_L\) term.}
We have
\[
a_L(u,\lambda)=\lambda\mu(u),
\]
so
\[
a_L(-Re^{i\theta},Re^{i\theta})
=
\sum_{p+q\ge 1}
(-1)^{p+q}m_{pq}R^{p+q+1}e^{i(p-q+1)\theta}.
\]
After averaging over \(\theta\), only the resonant modes \(p-q+1=0\) survive.
The leading one is \((p,q)=(0,1)\), which contributes
\[
(-1)^1 m_{01}R^2=-\nu R^2.
\]
All other resonant terms satisfy \(p+q+1\ge 4\), hence
\begin{equation}\label{eq:aL-average-clean-fixed}
\mathbb E_\theta a_L(-Re^{i\theta},Re^{i\theta})
=
-\nu R^2+O_K(\nu^2R^4).
\end{equation}
By complex conjugation,
\begin{equation}\label{eq:aR-average-clean-fixed}
\mathbb E_\theta a_R(-Re^{i\theta},Re^{i\theta})
=
-\nu R^2+O_K(\nu^2R^4).
\end{equation}

\smallskip
\noindent
\emph{The \(a_B\) term.}
Since \(a_B(u,\lambda)=R^2\tau(u)\), we have
\[
\tau(-Re^{i\theta})
=
\nu+\sum_{p+q\ge 1} t_{pq}(-1)^{p+q}R^{p+q}e^{i(p-q)\theta}.
\]
Averaging over \(\theta\) kills all off-diagonal terms, giving
\[
\mathbb E_\theta \tau(-Re^{i\theta})
=
\nu+\sum_{p\ge 1} t_{pp}R^{2p}
=
\nu+O_K(\nu^2R^2).
\]
Therefore
\begin{equation}\label{eq:aB-average-clean-fixed}
\mathbb E_\theta a_B(-Re^{i\theta},Re^{i\theta})
=
\nu R^2+O_K(\nu^2R^4).
\end{equation}

\medskip
\noindent
\textbf{Step 4: application of the KP lemma.}
By \eqref{eq:activity-size-clean-fixed},
\[
\rho(\theta):=\max\{|a_L|,|a_R|,|a_B|\}\le C_K \nu R^2.
\]
We can apply our cluster expansion estimate (Lemma~\ref{lem:pair-model-kp} below), provided that
\[
\nu R^2 \Delta=o(1).
\]
This follows from the hypothesis, since \(\Delta\le N\) for every finite graph,
and therefore
\[
(\nu R^2\Delta)^2
=
\nu^2R^4\Delta^2
\le
\nu^2R^4\,\Delta N
\le
\nu^2R^4(1+\Delta)N
=
o(1).
\]
Hence for all sufficiently large \(n\),
\[
\rho(\theta)\le \frac{c_{\mathrm{KP}}}{\Delta}
\]
uniformly in \(\theta\), and Lemma~\ref{lem:pair-model-kp} gives
\[
\log Z_{\mathrm{pair}}(a_L,a_R,a_B)
=
N(a_L+a_R+a_B)+\mathcal R(\theta),
\]
with
\begin{equation}\label{eq:Rtheta-clean-fixed}
|\mathcal R(\theta)|
\le
C_K N\Delta (\nu R^2)^2
=
C_K N\Delta \nu^2R^4.
\end{equation}
Combining this with \eqref{eq:pair-factorization-clean-fixed}, we obtain
\[
\log \mathbb E_W |X_W(\lambda)|^2
=
N\Psi(u)+N(a_L+a_R+a_B)+\mathcal R(\theta).
\]

\medskip
\noindent
\textbf{Step 5: averaging in \(\theta\).}
Taking \(\mathbb E_\theta\) and using
\eqref{eq:Psi-average-clean-fixed},
\eqref{eq:aL-average-clean-fixed},
\eqref{eq:aR-average-clean-fixed},
\eqref{eq:aB-average-clean-fixed},
and \eqref{eq:Rtheta-clean-fixed}, we get
\begin{align*}
\mathbb E_\theta \log \mathbb E_W |X_W(Re^{i\theta})|^2
&=
N\bigl(\nu R^2+O_K(\nu^2R^4)\bigr) \\
&\quad
+N\bigl(-\nu R^2-\nu R^2+\nu R^2+O_K(\nu^2R^4)\bigr)
+O_K(N\Delta \nu^2R^4).
\end{align*}
The quadratic terms cancel, leaving
\[
\mathbb E_\theta \log \mathbb E_W |X_W(Re^{i\theta})|^2
=
O_K(N\nu^2R^4)+O_K(N\Delta \nu^2R^4)
=
O_K\bigl(N(1+\Delta)\nu^2R^4\bigr).
\]
Applying this to \(G=G_n\), \(N=|V_n|\), \(\Delta=\Delta_n\), and \(R=R_n\), the
assumption
\[
\nu^2R_n^4(1+\Delta_n)|V_n|=o(1)
\]
implies
\[
\mathbb E_\theta \log \mathbb E_W \bigl|X_W(R_ne^{i\theta})\bigr|^2=o(1).
\]
This proves the theorem.
\end{proof}
\subsection{Cluster expansion of pair model}
We explain in detail the cluster-expansion estimate for the pair model used above.

\begin{lem}\label{lem:pair-model-kp}
There exist absolute constants \(c_{\mathrm{KP}},C_{\mathrm{KP}}>0\) such that the following holds.
Let \(G=(V,E)\) be a finite graph with maximum degree \(\Delta\), and let
\[
a_L,a_R,a_B\in\mathbb C.
\]
Consider the finite-state spin model on \(V\) with local states \(\{0,L,R,B\}\), vertex weights
\[
w(0)=1,\qquad w(L)=a_L,\qquad w(R)=a_R,\qquad w(B)=a_B,
\]
and hard-core constraints that
\begin{itemize}
    \item adjacent vertices may not both lie in $\{L,B\}$,
    \item and adjacent vertices may not both lie in $\{R,B\}.$
\end{itemize}
Let \(Z_{\mathrm{pair}}(a_L,a_R,a_B)\) denote its partition function given by the weighted
sum over all valid configurations in $\{0,L,R,B\}^V$. If
\[
\rho:=\max\{|a_L|,|a_R|,|a_B|\}\le \frac{c_{\mathrm{KP}}}{\Delta},
\]
then
\[
\log Z_{\mathrm{pair}}(a_L,a_R,a_B)
=
|V|(a_L+a_R+a_B)+\mathcal R,
\]
with
\[
|\mathcal R|\le C_{\mathrm{KP}}\,|V|\,\Delta\,\rho^2.
\]
\end{lem}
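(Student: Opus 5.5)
The plan is to recast the pair model as an ordinary hardcore model, so that the Koteck\'y--Preiss criterion applies directly. Introduce the auxiliary graph $\widetilde G$ on vertex set $V\times\{L,R,B\}$. Join $(v,s)$ and $(v,s')$ for all $s\neq s'$; this enforces at most one non-zero state per vertex. For $u\sim v$ in $G$, join $(u,s)$ and $(v,s')$ exactly when the pair $\{s,s'\}$ is forbidden, namely when $s,s'\in\{L,B\}$ or $s,s'\in\{R,B\}$. Valid configurations in $\{0,L,R,B\}^V$ then correspond bijectively to independent sets of $\widetilde G$, with fugacities $\lambda_{(v,s)}=a_s$. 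Hence $Z_{\mathrm{pair}}=Z_{\widetilde G}(\lambda)$.

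The graph $\widetilde G$ has maximum degree at most $2+3\Delta$. We therefore apply the KP criterion with the constant choice $a_{(v,s)}=a:=e\cdot 4\Delta\rho$, which is small. Condition \eqref{eq:KP-hardcore} then reads $(3+3\Delta)\rho e^{a}\le a$, and this holds once $\rho\le c_{\mathrm{KP}}/\Delta$ for $c_{\mathrm{KP}}$ small. This gives absolute convergence of the cluster expansion, $Z_{\mathrm{pair}}\neq 0$, and the local bound \eqref{eq:KP-bound-hardcore}.

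Next we isolate the linear term. In the cluster expansion of $\log Z_{\widetilde G}$, the clusters with $|\mathbf m|=1$ contribute exactly $\sum_{(v,s)}\lambda_{(v,s)}=|V|(a_L+a_R+a_B)$. It remains to bound the total of clusters with $|\mathbf m|\ge 2$ by $C|V|\Delta\rho^2$. For this we use the standard strengthened KP bound: running the criterion for the rescaled fugacities $t\lambda$ with $|t|\le 1/(C\Delta\rho)$ shows that the per-vertex series $f_x(t)=\sum_{\mathbf m: m_x\ge1}\frac{1}{\mathbf m!}|\phi(H[\mathbf m])|\prod|\lambda|^{m}\,t^{|\mathbf m|-1}$ stays bounded, by $O(1)$ times $\rho$, on a disk of radius of order $1/(\Delta\rho)$. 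Here $f_x$ has nonnegative coefficients. Its tail from degree $|\mathbf m|-1\ge1$ at $t=1$ is therefore at most $O(\rho)\cdot O(\Delta\rho)$ by a geometric/Cauchy estimate, which gives $O(\Delta\rho^2)$ per vertex of $\widetilde G$.

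An alternative route to the same bound mimics the proof of Lemma~\ref{lem:cluster}. View $Z(t):=Z_{\widetilde G}(t\lambda)$ as a polynomial of degree at most $|V|$. The KP argument shows it is zero-free for $|t|\le 1/(C\Delta\rho)$, so each reciprocal root satisfies $|1/r_j|\le C\Delta\rho$. Then
\[
\Bigl|\log Z(1)-Z'(0)\Bigr|\le \sum_{j\le |V|}\sum_{k\ge2}\frac{|r_j|^{-k}}{k}\le C|V|\Delta^2\rho^2.
\]
This is off by a factor of $\Delta$, so I would prefer the first (cluster-by-cluster) route.

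The refined counting goes as follows. A cluster with $|\mathbf m|\ge2$ containing a given $x=(v,s)$ must contain some other copy adjacent in $\widetilde G$ to a copy already present. Only the $O(\Delta)$ neighbors of the cluster's vertices are available, and each costs a factor $\rho e^{a}$. Concretely, summing over clusters containing $x$ with $|\mathbf m|\ge 2$ is bounded by $\rho\cdot\sum_{y\in N[x]}(\text{KP sum at }y)\le \rho\cdot O(\Delta)\cdot a=O(\Delta\rho^2)$. This uses the tree-graph inequality of Lemma~\ref{lem:ursell-tree-bound}: root the spanning tree at a copy of $x$ and peel off one neighbor. Summing over the $3|V|$ vertices of $\widetilde G$ then gives $|\mathcal R|\le C_{\mathrm{KP}}|V|\Delta\rho^2$.

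The main obstacle is this last step: obtaining a remainder of order $|V|\Delta\rho^2$ rather than the cruder $|V|\Delta^2\rho^2$ that zero-counting gives. It requires the rooted-tree (Penrose) form of the KP estimate, in which fixing one extra vertex adjacent to the root costs only a factor $\Delta\rho$, rather than just the global bound \eqref{eq:KP-bound-hardcore}.
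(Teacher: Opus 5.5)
Your argument is correct in substance, but it follows a different route from the paper.

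\textbf{The paper's route.} The paper builds an abstract polymer gas. Its polymers are connected labeled sets $(S,\sigma)$ with internally admissible labelings, so an allowed adjacent $L$--$R$ pair lives inside a single polymer. It uses compatibility $\operatorname{dist}_G(S,T)\ge 2$ and $a(\gamma)=|\gamma|$, and verifies KP through the lattice-animal count $(e\Delta)^{k-1}3^k$. It then bounds two pieces separately: single polymers of size $\ge 2$, and multi-polymer clusters. The latter are handled via the corollary $|R_{\mathrm{clust}}|\le C\sum_\gamma |z(\gamma)|e^{a(\gamma)}\sum_{\eta\not\sim\gamma}|z(\eta)|e^{a(\eta)}$.

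\textbf{Your route and what it buys.} You split each site into three copies and get an honest hardcore model on $\widetilde G$, with closed neighborhoods of size at most $3\Delta+3$. An allowed $L$--$R$ pair is then just two compatible singletons. So you need no connected-set counting and no size-$\ge 2$ polymer term. The small uniform choice $a=4e\Delta\rho$ does all the work inside the paper's own hardcore Proposition. Your alternative zero-counting route is correctly rejected, since it loses a factor of $\Delta$.

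\textbf{The KP bound you must cite.} Both proofs need something stronger than the stated bound \eqref{eq:KP-bound-hardcore}. You should invoke explicitly the rooted form of the Koteck\'y--Preiss theorem \cite{friedli2017statistical}:
$\sum_{\mathbf m:\,m_x\ge 1}\frac{1}{\mathbf m!}|\phi(H[\mathbf m])|\prod_u|\lambda_u|^{m_u}\le |\lambda_x|e^{a_x}$.
With this, your rescaling argument in $t$ is valid, and $f_x=O(\rho)$ on the disk. But the rescaling is not even necessary. At $t=1$, the clusters with $|\mathbf m|\ge 2$ containing $x$ contribute at most $|\lambda_x|(e^{a_x}-1)\le 2\rho a=8e\Delta\rho^2$ once $a\le 1$. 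Summing over the $3|V|$ sites gives $|\mathcal R|\le C|V|\Delta\rho^2$.

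\textbf{A slip in your last paragraph.} With $a=4e\Delta\rho$, the quantity $\rho\cdot O(\Delta)\cdot a$ is $O(\Delta^2\rho^2)$, not $O(\Delta\rho^2)$. The per-neighbor factor must be the rooted bound $|\lambda_y|e^{a}=O(\rho)$, not $a$. The ``peel off one neighbor'' heuristic should also be replaced by the inequality above, since removing the root of a spanning tree can leave several branches rather than one.
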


\begin{proof}
We rewrite the model as an abstract polymer gas and then apply the
Koteck\'y--Preiss (KP) criterion. See \cite{friedli2017statistical,perkins2023five}
for background on the terminology used.

\medskip
\noindent
\textbf{1. Polymer representation.}
Let us call a site \emph{occupied} if its spin lies in \(\{L,R,B\}\), and
\emph{vacant} if its spin is \(0\).
For two nonzero labels \(s,t\in\{L,R,B\}\), write \(s\sim_{\mathrm{adm}} t\) if
the pair \((s,t)\) is allowed across an edge.
By the stated constraints, the only allowed adjacent nonzero pairs are
\[
(L,R)\quad\text{and}\quad (R,L).
\]
Indeed, \(L\)-\(L\), \(R\)-\(R\), \(L\)-\(B\), \(B\)-\(L\), \(R\)-\(B\), \(B\)-\(R\),
and \(B\)-\(B\) are all forbidden.

A \emph{polymer} is a pair \(\gamma=(S,\sigma)\) where:

\begin{itemize}
\item \(S\subseteq V\) is nonempty and connected in \(G\);
\item \(\sigma:S\to\{L,R,B\}\);
\item \(\sigma\) is internally admissible on \(S\), meaning that for every edge
\(xy\in E(G[S])\), the pair \((\sigma(x),\sigma(y))\) is allowed.
\end{itemize}

The activity of \(\gamma=(S,\sigma)\) is
\[
z(\gamma):=\prod_{v\in S} a_{\sigma(v)}.
\]
In particular,
\[
|z(\gamma)|\le \rho^{|S|}.
\]

Two polymers \(\gamma=(S,\sigma)\) and \(\gamma'=(T,\tau)\) are said to be
\emph{compatible}, written \(\gamma\sim\gamma'\), if
\[
\operatorname{dist}_G(S,T)\ge 2.
\]
Equivalently, they are compatible iff \(S\cap T=\varnothing\) and there is no edge
of \(G\) joining a vertex of \(S\) to a vertex of \(T\).

We claim that
\[
Z_{\mathrm{pair}}(a_L,a_R,a_B)
=
\sum_{\Gamma\ \mathrm{compatible}}
\prod_{\gamma\in\Gamma} z(\gamma),
\]
where the sum runs over all finite compatible families of polymers.

To see this, start from a spin configuration \(\eta:V\to\{0,L,R,B\}\) satisfying
the edge constraints, and let
\[
U:=\{v\in V:\eta(v)\neq 0\}
\]
be its occupied set.
Decompose \(U\) into connected components in the graph \(G\):
\[
U=S_1\sqcup \cdots \sqcup S_m.
\]
For each component \(S_i\), the restricted labeling \(\eta|_{S_i}\) is internally
admissible, so \(\gamma_i:=(S_i,\eta|_{S_i})\) is a polymer.
Since \(S_i,S_j\) are distinct connected components of \(U\), there is no edge
between them, hence \(\operatorname{dist}_G(S_i,S_j)\ge 2\); thus the polymers
\(\gamma_1,\dots,\gamma_m\) are pairwise compatible.
Conversely, any compatible family of polymers determines a unique admissible spin
configuration by placing the given nonzero labels on the supports of the polymers
and putting \(0\) elsewhere.
The weight factorizes multiplicatively, which proves the identity.

\medskip
\noindent
\textbf{2. A counting bound.}
Fix a vertex \(x\in V\). For \(k\ge 1\), the number of connected \(k\)-vertex
subsets \(S\subseteq V\) containing \(x\) is at most \((e\Delta)^{k-1}\).
For each such support \(S\), there are at most \(3^k\) choices of the labeling
\(\sigma:S\to\{L,R,B\}\), and hence at most \(3^k\) polymers with support \(S\).
Therefore
\[
\sum_{\substack{\gamma\ni x\\ |\gamma|=k}} |z(\gamma)| e^{|\gamma|}
\le
(e\Delta)^{k-1}\,3^k\,\rho^k\,e^k
=
3e\rho \,(3e^2\Delta\rho)^{k-1}.
\]
Summing over \(k\ge 1\), we obtain
\begin{equation}\label{eq:Sx-bound}
\sum_{\gamma\ni x} |z(\gamma)| e^{|\gamma|}
\le
3e\rho \sum_{k\ge 1}(3e^2\Delta\rho)^{k-1}.
\end{equation}
Hence, provided
$3e^2\Delta\rho\le \frac12$,
the geometric series converges and yields
\begin{equation}\label{eq:Sx-final}
\sum_{\gamma\ni x} |z(\gamma)| e^{|\gamma|}
\le C_0\,\rho
\end{equation}
for an absolute constant \(C_0\).

Now let \(\gamma=(S,\sigma)\) be a fixed polymer.
If \(\eta=(T,\tau)\) is incompatible with \(\gamma\), then
\(\operatorname{dist}_G(S,T)\le 1\), so in particular \(T\) contains some vertex
of the closed neighborhood
\[
N[S]:=S\cup\{y\in V:\exists x\in S,\ xy\in E\}.
\]
Thus, using \eqref{eq:Sx-final},
\begin{align}
\sum_{\eta\not\sim \gamma} |z(\eta)| e^{|\eta|}
&\le
\sum_{x\in N[S]} \sum_{\eta\ni x} |z(\eta)| e^{|\eta|}
\nonumber\\
&\le
|N[S]|\, C_0\,\rho
\nonumber\\
&\le
(\Delta+1)|S|\, C_0\,\rho.
\label{eq:kp-left}
\end{align}

\medskip
\noindent
\textbf{3. Verification of the KP condition.}
Set
\[
a(\gamma):=|\gamma|=|S|.
\]
Choose \(c_{\mathrm{KP}}>0\) small enough so that whenever
\[
\rho\le \frac{c_{\mathrm{KP}}}{\Delta},
\]
both
$3e^2\Delta\rho\le \frac12$
and
$(\Delta+1)C_0\rho \le 1$
hold.
Then \eqref{eq:kp-left} gives
\[
\sum_{\eta\not\sim\gamma} |z(\eta)| e^{a(\eta)}
\le a(\gamma)
\qquad\text{for every polymer }\gamma.
\]
This is precisely the Koteck\'y--Preiss criterion for the abstract polymer gas.

Therefore the cluster expansion for \(\log Z_{\mathrm{pair}}\) converges absolutely:
\[
\log Z_{\mathrm{pair}}
=
\sum_{\mathcal X} \phi^T(\mathcal X)\prod_{\gamma} z(\gamma)^{\mathcal X(\gamma)},
\]
where the sum is over all finitely supported polymer multisets \(\mathcal X\), and
\(\phi^T(\mathcal X)\) denotes the usual truncated cluster coefficient.

\medskip
\noindent
\textbf{4. Extraction of the linear term.}
The singleton clusters are exactly the one-polymer clusters \(\mathcal X=\{\gamma\}\),
and their contribution is \(\sum_\gamma z(\gamma)\).
Among these, the polymers of size \(1\) are exactly
\[
(\{v\},L),\qquad (\{v\},R),\qquad (\{v\},B),
\qquad v\in V,
\]
with activities \(a_L,a_R,a_B\).
Hence
\[
\sum_{|\gamma|=1} z(\gamma)=|V|(a_L+a_R+a_B).
\]

So it remains to show that all other contributions are
\(O(|V|\Delta\rho^2)\).
Write
\[
\mathcal R
:=
\log Z_{\mathrm{pair}}-|V|(a_L+a_R+a_B)
=
R_{\ge 2}^{(1)}+R_{\mathrm{clust}},
\]
where
\[
R_{\ge 2}^{(1)}:=\sum_{|\gamma|\ge 2} z(\gamma)
\]
is the contribution of one-polymer clusters of size at least \(2\), and
\(R_{\mathrm{clust}}\) is the sum of all clusters involving at least two polymers.

\medskip
\noindent
\emph{Bound on \(R_{\ge 2}^{(1)}\).}
Using the same counting estimate as above,
\begin{align*}
\sum_{|\gamma|\ge 2} |z(\gamma)|
&\le
\sum_{x\in V}\ \sum_{\substack{\gamma\ni x\\ |\gamma|\ge 2}} |z(\gamma)| \\
&\le
|V| \sum_{k\ge 2} (e\Delta)^{k-1} 3^k \rho^k \\
&=
|V|\, 3\rho \sum_{k\ge 2} (3e\Delta\rho)^{k-1}.
\end{align*}
If \(c_{\mathrm{KP}}\) is chosen smaller if necessary, then \(3e\Delta\rho\le 1/2\),
and therefore
\[
|R_{\ge 2}^{(1)}|
\le
C_1 |V|\,\Delta\,\rho^2
\]
for an absolute constant \(C_1\).

\medskip
\noindent
\emph{Bound on \(R_{\mathrm{clust}}\).}
A standard corollary of the KP theorem \cite{perkins2023five} gives the estimate
\begin{equation}\label{eq:nontrivial-clusters-bound}
|R_{\mathrm{clust}}|
\le
C_2 \sum_{\gamma} |z(\gamma)| e^{a(\gamma)}
\sum_{\eta\not\sim \gamma} |z(\eta)| e^{a(\eta)},
\end{equation}
for an absolute constant \(C_2\).
Using \eqref{eq:kp-left},
\[
\sum_{\eta\not\sim \gamma} |z(\eta)| e^{a(\eta)}
\le C_0(\Delta+1)|\gamma|\rho.
\]
Therefore
\[
|R_{\mathrm{clust}}|
\le
C_3 \Delta\rho \sum_{\gamma} |\gamma|\, |z(\gamma)| e^{|\gamma|}.
\]
Now
\[
\sum_{\gamma} |\gamma|\, |z(\gamma)| e^{|\gamma|}
=
\sum_{x\in V}\sum_{\gamma\ni x}|z(\gamma)|e^{|\gamma|}
\le
|V|\, C_0\, \rho
\]
by \eqref{eq:Sx-final}. Hence
\[
|R_{\mathrm{clust}}|
\le
C_4 |V|\,\Delta\,\rho^2.
\]

Combining the bounds for \(R_{\ge 2}^{(1)}\) and \(R_{\mathrm{clust}}\), we obtain
\[
|\mathcal R|
\le
C_{\mathrm{KP}}\, |V|\,\Delta\,\rho^2
\]
for a suitable absolute constant \(C_{\mathrm{KP}}\), as claimed.
\end{proof}

\subsection{Zero-free region and algorithm}

We now record the zero-free consequence of Theorem~\ref{thm:hardcore-universality}.
For a holomorphic function \(f\not\equiv 0\), write \(N_f(r)\) for the number of
zeros of \(f\) in \(\mathbb D(0,r)\), counted with multiplicity.

For a fixed weighted instance \(W=(W_v)_{v\in V}\), define
\[
G_W(\lambda):=Z_G(\lambda W)
=
\sum_{I\in\mathcal I(G)} \lambda^{|I|}\prod_{v\in I} W_v.
\]
Since
\[
X_W(\lambda)=G_W(\lambda)\exp\!\Bigl(-\lambda\sum_{v\in V}W_v\Bigr),
\]
the functions \(G_W\) and \(X_W\) have exactly the same zeros.

\begin{thm}[Quantitative zero-count bound]
\label{thm:hardcore-zero-count}
There exist constants \(c_{\ast}(K),C_{\ast}(K)>0\), depending only on \(K\), such that
the following holds.

Let \(G=(V,E)\) be a finite graph with maximum degree \(\Delta\), and let
\(W_v\) be i.i.d.\ complex random variables satisfying the assumptions of
Theorem~\ref{thm:hardcore-universality}. Let \(0<r<R\), and assume that
\[
R\le \frac{c_{\ast}(K)}{\sqrt{\nu}},
\qquad
\nu R^{2}\Delta \le c_{\ast}(K).
\]
Then
\[
\mathbb E\,N_{G_W}(r)
\le
\frac{C_{\ast}(K)\,|V|\,(1+\Delta)\,\nu^{2}R^{4}}{2\log(R/r)}.
\]
In particular,
\[
\mathbb P\!\left[G_W \text{ has a zero in } \mathbb D(0,r)\right]
\le
\frac{C_{\ast}(K)\,|V|\,(1+\Delta)\,\nu^{2}R^{4}}{2\log(R/r)}.
\]
\end{thm}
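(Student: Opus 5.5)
The proof combines the Jensen-formula bound of Lemma~\ref{lem:core-jensen-app} with a quantitative, non-asymptotic version of the angular-averaged second-moment estimate from the proof of Theorem~\ref{thm:hardcore-universality}.

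\textbf{Step 1 (Jensen).} We take $f=G_W$, $h(\lambda)=-\lambda\sum_v W_v$ and $g=X_W$. Since $G_W(0)=Z_G(0)=1$, we have $X_W(0)=1$ and $\log|g(0)|=0$. Lemma~\ref{lem:core-jensen-app} then gives
\[
\mathbb E\,N_{G_W}(r)\le \frac{1}{2\log(R/r)}\,\mathbb E_\theta \log \mathbb E_W|X_W(Re^{i\theta})|^2 .
\]
So it suffices to show
\[
\mathbb E_\theta\log\mathbb E_W|X_W(Re^{i\theta})|^2\le C_\ast(K)\,|V|(1+\Delta)\,\nu^2R^4 .
\]
The probability statement then follows from $\mathbb P[N\ge1]\le\mathbb E N$.

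\textbf{Step 2 (non-asymptotic version of Theorem~\ref{thm:hardcore-universality}).} We rerun Steps 1--5 of that proof for a fixed graph, with explicit constants, in place of the $o(1)$ hypothesis.
\begin{itemize}
\item The hypothesis $R\le c_\ast(K)/\sqrt\nu$, with $c_\ast\le r_0(K)$ small, puts $u=-Re^{i\theta}$ inside the bidisc of Lemma~\ref{lem:local-cumulant}. All the geometric series for $\Psi,\mu,\tau$ then converge with ratio $B_K\sqrt\nu R\le 1/2$, uniformly in $\theta$.
\item This yields the activity bound $\rho(\theta)\le C_K\nu R^2$ uniformly in $\theta$.
\item The second hypothesis $\nu R^2\Delta\le c_\ast(K)$, with $c_\ast\le c_{\mathrm{KP}}/C_K$, gives $\rho\le c_{\mathrm{KP}}/\Delta$. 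Lemma~\ref{lem:pair-model-kp} therefore applies for every $\theta$, with remainder $|\mathcal R(\theta)|\le C_{\mathrm{KP}}C_K^2|V|\Delta\nu^2R^4$.
\end{itemize}
This replaces the step in the earlier proof that used $\Delta\le N$ and the $o(1)$ assumption.

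\textbf{Step 3 (angular averages).} The averaged expansions \eqref{eq:Psi-average-clean-fixed}--\eqref{eq:aB-average-clean-fixed} hold with explicit $O_K(\nu^2R^4)$ errors. This is because the error terms are tails of geometric series whose first term is of order $\nu^2R^4$, for example $\sum_{p\ge2}|c_{pp}|R^{2p}\le A_KB_K^4\nu^2R^4\cdot 2$. The quadratic terms cancel exactly as before, via $\nu R^2-\nu R^2-\nu R^2+\nu R^2=0$. Summing gives
\[
\mathbb E_\theta\log\mathbb E_W|X_W|^2\le C_K|V|\nu^2R^4+C_K|V|\Delta\nu^2R^4\le C_\ast(K)|V|(1+\Delta)\nu^2R^4 .
\]

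\textbf{Main obstacle.} The only delicate point is making every constant in the earlier proof uniform. This means:
\begin{itemize}
\item a single smallness constant $c_\ast(K)$ must simultaneously guarantee convergence of the cumulant expansions, the KP condition, and the validity of taking $\log$ of $e^{N\Psi}Z_{\mathrm{pair}}$ (where $Z_{\mathrm{pair}}\neq0$ by KP);
\item the $O_K$ errors in Step 3 must carry no hidden dependence on $n$.
\end{itemize}
I would handle this by fixing $c_\ast(K)=\min\{r_0/2,\,1/(2B_K),\,c_{\mathrm{KP}}/C_K\}$ at the outset and checking that each geometric tail is dominated by a constant multiple of its first term under this choice.
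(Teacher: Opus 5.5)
Your proposal is correct and follows essentially the same route as the paper. You apply Lemma~\ref{lem:core-jensen-app} with $g=X_W$ (so $\log|g(0)|=0$), then rerun the proof of Theorem~\ref{thm:hardcore-universality} quantitatively, using $R\le c_\ast/\sqrt{\nu}$ for the cumulant expansions and $\nu R^2\Delta\le c_\ast$ directly for the KP condition of Lemma~\ref{lem:pair-model-kp}, and finish with Markov's inequality; your explicit choice of $c_\ast(K)$ and the geometric-tail bookkeeping simply spell out what the paper compresses into ``inspect the proof.''
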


\begin{proof}
By Lemma~\ref{lem:core-jensen-app}, the application of Jensen's formula, we have that
\[ \mathbb E\,N_{G_W}(r)
=
\mathbb E\,N_{X_W}(r)
\le
\frac{1}{2\log(R/r)}
\mathbb E_\theta \log \mathbb E_W |X_W(Re^{i\theta})|^2.
\]

Now inspect the proof of Theorem~\ref{thm:hardcore-universality}. Under the present
smallness assumptions \(R\le c_{\ast}(K)/\sqrt{\nu}\) and \(\nu R^2\Delta\le c_{\ast}(K)\),
the same argument gives the quantitative bound
\[
\mathbb E_\theta \log \mathbb E_W |X_W(Re^{i\theta})|^2
\le
C_{\ast}(K)\,|V|\,(1+\Delta)\,\nu^2 R^4.
\]
Substituting this into the previous display yields
\[
\mathbb E\,N_{G_W}(r)
\le
\frac{C_{\ast}(K)\,|V|\,(1+\Delta)\,\nu^{2}R^{4}}{2\log(R/r)}.
\]
Finally, Markov's inequality gives
\[
\mathbb P\!\left[N_{G_W}(r)\ge 1\right]
\le
\mathbb E\,N_{G_W}(r),
\]
which proves the theorem.
\end{proof}

\begin{thm}
\label{thm:hardcore-zero-free-sequence}
Let \(G_n=(V_n,E_n)\) be a sequence of finite graphs with maximum degree
\(\Delta_n\), and set
\[
M_n:=(1+\Delta_n)|V_n|.
\]
Fix \(\beta\in(0,1/4)\), and define
\[
R_n:=\frac{1}{\sqrt{\nu}}\,M_n^{-1/4-\beta}.
\]
Then for every fixed \(\varepsilon\in(0,1)\), and all sufficiently large \(n\),
\[
\mathbb P\!\left[
Z_{G_n}(\lambda W)\text{ has a zero in }\mathbb D\!\bigl(0,(1-\varepsilon)R_n\bigr)
\right]
=
O_{\varepsilon,K}\!\left(M_n^{-4\beta}\right).
\]
\end{thm}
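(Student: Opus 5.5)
The plan is to apply Theorem~\ref{thm:hardcore-zero-count} with the choices $R := R_n = \nu^{-1/2} M_n^{-1/4-\beta}$ and $r := (1-\varepsilon)R_n$. This mirrors the proof of the analogous permanent statement in Section~\ref{sec:permanent-universality}.

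First I would check the two smallness hypotheses of Theorem~\ref{thm:hardcore-zero-count} for all sufficiently large $n$.
\begin{itemize}
\item Since $M_n \ge 1$ and $\beta > 0$, we have $\sqrt{\nu}R_n = M_n^{-1/4-\beta} \le 1$. However, the hypothesis $R \le c_\ast(K)/\sqrt{\nu}$ needs $M_n^{-1/4-\beta} \le c_\ast(K)$, which follows once $M_n \to \infty$.
\item For the second hypothesis, $\nu R_n^2 \Delta_n = \Delta_n M_n^{-1/2-2\beta}$. Using $\Delta_n \le |V_n|$ gives $\Delta_n^2 \le (1+\Delta_n)|V_n| = M_n$, so $\Delta_n \le M_n^{1/2}$. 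Hence $\nu R_n^2 \Delta_n \le M_n^{-2\beta}$, which is at most $c_\ast(K)$ once $M_n$ is large.
\end{itemize}
So the hypotheses reduce to $M_n \to \infty$. I would either note that the statement implicitly assumes this (the "sufficiently large $n$" phrasing suggests $|V_n| \to \infty$), or handle bounded $M_n$ separately: in that case $O(M_n^{-4\beta})$ is $O(1)$ and the probability bound is trivial. The constant in the $O$ can absorb this.

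Next I would substitute into the zero-count bound. We have $|V_n|(1+\Delta_n)\nu^2R_n^4 = M_n \cdot M_n^{-1-4\beta} = M_n^{-4\beta}$ and $\log(R/r) = \log\bigl((1-\varepsilon)^{-1}\bigr) = \Theta_\varepsilon(1)$. Markov's inequality, already included in Theorem~\ref{thm:hardcore-zero-count}, then gives
\[
\mathbb P\bigl[Z_{G_n}(\lambda W)\text{ has a zero in }\mathbb D(0,(1-\varepsilon)R_n)\bigr] \le \frac{C_\ast(K)}{2\log\bigl((1-\varepsilon)^{-1}\bigr)}\, M_n^{-4\beta} = O_{\varepsilon,K}\bigl(M_n^{-4\beta}\bigr).
\]
Since the zeros of $G_W$ are exactly those of $Z_{G_n}(\lambda W)$ as a function of $\lambda$, this is the claim.

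The main obstacle is not in this corollary itself but in the quantitative input. The proof of Theorem~\ref{thm:hardcore-zero-count} relies on the bound $\mathbb E_\theta \log \mathbb E_W|X_W(Re^{i\theta})|^2 \le C_\ast(K)|V|(1+\Delta)\nu^2R^4$, which comes from making the $O_K$ terms in the proof of Theorem~\ref{thm:hardcore-universality} explicit and uniform. The KP condition of Lemma~\ref{lem:pair-model-kp} requires $\rho \le c_{\mathrm{KP}}/\Delta$ with $\rho \le C_K \nu R^2$, and this is exactly what $\nu R^2\Delta \le c_\ast(K)$ guarantees. Given that theorem as stated, the remaining work is only the exponent bookkeeping above, in particular the inequality $\Delta_n \le M_n^{1/2}$.
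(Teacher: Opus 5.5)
Your proposal is correct and is the paper's argument: apply Theorem~\ref{thm:hardcore-zero-count} at $R=R_n$, check $\nu R_n^2\Delta_n\le M_n^{-2\beta}$ via $\Delta_n^2\le M_n$, and use $\nu^2R_n^4M_n=M_n^{-4\beta}$ together with $\log(R/r)=\Theta_\varepsilon(1)$. The differences are cosmetic. You take $r=(1-\varepsilon)R_n$ directly, whereas the paper takes $r=(1-\varepsilon/2)R_n$ and then shrinks the disk. You also make explicit the implicit assumption $M_n\to\infty$; for full generality, split at a threshold $M_0(K,\beta)$ rather than assuming $M_n$ bounded, using the trivial bound $1\le M_0^{4\beta}M_n^{-4\beta}$ below the threshold.
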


\begin{proof}
Set
\[
R:=R_n,
\qquad
r:=(1-\varepsilon/2)R.
\]
Since
\[
\nu^2 R^4 M_n = M_n^{-4\beta},
\]
it suffices to apply Theorem~\ref{thm:hardcore-zero-count}, provided the smallness
assumptions there hold.

First,
\[
R=\frac{1}{\sqrt{\nu}}M_n^{-1/4-\beta}
\le \frac{c_{\ast}(K)}{\sqrt{\nu}}
\]
for all sufficiently large \(n\). Second,
\[
\nu R^2\Delta_n
=
M_n^{-1/2-2\beta}\Delta_n.
\]
Since \(\Delta_n\le |V_n|-1\), we have \(M_n=(1+\Delta_n)|V_n|\ge \Delta_n^2\), and hence
\[
\nu R^2\Delta_n
\le
M_n^{-1/2-2\beta}\,M_n^{1/2}
=
M_n^{-2\beta}
=o(1).
\]
Thus Theorem~\ref{thm:hardcore-zero-count} applies and gives
\[
\mathbb P\!\left[
Z_{G_n}(\lambda W)\text{ has a zero in }\mathbb D(0,r)
\right]
\le
\frac{C_{\ast}(K)\,M_n^{-4\beta}}{2\log(R/r)}.
\]
Since
\[
\log(R/r)=\log\bigl((1-\varepsilon/2)^{-1}\bigr)=\Theta_\varepsilon(1),
\]
we conclude that
\[
\mathbb P\!\left[
Z_{G_n}(\lambda W)\text{ has a zero in }\mathbb D(0,r)
\right]
=
O_{\varepsilon,K}(M_n^{-4\beta}).
\]
Because \((1-\varepsilon)R_n<r\) for all sufficiently large \(n\), the same bound
holds for \(\mathbb D(0,(1-\varepsilon)R_n)\).
\end{proof}

\begin{cor}[Algorithmic consequence]
\label{cor:hardcore-universality-algorithmic}
Fix \(\beta\in(0,1/4)\), \(\varepsilon\in(0,1)\), and \(\gamma>0\).
Let \(G_n=(V_n,E_n)\) be a sequence of finite graphs with maximum degree \(\Delta_n\),
and assume
\[
\mathbb P\!\left[
Z_{G_n}(\lambda W)\text{ has a zero in }\mathbb D\!\bigl(0,(1-\varepsilon/2)R_n\bigr)
\right]
=
O_{\varepsilon,K}\!\left(M_n^{-4\beta}\right),
\]
where
\[
M_n:=(1+\Delta_n)|V_n|,
\qquad
R_n:=\frac{1}{\sqrt{\nu}}\,M_n^{-1/4-\beta}.
\]
Then with probability
\[
1-O_{\varepsilon,K}\!\left(M_n^{-4\beta}\right),
\]
the following holds simultaneously for every \(\lambda\in\C\) such that
\[
|\lambda|
\le
(1-\varepsilon)\frac{1}{\sqrt{\nu}}\,M_n^{-1/4-2\beta}.
\]
There is a deterministic algorithm which outputs an \( |V_n|^{-\gamma}\)-additive
approximation to the branch of
\[
\log Z_{G_n}(\lambda W)
\]
determined by \(\log Z_{G_n}(0)=0\), in time
\[
\min\!\Bigl\{
\,|V_n|^{\,O(m)},
\;
|V_n|\,\exp\!\bigl(O_{\Delta_n}(m)\bigr)
\Bigr\},
\qquad
m=\Bigl\lceil \frac{\gamma+1}{\beta}\Bigr\rceil.
\]
Equivalently, since \(X_W(\lambda)\) differs from \(Z_{G_n}(\lambda W)\) by a
known exponential factor, the same holds for \(\log X_W(\lambda)\).
\end{cor}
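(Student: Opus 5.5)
The plan mirrors the proof of Corollary~\ref{cor:universality-algorithmic}, with the hardcore polynomial in place of $G_W$. First I condition on the complement of the bad event in the hypothesis. There I set $r_n:=(1-\varepsilon/2)R_n$, so that $Z_{G_n}(\lambda W)$, viewed as a polynomial in $\lambda$, has no zeros in $\mathbb D(0,r_n)$. This event has probability $1-O_{\varepsilon,K}(M_n^{-4\beta})$.

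On this event I factor
\[
Z_{G_n}(\lambda W)=\prod_{j=1}^{N}\bigl(1-\lambda/\rho_j\bigr),
\]
where $N\le |V_n|$ is the maximum independent set size and $|\rho_j|\ge r_n$. Hence the holomorphic branch $f(\lambda)=\log Z_{G_n}(\lambda W)=\sum_{k\ge1}a_k\lambda^k$ with $f(0)=0$ exists on $\mathbb D(0,r_n)$, and $|a_k|\le |V_n|/(k r_n^k)$.

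For $|\lambda|\le L_n:=(1-\varepsilon)\nu^{-1/2}M_n^{-1/4-2\beta}$, I get
\[
|\lambda|/r_n\le \rho_\varepsilon M_n^{-\beta}, \qquad \rho_\varepsilon=\tfrac{1-\varepsilon}{1-\varepsilon/2}.
\]
The truncation error of the degree-$m$ Taylor polynomial $T_m$ is then at most
\[
\frac{|V_n|}{m+1}\,\frac{(\rho_\varepsilon M_n^{-\beta})^{m+1}}{1-\rho_\varepsilon M_n^{-\beta}} .
\]
Since $M_n\ge |V_n|$ and $\beta(m+1)\ge\gamma+1+\beta$ for $m=\lceil(\gamma+1)/\beta\rceil$, this is at most $C|V_n|^{1-\beta(m+1)}\le |V_n|^{-\gamma}$ for large $n$. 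One caveat: if $|V_n|$ stays bounded the claim is trivial, so I may assume $|V_n|\to\infty$, which is implicit in "sufficiently large $n$". The bound holds uniformly in $\lambda$ because it only uses the zero-free event.

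Next I compute $T_m$. Writing $Z_{G_n}(\lambda W)=\sum_k g_k\lambda^k$, the coefficient $g_k$ is the sum of $\prod_{v\in I}W_v$ over independent sets $I$ of size $k$. Enumerating all $k$-subsets for $k\le m$ takes time $|V_n|^{O(m)}$. The log-coefficients then follow from the Newton-type recursion
\[
a_k=g_k-\frac1k\sum_{j=1}^{k-1} j\,a_j\,g_{k-j},
\]
as in Corollary~\ref{cor:universality-algorithmic}.

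For the alternative bound $|V_n|\exp(O_{\Delta_n}(m))$ I would use the cluster expansion instead. The coefficient $a_k$ is the sum over connected clusters $\mathbf m$ with $|\mathbf m|=k$ of $\phi(H[\mathbf m])\prod_v W_v^{m_v}/\mathbf m!$, and its support is a connected set of at most $k$ vertices. There are at most $|V_n|(e\Delta_n)^{k-1}$ such supports. For each support, the multiplicity vectors and Ursell functions $\phi(H[\mathbf m])$ on $k$ vertices can be computed in time $\exp(O(k\log k))$, which for fixed $m$ is $\exp(O_{\Delta_n}(m))$ up to a constant. This is the standard Patel--Regts style enumeration of connected subgraphs in bounded-degree graphs.

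The main obstacle is making this second runtime rigorous, in particular organizing the computation of $\phi$ over multisets of small connected supports so the cost is $|V_n|\cdot \exp(O_{\Delta_n}(m))$. Since $m$ is a constant here, a crude bound of $(e\Delta_n)^{m}\cdot m^{O(m)}$ per vertex suffices. I would cite Patel--Regts (\cite{patelregts}) for the enumeration and not re-derive it. Everything else is the same deterministic argument as before. Finally, $X_W$ differs from $Z_{G_n}$ by the explicit factor $\exp(-\lambda\sum_v W_v)$, so the statement for $\log X_W$ follows immediately.
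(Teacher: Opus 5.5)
Your proposal is correct and follows the paper's proof essentially step for step. You restrict to the zero-free event on $\mathbb D(0,(1-\varepsilon/2)R_n)$, bound the Taylor tail by $\frac{|V_n|}{m+1}\frac{(\rho_\varepsilon M_n^{-\beta})^{m+1}}{1-\rho_\varepsilon M_n^{-\beta}}$ using $M_n\ge|V_n|$ and $m=\lceil(\gamma+1)/\beta\rceil$, and compute the coefficients by subset enumeration plus the Newton recursion. The bounded-degree runtime is simply cited in the paper (Barvinok, Patel--Regts), so your cluster-expansion sketch is an optional elaboration; the $m^{O(m)}$ factor you worried about disappears anyway, because each Ursell function on $k$ vertices can be computed in $O(3^k\,\mathrm{poly}(k))$ time. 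This uses the standard subset recursion that conditions on the component containing a fixed vertex.
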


\begin{proof}
On the event that \(Z_{G_n}(\lambda W)\) is zero-free on
\(\mathbb D(0,(1-\varepsilon/2)R_n)\), there is a holomorphic branch
\[
f(\lambda):=\log Z_{G_n}(\lambda W)=\sum_{k\ge 1} a_k \lambda^k
\]
on that disk, normalized by \(f(0)=0\).

Now fix \(\lambda\) with
\[
|\lambda|
\le
L_n:=(1-\varepsilon)\frac{1}{\sqrt{\nu}}\,M_n^{-1/4-2\beta}.
\]
Then
\[
\frac{|\lambda|}{(1-\varepsilon/2)R_n}
=
\frac{1-\varepsilon}{1-\varepsilon/2}\,M_n^{-\beta}
=
\rho_\varepsilon M_n^{-\beta},
\qquad
\rho_\varepsilon:=\frac{1-\varepsilon}{1-\varepsilon/2}<1.
\]
Let
\[
T_m(\lambda):=\sum_{k=1}^m a_k \lambda^k
\]
be the degree-\(m\) truncation of the Taylor series of \(f\).
By the standard Barvinok interpolation estimate,
\[
|f(\lambda)-T_m(\lambda)|
\le
\frac{|V_n|}{m+1}\,
\frac{(\rho_\varepsilon M_n^{-\beta})^{m+1}}
{1-\rho_\varepsilon M_n^{-\beta}}.
\]
Choose
\[
m=\left\lceil \frac{\gamma+1}{\beta}\right\rceil.
\]
Since \(M_n\ge |V_n|\), for all sufficiently large \(n\),
\[
\frac{|V_n|}{m+1}\,
\frac{(\rho_\varepsilon M_n^{-\beta})^{m+1}}
{1-\rho_\varepsilon M_n^{-\beta}}
\le
C_{\varepsilon,\beta,\gamma}\,
|V_n|\,M_n^{-\beta(m+1)}
\le
|V_n|^{-\gamma}.
\]
Hence
\[
|\log Z_{G_n}(\lambda W)-T_m(\lambda)|\le |V_n|^{-\gamma}.
\]

It remains to compute \(T_m\), equivalently the first \(m\) Taylor coefficients
of \(\log Z_{G_n}(\lambda W)\).

For the general graph \(G_n\), write
\[
Z_{G_n}(\lambda W)=\sum_{k=0}^{|V_n|} b_k \lambda^k,
\qquad
b_k=\sum_{\substack{I\in\mathcal I(G_n)\\ |I|=k}}\prod_{v\in I}W_v.
\]
Thus \(b_0,\dots,b_m\) can be computed exactly by enumerating all subsets of size
at most \(m\) and checking independence, which takes time \(|V_n|^{O(m)}\).
The coefficients \(a_1,\dots,a_m\) are then recovered recursively from
\[
k\,b_k=\sum_{j=1}^k j\,a_j\,b_{k-j},
\qquad k\ge 1,
\]
with \(b_0=1\).

On the other hand, for bounded-degree graphs one may instead use the standard
bounded-degree coefficient algorithm for independence polynomials, which computes
the first \(m\) coefficients in time
\[
|V_n|\,\exp\!\bigl(O_{\Delta_n}(m)\bigr);
\]
see \cite{barvinok2016computing,patelregts}.

Therefore the first \(m\) coefficients, and hence \(T_m(\lambda)\), can be
computed deterministically in time
\[
\min\!\Bigl\{
\,|V_n|^{\,O(m)},
\;
|V_n|\,\exp\!\bigl(O_{\Delta_n}(m)\bigr)
\Bigr\}.
\]
This gives the claimed approximation algorithm.
\end{proof}

\section{Stability, hardness, and anticoncentration}\label{sec:permanent-stability}
\subsection{Optimized permanent hardness reduction from \cite{eldar2018approximating}}
In Theorem 22 of \cite{eldar2018approximating}, the authors formally showed that being able to approximate the permanent of a random matrix with mean of order $o(1/n)$ implies a good approximation for the permanent of a random matrix with zero mean. By slightly optimizing their analysis, we can see that the same result holds if the mean is of order $o(1/\sqrt{n})$. For completeness, we formally state and prove the optimized version. 
\begin{thm}\label{thm:stable-bound}
Let $\Delta = \per(W + J\mu) - \per(W)$, where $W$ is a matrix with i.i.d. entries of mean zero and variance one, and suppose that
\[ |\mu| = \frac{c}{\sqrt{n}} \]
for some $c < 1$. Then
\[ \mathbb E |\Delta|^2 \le (n!) \frac{c^2}{1 - c^2} \]
so by Markov's inequality, for any $t > 0$
\[ \Pr\left(|\Delta|^2 > (n!) \frac{c^2 t}{1 - c^2}\right) \le 1/t. \]
\end{thm}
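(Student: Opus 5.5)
Expand $\per(W+\mu J)$ multilinearly in the entries, as in the subpermanent expansion from Section~\ref{sec:overview}. For each permutation $\sigma$, expand $\prod_i (W_{i\sigma(i)}+\mu)$ by choosing the set $A$ of rows where we take $W$. Grouping by the row set $A$ and column set $B=\sigma(A)$ gives
\[
\per(W+\mu J)=\sum_{k=0}^n \mu^{n-k}(n-k)!\sum_{|A|=|B|=k}\per(W_{A,B}),
\]
since each $k\times k$ submatrix permanent term arises from $(n-k)!$ completions of $\sigma$ off $A$. The $k=n$ term is $\per(W)$, so
\[
\Delta=\sum_{k=0}^{n-1}\mu^{n-k}(n-k)!\sum_{|A|=|B|=k}\per(W_{A,B}).
\]

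Next compute $\mathbb E|\Delta|^2$ using orthogonality. Each $\per(W_{A,B})$ is a sum of monomials $\prod_{i\in A}W_{i\tau(i)}$, one per bijection $\tau:A\to B$. Distinct monomials (distinct edge sets) are orthogonal in $L^2$, because the entries are independent, mean zero, and have $\mathbb E|W_{ij}|^2=1$. Here we use only mean zero and unit variance, not Gaussianity, and complex entries are handled by taking conjugates. Each monomial has second moment $1$. Different triples $(A,B,\tau)$ give distinct edge sets, so all cross terms vanish. The number of monomials at level $k$ is $\binom nk^2 k!$. Hence
\[
\mathbb E|\Delta|^2=\sum_{k=0}^{n-1}|\mu|^{2(n-k)}((n-k)!)^2\binom nk^2k!
= n!\sum_{j=1}^{n}|\mu|^{2j}\,\frac{n!\,}{(n-j)!}\cdot\frac{n!}{(n-j)!\,n!}\cdot\frac{(j!)^2}{(j!)^2}\cdots,
\]
where $j=n-k$. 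Simplifying with $((n-k)!)^2\binom nk^2 k!=n!\cdot \frac{n!}{k!}=n!\cdot n(n-1)\cdots(k+1)$, the level-$j$ term is $n!\,|\mu|^{2j}\,n^{\underline{j}}$, with $n^{\underline j}\le n^j$.

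Now bound the geometric series. With $|\mu|^2=c^2/n$,
\[
\mathbb E|\Delta|^2\le n!\sum_{j\ge1}(c^2/n)^j n^j=n!\sum_{j\ge1}c^{2j}=n!\,\frac{c^2}{1-c^2}.
\]
The Markov statement follows by applying Markov's inequality to $|\Delta|^2$.

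The only step needing care is the combinatorial bookkeeping in the first step: checking that the multiplicity of each $(A,B,\tau)$ term is exactly $(n-k)!$, and that the level-$k$ coefficient simplifies to $n!\,n^{\underline{n-k}}$. The orthogonality and the geometric sum are routine.
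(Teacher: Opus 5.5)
Your proposal is correct and follows the paper's route. The paper cites the exact identity $\mathbb E|\Delta|^2=(n!)^2\sum_{k=1}^n |\mu|^{2k}/(n-k)!$ from Lemma 23 of Eldar--Mehraban, which your orthogonality computation derives directly, and then bounds the falling factorial by $n^k$ and sums the same geometric series. The garbled intermediate expression in your display should just be replaced by the identity $((n-k)!)^2\binom nk^2 k!=(n!)^2/k!$, which you state correctly right afterwards.
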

\begin{proof}
By direct computation (see Lemma 23 of \cite{eldar2018approximating}), one obtains
\[ \mathbb E |\Delta|^2 = (n!)^2 \sum_{k = 1}^{n} \frac{|\mu|^{2k}}{(n - k)!}, \]
and then we conclude
\[ \mathbb E |\Delta|^2 \le (n!)^2 \sum_{k = 1}^n \frac{|\mu|^{2k}}{(n - k)!} = (n!) \sum_{k = 1}^n |\mu|^{2k} n(n - 1) \cdots (n - k + 1) \le (n!) \sum_{k = 1}^n (|\mu|^2 n)^k \le (n!) \frac{c^2}{1 - c^2} \]
by summing the geometric series.
\end{proof}
Since $\mathbb E |\per(W)|^2 = n!$, if we can take $c = o(1)$ in this theorem we directly get an additive approximation to the permanent of $W$ which is $o(1)$ standard deviations away from the truth. See \cite{aaronson2011computational,eldar2018approximating} for discussion of further reductions under conjectured anticoncentration of the permanent.
\subsection{Anticoncentration conjecture and zeros}
This section serves as a kind of ``warm-up'' to the next one, where we prove a related unconditional result.

Here, we show that under the standard anti-concentration conjecture for the permanent \cite{aaronson2011computational}, only $O_{\epsilon}(\log n)$ many of the zeros of the permanent $\per(W + zJ)$ are within a ball of radius $(1 - \epsilon)\sqrt{1/n}$ around the origin. Combined with our previous (unconditional) analysis, this tells us that under the PACC the bulk of the zeros, i.e. $n - o(n)$ many of them, are at scale $\sqrt{1/n}$. Because the PACC is specific to the case where $W$ is an $n \times n$ matrix with $N(0,1)_{\mathbb C}$ entries, in this section we focus on the complex Gaussian case. 

\begin{conj}[Permanent Anti-Concentration Conjecture (PACC) \cite{aaronson2011computational}]
There exists a polynomial $p$ such that for all $n$ and $\delta > 0$,
\[ \Pr(|\per(W)| < \frac{\sqrt{n!}}{p(n,1/\delta)}) < \delta. \]
where $W$ is an $n \times n$ matrix with i.i.d. $N(0,1)_{\mathbb C}$ entries.
\end{conj}
It follows from the work of Tao and Vu \cite{tao2009permanent,Tao2010MO} that with probability at least $1 - o(1)$,
\[ \log |\per(W)| = (n/2)\log(n) \pm o(n \log(n)). \]
The difficulty in analyzing the scale of $\log |\per(W)|$ is the lower-tail bound rather than the upper-tail. For the upper tail, we have $\mathbb E |\per(W)|^2 = n!$, so by  Markov's inequality $\Pr(|per(W)|^2 \ge n!/\delta) \le \delta$). So, under the PACC we have the much more precise asymptotic
\begin{equation}\label{eqn:PACC-asymptotic}
\log |\per(W)| = (1/2)\log(n!) \pm O(\log(n/\delta)) = (1/2)(n\log n - n) \pm O(\log(n/\delta)) 
\end{equation}
with probability at least $1 - \delta$. Here the second equality is Stirling's approximation. If we assume this asymptotic holds, we can prove a strong anticoncentration result for the zeros of the permanent:
\begin{thm}
Suppose that the PACC conjecture holds. 
Let $n(r)$ be the number of zeros of the function $\per(W + Jz/\sqrt{n})$ with $|z| < r$. Then for any $1 > R > r$, it holds with probability $1 - \delta$ that
\[ n(r) \le  \frac{-\log(1 - R^2)}{2 \log(R/r)} + O(\log(n/\delta)/\log(R/r)). \]
In particular, by taking $R = (1 + r)/2$, this yields that $n(r) = O_{r}(\log(n/\delta))$ for any $r < 1$. 
\end{thm}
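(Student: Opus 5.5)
Let $f(z) := \per(W + Jz/\sqrt{n})$. The plan is to apply Jensen's formula directly to $f$ centered at the origin, where $f(0) = \per(W)$. For any $R > r$ this gives the deterministic bound
\[ n(r)\log(R/r) \le \frac{1}{2\pi}\int_0^{2\pi} \log|f(Re^{i\theta})|\,d\theta - \log|\per(W)|. \]
The term $\log|\per(W)|$ will be handled by the PACC, in the form \eqref{eqn:PACC-asymptotic}. With probability at least $1-\delta/2$ this gives $\log|\per(W)| \ge \frac12\log(n!) - O(\log(n/\delta))$.

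The boundary integral I would control through its second moment. The key exact computation is the one from Theorem~\ref{thm:stable-bound} (Lemma 23 of \cite{eldar2018approximating}), with $\mu = z/\sqrt n$ and the $k=0$ term restored:
\[ \mathbb E|\per(W+J\mu)|^2 = (n!)^2\sum_{k=0}^n \frac{|\mu|^{2k}}{(n-k)!} = n!\sum_{k=0}^n |\mu|^{2k}\, n(n-1)\cdots(n-k+1) \le n!\sum_{k\ge0} R^{2k} = \frac{n!}{1-R^2}. \]
Here I use $|\mu|^2 n = R^2 < 1$. This is essentially the same expansion as in that proof: for complex Gaussian $W$, the cross terms between different submatrix sizes vanish by circular symmetry.

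The two bounds are then combined as follows. By Jensen's inequality,
\[ \mathbb E\,\frac{1}{2\pi}\int \log|f(Re^{i\theta})|^2 d\theta \le \frac{1}{2\pi}\int\log\mathbb E|f(Re^{i\theta})|^2 d\theta \le \log(n!) - \log(1-R^2). \]
To get a high-probability statement rather than one in expectation, I would instead apply Markov's inequality to the nonnegative random variable $Y := \frac{1}{2\pi}\int |f(Re^{i\theta})|^2 d\theta$. It satisfies $\mathbb E Y \le n!/(1-R^2)$, so with probability at least $1-\delta/2$ we have $Y \le 2n!/(\delta(1-R^2))$. Concavity of $\log$ then gives
\[ \frac{1}{2\pi}\int \log|f|\,d\theta \le \tfrac12\log Y \le \tfrac12\log(n!) - \tfrac12\log(1-R^2) + O(\log(1/\delta)). \]
Subtracting the PACC lower bound, the $\tfrac12\log(n!)$ terms cancel. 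On the intersection of the two events, which has probability at least $1-\delta$, this leaves
\[ n(r)\log(R/r) \le -\tfrac12\log(1-R^2) + O(\log(n/\delta)), \]
which is the claim. The "in particular" statement follows by plugging in $R = (1+r)/2$: then both $\log(R/r)$ and $-\log(1-R^2)$ are constants depending only on $r$.

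The only delicate point I expect is the exact second-moment formula. It requires checking that the expansion of $\per(W+J\mu)$ in terms of subpermanents of $W$ has orthogonal terms, so that the cited identity holds with the $k=0$ term included. Beyond that, the argument is a bookkeeping of the two probability events.
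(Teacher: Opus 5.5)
Your proposal is correct and follows essentially the paper's route: Jensen's formula at the origin, a Markov bound on the boundary average controlled by the second-moment computation behind Theorem~\ref{thm:stable-bound}, and the PACC lower bound on $\log|\per(W)|$, each on an event of probability $1-\delta/2$. The differences are cosmetic. You apply Markov to the angular average of $|f|^2$ directly, whereas the paper applies it to the average of $|f|$ and then uses Cauchy--Schwarz, getting $\log(2/\delta)$ instead of your $\tfrac12\log(2/\delta)$. You also remove the cross term $\mathbb E[\per(W)\overline{\Delta}]$ pointwise by orthogonality of the subpermanent expansion, while the paper only removes it after averaging, via $\frac{1}{2\pi}\int_0^{2\pi}\Delta_\theta\,d\theta=0$.
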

\begin{proof}
By Jensen's formula,  we have for any $R > r$ that
\[ n(r) \log(R/r) \le \frac{1}{2\pi} \int_{0}^{2\pi} \log |\per(W + JRe^{i\theta}/\sqrt{n})| d\theta - \log |\per(W)|. \]
By Jensen's inequality,
\begin{equation} n(r) \log(R/r) \le \log \frac{1}{2\pi} \int_{0}^{2\pi} |\per(W + JRe^{i\theta}/\sqrt{n})| d\theta - \log |\per(W)|
\end{equation}
By Markov's inequality,
\[ \Pr\left( \frac{1}{2\pi} \int_{0}^{2\pi} |\per(W + JRe^{i\theta}/\sqrt{n})| d\theta  > \frac{t}{2\pi} \int_{0}^{2\pi} \mathbb{E} |\per(W + JRe^{i\theta}/\sqrt{n})| d\theta\right)\le  1/t \]
where the randomness is over the matrix $W$. Letting $\Delta_{\theta} =  \per(W + JRe^{i\theta}/\sqrt{n}) - \per(W)$, we can compute that under the requirement $1 > R > r$,
\begin{align}
\frac{1}{2\pi} \int_{0}^{2\pi} \mathbb{E} |\per(W) + \Delta_{\theta}| d\theta  
&\le \sqrt{\frac{1}{2\pi} \int_{0}^{2\pi} \mathbb{E} |\per(W) + \Delta_{\theta}|^2 d\theta} \\
&= \sqrt{\frac{1}{2\pi} \int_{0}^{2\pi} [\mathbb{E} |\per(W)|^2 + \mathbb E |\Delta_{\theta}|^2] d\theta} \\
&\le \sqrt{n![1 + R^2/(1 - R^2)]} = \sqrt{n!/(1 - R^2)} \label{eqn:per-delta-estimate}
\end{align}
where we used that $\frac{1}{2\pi} \int \Delta_{\theta} = 0$ to show the crossterm is zero, and then we used that $\mathbb{E} |\per(W)|^2 = n!$ and Theorem~\ref{thm:stable-bound}. So setting $t = 2/\delta$, we have with probability at least $1 - \delta/2$ that
\[ \frac{1}{2\pi} \int_{0}^{2\pi} |\per(W + JRe^{i\theta}/\sqrt{n})| d\theta \le \frac{2}{\delta} \sqrt{n!/(1 - R^2)}. \]
Under this event, we therefore have
\[  n(r) \log(R/r) \le \log(2/\delta) + \frac{\log(n!) - \log(1 - R^2)}{2} - \log |\per(W)|. \]
Now assuming the PACC, we can use the asymptotic \eqref{eqn:PACC-asymptotic} to get
\[ n(r) \log(R/r) \le \frac{-\log(1 - R^2)}{2} + O(\log(n/\delta)) \]
as desired. 
\end{proof}
Recall that by the fundamental theorem of algebra, the polynomial $\per(W + Jz/\sqrt{n})$ must have $n$ zeros (counted with multiplicity). 
\begin{cor}
Under the PACC, with probability at least $1 - \delta$ the bulk of the zeros (i.e., $1 - o(1)$ portion of the $n$ total zeros) of  $\per(W + Jz/\sqrt{n})$ lie in the annulus
\[ \{ z : |z| \in (1 - o(1), O_{\delta}(1)) \} \]
\end{cor}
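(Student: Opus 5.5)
The corollary combines two bounds. The inner radius comes from the preceding theorem under the PACC. The outer radius comes from an upper bound on the number of zeros far from the origin, proved in the original (inverted) coordinate system using the unconditional zero-free analysis. We then take a union bound over the two events and let the parameters shrink slowly.

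\textbf{Inner radius.} Fix $r<1$ and apply the preceding theorem with $R=(1+r)/2$ and failure probability $\delta/2$. This gives $n(r)=O_r(\log(n/\delta))=o(n)$. Since $r<1$ is arbitrary, a diagonal argument lets us take $r=r_n\uparrow 1$ slowly, say $1-r_n\to 0$ with $\log(n/\delta)/(1-r_n)=o(n)$. The explicit bound $\frac{-\log(1-R^2)}{2\log(R/r)}+O(\log(n/\delta)/\log(R/r))$ stays $o(n)$ as long as $1-r_n \gg \log n/n$. So all but $o(n)$ zeros satisfy $|z|\ge 1-o(1)$.

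\textbf{Outer radius.} We count zeros of $\per(W+Jz/\sqrt n)$ with $|z|>\Lambda$. Setting $w=\sqrt n/z$, these correspond to zeros of $G_W(w)=\frac{1}{n!}\per(J+wW)$ with $|w|<\sqrt n/\Lambda$. Here we only need $o(n)$ zeros in a disk of radius $c\sqrt n$, not zero-freeness. Apply Lemma~\ref{lem:core-jensen-app} to $X^{(1)}$ from Section~\ref{sec:easiest}, using the exact formula of Lemma~\ref{lem:exact_moment_clean} with radius $R'=\sqrt{n}/\Lambda'$ and $\alpha=1/\Lambda'^2$ small. Then
\[
\log \E|X^{(1)}|^2\le |z|^2+2n\log(1-\alpha)+\frac{|z|^2}{(1-\alpha)^2}=n\Bigl(\alpha+2\log(1-\alpha)+\frac{\alpha}{(1-\alpha)^2}\Bigr)=O(n\alpha^2).
\]
This is the same cancellation computed in Theorem~\ref{thm:first_order_complex_clean}, now without the $o(1)$ assumption.

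Jensen with radii $r'=R'/2$ and $R'$ then gives
\[
\E N_{G_W}(\sqrt n/(2\Lambda'))=O(n/\Lambda'^4).
\]
By Markov, with probability at least $1-\delta/2$ at most $O(n/(\delta\Lambda'^4))$ zeros lie in that disk. Choosing $\Lambda=2\Lambda'$ with $\Lambda'=\Lambda'(\delta,\epsilon)$ large makes this at most $\epsilon n$. Here $\epsilon$ can be taken to $0$ slowly, which gives the $O_\delta(1)$ outer radius with a $1-o(1)$ fraction of zeros inside it, in the sense the corollary intends. As an alternative route, the paper's earlier bullet about $r=o(\sqrt n)$ (via \cite{hughes2008zeros}) gives the same conclusion.

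\textbf{Combining.} A union bound over the two events gives probability at least $1-\delta$, and then $n-o(n)$ zeros lie in the annulus $\{1-o(1)<|z|<O_\delta(1)\}$.

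The main obstacle is the bookkeeping of quantifiers. The outer bound only gives an $\epsilon n$ exceptional count for a $\delta,\epsilon$-dependent radius. So we must interpret ``$1-o(1)$ portion'' with $\epsilon\to 0$ slowly and the outer radius growing correspondingly, or else accept $O_{\delta,\epsilon}(1)$. We must also make sure the first-order second-moment formula is valid for $\alpha$ bounded rather than vanishing. It is, since the formula in Lemma~\ref{lem:exact_moment_clean} is exact for $\alpha<1$, and we use $\sum_{k\le n}y^k/k!\le e^y$.
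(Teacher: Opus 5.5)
Your proposal is correct and follows the paper's proof: the inner radius comes from the PACC zero-count theorem, and the outer radius comes from Markov's inequality applied to a Jensen-based expected zero count in the inverted coordinates $w=\sqrt n/z$. Deriving that count from the exact first-order second moment, which gives $O(n/\Lambda^4)$, rather than from the $4r^2$ bound of \cite{eldar2018approximating}, is exactly the ``follows from our results'' option the paper mentions. Your observation that the outer radius must depend on $\epsilon$, or grow slowly if the exceptional set is to be $o(n)$, is an imprecision already present in the corollary's statement, not a gap in your argument.
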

\begin{proof}
Combine the previous theorem with Markov's inequality applied to the basic estimate discussed in Section~\ref{sec:overview} (which follows from our results, but for which the results of \cite{eldar2018approximating} are also sufficient).
\end{proof}
\subsection{Unconditional anticoncentration for zeros}

We can prove an unconditional variant of the anticoncentration of zeros based on the following inductive argument.

\begin{lem}\label{lem:gaussian-per-log-lower}
Let \(W_n\) be an \(n\times n\) matrix with i.i.d.\ complex Gaussian entries \(W_{ij}\sim \mathcal{CN}(0,1)\), and define
\[
Z_n:=\per(W_n).
\]
Then for every \(n\ge 1\),
\[
\mathbb E \log |Z_n|^2 \ge \log(n!) - \gamma n,
\]
where \(\gamma \approx 0.577\) denotes Euler's constant.
\end{lem}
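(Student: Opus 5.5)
We argue by induction on $n$, using the Laplace (cofactor) expansion along the first row and the rotational invariance of complex Gaussians. Write
\[
Z_n=\sum_{j=1}^n W_{1j}\,\per(M_{1j}),
\]
where $M_{1j}$ is the $(n-1)\times(n-1)$ minor obtained by deleting row $1$ and column $j$. Condition on rows $2,\dots,n$. The minors $\per(M_{1j})$ are then fixed, and $W_{11},\dots,W_{1n}$ are i.i.d.\ $\CN(0,1)$ independent of them. Hence, conditionally, $Z_n\sim \CN(0,V)$ with
\[
V:=\sum_{j=1}^n |\per(M_{1j})|^2 .
\]
A $\CN(0,V)$ variable has the law of $\sqrt V\,\xi$ with $\xi\sim\CN(0,1)$, and $|\xi|^2\sim\mathrm{Exp}(1)$ satisfies $\E\log|\xi|^2=-\gamma$. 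Therefore
\[
\E\log|Z_n|^2=\E\log V-\gamma .
\]

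It remains to lower bound $\E\log V$. Apply concavity of $\log$ to the average over $j$:
\[
\log V=\log n+\log\Bigl(\frac1n\sum_j|\per(M_{1j})|^2\Bigr)\ge \log n+\frac1n\sum_j\log|\per(M_{1j})|^2 .
\]
Each $M_{1j}$ has i.i.d.\ $\CN(0,1)$ entries of size $n-1$, so $\E\log|\per(M_{1j})|^2=\E\log|Z_{n-1}|^2$. This gives the recursion
\[
\E\log|Z_n|^2\ge \log n+\E\log|Z_{n-1}|^2-\gamma .
\]
The base case is $n=1$, where $\E\log|W_{11}|^2=-\gamma=\log(1!)-\gamma$. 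Iterating the recursion yields $\E\log|Z_n|^2\ge\log(n!)-\gamma n$.

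The main point requiring care is integrability, so that the expectations are finite and the concavity step is legitimate. Jensen's inequality here is applied pointwise to a finite average of $\log|\per(M_{1j})|^2$, which is fine as long as each term is in $L^1$. Lower integrability of $\log|\per|^2$ follows from the same induction: conditionally on the other rows, $\E[\log^-]$ of a $\CN(0,V)$ modulus squared is at most $\log^- V+C$. Upper integrability follows from $\E|Z_n|^2=n!<\infty$. We also need $V>0$ almost surely, which holds because the permanent is a nonzero polynomial in the entries, so it vanishes only on a null set.
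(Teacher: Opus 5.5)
Your proposal is correct and is essentially the paper's proof: a cofactor expansion along one row (the paper uses the last row, you use the first), conditional complex Gaussianity giving $\mathbb E\log|Z_n|^2=\mathbb E\log V-\gamma$, concavity of $\log$ applied to the average of the $n|\per(M_{1j})|^2$, and induction from the base case $\mathbb E\log|W_{11}|^2=-\gamma$. Your integrability remarks go beyond the paper, which leaves integrability implicit: the upper bound comes from $\mathbb E|Z_n|^2=n!$, and the lower bound from $\log^- V\le \log^-|\per(M_{11})|^2$ together with the induction.
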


\begin{proof}
We argue by recursion. Expanding the permanent along the last row,
\[
Z_n=\sum_{j=1}^n W_{nj} C_j,
\]
where \(C_j\) is the permanent of the \((n-1)\times(n-1)\) minor obtained by deleting row \(n\) and column \(j\).

Conditional on the first \(n-1\) rows, the cofactors \(C_1,\dots,C_n\) are deterministic complex numbers, while the last-row entries \(W_{n1},\dots,W_{nn}\) remain independent \(\mathcal{CN}(0,1)\) random variables. Since
\[
Z_n=\sum_{j=1}^n W_{nj}C_j,
\]
it follows that \(Z_n\mid \mathcal F_{n-1}\) is a complex Gaussian random variable. Indeed, a linear combination of independent centered complex Gaussians is again centered complex Gaussian, and
\[
\mathbb E\!\left[ Z_n \mid \mathcal F_{n-1}\right]=0,
\]
while
\[
\mathbb E\!\left[ |Z_n|^2 \mid \mathcal F_{n-1}\right]
=
\sum_{j,k=1}^n C_j\overline{C_k}\,
\mathbb E\!\left[W_{nj}\overline{W_{nk}}\right]
=
\sum_{j=1}^n |C_j|^2
=:S_n,
\]
because \(\mathbb E[W_{nj}\overline{W_{nk}}]=\delta_{jk}\). Thus
\[
Z_n \,\big|\, \mathcal F_{n-1} \sim \mathcal{CN}(0,S_n).
\]

Now if \(G\sim \mathcal{CN}(0,1)\), then \(Z_n\mid \mathcal F_{n-1}\) has the same law as \(\sqrt{S_n}\,G\). Therefore
\[
\log |Z_n|^2 = \log S_n + \log |G|^2
\qquad\text{in distribution conditional on }\mathcal F_{n-1}.
\]
Taking conditional expectation gives
\[
\mathbb E\!\left[\log |Z_n|^2 \mid \mathcal F_{n-1}\right]
=
\log S_n + \mathbb E\log |G|^2.
\]
Finally, since \(G\sim \mathcal{CN}(0,1)\), the random variable \(|G|^2\) is exponential with mean \(1\), so
\[
\mathbb E\log |G|^2
=
\int_0^\infty (\log x)e^{-x}\,dx
=
-\gamma.
\]
Hence
\[
\mathbb E\!\left[\log |Z_n|^2 \mid \mathcal F_{n-1}\right]
=
\log S_n-\gamma.
\]
Taking expectation gives
\[
\mathbb E \log |Z_n|^2 = -\gamma + \mathbb E \log S_n.
\]
Now by concavity of \(\log\),
\[
\log S_n
=
\log\!\left(\sum_{j=1}^n |C_j|^2\right)
\ge
\frac1n\sum_{j=1}^n \log\bigl(n|C_j|^2\bigr)
=
\log n + \frac1n\sum_{j=1}^n \log |C_j|^2.
\]
Taking expectation and using that each \(C_j\) has the same distribution as \(\per(W_{n-1})\), we obtain
\[
\mathbb E \log |Z_n|^2
\ge
-\gamma + \log n + \mathbb E \log |Z_{n-1}|^2.
\]
Iterating from \(1\) to \(n\), and using \(Z_1=W_{11}\) with
\[
\mathbb E\log |Z_1|^2 = \mathbb E\log |W_{11}|^2 = -\gamma,
\]
yields
\[
\mathbb E \log |Z_n|^2
\ge
\sum_{k=1}^n (\log k - \gamma)
=
\log(n!) - \gamma n.
\]
\end{proof}

\begin{thm}\label{thm:few-small-zeros-clean}
Let
\[
f_n(z):=\per\!\left(W+\frac{z}{\sqrt n}J\right),
\]
where \(W\) is an \(n\times n\) matrix with i.i.d.\ complex Gaussian entries \(W_{ij}\sim \mathcal{CN}(0,1)\), and \(J\) is the all-ones matrix. Let \(n(r)\) denote the number of zeros of \(f_n\) in the disk \(\{z:|z|<r\}\), counted with multiplicity.

Then for every \(\varepsilon,\eta\in(0,1)\), there exists \(r_0=r_0(\varepsilon,\eta)>0\) such that for every \(r\in(0,r_0)\), there exists \(N=N(\varepsilon,\eta,r)\) with the following property: for all \(n\ge N\),
\[
\Pr\!\bigl(n(r)\le \varepsilon n\bigr)\ge 1-\eta.
\]
\end{thm}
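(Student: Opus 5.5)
The plan is to repeat the Jensen-formula argument from the PACC-conditional theorem, but in expectation rather than with high probability. We replace the conditional lower bound on $\log|\per(W)|$ by Lemma~\ref{lem:gaussian-per-log-lower}, and then finish with Markov's inequality. Fix $R\in(r,1)$; concretely we take $R=1/2$ and require $r_0\le 1/4$. Jensen's formula \eqref{eq:jensen-variant} applied to $f_n$ gives
\[
n(r)\log(R/r)\le \frac{1}{2\pi}\int_0^{2\pi}\log|f_n(Re^{i\theta})|\,d\theta-\log|\per(W)|,
\]
valid almost surely, since $\per(W)\neq 0$ a.s.

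We take expectations of both sides. For the boundary term, Jensen's inequality (concavity of $\log$) gives
\[
\E\,\frac{1}{4\pi}\int\log|f_n|^2\,d\theta\le \frac12\log\Bigl(\frac{1}{2\pi}\int\E|f_n(Re^{i\theta})|^2\,d\theta\Bigr).
\]
Computing as in \eqref{eqn:per-delta-estimate}, with $\E|\per(W)+\Delta_\theta|^2=n!+\E|\Delta_\theta|^2$ (the cross term vanishes because $\E[\per(W)\overline{\Delta_\theta}]=0$ for circular Gaussians, by degree counting), Theorem~\ref{thm:stable-bound} with $c=R$ bounds this by $\frac12\log(n!)-\frac12\log(1-R^2)$. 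For the other term, Lemma~\ref{lem:gaussian-per-log-lower} gives $-\E\log|\per(W)|\le -\frac12\log(n!)+\frac{\gamma}{2}n$. The $\log n!$ terms cancel exactly, leaving
\[
\E\, n(r)\le \frac{\gamma n/2-\tfrac12\log(1-R^2)}{\log(R/r)}.
\]

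Markov's inequality then gives $\Pr(n(r)>\varepsilon n)\le \frac{\gamma/2+O(1/n)}{\varepsilon\log(R/r)}$. Choosing $r_0=\tfrac12\exp(-\gamma/(\varepsilon\eta))$ (and $r_0\le 1/4$), any $r<r_0$ gives $\log(R/r)>\gamma/(\varepsilon\eta)$, so the leading term $\frac{\gamma}{2\varepsilon\log(R/r)}$ is less than $\eta/2$. Taking $N$ large enough that the $O(1/n)$ correction, namely $\frac{-\log(1-R^2)}{2\varepsilon n\log(R/r)}$, is below $\eta/2$ finishes the proof.

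The main subtlety is that the lower bound from Lemma~\ref{lem:gaussian-per-log-lower} loses $\gamma n$, which is linear in $n$. This is why we can only conclude $n(r)\le\varepsilon n$ rather than the $O(\log n)$ bound available under PACC, and it is why the shrinking radius $r_0(\varepsilon,\eta)$ is needed: it makes $\log(R/r)$ large enough to absorb the $\gamma$ loss. The remaining integrability checks needed to interchange expectation and the $\theta$-integral, such as $\E\log|\per(W)|>-\infty$, are supplied by that lemma and by the second-moment bound.
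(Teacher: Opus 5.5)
Your proposal is correct and follows the paper's proof essentially step for step: Jensen's formula at $R=1/2$; Jensen's inequality taken jointly over $\theta$ and $W$ together with the boundary bound $n!/(1-R^2)$; the lower bound $\E\log|\per(W)|^2\ge\log(n!)-\gamma n$ from Lemma~\ref{lem:gaussian-per-log-lower}, so the $\log(n!)$ terms cancel; and Markov's inequality with $r_0$ exponentially small in $\gamma/(\varepsilon\eta)$. The only difference is cosmetic: you kill the cross term pointwise in $\theta$ by degree counting, which is valid for circular Gaussians, whereas the paper uses $\frac{1}{2\pi}\int_0^{2\pi}\Delta_\theta\,d\theta=0$; your extra requirement $r_0\le 1/4$ is harmless but unnecessary.
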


\begin{proof}
Fix \(\varepsilon,\eta\in(0,1)\), and set \(R:=\frac12\). Let \(0<r<R\). 
Recall from Lemma~\ref{lem:core-jensen-app} that
\begin{equation}\label{eq:expected-zero-bound-clean}
\mathbb E\,n(r)
\le
\frac{
\frac{1}{4\pi}\int_0^{2\pi} \mathbb E|f_n(Re^{i\theta})|^2\,d\theta
-\mathbb E\log|f_n(0)|
}{
\log(R/r)
}.
\end{equation}
By the previously established boundary second-moment estimate \eqref{eqn:per-delta-estimate},
\[
\frac{1}{2\pi}\int_0^{2\pi}\mathbb E|f_n(Re^{i\theta})|^2\,d\theta
\le
\frac{n!}{1-R^2},
\]
so using that \(R=\frac12\), 
we find that
\begin{equation}\label{eq:boundary-upper-clean}
\mathbb E\!\left[\frac{1}{2\pi}\int_0^{2\pi}\log|f_n(Re^{i\theta})|\,d\theta\right]
\le
\frac12\log(n!) + \frac12\log\frac43.
\end{equation}

For the next term, since \(f_n(0)=\per(W)\), Lemma~\ref{lem:gaussian-per-log-lower} implies
$\mathbb E\log|f_n(0)|^2
\ge
\log(n!) - \gamma n$.
Therefore
\begin{equation}\label{eq:center-lower-clean}
\mathbb E\log|f_n(0)|
\ge
\frac12\log(n!) - \frac{\gamma}{2}n.
\end{equation}

Substituting \eqref{eq:boundary-upper-clean} and \eqref{eq:center-lower-clean} into \eqref{eq:expected-zero-bound-clean}, we obtain
\[
\mathbb E\,n(r)
\le
\frac{
\frac{\gamma}{2}n+\frac12\log\frac43
}{
\log(1/(2r))
}.
\]
Choose \(r_0=r_0(\varepsilon,\eta)\in(0,\frac12)\) so small that
\[
\log\frac{1}{2r_0}>\frac{\gamma}{2\varepsilon\eta}.
\]
Then for every \(r\in(0,r_0)\),
\[
\frac{\gamma/2}{\log(1/(2r))}<\varepsilon\eta.
\]
Since \(\frac12\log\frac43\) is a constant, for each such \(r\) there exists \(N=N(\varepsilon,\eta,r)\) such that for all \(n\ge N\),
\[
\mathbb E\,n(r)\le \varepsilon\eta\, n.
\]
The conclusion follows from Markov's inequality.
\end{proof}


\paragraph{Acknowledgements.} Some simulations were performed on the University of Chicago’s Data Science Institute cluster.

\bibliographystyle{plain} 
\bibliography{bib}

@article{anand2025simulating,
  title={Simulating Gaussian boson sampling on graphs in polynomial time},
  author={Anand, Konrad and Chen, Zongchen and Cryan, Mary and Freifeld, Graham and Goldberg, Leslie Ann and Guo, Heng and Zhang, Xinyuan},
  journal={arXiv preprint arXiv:2511.16558},
  year={2025}
}

@inproceedings{eldar2018approximating,
  title={Approximating the permanent of a random matrix with vanishing mean},
  author={Eldar, Lior and Mehraban, Saeed},
  booktitle={2018 IEEE 59th Annual Symposium on Foundations of Computer Science (FOCS)},
  pages={23--34},
  year={2018},
  organization={IEEE}
}

@inproceedings{bouland2025exponential,
  title={Exponential improvements to the average-case hardness of BosonSampling},
  author={Bouland, Adam and Datta, Ishaun and Fefferman, Bill and Hern{\'a}ndez, Felipe},
  booktitle={2025 IEEE 66th Annual Symposium on Foundations of Computer Science (FOCS)},
  pages={912--933},
  year={2025},
  organization={IEEE}
}

@article{bencs2025zeros,
  title={On zeros and algorithms for disordered systems: mean-field spin glasses},
  author={Bencs, Ferenc and Huang, Brice and Lee, Daniel Z and Liu, Kuikui and Regts, Guus},
  journal={arXiv preprint arXiv:2507.15616},
  year={2025}
}

@article{mohanty2025eigenvalue,
  title={Eigenvalue Bounds for Random Matrices via Zerofreeness},
  author={Mohanty, Sidhanth and Rajaraman, Amit},
  journal={arXiv preprint arXiv:2509.25471},
  year={2025}
}

@inproceedings{ebrahimnejad2025approximability,
  title={On approximability of the Permanent of PSD matrices},
  author={Ebrahimnejad, Farzam and Nagda, Ansh and Gharan, Shayan Oveis},
  booktitle={Proceedings of the 57th Annual ACM Symposium on Theory of Computing},
  pages={625--630},
  year={2025}
}

@inproceedings{harrow2020classical,
  title={Classical algorithms, correlation decay, and complex zeros of partition functions of quantum many-body systems},
  author={Harrow, Aram W and Mehraban, Saeed and Soleimanifar, Mehdi},
  booktitle={Proceedings of the 52nd Annual ACM SIGACT Symposium on Theory of Computing},
  pages={378--386},
  year={2020}
}

@article{chertkov2013approximating,
  title={Approximating the permanent with fractional belief propagation},
  author={Chertkov, Michael and Yedidia, Adam B},
  journal={The Journal of Machine Learning Research},
  volume={14},
  number={1},
  pages={2029--2066},
  year={2013},
  publisher={JMLR. org}
}

@article{anari2025tight,
  title={A tight analysis of Bethe approximation for permanent},
  author={Anari, Nima and Rezaei, Alireza},
  journal={SIAM Journal on Computing},
  volume={54},
  number={4},
  pages={FOCS19-81--FOCS19-101},
  year={2025},
  publisher={SIAM}
}

@article{barvinok2019computing,
  title={Computing permanents of complex diagonally dominant matrices and tensors},
  author={Barvinok, Alexander},
  journal={Israel Journal of Mathematics},
  volume={232},
  number={2},
  pages={931--945},
  year={2019},
  publisher={Springer}
}

@article{tao2009permanent,
  title={On the permanent of random Bernoulli matrices},
  author={Tao, Terence and Vu, Van},
  journal={Advances in Mathematics},
  volume={220},
  number={3},
  pages={657--669},
  year={2009},
  publisher={Elsevier}
}

@article{kwan2022permanent,
  title={On the permanent of a random symmetric matrix},
  author={Kwan, Matthew and Sauermann, Lisa},
  journal={Selecta Mathematica},
  volume={28},
  number={1},
  pages={15},
  year={2022},
  publisher={Springer}
}

@article{hunter2025exponential,
  title={Exponential anticoncentration of the permanent},
  author={Hunter, Zach and Kwan, Matthew and Sauermann, Lisa},
  journal={arXiv preprint arXiv:2509.22577},
  year={2025}
}

@inproceedings{ji2021approximating,
  title={Approximating permanent of random matrices with vanishing mean: made better and simpler},
  author={Ji, Zhengfeng and Jin, Zhihan and Lu, Pinyan},
  booktitle={Proceedings of the 2021 ACM-SIAM Symposium on Discrete Algorithms (SODA)},
  pages={959--975},
  year={2021},
  organization={SIAM}
}

@inproceedings{aaronson2011computational,
  title={The computational complexity of linear optics},
  author={Aaronson, Scott and Arkhipov, Alex},
  booktitle={Proceedings of the forty-third annual ACM symposium on Theory of computing},
  pages={333--342},
  year={2011}
}

@inproceedings{clifford2018classical,
  title={The classical complexity of boson sampling},
  author={Clifford, Peter and Clifford, Rapha{\"e}l},
  booktitle={Proceedings of the Twenty-Ninth Annual ACM-SIAM Symposium on Discrete Algorithms},
  pages={146--155},
  year={2018},
  organization={SIAM}
}

@article{jerrum2004polynomial,
  title={A polynomial-time approximation algorithm for the permanent of a matrix with nonnegative entries},
  author={Jerrum, Mark and Sinclair, Alistair and Vigoda, Eric},
  journal={Journal of the ACM (JACM)},
  volume={51},
  number={4},
  pages={671--697},
  year={2004},
  publisher={ACM New York, NY, USA}
}

@article{jerrum1989approximating,
  title={Approximating the permanent},
  author={Jerrum, Mark and Sinclair, Alistair},
  journal={SIAM journal on computing},
  volume={18},
  number={6},
  pages={1149--1178},
  year={1989},
  publisher={SIAM}
}

@inproceedings{gurvits2014bounds,
  title={Bounds on the permanent and some applications},
  author={Gurvits, Leonid and Samorodnitsky, Alex},
  booktitle={2014 IEEE 55th Annual Symposium on Foundations of Computer Science},
  pages={90--99},
  year={2014},
  organization={IEEE}
}

@inproceedings{linial1998deterministic,
  title={A deterministic strongly polynomial algorithm for matrix scaling and approximate permanents},
  author={Linial, Nathan and Samorodnitsky, Alex and Wigderson, Avi},
  booktitle={Proceedings of the thirtieth annual ACM symposium on Theory of computing},
  pages={644--652},
  year={1998}
}

@article{zhou2026complex,
  title={Complex-Valued-Matrix Permanents: SPA-based Approximations and Double-Cover Analysis},
  author={Zhou, Junda and Vontobel, Pascal O},
  journal={arXiv preprint arXiv:2601.18232},
  year={2026}
}

@article{lee1952statistical,
  title={Statistical theory of equations of state and phase transitions. II. Lattice gas and Ising model},
  author={Lee, Tsung-Dao and Yang, Chen-Ning},
  journal={Physical Review},
  volume={87},
  number={3},
  pages={410},
  year={1952},
  publisher={APS}
}

@misc{perkins2023five,
  title={Five lectures on statistical physics methods in combinatorics},
  author={Perkins, Will},
  year={2023}
}

@book{ryser1963combinatorial,
  title={Combinatorial mathematics},
  author={Ryser, Herbert John},
  volume={14},
  year={1963},
  publisher={American Mathematical Soc.}
}

@article{barvinok2016computing,
  title={Computing the permanent of (some) complex matrices},
  author={Barvinok, Alexander},
  journal={Foundations of Computational Mathematics},
  volume={16},
  number={2},
  pages={329--342},
  year={2016},
  publisher={Springer}
}

@article{brenner1959relations,
  title={Relations among the minors of a matrix with dominant principal diagonal},
  author={Brenner, JL},
  journal={Duke Math. J.},
  volume={26},
  number={4},
  pages={563--567},
  year={1959}
}

@misc{Tao2010MO,
    author       = {Terence Tao},
    title        = {Answer to ``Anti-concentration bound for permanents of {G}aussian matrices?''},
    howpublished = {Math Overflow},
    year         = {2010},
    note         = {URL: \url{https://mathoverflow.net/questions/45822/anti-concentration-bound-for-permanents-of-gaussian-matrices}},
}

@article{aizenman1987some,
  title={Some rigorous results on the Sherrington-Kirkpatrick spin glass model},
  author={Aizenman, Michael and Lebowitz, Joel L and Ruelle, David},
  journal={Communications in mathematical physics},
  volume={112},
  number={1},
  pages={3--20},
  year={1987},
  publisher={Springer}
}

@article{hughes2008zeros,
  title={The zeros of random polynomials cluster uniformly near the unit circle},
  author={Hughes, Christopher P and Nikeghbali, Ashkan},
  journal={Compositio Mathematica},
  volume={144},
  number={3},
  pages={734--746},
  year={2008},
  publisher={London Mathematical Society}
}

@book{stein2010complex,
  title={Complex analysis},
  author={Stein, Elias M and Shakarchi, Rami},
  volume={2},
  year={2010},
  publisher={Princeton University Press}
}

@article{rempala1999limiting,
  title={Limiting behavior of random permanents},
  author={Rempa{\l}a, Grzegorz and Weso{\l}owski, Jacek},
  journal={Statistics \& probability letters},
  volume={45},
  number={2},
  pages={149--158},
  year={1999},
  publisher={Elsevier}
}

@article{heilmann1972theory,
  title={Theory of monomer-dimer systems},
  author={Heilmann, Ole J and Lieb, Elliott H},
  journal={Communications in mathematical Physics},
  volume={25},
  number={3},
  pages={190--232},
  year={1972},
  publisher={Springer}
}

@article{scott2005repulsive,
  title={The repulsive lattice gas, the independent-set polynomial, and the Lov{\'a}sz local lemma},
  author={Scott, Alexander D and Sokal, Alan D},
  journal={Journal of Statistical Physics},
  volume={118},
  number={5},
  pages={1151--1261},
  year={2005},
  publisher={Springer}
}

@article{bissacot2011improvement,
  title={An improvement of the Lov{\'a}sz local lemma via cluster expansion},
  author={Bissacot, Rodrigo and Fern{\'a}ndez, Roberto and Procacci, Aldo and Scoppola, Benedetto},
  journal={Combinatorics, Probability and Computing},
  volume={20},
  number={5},
  pages={709--719},
  year={2011},
  publisher={Cambridge University Press}
}

@article{chudnovsky2007roots,
  title={The roots of the independence polynomial of a clawfree graph},
  author={Chudnovsky, Maria and Seymour, Paul},
  journal={Journal of Combinatorial Theory, Series B},
  volume={97},
  number={3},
  pages={350--357},
  year={2007},
  publisher={Elsevier}
}

@article{shearer1985problem,
  title={On a problem of Spencer},
  author={Shearer, James B.},
  journal={Combinatorica},
  volume={5},
  number={3},
  pages={241--245},
  year={1985},
  publisher={Springer}
}

@article{patelregts,
author = {Patel, Viresh and Regts, Guus},
title = {Deterministic Polynomial-Time Approximation Algorithms for Partition Functions and Graph Polynomials},
journal = {SIAM Journal on Computing},
volume = {46},
number = {6},
pages = {1893-1919},
year = {2017},
doi = {10.1137/16M1101003},

URL = { 
    
        https://doi.org/10.1137/16M1101003
    
    

},
eprint = { 
    
        https://doi.org/10.1137/16M1101003
    
    

}
,}

@article{jerrum1996markov,
  title={The Markov chain Monte Carlo method: an approach to approximate counting and integration},
  author={Jerrum, Mark and Sinclair, Alistair},
  journal={Approximation Algorithms for NP-hard problems, PWS Publishing},
  year={1996}
}

@article{peters2019conjecture,
  title={On a conjecture of Sokal concerning roots of the independence polynomial},
  author={Peters, Han and Regts, Guus},
  journal={Michigan Mathematical Journal},
  volume={68},
  number={1},
  pages={33--55},
  year={2019},
  publisher={University of Michigan, Department of Mathematics}
}

@article{dyson1952divergence,
  title={Divergence of perturbation theory in quantum electrodynamics},
  author={Dyson, Freeman J},
  journal={Physical Review},
  volume={85},
  number={4},
  pages={631},
  year={1952},
  publisher={APS}
}

@book{friedli2017statistical,
  title={Statistical mechanics of lattice systems: a concrete mathematical introduction},
  author={Friedli, Sacha and Velenik, Yvan},
  year={2017},
  publisher={Cambridge University Press}
}

@article{eldan2022spectral,
  title={A spectral condition for spectral gap: fast mixing in high-temperature Ising models},
  author={Eldan, Ronen and Koehler, Frederic and Zeitouni, Ofer},
  journal={Probability theory and related fields},
  volume={182},
  number={3},
  pages={1035--1051},
  year={2022},
  publisher={Springer}
}

@inproceedings{anari2024trickle,
  title={Trickle-down in localization schemes and applications},
  author={Anari, Nima and Koehler, Frederic and Vuong, Thuy-Duong},
  booktitle={Proceedings of the 56th Annual ACM Symposium on Theory of Computing},
  pages={1094--1105},
  year={2024}
}

@inproceedings{anari2022entropic,
  title={Entropic independence: optimal mixing of down-up random walks},
  author={Anari, Nima and Jain, Vishesh and Koehler, Frederic and Pham, Huy Tuan and Vuong, Thuy-Duong},
  booktitle={Proceedings of the 54th Annual ACM SIGACT Symposium on Theory of Computing},
  pages={1418--1430},
  year={2022}
}

@article{adhikari2024spectral,
  title={Spectral gap estimates for mixed p-spin models at high temperature},
  author={Adhikari, Arka and Brennecke, Christian and Xu, Changji and Yau, Horng-Tzer},
  journal={Probability Theory and Related Fields},
  volume={189},
  number={3},
  pages={879--907},
  year={2024},
  publisher={Springer}
}

@inproceedings{anari2024universality,
  title={Universality of spectral independence with applications to fast mixing in spin glasses},
  author={Anari, Nima and Jain, Vishesh and Koehler, Frederic and Pham, Huy Tuan and Vuong, Thuy-Duong},
  booktitle={Proceedings of the 2024 Annual ACM-SIAM Symposium on Discrete Algorithms (SODA)},
  pages={5029--5056},
  year={2024},
  organization={SIAM}
}

@article{chen2025rapid,
  title={Rapid mixing on random regular graphs beyond uniqueness},
  author={Chen, Xiaoyu and Chen, Zejia and Chen, Zongchen and Yin, Yitong and Zhang, Xinyuan},
  journal={arXiv preprint arXiv:2504.03406},
  year={2025}
}

@inproceedings{sly2010computational,
  title={Computational transition at the uniqueness threshold},
  author={Sly, Allan},
  booktitle={2010 IEEE 51st Annual Symposium on Foundations of Computer Science},
  pages={287--296},
  year={2010},
  organization={IEEE}
}

@inproceedings{buys2022lee,
  title={Lee--Yang zeros and the complexity of the ferromagnetic Ising model on bounded-degree graphs},
  author={Buys, Pjotr and Galanis, Andreas and Patel, Viresh and Regts, Guus},
  booktitle={Forum of Mathematics, Sigma},
  volume={10},
  pages={e7},
  year={2022},
  organization={Cambridge University Press}
}

@inproceedings{de2024zeros,
  title={Zeros, chaotic ratios and the computational complexity of approximating the independence polynomial},
  author={De Boer, David and Buys, Pjotr and Guerini, Lorenzo and Peters, Han and Regts, Guus},
  booktitle={Mathematical Proceedings of the Cambridge Philosophical Society},
  volume={176},
  number={2},
  pages={459--494},
  year={2024},
  organization={Cambridge University Press}
}

@article{galanis2022complexity,
  title={The complexity of approximating the complex-valued Ising model on bounded degree graphs},
  author={Galanis, Andreas and Goldberg, Leslie A and Herrera-Poyatos, Andr{\'e}s},
  journal={SIAM Journal on Discrete Mathematics},
  volume={36},
  number={3},
  pages={2159--2204},
  year={2022},
  publisher={SIAM}
}

@article{bezakova2021complexity,
  title={The complexity of approximating the matching polynomial in the complex plane},
  author={Bez{\'a}kov{\'a}, Ivona and Galanis, Andreas and Goldberg, Leslie Ann and {\v{S}}tefankovi{\v{c}}, Daniel},
  journal={ACM Transactions on Computation Theory (TOCT)},
  volume={13},
  number={2},
  pages={1--37},
  year={2021},
  publisher={ACM New York, NY, USA}
}

@article{bencs2025complex,
  title={On the complex zeros and the computational complexity of approximating the reliability polynomial},
  author={Bencs, Ferenc and Piombi, Chiara and Regts, Guus},
  journal={arXiv preprint arXiv:2512.11504},
  year={2025}
}

@article{valiant1979complexity,
  title={The complexity of computing the permanent},
  author={Valiant, Leslie G},
  journal={Theoretical computer science},
  volume={8},
  number={2},
  pages={189--201},
  year={1979},
  publisher={Elsevier}
}

@inproceedings{kunisky2024optimality,
  title={Optimality of Glauber dynamics for general-purpose Ising model sampling and free energy approximation},
  author={Kunisky, Dmitriy},
  booktitle={Proceedings of the 2024 Annual ACM-SIAM Symposium on Discrete Algorithms (SODA)},
  pages={5013--5028},
  year={2024},
  organization={SIAM}
}

@article{gheissari2019spectral,
  title={On the spectral gap of spherical spin glass dynamics},
  author={Gheissari, Reza and Jagannath, Aukosh},
  journal={Annales de l’Institut Henri Poincar{\'e}-Probabilit{\'e}s et Statistiques},
  volume={55},
  number={2},
  pages={756--776},
  year={2019}
}

@article{stanley2011enumerative,
  title={Enumerative combinatorics},
  volume={1},
  author={Stanley, Richard P},
  journal={Cambridge studies in advanced mathematics},
  year={2011}
}

@book{talagrand_qft,
place={Cambridge}, title={What Is a Quantum Field Theory?}, publisher={Cambridge University Press}, author={Talagrand, Michel}, year={2022}}

@book{stanley2015catalan,
  title={Catalan numbers},
  author={Stanley, Richard P},
  year={2015},
  publisher={Cambridge University Press}
}

@article{liu2019ising,
  title={The Ising Partition Function: Zeros and Deterministic Approximation},
  author={Liu, Jingcheng and Sinclair, Alistair and Srivastava, Piyush},
  journal={Journal of Statistical Physics},
  volume={174},
  number={2},
  pages={287--315},
  year={2019},
  publisher={Springer}
}

@article{thouless1977solution,
  title={Solution of 'solvable model of a spin glass'},
  author={Thouless, David J and Anderson, Philip W and Palmer, Robert G},
  journal={Philosophical Magazine},
  volume={35},
  number={3},
  pages={593--601},
  year={1977},
  publisher={Taylor \& Francis}
}

@book{lando2004graphs,
  title={Graphs on surfaces and their applications},
  author={Lando, Sergei K and Zvonkin, Alexander K},
  volume={141},
  year={2004},
  publisher={Springer}
}

@article{hubbard1959calculation,
  title={Calculation of partition functions},
  author={Hubbard, John},
  journal={Physical Review Letters},
  volume={3},
  number={2},
  pages={77},
  year={1959},
  publisher={APS}
}

@techreport{van2014probability,
  title={Probability in high dimension},
  author={Van Handel, Ramon},
  year={2014}
}

@article{pfister1991large,
  title={Large deviations and phase separation in the two-dimensional Ising model},
  author={Pfister, Charles-Edouard},
  journal={Helvetica Physica Acta},
  volume={64},
  number={7},
  pages={953--1054},
  year={1991},
  publisher={Birkh{\"a}user}
}

@article{penrose1963convergence,
  title={Convergence of fugacity expansions for fluids and lattice gases},
  author={Penrose, Oliver},
  journal={Journal of Mathematical Physics},
  volume={4},
  number={10},
  pages={1312--1320},
  year={1963},
  publisher={American Institute of Physics}
}

\appendix
\section{Numerical simulations}\label{apdx:simulations}
\paragraph{Zeros.} In Figure~\ref{fig:ryser_gaussian}, we located the zeros of the (rescaled) random polynomial
\begin{equation}\label{eqn:experiment-eqn}
    z \mapsto \operatorname{per}(z J/\sqrt{n} + W)
\end{equation}
 by computing the magnitude of the permanent via Ryser's exact formula \cite{ryser1963combinatorial}, for small values of $n$. Here $J$ is the all-ones matrix and $W$ is an $n \times n$ random matrix with standard complex Gaussian entries. For each size $n$, we ran 3 independent simulations which correspond to the rows of the figure. All computations were executed at 64-bit floating point precision.  
 
 As we can see from the figure, in the experiments all of the zeros of \eqref{eqn:experiment-eqn} had real and imaginary parts of magnitude at most $3$. Recall from the discussion in Section~\ref{sec:overview}  that asymptotically as $n \to \infty$, all but $o(1)$ fraction of the zeros of \eqref{eqn:experiment-eqn} must be magnitude $O(1)$, which is the motivation for the $1/\sqrt{n}$ scaling factor.

\begin{figure}
    \centering
    \includegraphics[width=0.95\linewidth]{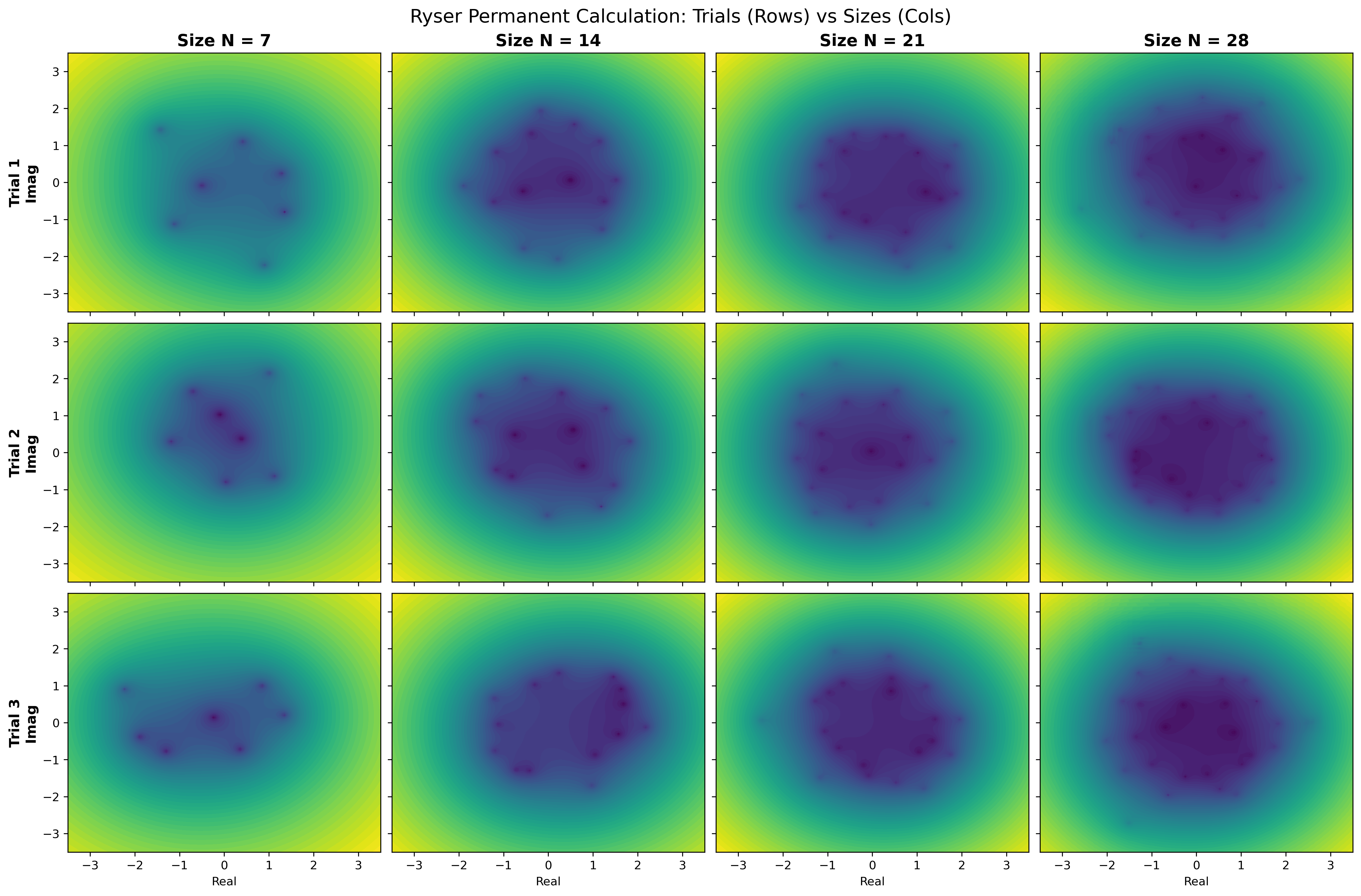}
    \caption{Complex magnitude of the $n \times n$ permanent \eqref{eqn:experiment-eqn} with random complex Gaussian entries. Colors correspond to the log-magnitude of the permanent at a particular point in $\mathbb C$, with dark blue corresponding to smaller permanents (more negative log-permanent) and bright yellow to larger values (more positive log-permanent). In each case, the locations of the zeros can be visually seen as dark-blue dots in the plot.}
    \label{fig:ryser_gaussian}
\end{figure}

We also performed the same experiment with complex Laplace (symmetrical exponential) distributed entries, instead of Gaussian. The results are similar to the Gaussian case. 
In Figure~\ref{fig:ryser_real}, we show the results with real i.i.d. Gaussian entries, which exhibit reflection symmetry across the real-axis but otherwise broadly appear similar. 
\begin{figure}
    \centering
    \includegraphics[width=0.95\linewidth]{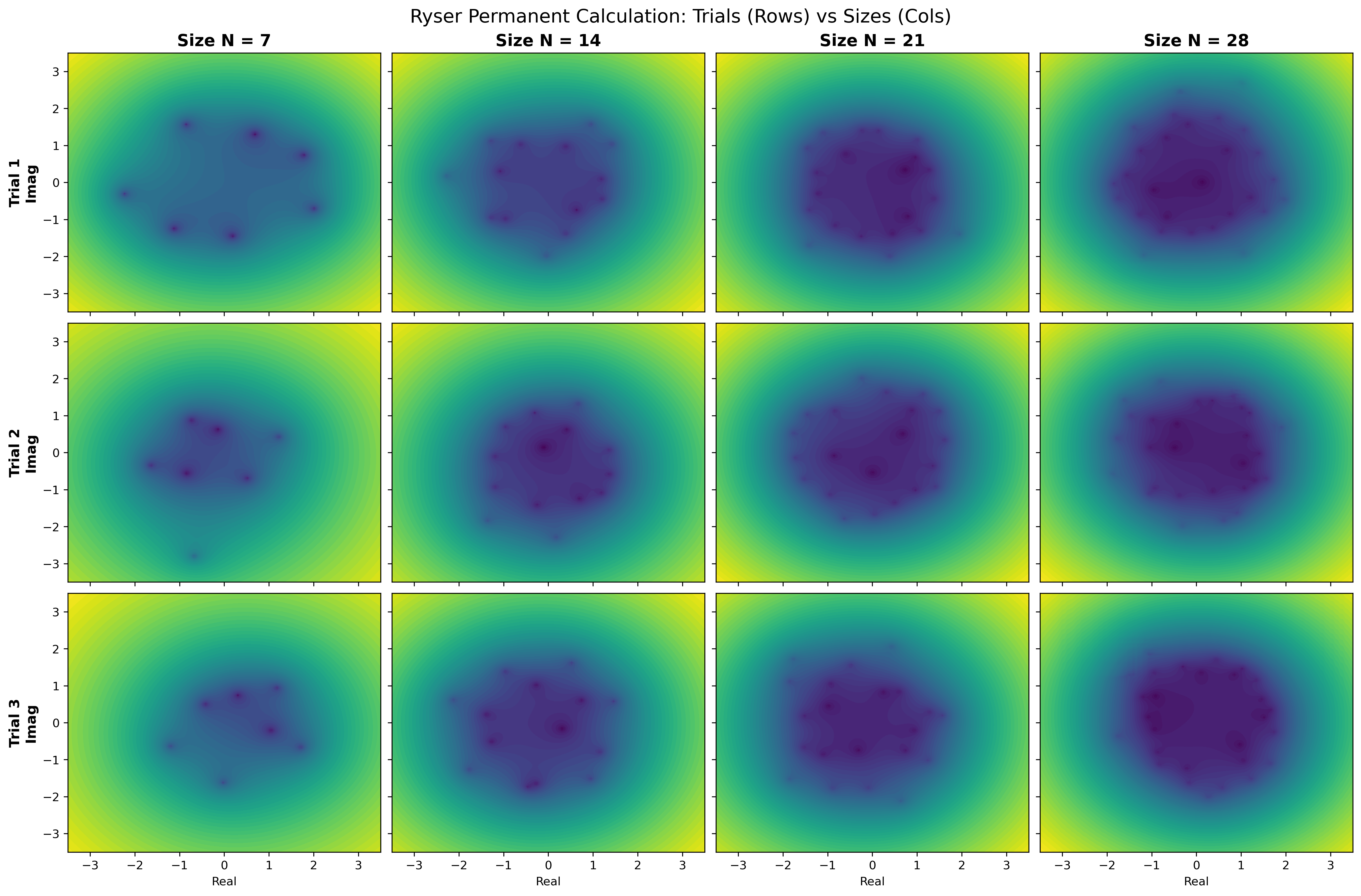}
    \caption{Complex magnitude of the $n \times n$ permanent \eqref{eqn:experiment-eqn} with random complex Laplace entries. As before, the zeros are visible as dark blue dots. The results appear qualitatively similar to the Gaussian case (Figure~\ref{fig:ryser_gaussian}).}
    \label{fig:ryser_laplace}
\end{figure}
\begin{figure}
    \centering
    \includegraphics[width=0.95\linewidth]{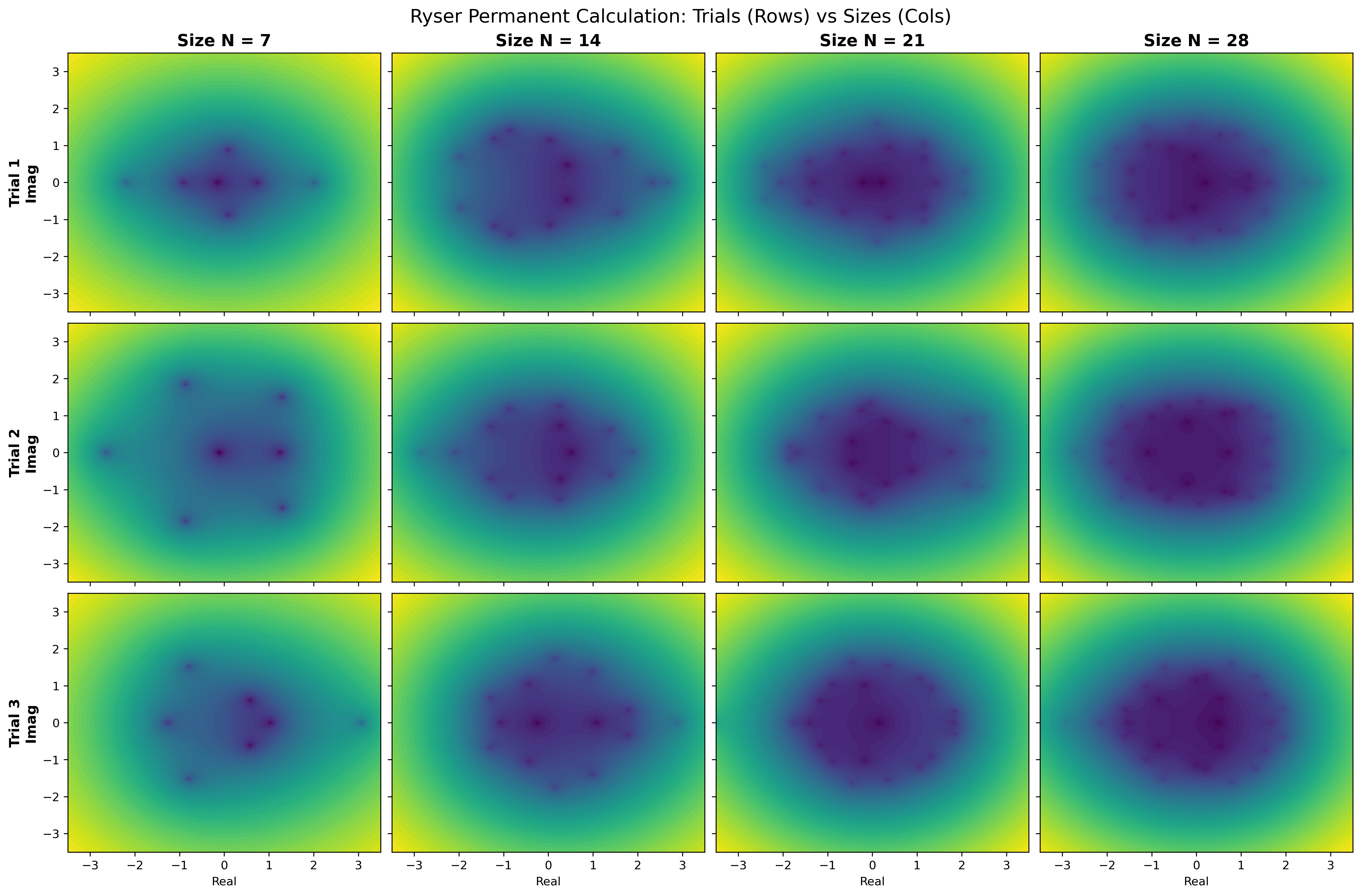}
    \caption{Complex magnitude of the $n \times n$ permanent \eqref{eqn:experiment-eqn} with real Gaussian entries. As before, the zeros are visible as dark blue dots. Because the coefficients are real-valued, and therefore invariant under complex conjugation (which maps $i$ to $-i$), the roots exhibit symmetry across the real axis.}
    \label{fig:ryser_real}
\end{figure}

\begin{figure}
    \centering
    \includegraphics[width=1.02\textwidth,trim={0 0 0 2}, clip]{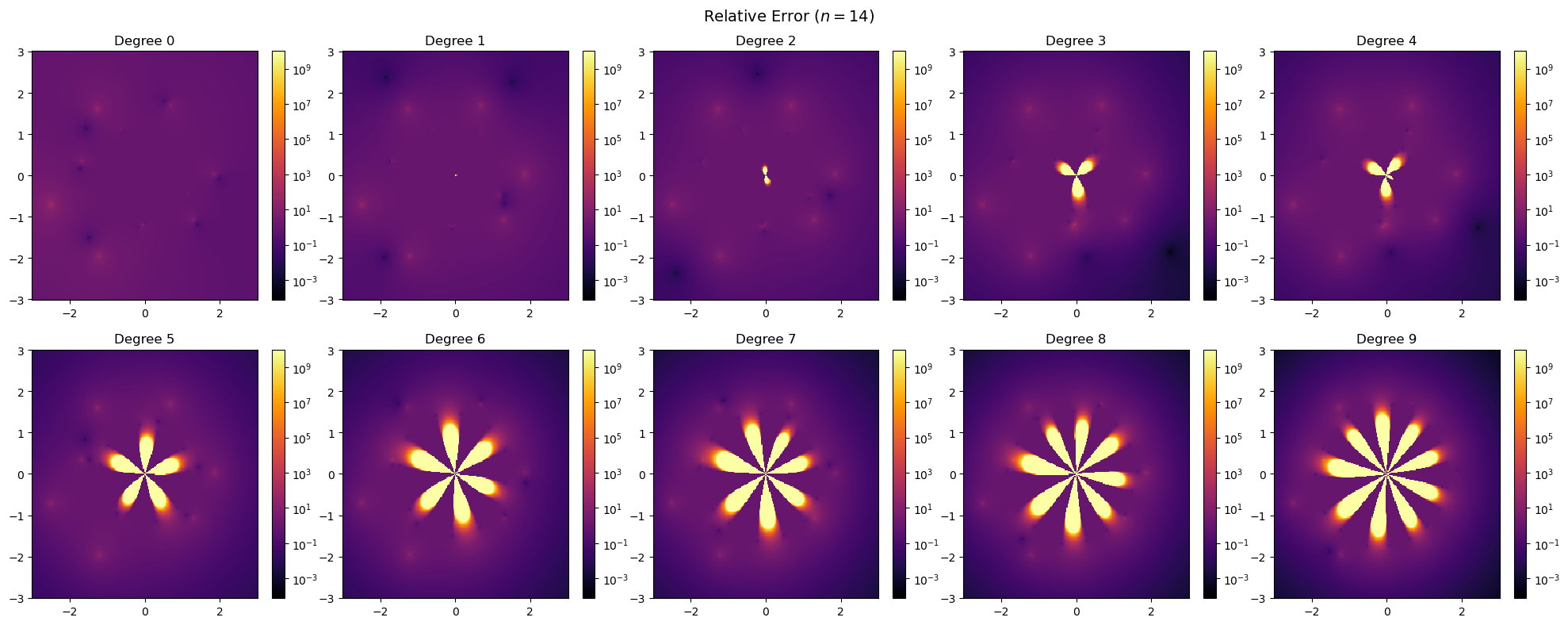} 
    \caption{Relative error of the Taylor series approximation \eqref{eqn:plotted-apx} for $\per (z J/\sqrt{n} + A)$ as a function of the complex parameter $z$ ($n=14$). Each subplot represents an increasing degree of the Taylor series approximation (0 through 9) about $z = \infty$. The color scale is logarithmic, representing the relative deviation $|1 - \per_{\text{approx}}/\per_{\text{exact}}|$.}
    \label{fig:perm_relative_error}
\end{figure}

\begin{rem}
If we consider the analogue of \eqref{eqn:experiment-eqn} with determinant instead of permanent, we will see totally different behavior. By the determinant rank-one update formula,
\[ \det(zJ/\sqrt{n} + W) = \det(W)(1 + z\operatorname{Tr}(J W^{-1})/\sqrt{n}) \]
so the determinantal version is a degree-one polynomial --- it has one zero instead of $n$. 
\end{rem}

\paragraph{Error of Taylor series approximation.}
The approximation to \eqref{eqn:experiment-eqn} is calculated by rewriting the log-permanent in terms of $w = 1/z$:
\begin{equation}\label{eqn:plotted-apx}
    \log \per \left( z \frac{J}{\sqrt{n}} + A \right) = n \log z + \log \per \left( \frac{J}{\sqrt{n}} + w A \right)
\end{equation}
The second term is expanded as a Taylor series in $w$ around the value $0$ (which corresponds to $z = \infty$). Truncating this series at degree $D$ yields:
\begin{equation}
    \log \per \left( \frac{J}{\sqrt{n}} + w A \right) \approx \log \left( \frac{n!}{n^{n/2}} \right) + \sum_{k=1}^{D} q_k w^k
\end{equation}
where the coefficients $q_k$ are computed using the sum of $k \times k$ subpermanents of $A$.

As illustrated in Figure~\ref{fig:perm_relative_error}, the Taylor series approximation exhibits a distinct convergence region that expands as the degree of the expansion increases. Because the approximation is derived as an expansion in powers of $1/z$, the relative error is negligible for large $|z|$, but diverges significantly as $z$ approaches the origin. Specifically, the dark purple regions indicate areas where the approximation achieves high precision, while the bright yellow center denotes the singularity where the $1/z$ terms dominate, leading to a breakdown of the polynomial fit. Notably, by increasing the degree from 0 to 9, the radius of the reliable convergence zone grows, allowing for accurate permanent estimation progressively closer to the center of the complex plane, but this region can never expand beyond the permanental zeros.

\section{Connection to Bethe approximation heuristic}
\label{app:bethe}
This section is mostly of interest to those familiar with Bethe
 approximation.
We explain a loose connection between our second-order
reweighting and the \emph{Bethe approximation} to the permanent,
in that the reweighting is close to a formal second-order Taylor series expansion of the Bethe approximation. We include this since there has been a lot of study of Bethe approximation in the specific context of the permanent, and there are connections between Bethe approximation and diagrammatic expansions \cite{thouless1977solution}.
We otherwise do not use Bethe approximation in any way.

\subsection{A brief introduction to the Bethe approximation}

The permanent of a nonnegative \(n\times n\) matrix \(\theta=(\theta_{ij})\) can
be viewed as the partition function of the perfect matching model on the complete
bipartite graph \(K_{n,n}\):
\[
\per(\theta)=\sum_{\sigma\in S_n}\prod_{i=1}^n \theta_{i,\sigma(i)}.
\]
The Bethe approximation is a standard mean-field approximation to such partition
functions. For the permanent, it can be formulated variationally over the
Birkhoff polytope
\[
\Gamma_n:=\Bigl\{\gamma=(\gamma_{ij})\in\mathbb{R}_{\ge 0}^{n\times n}:
\sum_j \gamma_{ij}=1\ \forall i,\ \sum_i \gamma_{ij}=1\ \forall j\Bigr\},
\]
whose elements may be interpreted as ``soft'' doubly stochastic approximations
to permutation matrices.

For a positive matrix \(\theta\), the Bethe free energy is
\begin{equation}
F_B(\gamma;\theta)
:=
-\sum_{i,j}\gamma_{ij}\log \theta_{ij}
+\sum_{i,j}\gamma_{ij}\log\gamma_{ij}
-\sum_{i,j}(1-\gamma_{ij})\log(1-\gamma_{ij}),
\qquad \gamma\in\Gamma_n.
\label{eq:bethe-free-energy}
\end{equation}
One then defines the \emph{Bethe permanent} by
\[
\per_B(\theta):=\exp\!\Bigl(-\inf_{\gamma\in\Gamma_n}F_B(\gamma;\theta)\Bigr).
\]

\subsection{Quadratic expansion of Bethe approximation}

We now consider a perturbation
\[
A=J+H
\]
with $J$ the all-ones matrix and \(H\) small. For the purpose of the heuristic calculation, we will allow $H$ to be complex-valued, even though this is not so consistent with the variational interpretation. 
If \(H=0\), then \(A=J\) is maximally symmetric, and the
corresponding Bethe variational problem is symmetric under permutations of rows
and columns. The distinguished point in the Birkhoff polytope is therefore the
uniform doubly stochastic matrix
\[
\gamma_0:=\frac1n J.
\]
This is the natural mean-field candidate around which to linearize.

To compare with our exact second derivative, it is convenient to introduce
\[
\Phi_B(\gamma;A):=-F_B(\gamma;A)
=
\sum_{i,j}\gamma_{ij}\log A_{ij}
-
\sum_{i,j}\gamma_{ij}\log\gamma_{ij}
+
\sum_{i,j}(1-\gamma_{ij})\log(1-\gamma_{ij}).
\]
Then
\[
\log \per_B(A)=\sup_{\gamma\in\Gamma_n}\Phi_B(\gamma;A).
\]

A tangent vector to the Birkhoff polytope at \(\gamma_0\) has the form
\[
u\in\mathbb{R}^{n\times n},\qquad
u\mathbf 1=0,\qquad \mathbf 1^T u=0.
\]
Thus the tangent space is exactly the subspace
\[
\mathcal H_m=\{M\in\mathbb{R}^{n\times n}:M\mathbf 1=0,\ \mathbf 1^T M=0\}
\]
that also appears in our exact decomposition. In particular, if
\(\gamma=\gamma_0+u\), then the free variational directions live only in
\(\mathcal H_m\).

\paragraph{Quadratic expansion of the Bethe functional.}

Let
\[
f(x):=-x\log x+(1-x)\log(1-x).
\]
Then the entropy part of \(\Phi_B\) is \(\sum_{i,j}f(\gamma_{ij})\), and
\[
f'(x)=-\log x-\log(1-x)-2,
\qquad
f''(x)=-\frac1x+\frac1{1-x}.
\]
Evaluating at \(x=1/n\) gives
\[
f''(1/n)=-n+\frac{n}{n-1}=-\frac{n(n-2)}{n-1}.
\]
It is therefore convenient to set
\[
q_n:=-f''(1/n)
=\frac1{1/n}-\frac1{1-1/n}
= n-\frac{n}{n-1}
= \frac{n(n-2)}{n-1}.
\]

Now write
\[
L:=\log A
\]
noting that here we are taking the \emph{entrywise logarithm}, so
\[
L = H-\frac12 H^{\circ 2}+O(\|H\|^3).
\]

Expanding \(\Phi_B\) around \(\gamma_0\), we obtain
\[
\Phi_B(\gamma_0+u;A)
=
\Phi_B(\gamma_0;A)
+\langle u,L\rangle
-\frac{q_n}{2}\|u\|_F^2
+O(\|u\|^3).
\]
Since \(u\in\mathcal H_m\), only the \(\mathcal H_m\)-component of \(L\) couples:
\[
\langle u,L\rangle=\langle u,P_mL\rangle.
\]
``Optimizing'' the quadratic approximation over \(u\in\mathcal H_m\) by setting the derivative to zero gives
\[
u_*=\frac1{q_n}P_mL+O(\|L\|^2),
\]
and therefore
\[
\log \per_B(J+H)-\log \per_B(J)
=
\frac1n\sum_{i,j}L_{ij}
+\frac{1}{2q_n}\|P_mL\|_F^2
+O(\|H\|^3).
\]

Substituting \(L=H-\frac12 H^{\circ 2}+O(\|H\|^3)\) and keeping only quadratic
terms yields
\[
\log \per_B(J+H) -\log \per_B(J)
=
\frac1n\sum_{i,j}H_{ij}
-\frac1{2n}\sum_{i,j}H_{ij}^2
+\frac{1}{2q_n}\|P_mH\|_F^2
+O(\|H\|^3).
\]
Since
\[
\|H\|_F^2=\|P_\parallel H\|_F^2+\|P_mH\|_F^2,
\]
we can rewrite this as
\[
\log \per_B(J+H) -\log \per_B(J)
=
\frac1n\sum_{i,j}H_{ij}
+\frac12 \big\langle H,\,
\Bigl(-\frac1n P_\parallel+\Bigl(-\frac1n+\frac1{q_n}\Bigr)P_m\Bigr)H\big\rangle
+O(\|H\|^3).
\]
Using
\[
\frac1{q_n}=\frac{n-1}{n(n-2)},
\qquad
-\frac1n+\frac1{q_n}=\frac1{n(n-2)},
\]
the prediction for the quadratic operator is
\begin{equation}
B_{\mathrm{Bethe}}
=
-\frac1n P_\parallel+\frac1{n(n-2)}P_m.
\label{eq:bethe-quadratic-operator}
\end{equation}

\paragraph{Comparison with the exact second derivative.}

By direct combinatorial computation, our exact second derivative at \(J\) is
\begin{equation}
B_{\mathrm{exact}}
=
-\frac1n P_\parallel+\frac1{n(n-1)}P_m.
\label{eq:exact-quadratic-operator}
\end{equation}
Comparing \eqref{eq:bethe-quadratic-operator} and
\eqref{eq:exact-quadratic-operator}, we find:

\begin{itemize}
\item the expansion of the Bethe heuristic around the saddle point $J$ predicts the same splitting
\[
\mathbb{C}^{n\times n}
=
\mathcal H_0\oplus \mathcal H_r\oplus \mathcal H_c\oplus \mathcal H_m;
\]

\item it gets the coefficient of \(P_\parallel\) exactly right;

\item For \(\mathcal H_m\), it predicts 
\[
\frac1{n(n-2)}
\quad\text{instead of the exact}\quad
\frac1{n(n-1)}.
\]
Of course, these two quantities are quite close for large $n$.
\end{itemize}

\section{Low-degree coefficient bounds}\label{apdx:coeff-bounds}
The following coefficient bounds for the low-degree terms in the taylor series of $\log Z$ are not used
in the proofs for zero-free regions. They can be helpful to
optimize the precise degree of truncation needed for our algorithm. 

The proofs are much shorter and more elegant in the hardcore model, where the bounds are a simple consequence of the cluster expansion formula. For the permanent, the series expansion of $\log Z$ is approximately given by a similar formula (recall the connection from the Overview) so intuitively similar bounds should still hold; making this precise requires a lot more work.

\subsection{Lattice animal bound}
The following lemma is a standard counting estimate which is used in polymer expansion arguments (see, e.g., \cite{friedli2017statistical,perkins2023five}). We give a self-contained proof below.
\begin{lem}
\label{lem:lattice-animal}
Let \(G=(V,E)\) be a finite graph on \(n\) vertices with maximum degree at most \(\Delta\).
For \(s\ge 1\), let \(N_s(G)\) denote the number of connected vertex subsets
\(U\subseteq V\) with \(|U|=s\). Then
\[
N_s(G)\le n\,(e\Delta)^{s-1}.
\]
More precisely, for each fixed vertex \(v\in V\), the number of connected sets
\(U\subseteq V\) such that \(v\in U\) and \(|U|=s\) is at most
\[
\frac{1}{s}\binom{\Delta s}{s-1}\le (e\Delta)^{s-1}.
\]
\end{lem}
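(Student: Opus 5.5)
The plan is to encode each connected set containing $v$ by a spanning tree and count rooted subtrees of the infinite $\Delta$-ary tree. Fix $v\in U$ with $U$ connected and $|U|=s$. Run a deterministic breadth-first search from $v$ inside $G[U]$, with a fixed ordering of the neighbors of every vertex. This produces a spanning tree $T_U$ of $G[U]$ rooted at $v$. Each vertex $x$ has at most $\Delta$ neighbors in $G$, listed in the fixed order. So $T_U$ determines, for each vertex, a subset of the \emph{slots} $\{1,\dots,\Delta\}$ recording which neighbors are its children.

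This gives an injective map from connected sets $U\ni v$ of size $s$ into rooted subtrees with $s$ vertices of the infinite rooted tree $\mathbb T_\Delta$, in which every vertex has exactly $\Delta$ labeled child slots. Injectivity holds because $U$ is recovered by walking the slots from $v$ in $G$. Some slot assignments are not realized, for example those repeating a vertex, but an upper bound only needs injectivity. If $v$ has at most $\Delta$ children and the others have at most $\Delta-1$ new neighbors, the embedding only gets looser, which is fine.

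The count of $s$-vertex rooted subtrees of $\mathbb T_\Delta$ containing the root is the Fuss--Catalan number
\[
\frac{1}{(\Delta-1)s+1}\binom{\Delta s}{s} \;=\; \frac{1}{s}\binom{\Delta s}{s-1}.
\]
To prove this I would use the standard generating-function identity $F(x)=x(1+F(x))^\Delta$ and apply Lagrange inversion: $[x^s]F=\frac1s[y^{s-1}](1+y)^{\Delta s}=\frac1s\binom{\Delta s}{s-1}$. Alternatively, the cycle lemma on sequences of child counts gives the same formula.

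Next comes the routine estimate. Using $\binom{N}{k}\le (eN/k)^k$ gives
\[
\frac1s\binom{\Delta s}{s-1}\le \frac1s\Big(\frac{e\Delta s}{s-1}\Big)^{s-1}.
\]
The factor $\big(\frac{s}{s-1}\big)^{s-1}\le e$ has to be absorbed. The sharper route is to write $\frac1s\binom{\Delta s}{s-1}=\frac{\Delta}{(\Delta-1)s+1}\cdot\frac{(\Delta s)^{\underline{s-1}}}{(s-1)!}\cdots$. Simpler is $\frac1s\binom{\Delta s}{s-1}=\frac{1}{\Delta s - s+1}\binom{\Delta s}{s}\le \frac{(\Delta s)^{s-1}}{s!}$, obtained by bounding the falling factorial $(\Delta s)(\Delta s-1)\cdots(\Delta s-s+2)\le(\Delta s)^{s-1}$ and dividing by $(s-1)!\,s$. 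Then $s!\ge s^s e^{-s}\cdot$ (Stirling, $s!\ge (s/e)^s e$) yields
\[
\frac{(\Delta s)^{s-1}}{s!}\le \frac{\Delta^{s-1}s^{s-1}e^{s-1}}{s^s}\le (e\Delta)^{s-1}.
\]
I would verify this chain carefully.

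Finally, summing over the $n$ choices of $v$ gives $N_s(G)\le n(e\Delta)^{s-1}$; each set is counted $s$ times, so this is even loose by a factor of $s$. The main obstacle is making the encoding genuinely injective with the fixed slot ordering, and getting the constant exactly $e$ in the final inequality rather than $e^2$.
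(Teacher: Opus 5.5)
Your proof is correct. It starts from the same idea as the paper: encode a connected $U\ni v$ by a rooted spanning tree of $G[U]$, with $\Delta$ labelled slots at each vertex. Where it differs is in how the factor $1/s$ is obtained, and your way is the sounder one.

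The paper flattens the tree into a $0$--$1$ word of length $\Delta s$ with $s-1$ ones. That injection only gives $|\mathcal C_s(v)|\le\binom{\Delta s}{s-1}$ for a fixed root. The $1/s$ is then extracted from the root-averaging identity $N_s(G)=\frac1s\sum_{v}|\mathcal C_s(v)|$. This legitimately yields the aggregate bound $N_s(G)\le\frac ns\binom{\Delta s}{s-1}$. It does not imply the fixed-$v$ bound $|\mathcal C_s(v)|\le\frac1s\binom{\Delta s}{s-1}$, which the paper asserts ``in particular''.

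You instead count the image of your injection exactly. Root-containing $s$-vertex subtrees of $\mathbb T_\Delta$ are $\Delta$-ary trees, enumerated by the Fuss--Catalan number $\frac1s\binom{\Delta s}{s-1}$. So the $1/s$ holds for every fixed $v$, not just on average. Your cycle-lemma alternative makes the link to the paper explicit: exactly a $1/s$ fraction of the paper's words are valid preorder codes of $\Delta$-ary trees.

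What each route buys:
\begin{itemize}
\item Yours proves the ``more precisely'' clause as stated. This includes the per-vertex bound $(e\Delta)^{s-1}$ invoked later in Lemma~\ref{lem:pair-model-kp} and Step~6 of Theorem~\ref{thm:second_moment_clean}. The cost is quoting Lagrange inversion or the cycle lemma.
\item The paper's bare binomial count is more elementary, but it needs the unproved per-vertex $1/s$ to absorb the factor $(s/(s-1))^{s-1}$ coming from $\binom{N}{k}\le(eN/k)^k$.
\end{itemize}

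Your closing estimate is also right. We have $\frac1s\binom{\Delta s}{s-1}=\frac{\Delta s(\Delta s-1)\cdots(\Delta s-s+2)}{s!}\le\frac{(\Delta s)^{s-1}}{s!}$. Combined with $s!\ge e(s/e)^s$, this gives $(e\Delta)^{s-1}/s$, with a factor $1/s$ to spare. The abandoned ``sharper route'' line can simply be deleted.
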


\begin{proof}
Fix a vertex \(v\in V\), and let \(\mathcal C_s(v)\) denote the collection of connected
vertex sets \(U\subseteq V\) with \(v\in U\) and \(|U|=s\). It suffices to show that
\[
|\mathcal C_s(v)|\le \frac1s \binom{\Delta s}{s-1}.
\]

For each \(U\in\mathcal C_s(v)\), choose a spanning tree \(T_U\) of the induced subgraph
\(G[U]\), rooted at \(v\). We now encode \(T_U\) by a word of length \(\Delta s\) containing
exactly \(s-1\) ones.

Consider the rooted tree \(T_U\). List the vertices of \(U\) in the order in which they are first
discovered by a depth-first search starting from \(v\). Every vertex other than \(v\) has a unique
parent in \(T_U\), so \(T_U\) has exactly \(s-1\) oriented tree edges pointing from a parent to a child.

For each vertex \(x\in U\), choose once and for all an ordering of the at most \(\Delta\) incident
edges of \(G\). Thus the set \(U\) contributes \(s\) vertex slots, each with at most \(\Delta\)
possible edge positions, for a total of at most \(\Delta s\) positions. Mark those positions
corresponding to the \(s-1\) parent-to-child edges of \(T_U\). This produces a \(0\)-\(1\) word
of length at most \(\Delta s\) with exactly \(s-1\) ones.

This encoding is injective once the root \(v\) is fixed: indeed, from the marked positions one
recovers, for each discovered vertex, which children it has in the rooted tree and in which order
they are explored, hence the whole rooted tree \(T_U\), and therefore also its vertex set \(U\).
Accordingly,
\[
|\mathcal C_s(v)|\le \binom{\Delta s}{s-1}.
\]

To gain the extra factor \(1/s\), note that in the above exploration code every connected set
\(U\in\mathcal C_s(v)\) is counted at least \(s\) times when one allows the root to vary over the
\(s\) vertices of \(U\). Equivalently, summing the rooted count over all \(v\in U\) overcounts each
connected set of size \(s\) by exactly \(s\). Hence the number of connected sets of size \(s\) is
\[
N_s(G)=\frac1s \sum_{v\in V} |\mathcal C_s(v)|
\le \frac{n}{s}\binom{\Delta s}{s-1}.
\]
In particular, for each fixed \(v\),
\[
|\mathcal C_s(v)|\le \frac1s\binom{\Delta s}{s-1}.
\]

Finally, using the standard binomial estimate \(\binom{M}{k}\le (eM/k)^k\), we obtain
\[
\frac1s\binom{\Delta s}{s-1}
\le \frac1s\left(\frac{e\Delta s}{s-1}\right)^{s-1}
\le (e\Delta)^{s-1},
\]
since \(s\le 2(s-1)\) for \(s\ge 2\), and the case \(s=1\) is trivial. Therefore
\[
N_s(G)\le n\,(e\Delta)^{s-1},
\]
as claimed.
\end{proof}
\subsection{Hardcore model}
\begin{lem}[Explicit second-moment bound for the $k$th cluster coefficient]\label{lem:ak-second-moment-explicit}
Let $G=(V,E)$ be a finite graph with $|V|=n$ and maximum degree at most $\Delta\ge 2$. Let $(W_v)_{v\in V}$ be i.i.d.\ standard complex Gaussians,
$W_v\sim \CN(0,1)$,
and define random fugacities $\lambda_v=\lambda W_v$. Write the cluster expansion of the hardcore partition function as
\[
\log Z(\lambda W)=\sum_{k\ge 1} a_k \lambda^k,
\]
where
\[
a_k
=
\sum_{|\mathbf m|=k}
\frac{1}{\mathbf m!}\,\phi(H[\mathbf m])\prod_{v\in V}W_v^{m_v}.
\]
Then for every integer $k\ge 1$,
\[
\mathbb E |a_k|^2
\le
\frac{1}{e\sqrt{2}}\,
n\,\Delta^{k-1}\,k^{2k-4}
\Bigl(\frac{2e}{\log 2}\Bigr)^k
\le
n\,\Delta^{k-1}\,k^{2k}
\Bigl(\frac{3e}{\log 2}\Bigr)^k
\]
\end{lem}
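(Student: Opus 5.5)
The plan is to use orthogonality of monomials under the complex Gaussian law, exactly as in Step 2 of the proof of Theorem~\ref{thm:second_moment_clean}. Since $\mathbb E[W^{\mathbf m}\overline{W}^{\mathbf m'}]=\mathbf m!\,\mathbf 1_{\mathbf m=\mathbf m'}$, we get the exact identity
\[
\mathbb E|a_k|^2=\sum_{|\mathbf m|=k}\frac{\phi(H[\mathbf m])^2}{\mathbf m!}.
\]
Only multisets with $\phi(H[\mathbf m])\neq 0$ contribute. Such an $H[\mathbf m]$ is connected, so $\supp(\mathbf m)$ is a connected vertex set $U$ in $G$ of size $s\le k$.

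Next, Lemma~\ref{lem:ursell-tree-bound} gives $|\phi(H[\mathbf m])|\le k^{k-2}$, since $H[\mathbf m]$ has $k$ vertices. Hence
\[
\mathbb E|a_k|^2\le k^{2k-4}\sum_{s=1}^{k}N_s(G)\sum_{\substack{\mathbf m\in\mathbb Z_{\ge1}^U\\ |\mathbf m|=k}}\frac{1}{\mathbf m!}.
\]
The inner sum is computed exactly: $\sum_{|\mathbf m|=k,\,m_v\ge1}1/\mathbf m!=S(k,s)\,s!/k!$, where $S(k,s)$ is a Stirling number of the second kind. The crude bound $S(k,s)s!\le s^k$ would also do.

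For the counting factor, Lemma~\ref{lem:lattice-animal} gives $N_s(G)\le n(e\Delta)^{s-1}\le n\,e^{s-1}\Delta^{k-1}$, using $\Delta\ge2$ and $s\le k$. It then remains to bound
\[
\sum_{s=1}^k e^{s-1}\frac{s!\,S(k,s)}{k!}
\]
by something of order $(2e/\log 2)^k$, up to the prefactor $1/(e\sqrt2)$. The main tool is the ordered Bell (Fubini) number identity $\sum_s s!\,S(k,s)=\mathrm{Fub}(k)$, together with the classical bound $\mathrm{Fub}(k)\le k!/(2(\log 2)^{k+1})$. Bounding $e^{s-1}\le e^{k-1}$ then gives
\[
\mathbb E|a_k|^2\le n\Delta^{k-1}k^{2k-4}\,\frac{e^{k-1}}{2(\log2)^{k+1}}.
\]
The remaining factor $2^k$ and the constant $1/\sqrt2$ should be slack. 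They can be produced either by using the generating function $\sum_k \frac{x^k}{k!}\sum_s s!S(k,s)y^s=\frac{1}{1-y(e^x-1)}$ with $y=e$, or simply by absorbing a wasteful factor. Either way, these constants are a matter of bookkeeping rather than substance.

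The main obstacle is getting the exact constant $\frac{1}{e\sqrt2}(2e/\log2)^k$ as stated. My first attempt is the Fubini-number bound above, and I would not insist on matching the constant exactly beyond that. The second, cruder inequality in the statement follows immediately. It needs $k^{-4}/(e\sqrt2)\le 1\le k^4$ and $(2e/\log2)^k\le(3e/\log2)^k$, and any prefactor discrepancy of the form $C^k$ with $C\le 3/2$ is absorbed there.
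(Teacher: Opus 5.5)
Your argument follows the paper's proof step for step. The shared steps are the orthogonality identity $\mathbb E|a_k|^2=\sum_{|\mathbf m|=k}\phi(H[\mathbf m])^2/\mathbf m!$, the spanning-tree bound $k^{k-2}$, and the lattice-animal bound with $(e\Delta)^{s-1}\le(e\Delta)^{k-1}$. Your composition sum $\sum_s s!\,S(k,s)/k!=\mathrm{Fub}(k)/k!$ is exactly the paper's $b_k$, whose generating function $\frac{e^x-1}{2-e^x}$ is the ordered-Bell EGF minus $1$.

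The one step that fails as written is the ``classical bound'' $\mathrm{Fub}(k)\le k!/(2(\log 2)^{k+1})$. That expression is only the leading term of $\mathrm{Fub}(k)=\frac{k!}{2}\sum_{j\in\mathbb Z}(\log 2+2\pi i j)^{-(k+1)}$, and the $j\neq0$ corrections oscillate in sign. It is false already at $k=3$: $\mathrm{Fub}(3)=13$, while $3!/(2(\log 2)^4)\approx 12.996$. It fails again at $k=4$, where $75$ exceeds $\approx 74.999$. So your intermediate bound $n\Delta^{k-1}k^{2k-4}e^{k-1}/(2(\log 2)^{k+1})$ is not justified by the argument you give.

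The repair is one line, and it is what the paper does. Since $\frac{e^x-1}{2-e^x}=\sum_{k\ge1}\frac{\mathrm{Fub}(k)}{k!}x^k$ has nonnegative coefficients, Cauchy's estimate gives $\mathrm{Fub}(k)/k!\le r^{-k}\frac{e^r-1}{2-e^r}$ for any $0<r<\log 2$. Taking $r=\frac{\log 2}{2}$, so that $e^r=\sqrt2$, yields $\frac{1}{\sqrt2}(2/\log 2)^k$. Multiplying by $e^{k-1}$ gives exactly $\frac{1}{e\sqrt2}(2e/\log 2)^k$.

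So the factor $2^k$ and the constant $1/\sqrt2$ you called slack are precisely the cost of a non-asymptotic estimate, and matching the stated constant is no obstacle. Your $y=e$ generating-function variant is also fine, provided you finish it with a Cauchy estimate inside $|x|<\log(1+e^{-1})$ rather than an asymptotic. Alternatively, bounding the $j\neq 0$ terms by absolute values gives $\mathrm{Fub}(k)\le 1.05\,k!/(2(\log 2)^{k+1})$ for $k\ge1$, and the $2^k$ room absorbs the extra factor.
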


\begin{proof}
By definition,
\[
a_k=\sum_{|\mathbf m|=k} c_{\mathbf m}\prod_{v\in V}W_v^{m_v},
\qquad
c_{\mathbf m}:=\frac{1}{\mathbf m!}\phi(H[\mathbf m]).
\]
Hence
\[
\mathbb E|a_k|^2
=
\sum_{|\mathbf m|=k}\sum_{|\mathbf m'|=k}
c_{\mathbf m}\overline{c_{\mathbf m'}}
\,
\mathbb E\!\left[\prod_{v\in V}W_v^{m_v}\overline{W_v}^{\,m_v'}\right].
\]
Since the variables $(W_v)$ are independent and rotationally invariant,
\[
\mathbb E\!\left[W_v^p\overline{W_v}^{\,q}\right]=0
\qquad\text{unless }p=q.
\]
Therefore all off-diagonal terms vanish, and we obtain
\[
\mathbb E|a_k|^2
=
\sum_{|\mathbf m|=k}
\frac{|\phi(H[\mathbf m])|^2}{(\mathbf m!)^2}
\prod_{v\in V}\mathbb E |W_v|^{2m_v}.
\]
For a standard complex Gaussian $W\sim \CN(0,1)$,
$\mathbb E|W|^{2r}=r!$ for any 
$r\ge 0$.
Thus
\[
\mathbb E|a_k|^2
=
\sum_{|\mathbf m|=k}
|\phi(H[\mathbf m])|^2
\prod_{v:m_v>0}\frac{1}{m_v!}.
\]

Now group the sum according to the support
\[
S=\{v\in V:m_v>0\},
\qquad s:=|S|.
\]
If $H[\mathbf m]$ is connected, then $G[S]$ must also be connected, since an edge of $H[\mathbf m]$ only joins copies of equal or adjacent vertices in $G$. Therefore
\[
\mathbb E|a_k|^2
\le
\sum_{s=1}^k
\sum_{\substack{S\subseteq V\\ |S|=s\\ G[S]\text{ connected}}}
\ \sum_{\substack{(m_v)_{v\in S}\in \mathbb Z_{\ge 1}^S\\ \sum_{v\in S}m_v=k}}
|\phi(H[\mathbf m])|^2
\prod_{v\in S}\frac{1}{m_v!}.
\]
For each such $\mathbf m$, the graph $H[\mathbf m]$ has exactly $k$ vertices counting multiplicity. By the spanning tree bound (Lemma~\ref{lem:ursell-tree-bound}),
\[
|\phi(H[\mathbf m])| \le k^{k - 2}.
\]
Hence
\[
\mathbb E|a_k|^2
\le
k^{2k-4}
\sum_{s=1}^k N_s(G)
\sum_{\substack{m_1,\dots,m_s\ge 1\\ m_1+\cdots+m_s=k}}
\frac{1}{m_1!\cdots m_s!},
\]
where $N_s(G)$ is the number of connected vertex subsets of $G$ of size $s$.

Next, we bound $N_s(G)$. A standard exploration argument (Lemma~\ref{lem:lattice-animal}) yields
\[
N_s(G)\le n\,(e\Delta)^{s-1}.
\]
Substituting,
\[
\mathbb E|a_k|^2
\le
n\,k^{2k-4}
\sum_{s=1}^k (e\Delta)^{s-1}
\sum_{\substack{m_1,\dots,m_s\ge 1\\ m_1+\cdots+m_s=k}}
\frac{1}{m_1!\cdots m_s!}.
\]
Since $s\le k$, we may bound $(e\Delta)^{s-1}\le (e\Delta)^{k-1}$, obtaining
\[
\mathbb E|a_k|^2
\le
n\,\Delta^{k-1}\,e^{k-1}\,k^{2k-4}
\sum_{s=1}^k
\sum_{\substack{m_1,\dots,m_s\ge 1\\ m_1+\cdots+m_s=k}}
\frac{1}{m_1!\cdots m_s!}.
\]
It remains to estimate the composition sum. Define
\[
b_k
:=
\sum_{s=1}^k
\sum_{\substack{m_1,\dots,m_s\ge 1\\ m_1+\cdots+m_s=k}}
\frac{1}{m_1!\cdots m_s!}.
\]
Its generating function is
\[
\sum_{k\ge 1} b_k x^k
=
\sum_{s\ge 1}\Bigl(\sum_{m\ge 1}\frac{x^m}{m!}\Bigr)^s
=
\sum_{s\ge 1}(e^x-1)^s
=
\frac{e^x-1}{2-e^x}.
\]
Fix
\[
r:=\frac{\log 2}{2}.
\]
Since the power series has nonnegative coefficients, Cauchy's estimate gives
\[
b_k\le \frac{1}{r^k}\,\frac{e^r-1}{2-e^r}.
\]
Now $e^r=\sqrt{2}$, so
\[
\frac{e^r-1}{2-e^r}
=
\frac{\sqrt{2}-1}{2-\sqrt{2}}
=
\frac{1}{\sqrt{2}}.
\]
Therefore
\[
b_k\le \frac{1}{\sqrt{2}}\Bigl(\frac{2}{\log 2}\Bigr)^k.
\]
Substituting this bound,
\[
\mathbb E|a_k|^2
\le
n\,\Delta^{k-1}\,e^{k-1}\,k^{2k-4}\,
\frac{1}{\sqrt{2}}
\Bigl(\frac{2}{\log 2}\Bigr)^k
=
\frac{1}{e\sqrt{2}}\,
n\,\Delta^{k-1}\,k^{2k-4}
\Bigl(\frac{2e}{\log 2}\Bigr)^k.
\]
This proves the first bound.

The second displayed bound is a softer simplification: since $k^{2k-4}\le k^{2k}$ and
\[
\frac{1}{e\sqrt{2}}\cdot \frac{2e}{\log 2}
<
\frac{3e}{\log 2},
\]
we may enlarge constants and write
\[
\mathbb E|a_k|^2
\le
n\,\Delta^{k-1}\,k^{2k}
\Bigl(\frac{3e}{\log 2}\Bigr)^k.
\qedhere
\]
\end{proof}
\subsection{Permanent}
In order to argue that the Taylor series for the log-permanent decays similarly to the idealized model, we use the following representation of the ``correction term'' describing the discrepancy between $(n - m)!/n!$ and $1/n^m$. Intuitively, this correction should be negligible for low-degree coefficients since this corresponds to small $m$.
\begin{lem}[Finite binomial-basis representation of the cardinality correction]
\label{lem:rho-binomial}
Fix an integer \(k\ge 1\). For each \(n\ge 1\), define
\[
\rho_n(m):=\frac{(n-m)!}{n!}\,n^m
=\prod_{t=0}^{m-1}\Bigl(1-\frac{t}{n}\Bigr),
\qquad 0\le m\le n.
\]
Then there exist coefficients \(\theta_{q,n}\), \(0\le q\le k\), such that
\[
\log \rho_n(m)=\sum_{q=0}^k \theta_{q,n}\binom{m}{q}
\qquad\text{for all }m=0,1,\dots,k
\]
whenever \(n\ge k\). Moreover
$\theta_{0,n}=\theta_{1,n}=0$,
and there is an absolute constant \(C>0\) such that for every fixed \(k\ge 1\),
\[
|\theta_{q,n}| \le (Ck)^k\, n^{-(q-1)},
\qquad 2\le q\le k,
\]
for all \(n\ge k\).

Consequently, if we set
\[
\xi_{q,n}:=e^{\theta_{q,n}}-1,
\]
then, after possibly enlarging \(C\),
\[
|\xi_{q,n}|\le (Ck)^k\, n^{-(q-1)},
\qquad 2\le q\le k,
\]
and for every \(m\in\{0,1,\dots,k\}\),
\[
\rho_n(m)
=
\exp\!\Bigl(\sum_{q=2}^k \theta_{q,n}\binom{m}{q}\Bigr)
=
\prod_{\substack{S\subseteq [m]\\ 2\le |S|\le k}} (1+\xi_{|S|,n}).
\]
\end{lem}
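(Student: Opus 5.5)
The plan is to treat $m\mapsto\log\rho_n(m)$ as a function on $\{0,\dots,k\}$ and expand it in the binomial (Newton forward-difference) basis. For $m\le k\le n$ define $g(m):=\log\rho_n(m)=\sum_{t=0}^{m-1}\log(1-t/n)$. This is well-defined once we check the $t=n$ factor never appears: if $k=n$ and $m=n$, the last factor is $1-(n-1)/n>0$, so all factors are positive and the logarithms are real. Newton's interpolation formula then gives the unique representation
\[
g(m)=\sum_{q=0}^k \theta_{q,n}\binom{m}{q},\qquad \theta_{q,n}=(\Delta^q g)(0)=\sum_{j=0}^q(-1)^{q-j}\binom{q}{j}g(j),
\]
valid for all $m\in\{0,\dots,k\}$. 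We have $\theta_{0,n}=g(0)=0$ and $\theta_{1,n}=g(1)-g(0)=\log 1=0$.

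For the size bound, first note that $\Delta g(m)=\log(1-m/n)=:h(m)$, so $\theta_{q,n}=(\Delta^{q-1}h)(0)$. I will split into two regimes.

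\emph{Regime $n\ge 2k$.} Expand $h(m)=-\sum_{r\ge1}m^r/(r n^r)$, which converges since $m/n\le 1/2$. The finite difference $\Delta^{q-1}$ annihilates polynomials of degree $<q-1$, so only $r\ge q-1$ survives. For the surviving terms, $|\Delta^{q-1}m^r|_{m=0}|=(q-1)!\,S(r,q-1)\le (q-1)^r$, where $S$ denotes a Stirling number of the second kind. This gives
\[
|\theta_{q,n}|\le \sum_{r\ge q-1}\frac{(q-1)^r}{r\,n^r}\le 2\Bigl(\frac{k}{n}\Bigr)^{q-1}\le 2k^{k}n^{-(q-1)},
\]
using $k/n\le1/2$.

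\emph{Regime $k\le n<2k$.} Here the claimed bound is essentially trivial. We have $|g(j)|\le\sum_{t<j}|\log(1-t/n)|\le k\log n\le k\log(2k)$, so $|\theta_{q,n}|\le 2^q k\log(2k)$. Since $n^{q-1}\le (2k)^{k}$, this is at most $(Ck)^k n^{-(q-1)}$.

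Combining the two regimes gives $|\theta_{q,n}|\le (Ck)^k n^{-(q-1)}$ for $2\le q\le k$ and all $n\ge k$.

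For the $\xi$ bound, observe that $|\theta_{q,n}|\le (Ck)^k n^{-(q-1)}$ need not be small, since for $n<2k$ it is only $O_k(1)$. We still have $|e^{x}-1|\le |x|e^{|x|}$. If $|\theta_{q,n}|\le1$, this gives $|\xi_{q,n}|\le e|\theta_{q,n}|$. Otherwise we use that $\theta_{q,n}$ is real, so $|\xi_{q,n}|\le e^{|\theta_{q,n}|}$. In the second case $n$ is bounded in terms of $k$, and $e^{|\theta_{q,n}|}\le (2k)^{2^q k}$ needs to be compared with $(C'k)^{k}n^{-(q-1)}$.

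This last comparison is the step I expect to be the main obstacle: the crude bound $2^q k\log(2k)$ is too weak when $q$ is close to $k$ and $n\approx k$. To fix it, I will refine the bound on $\theta_{q,n}$ in the small-$n$ regime. Writing $g(j)=\log\bigl(j!\binom{n}{j}/n^j\bigr)$ bounds $|g(j)|\le j\log(e n/j)$ more carefully. Alternatively, I can avoid exponentiating $\theta$ altogether and bound $\xi_{q,n}$ directly by Möbius inversion. Since $\rho_n(m)=\prod_{S}(1+\xi_{|S|,n})$ over subsets $S\subseteq[m]$ with $2\le|S|\le k$, the quantity $\log(1+\xi_q)=\theta_q$ is the $q$-th Möbius coefficient of $\log\rho$ on the Boolean lattice. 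With $\theta_q$ real, I will aim for $|e^{\theta_q}-1|\le (C'k)^k n^{-(q-1)}$ after enlarging $C$, possibly accepting a $k^{O(k)}$ constant when $q$ is of order $k$.

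Finally, the product identity holds because $\sum_{q=2}^k\theta_{q,n}\binom{m}{q}=\sum_{S\subseteq[m],\,2\le|S|\le k}\theta_{|S|,n}$. Exponentiating this sum gives $\prod_S e^{\theta_{|S|,n}}=\prod_S(1+\xi_{|S|,n})$, which is the claimed representation of $\rho_n(m)$.
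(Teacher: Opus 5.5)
Your derivation of the representation and of the bound on $\theta_{q,n}$ is correct and follows essentially the paper's route. Writing $\theta_{q,n}=\Delta^{q-1}h(0)$ with $h(m)=\log(1-m/n)$ and expanding $h$ in powers of $m/n$ reproduces the paper's formula $\theta_{q,n}=-\sum_{r\ge q-1}\frac{(q-1)!\,S(r,q-1)}{r\,n^{r}}$, and the same surjection bound and geometric series follow. Your separate estimate for $k\le n<2k$ is more careful than the paper's, which only says the bound follows ``after enlarging the constant, since $k$ is fixed''.

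The gap is the bound on $\xi_{q,n}$: you leave it open, and the remedies you sketch would not close it. Any triangle-inequality estimate from $\theta_{q,n}=\sum_j(-1)^{q-j}\binom{q}{j}g(j)$ loses a factor $2^q$ against values $|g(j)|$ that are genuinely of order $k$ when $n\approx k$. Exponentiating then gives something like $e^{c\,2^k k}$, far from $(Ck)^k n^{-(q-1)}$. M\"obius inversion is just a restatement of $\theta_{q,n}=\log(1+\xi_{q,n})$, so it does not help either.

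The missing step is that your own series does not need $n\ge 2k$. It converges for every $n\ge k$, since $h$ is only evaluated at $m\le q-1\le k-1<n$. With $x:=\frac{q-1}{n}$ and $1-x\ge\frac1n$ (because $n\ge q$) it gives
\[
|\theta_{q,n}|\le\sum_{r\ge q-1}\frac{x^{r}}{r}\le\frac{x^{q-1}}{(q-1)(1-x)}\le\Bigl(\frac{q-1}{n}\Bigr)^{q-2}\le 1
\qquad\text{for all }2\le q\le k\le n .
\]
So the case $|\theta_{q,n}|>1$ never occurs. Then $|e^{u}-1|\le(e-1)|u|$ for $|u|\le1$ gives the $\xi$ bound from the $\theta$ bound, with $C$ enlarged by an absolute factor.

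With your normalization $g(m)=\sum_{t<m}\log(1-t/n)$ there is an even shorter route. Every term of the series has the same sign, so $\theta_{q,n}<0$ and $|\xi_{q,n}|=1-e^{\theta_{q,n}}\le|\theta_{q,n}|$. Be aware, though, that the statement's $\frac{(n-m)!}{n!}n^{m}$ is the \emph{reciprocal} of $\prod_{t<m}(1-t/n)$. It is the factorial form that the later permanent expansion uses, and there $\theta_{q,n}$ flips sign and is positive, so only the $|\theta_{q,n}|\le1$ argument survives.

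The paper's own one-line justification of this step, via $|\theta_{q,n}|\le(Ck)^k$ ``for fixed $k$'', also does not produce an absolute constant. The bound $|\theta_{q,n}|\le 1$ is what actually makes the step rigorous.
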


\begin{proof}
Fix \(k\). Since the functions
\[
m\longmapsto \binom{m}{q},
\qquad q=0,1,\dots,k,
\]
form a basis of the space of functions on \(\{0,1,\dots,k\}\), there exist unique coefficients \(\theta_{q,n}\) such that
\[
\log \rho_n(m)=\sum_{q=0}^k \theta_{q,n}\binom{m}{q}
\qquad (0\le m\le k).
\]
Because \(\rho_n(0)=\rho_n(1)=1\), we have \(\log \rho_n(0)=\log \rho_n(1)=0\), which forces \(\theta_{0,n}=\theta_{1,n}=0\).

To bound the coefficients, write
\[
\log \rho_n(m)
=
\sum_{t=0}^{m-1}\log\Bigl(1-\frac{t}{n}\Bigr)
=
-\sum_{\ell\ge 1}\frac{1}{\ell n^\ell}\sum_{t=0}^{m-1} t^\ell.
\]
Now use the standard identity (see, e.g., \cite{stanley2011enumerative})
\[
\sum_{t=0}^{m-1} t^\ell
=
\sum_{r=0}^{\ell} r!\,S(\ell,r)\binom{m}{r+1},
\]
where \(S(\ell,r)\) is the Stirling number of the second kind. Therefore
\[
\theta_{q,n}
=
-\sum_{\ell\ge q-1}\frac{(q-1)!\,S(\ell,q-1)}{\ell\,n^\ell}.
\]
Using the crude bound
\[
(q-1)!\,S(\ell,q-1)\le (q-1)^\ell,
\]
we obtain
\[
|\theta_{q,n}|
\le
\sum_{\ell\ge q-1}\frac{(q-1)^\ell}{\ell\,n^\ell}.
\]
If \(n\ge 2k\), then since \(q-1\le k\),
\[
|\theta_{q,n}|
\le
\sum_{\ell\ge q-1}\Bigl(\frac{k}{n}\Bigr)^\ell
\le
2\Bigl(\frac{k}{n}\Bigr)^{q-1}
\le
(2k)^k n^{-(q-1)}.
\]
For the finitely many \(n\) with \(k\le n<2k\), the same bound follows after enlarging the constant, since \(k\) is fixed. This proves
\[
|\theta_{q,n}| \le (Ck)^k\, n^{-(q-1)}.
\]

The bound on \(\xi_{q,n}=e^{\theta_{q,n}}-1\) follows similarly after enlarging \(C\), since \(|\theta_{q,n}|\le (Ck)^k\) for fixed \(k\), and for large \(n\) one has \(|e^u-1|\le 2|u|\) when \(|u|\le 1\).

Finally,
\[
\exp\!\Bigl(\sum_{q=2}^k \theta_{q,n}\binom{m}{q}\Bigr)
=
\prod_{q=2}^k \prod_{\substack{S\subseteq [m]\\ |S|=q}} e^{\theta_{q,n}}
=
\prod_{\substack{S\subseteq [m]\\ 2\le |S|\le k}} (1+\xi_{|S|,n}),
\]
since \(\binom{m}{q}\) is the number of \(q\)-subsets of \([m]\).
\end{proof}

\begin{lem}[Connected decorated-hypergraph expansion]
\label{lem:connected-hypergraph-expansion}
Let \(V\) be a finite set, and for each finite \(X\subseteq V\) let \(\mathcal H(X)\) be a family of decorated hypergraphs on vertex set \(X\), with weight \(w(H)\in\mathbb C\), such that:

\begin{enumerate}
\item every \(H\in\mathcal H(X)\) decomposes uniquely into connected components \(H_1,\dots,H_r\) on a partition
\[
X=X_1\sqcup\cdots\sqcup X_r;
\]
\item the weight is multiplicative across connected components:
\[
w(H)=\prod_{j=1}^r w(H_j).
\]
\end{enumerate}

Define
\[
A(X):=\sum_{H\in\mathcal H(X)} w(H),
\qquad
C(X):=\sum_{H\in\mathcal H_{\mathrm{conn}}(X)} w(H),
\]
where \(\mathcal H_{\mathrm{conn}}(X)\subseteq \mathcal H(X)\) denotes the connected hypergraphs on \(X\). Then one has the formal identity
\[
1+\sum_{\varnothing\neq X\subseteq V} A(X)\prod_{x\in X} y_x
=
\exp\!\Bigl(\sum_{\varnothing\neq X\subseteq V} C(X)\prod_{x\in X} y_x\Bigr).
\]
\end{lem}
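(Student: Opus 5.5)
The plan is to prove the identity as the standard exponential formula for set partitions, working coefficientwise in the formal variables \(y_x\). Since each variable \(y_x\) appears at most once in any monomial \(\prod_{x\in X} y_x\), I will treat the \(y_x\) as satisfying \(y_x^2=0\) for the purposes of comparing coefficients of square-free monomials. Equivalently, I will compare only the coefficients of \(\prod_{x\in X}y_x\) for each nonempty \(X\subseteq V\) on both sides; on the left this coefficient is \(A(X)\) by definition. If one insists on genuine commuting variables, the right-hand side also produces non-square-free monomials. In that case I will interpret the identity in the quotient ring \(\mathbb C[[y]]/(y_x^2)\), which is the setting in which it is used later, and state this interpretation explicitly.

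First I will expand the right-hand side. Writing \(\exp(S)=\sum_{r\ge0}S^r/r!\) with \(S=\sum_{X\ne\varnothing}C(X)y^X\), the coefficient of \(y^X\) in \(S^r/r!\) collects ordered \(r\)-tuples \((X_1,\dots,X_r)\) of nonempty sets whose product of monomials equals \(y^X\). Modulo \(y_x^2=0\), this forces the \(X_j\) to be pairwise disjoint with union \(X\). Each unordered partition \(\{X_1,\dots,X_r\}\) of \(X\) into \(r\) blocks arises from exactly \(r!\) ordered tuples, since the blocks are distinct nonempty sets, and this cancels the \(1/r!\). So the coefficient of \(y^X\) on the right is \(\sum_{\pi}\prod_{B\in\pi}C(B)\), summed over set partitions \(\pi\) of \(X\).

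Second I will show that \(A(X)\) equals the same sum. By hypothesis (1), every \(H\in\mathcal H(X)\) determines a unique partition \(\pi(H)=\{X_1,\dots,X_r\}\) of \(X\) together with connected components \(H_j\in\mathcal H_{\mathrm{conn}}(X_j)\). I will read hypothesis (1) as saying that this map \(H\mapsto(\pi,(H_j))\) is a bijection onto the set of all such data. This also requires surjectivity: any choice of connected pieces on the blocks of a partition assembles to an element of \(\mathcal H(X)\). This closure under disjoint union is implicit in the decorated-hypergraph setting, where a hypergraph is just a set of decorated hyperedges, and I will state it explicitly. Multiplicativity (2) then gives
\[
A(X)=\sum_{\pi}\prod_{B\in\pi}\sum_{H_B\in\mathcal H_{\mathrm{conn}}(B)}w(H_B)=\sum_\pi\prod_{B\in\pi}C(B).
\]
Matching the constant term \(1\) completes the proof.

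The only delicate point is this bookkeeping of the hypotheses: the argument needs both the uniqueness of the decomposition and the fact that every family of connected components on a partition is realized. I will make the second part explicit as the intended meaning of condition (1). Everything else is the routine exponential formula.
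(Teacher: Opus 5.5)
Your proposal is correct and is essentially the paper's own argument: expand the exponential so that the coefficient of $\prod_{x\in X}y_x$ is $\sum_{\pi\in\Pi(X)}\prod_{B\in\pi}C(B)$, and identify this with $A(X)$ via the correspondence between $H\in\mathcal H(X)$ and set partitions of $X$ decorated by connected components. The paper also takes that correspondence as an equivalence, including the surjectivity you make explicit.

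Your restriction to square-free coefficients (equivalently, working modulo $y_x^2$) is genuinely needed, since any $C(\{x\})\neq 0$ puts a term $\tfrac12 C(\{x\})^2y_x^2$ on the right-hand side, and the paper's proof likewise only compares square-free coefficients. Note, though, that the paper later applies the lemma with the numbers $x_e=zW_e/n$ and an honest logarithm, so the truncated ring is not actually the setting of that later use, and the repeated-variable terms would need separate treatment there.
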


\begin{proof}
Expanding the exponential on the right-hand side and collecting the coefficient of \(\prod_{x\in X} y_x\), one obtains
\[
\sum_{\pi\in\Pi(X)} \prod_{B\in\pi} C(B),
\]
where \(\Pi(X)\) denotes the set of set partitions of \(X\). By the assumptions, choosing for each block \(B\in\pi\) a connected decorated hypergraph \(H_B\in \mathcal H_{\mathrm{conn}}(B)\) is equivalent to choosing a decorated hypergraph \(H\in\mathcal H(X)\) together with its connected-component decomposition, and the weights multiply accordingly. Thus
\[
A(X)=\sum_{\pi\in\Pi(X)} \prod_{B\in\pi} C(B),
\]
which is exactly the claimed identity.
\end{proof}

\begin{prop}[Second moment bound for the true logarithmic coefficients]
\label{prop:true-log-perm-coeff}
Let
\[
P_n(z):=\frac{1}{n!}\per(J+zW),
\qquad
\log P_n(z)=\sum_{k\ge1} b_{k,n} z^k,
\]
where \(W=(W_{ij})_{1\le i,j\le n}\) has i.i.d.\ standard complex Gaussian entries,
$W_{ij}\sim\CN(0,1)$.
Then there exists an absolute constant \(C>0\) such that for every fixed integer \(k\ge1\), one can choose
\[
C_k \le \exp\!\bigl(C\,k\,2^k\log k\bigr)
\]
so that
\[
\mathbb E |b_{k,n}|^2 \le C_k\, n^{1-k}
\qquad\text{for all }n\ge 1.
\]
\end{prop}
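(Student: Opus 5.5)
We express $P_n$ as an "idealized" hardcore/monomer--dimer model decorated by the cardinality correction of Lemma~\ref{lem:rho-binomial}, and obtain a cluster-type formula for $\log P_n$ via Lemma~\ref{lem:connected-hypergraph-expansion}. Then we bound the second moment of each degree-$k$ coefficient using the orthogonality of Gaussian monomials, exactly as in Lemma~\ref{lem:ak-second-moment-explicit}.

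\emph{Rewriting $P_n$.} By \eqref{eqn:nonidealized}, $P_n(z)=\sum_M \rho_n(|M|)\prod_{e\in M}(zW_e/n)$, the sum running over matchings of $K_{n,n}$. Since $b_{k,n}$ depends only on the coefficients of $P_n$ up to degree $k$, only matchings with $|M|\le k$ matter. On those, Lemma~\ref{lem:rho-binomial} gives
\[
\rho_n(|M|)=\prod_{S\subseteq M,\,2\le|S|\le k}(1+\xi_{|S|,n}).
\]
Expanding this product, we may write
\[
P_n(z)\equiv\sum_{M}\sum_{\mathcal F}\prod_{e\in M}\frac{zW_e}{n}\prod_{S\in\mathcal F}\xi_{|S|,n}\pmod{z^{k+1}},
\]
where $\mathcal F$ ranges over families of subsets of $M$ of sizes between $2$ and $k$.

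\emph{Connected expansion.} To apply Lemma~\ref{lem:connected-hypergraph-expansion}, we first replace the matching constraint by a sum over arbitrary tuples of edges. Each violated pair (two edges sharing a row or column, or a repeated edge) is encoded by the Mayer factor $-1$; this is the standard device that turns the hardcore constraint into the Ursell function. A configuration is then a multiset of edges decorated by two kinds of structure: incompatibility-graph edges (weight $-1$) and hyperedges $S$ (weight $\xi_{|S|,n}$). Weights are multiplicative over connected components of the combined decorated hypergraph, so the lemma yields
\[
b_{k,n}=\sum_{|\mathbf m|=k}\frac{1}{\mathbf m!}\,\Phi_n(\mathbf m)\,\frac{W^{\mathbf m}}{n^k},
\]
where $\Phi_n(\mathbf m)$ is a generalized Ursell sum over connected decorated structures on the $k$ copies.

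\emph{Second moment.} Circular symmetry kills all cross terms, as in Lemma~\ref{lem:ak-second-moment-explicit}, so
\[
\mathbb E|b_{k,n}|^2=\sum_{|\mathbf m|=k}\frac{|\Phi_n(\mathbf m)|^2}{\mathbf m!\,n^{2k}}.
\]
We group the terms by the support $U$ of $\mathbf m$, a set of edges of $K_{n,n}$, together with the set of hyperedges used. Connectivity is forced by either row/column adjacency or shared hyperedges. If the hyperedges $S_1,\dots$ present are counted with their total "excess" $\sum(|S_i|-1)=:h$, each supplies a factor $n^{-(|S_i|-1)}$. Moreover, a hyperedge can join edges that are not adjacent in $L(K_{n,n})$.

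The counting goes as follows. A connected structure on $s$ distinct edges with $c$ "free" joins (joins via hyperedges rather than adjacency) has at most $n^2\cdot(2n)^{s-1-c}\cdot n^{2c}$ choices of support. Each hyperedge join contributes $n^{2}$ choices but $n^{-1}$ in weight, and squaring the weight makes it $n^{-2}$. Summing, each structure contributes at most $n^{2}(2n)^{s-1}n^{-2k}\le n^{1+s-2k}\le n^{1-k}$ times a combinatorial factor. The factor $|\Phi_n|$ is bounded by the number of decorated connected structures on $k$ labeled points: at most $2^{\binom k2}$ graphs times $2^{2^k}$ hypergraph choices, times $(Ck)^{k\cdot 2^k}$ from the $\xi$ bounds. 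This is the source of $C_k=\exp(Ck2^k\log k)$. For $n<k$ the bound is trivially absorbed into $C_k$.

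\emph{Main obstacle.} The delicate step is the counting in the second-moment step. We must verify that every connected decorated structure of total degree $k$ contributes at most $n^{1-k}$, i.e.\ that hyperedge joins never produce more support freedom than their $n^{-(|S|-1)}$ weight compensates. The plan is to build a spanning "tree" of the structure in which each hyperedge $S$ contributes $|S|-1$ tree steps. Each step then costs either $2n$ choices with weight $1$ (adjacency) or $n^2$ choices with squared weight $n^{-2}$ (hyperedge), and both are at most $O(n)$ per step relative to the $n^{-2}$ per vertex from $n^{-2k}$. The root contributes $n^2$. Repeated edges ($m_e\ge2$) only reduce $s$ and hence the count.
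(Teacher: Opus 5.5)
Your proposal is correct and follows the paper's argument. The shared steps are:
\begin{itemize}
\item the binomial-basis representation of $\rho_n$ from Lemma~\ref{lem:rho-binomial}, turned into correction hyperedges with weights $\xi_{|S|,n}$;
\item Mayer pair-links for the matching constraint, followed by a connected (exponential-formula) expansion and circular-Gaussian orthogonality;
\item a per-structure embedding count in which each hyperedge join trades $n^2$ choices for $n^{-2}$ in squared weight (the paper writes this as $\sum_a (q_a-1)\ge c-1$ over the $c$ pair-link components).
\end{itemize}

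The one real difference is in your favor. You expand over tuples with $f(e,e)=-1$ and carry the $1/\mathbf m!$ weights, so you keep the repeated-edge terms (e.g.\ $-\tfrac12\sum_e x_e^2$ in $b_{2,n}$). The paper sums only over sets $X\subseteq E_n$ via Lemma~\ref{lem:connected-hypergraph-expansion}. That identity holds only modulo $y_x^2=0$ (already $1+y\neq e^{y}$), so the paper's formula for $b_{k,n}$ drops those terms. As you note, repeated edges only reduce the embedding freedom, so the final bound is unaffected.
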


\begin{proof}
Fix \(k\ge1\). For \(n<k\), the desired bound can be absorbed into the constant \(C_k\), since there are only finitely many such \(n\). Thus we may assume \(n\ge k\).

Let \(E_n:=E(K_{n,n})\), so \(|E_n|=n^2\). For each edge \(e\in E_n\), set
\[
x_e:=\frac{zW_e}{n}.
\]
Then
\[
P_n(z)
=
\sum_{M\subseteq E_n \atop M\text{ matching}}
\rho_n(|M|)\prod_{e\in M} x_e,
\]
where
\[
\rho_n(m)=\frac{(n-m)!}{n!}\,n^m.
\]

Since the coefficient \(b_{k,n}\) only depends on matchings \(M\) with \(|M|\le k\), Lemma~\ref{lem:rho-binomial} gives an exact representation
\[
\rho_n(|M|)
=
\prod_{\substack{S\subseteq M\\ 2\le |S|\le k}} (1+\xi_{|S|,n})
\qquad\text{whenever }|M|\le k,
\]
with
\[
|\xi_{q,n}|\le D_k\, n^{-(q-1)},
\qquad 2\le q\le k,
\]
where we may take
$D_k:=(Ck)^k$
for some absolute \(C\).

Next define, for distinct edges \(e,f\in E_n\),
\[
f(e,f):=
\begin{cases}
-1,& e\text{ and }f\text{ share a row or column},\\
0,& \text{otherwise}.
\end{cases}
\]
Then for any finite \(X\subseteq E_n\),
\[
\prod_{\{e,f\}\subseteq X}(1+f(e,f))
=
\mathbf 1_{\{X\text{ is a matching}\}}.
\]
Therefore, up to degree \(k\),
\[
P_n(z)
=
\sum_{X\subseteq E_n}
\Bigl(\prod_{e\in X}x_e\Bigr)
\Bigl(\prod_{\{e,f\}\subseteq X}(1+f(e,f))\Bigr)
\Bigl(\prod_{\substack{S\subseteq X\\ 2\le |S|\le k}} (1+\xi_{|S|,n})\Bigr).
\]

For a finite \(X\subseteq E_n\), let \(\mathcal H(X)\) be the collection of decorated hypergraphs on vertex set \(X\) whose allowed hyperedges are:
\begin{itemize}
\item pair-links \(\{e,f\}\subseteq X\), with weight \(f(e,f)\);
\item correction hyperedges \(S\subseteq X\) with \(2\le |S|\le k\), with weight \(\xi_{|S|,n}\).
\end{itemize}
If \(H\in\mathcal H(X)\), write \(w(H)\) for the product of the weights of all hyperedges of \(H\). Then
\[
\prod_{\{e,f\}\subseteq X}(1+f(e,f))
\prod_{\substack{S\subseteq X\\ 2\le |S|\le k}} (1+\xi_{|S|,n})
=
\sum_{H\in\mathcal H(X)} w(H).
\]
Thus
\[
P_n(z)
=
\sum_{X\subseteq E_n}
\Bigl(\prod_{e\in X}x_e\Bigr)
\sum_{H\in\mathcal H(X)} w(H).
\]

Applying Lemma~\ref{lem:connected-hypergraph-expansion}, we obtain
\[
\log P_n(z)
=
\sum_{\varnothing\neq X\subseteq E_n}
\Bigl(\prod_{e\in X}x_e\Bigr)
\sum_{H\in\mathcal H_{\mathrm{conn}}(X)} w(H),
\]
where \(\mathcal H_{\mathrm{conn}}(X)\) denotes the connected decorated hypergraphs on \(X\). Taking the coefficient of \(z^k\), we find
\[
b_{k,n}
=
\frac{1}{n^k}
\sum_{\substack{X\subseteq E_n\\ |X|=k}}
\Bigl(\prod_{e\in X} W_e\Bigr)\Psi_n(X),
\]
where
\[
\Psi_n(X):=\sum_{H\in\mathcal H_{\mathrm{conn}}(X)} w(H).
\]

Since the \(W_e\) are i.i.d.\ rotationally invariant standard complex Gaussians,
\[
\mathbb E\!\left[
\prod_{e\in X}W_e\,
\overline{\prod_{f\in Y}W_f}
\right]
=
\mathbf 1_{\{X=Y\}}
\]
whenever \(|X|=|Y|=k\). Hence
\[
\mathbb E |b_{k,n}|^2
=
\frac{1}{n^{2k}}
\sum_{\substack{X\subseteq E_n\\ |X|=k}}
|\Psi_n(X)|^2.
\]

Now fix a \(k\)-element set \(X\). The total number of possible decorated hypergraphs on \(X\) is at most
\[
4^{\binom{k}{2}}\prod_{q=3}^k 2^{\binom{k}{q}}
\le \exp(C2^k)
\]
for some absolute \(C\): for each 2-subset there is the choice of a pair-link, a correction hyperedge, both, or neither; and for each subset of size \(q\ge3\) there is the choice present/absent. Therefore
\[
|\Psi_n(X)|^2
\le
\exp(C2^k)\sum_{H\in\mathcal H_{\mathrm{conn}}(X)} |w(H)|^2.
\]

It remains to bound the contribution of a fixed abstract connected decorated hypergraph \(H\) on \(k\) labeled vertices \(\{1,\dots,k\}\). Let \(G_{\mathrm{pair}}(H)\) be the graph obtained from \(H\) by retaining only its pair-links, and let the connected components of \(G_{\mathrm{pair}}(H)\) have sizes
\[
s_1,\dots,s_c,
\qquad s_1+\cdots+s_c=k.
\]
To embed one such pair-component of size \(s\) into \(E_n\), choose the first edge in at most \(n^2\) ways, and then each subsequent vertex along a spanning tree has at most \(2n\) choices, since in \(K_{n,n}\) an edge has at most \(2n-2\) neighbors sharing a row or column. Thus the number of embeddings of that component is at most
$n^2(2n)^{s-1}$.
Multiplying over all \(c\) pair-components, the number of embeddings of the pair-link structure of \(H\) is at most
\[
\prod_{j=1}^c n^2(2n)^{s_j-1}
=
2^{k-c} n^{k+c}.
\]

Now suppose the correction hyperedges of \(H\) have sizes \(q_1,\dots,q_h\), where each \(q_a\in\{2,\dots,k\}\). Then
\[
|w(H)|^2
\le
\prod_{a=1}^h |\xi_{q_a,n}|^2
\le
D_k^{2h}\, n^{-2\sum_{a=1}^h(q_a-1)}.
\]
Since there are at most \(2^k\) nonempty subsets of \(\{1,\dots,k\}\), we certainly have
$h\le 2^k$, so
\[
|w(H)|^2
\le
D_k^{2^k}\, n^{-2\sum_{a=1}^h(q_a-1)}.
\]
Because the full decorated hypergraph \(H\) is connected, these correction hyperedges must connect the \(c\) pair-components into a connected incidence structure. Each hyperedge of size \(q\) can decrease the number of connected components by at most \(q-1\), hence necessarily
\[
\sum_{a=1}^h (q_a-1)\ge c-1.
\]
Therefore
\[
|w(H)|^2 \le D_k^{2^k}\, n^{-2(c-1)}.
\]

Combining the embedding count with the weight bound, the total contribution of all embeddings of this one abstract hypergraph \(H\) to the double sum above is at most
\[
2^{k-c} n^{k+c}\cdot D_k^{2^k}\, n^{-2(c-1)}
=
2^{k-c} D_k^{2^k} n^{k-c+2}.
\]
After multiplying by the prefactor \(n^{-2k}\), its contribution to \(\mathbb E|b_{k,n}|^2\) is at most
\[
2^{k-c} D_k^{2^k} n^{-2k}n^{k-c+2}
=
2^{k-c} D_k^{2^k} n^{2-k-c}
\le
2^k D_k^{2^k} n^{1-k},
\]
since \(c\ge1\).

Finally, there are at most \(\exp(C2^k)\) abstract connected decorated hypergraphs on \(k\) labeled vertices. Summing the preceding bound over this family yields
\[
\mathbb E |b_{k,n}|^2
\le
\exp(C2^k)\,2^k\,D_k^{2^k}\, n^{1-k}.
\]
Since \(D_k=(Ck)^k\), the right-hand side is bounded by
\[
\exp(C2^k)\,\exp(Ck2^k\log k)\, n^{1-k}
\le
\exp\!\bigl(C\,k\,2^k\log k\bigr)\, n^{1-k},
\]
after enlarging the absolute constant \(C\). This proves the claim.
\end{proof}

\section{Second-order bound on \(L(K_{n,n})\)}
\label{apdx:monomer-dimer-second-order}
Here we prove the result described in Section~\ref{subsec:second-monomer-dimer}, see there for the description of the setup. We remark that the proof of this result is fairly different and arguably more involved than its corresponding version for the permanent. Broadly speaking, this is because the partition function is no longer a sum over permutations. Instead, the sum is over matchings, which have different sizes and therefore do not seem to admit as ``unified'' an analysis.

\paragraph{Notation.} For a finite set \(U\subseteq [n]\times[n]\), let \(H_U\) denote the bipartite graph on the
touched row- and column-vertices with edge set \(U\). Define
\[
F_U^{(2)}(y)
:=
Z_{H_U}(y)\exp\!\Bigl(
-\sum_{e\in U} y_e
+\frac12\sum_{e\in U} y_e^2
+\sum_{\substack{e,f\in U\\ e\sim f}} y_e y_f
\Bigr),
\]
where adjacency \(e\sim f\) is taken in the line graph \(L(H_U)\), i.e. \(e\) and \(f\)
share a row or a column. Write
\[
F_U^{(2)}(y)=\sum_{\alpha\in \NN^U} d_{U,\alpha}\,y^\alpha,
\qquad
\mathcal C_U(y):=\sum_{\supp(\alpha)=U} d_{U,\alpha}\,y^\alpha.
\]
Thus \(\mathcal C_U\) is the exact-support part of \(F_U^{(2)}\).

For \(z\in\C\), let \(W=(W_e)_{e\in U}\) have i.i.d.\ entries \(W_e\sim \CN(0,1)\), and set
\[
w(U;z):=\E\bigl[|\mathcal C_U(zW)|^2\bigr],
\qquad
t:=|z|^2.
\]

\subsection{Counting lemma}
The following proposition covers an important counting step which shows up in the analysis. It builds upon the classical fact that Catalan numbers count plane trees \cite{stanley2015catalan}, and more specifically this kind of argument is used in the theory of combinatorial maps \cite{lando2004graphs}. 
\begin{prop}\label{prop:weighted-connected-supports}
Fix a root edge \(e_0\in [n]\times[n]\). For a connected support
\(U\subseteq [n]\times[n]\), let \(H_U\) be the corresponding connected bipartite
graph on row-vertices and column-vertices, and write
\[
r_i(U):=\deg_{H_U}(i,\cdot),
\qquad
c_j(U):=\deg_{H_U}(\cdot,j).
\]
Then for every \(m\ge 1\),
\[
\sum_{\substack{U\ni e_0\\ U\text{ connected}\\ |U|=m}}
\prod_{i:\,r_i(U)\ge1}(r_i(U)-1)!
\prod_{j:\,c_j(U)\ge1}(c_j(U)-1)!
\le
\Cat_m\, n^{m-1}
\le
4^m n^{m-1},
\]
where
\[
\Cat_m:=\frac1{m+1}\binom{2m}{m}
\]
is the \(m\)-th Catalan number.
\end{prop}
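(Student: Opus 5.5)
We will prove the bound by an injective encoding of weighted connected supports by plane trees with $m$ edges, which are counted by $\Cat_m$. The weight $\prod (r_i-1)!\prod(c_j-1)!$ suggests we should count \emph{decorated} objects: a connected bipartite multigraph-free graph $H_U$ together with, at every vertex $v$ of $H_U$, a cyclic ordering of its incident edges. The number of cyclic orderings at a vertex of degree $d$ is exactly $(d-1)!$, so the left-hand side equals the number of triples $(U,\text{rotation system})$ with $U\ni e_0$ connected, $|U|=m$. Such a pair $(H_U,\text{rotation})$ is a combinatorial map (ribbon graph), and the root edge $e_0$ together with an orientation (say from its row endpoint to its column endpoint) roots it. So the plan is: (1) rewrite the weighted sum as a count of rooted maps whose underlying graph embeds in $K_{n,n}$ with root $e_0$; (2) bound the number of such rooted maps with $m$ edges by (number of abstract rooted map shapes of the relevant type) times (number of ways to label new vertices by row/column indices).

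For step (2), we perform a face-traversal / depth-first exploration: starting from the root dart, walk around the map using the rotation system, in the standard way that a contour walk of a plane tree is produced. Concretely, do a DFS from the root vertex where, at each vertex, incident edges are visited in the cyclic order starting after the edge by which we arrived. Each of the $m$ edges is traversed; when it is traversed for the first time it either leads to a new (undiscovered) vertex, or to an already discovered vertex. We record a Dyck-type word: the spanning DFS tree with its induced plane structure (children ordered by the rotation) is a plane tree with $t=v-1$ edges, where $v$ is the number of vertices; the remaining $m-t$ non-tree edges must be encoded too. The cleaner route: we want total count $\le \Cat_m n^{m-1}$, and since $v\le m+1$ and each new vertex other than the two endpoints of $e_0$ costs a factor $\le n$ (choice of its row or column index), giving $n^{v-2}\le n^{m-1}$, it suffices to show the number of rooted map shapes with $m$ edges, $v$ vertices, bipartite, is at most $\Cat_m\, n^{m+1-v}$ after also paying for non-tree edge endpoints. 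So I would encode each non-tree edge's far endpoint by a factor $n$ (its index among rows or columns) instead of by a pointer, and encode tree edges vs.\ non-tree edges as a plane tree with $m$ edges: treat each non-tree edge as a leaf edge of a "blown-up" plane tree (cut each non-tree edge at its far end, creating a pendant leaf). The result is a plane tree with $m$ edges (counted by $\Cat_m$) from which, together with labels $\in[n]$ on all non-root vertices other than the column endpoint of $e_0$ (at most $m-1$ labels), the rotation system and $U$ are recovered: labels identify which pendant leaves glue to which vertices, and the cyclic orders come from the plane structure.

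The key steps in order: (a) the $(d-1)!$ = cyclic orderings identity, converting the sum to a count; (b) the cutting construction giving a plane tree with exactly $m$ edges and $m+1$ vertices, rooted at the dart $e_0$; (c) injectivity: reconstruct $U$ from the tree plus labels (each vertex's label determines its row/column index, edges are tree edges between labeled endpoints), and reconstruct the rotation at each vertex from the planar child order, noting leaves that are cut copies contribute their edge to the rotation of the genuine vertex with the same label — here one must check that the cyclic position of that edge at the genuine vertex is recoverable; (d) $\Cat_m\le 4^m$. The main obstacle is (c): cutting a non-tree edge loses its position in the rotation at the far endpoint. To fix this I would instead cut each non-tree edge and only keep its rotation position at the endpoint where it is first traversed, and then argue the rotation at the other endpoint need only be counted up to the total weight — i.e., prove the inequality rather than bijection by bounding $(d-1)!$ at a vertex by the number of interleavings recoverable; if that fails, I would fall back to choosing the DFS so that every edge is first encountered from the endpoint whose rotation it lies in, using the face walk (each edge traversed once in each direction), which visits each dart exactly once and yields a $2m$-step walk whose first visits to new vertices form a Dyck path of length $2m$, giving precisely $\Cat_m$ shapes.
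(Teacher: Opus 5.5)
Your opening reduction is correct and matches the paper's first step: a vertex of degree $d$ has $(d-1)!$ cyclic orders, so the left-hand side counts pairs $(U,\sigma)$ with $\sigma$ a rotation system on $H_U$. The gap is the one you flag in (c). Cutting a non-tree edge loses its position in the rotation at the far endpoint, and your ``interleaving'' repair is never specified. Your fallback does not work either. The face walk from the root dart is a single orbit of the face permutation, so it traces only the root face. A rotation system on $H_U$ generally has several faces, so the walk misses darts. Even for a one-face map, only $|V(H_U)|-1$ of the $2m$ steps discover new vertices, so the walk is a Dyck-path tree contour only when $H_U$ is a tree.

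More fundamentally, no repair can succeed, because the inequality is false once $m$ is large compared with $n$. Take $U=[n]\times[n]$, so $m=n^2$ and every row and column has degree $n$. This is the only support of that size, and its summand is $((n-1)!)^{2n}=\exp(2n^2\log n-O(n^2))$. On the other side, $\Cat_m n^{m-1}\le 4^{n^2}n^{n^2}=\exp(n^2\log n+O(n^2))$. Concretely, for $n=40$ one gets $\log_{10}\bigl((39!)^{80}\bigr)\approx 3705$ against $\log_{10}\bigl(4^{1600}\,40^{1599}\bigr)\approx 3525$. Square blocks through $e_0$ of side $k$ fail the same way once $k^2/n$ exceeds a suitable absolute constant and $n$ is large. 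So in that regime no injection of rotation systems into labeled plane trees with $m$ edges exists.

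The paper takes a different route from your cutting construction, but it fails for the same reason. It unfolds $H_U$ by a depth-first search that creates a fresh copy of a vertex each time the vertex is re-entered, and then explores that copy. Every edge of $U$ thus becomes a tree edge, and repeated labels absorb the extra vertices. Its reconstruction step, however, claims that each child block $C_x$ is a consecutive arc of $\sigma_v$ placed right after $p_x$, and this is false; the map $\Phi$ is not injective. For example, take $U=\{(1,1),(1,2),(1,3),(1,4),(2,2),(2,4)\}$ and $a_j:=(1,j+1)$. The rotations $(a_0,a_1,a_2,a_3)$ and $(a_0,a_1,a_3,a_2)$ at row $1$ produce the same labeled path-shaped tree. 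In both cases the search leaves the root row copy along $a_1$, passes through column $2$, row $2$ and column $4$, and re-enters row $1$ through $a_3$; that new copy then claims $a_2$. This agrees with the counterexample above. Any correct version of the proposition must restrict $m$ relative to $n$, and its use for all $m\ge 3$ in the proof of Theorem~\ref{thm:linegraph-unconditional} would need a separate argument for large supports.
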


\begin{proof}
Without loss of generality, we take $e_0=(1,1)$.
Let \(\mathcal O_m\) be the set of pairs \((U,\sigma)\), where

\begin{itemize}
\item \(U\subseteq[n]\times[n]\) is connected, contains \(e_0\), and has \(|U|=m\);
\item for each touched row-vertex or column-vertex \(v\) of \(H_U\), \(\sigma_v\) is a
cyclic order of the edges of \(H_U\) incident to \(v\).
\end{itemize}

If a vertex has degree \(d\), then it has exactly \((d-1)!\) cyclic orders. Hence
\[
|\mathcal O_m|
=
\sum_{\substack{U\ni e_0\\ U\text{ connected}\\ |U|=m}}
\prod_{i:\,r_i(U)\ge1}(r_i(U)-1)!
\prod_{j:\,c_j(U)\ge1}(c_j(U)-1)!.
\]
Thus it suffices to prove
\[
|\mathcal O_m|\le \Cat_m\, n^{m-1}.
\]

Let \(\mathcal T_m\) be the set of edge-rooted plane bipartite trees with \(m\) edges,
whose root edge is oriented from a row-vertex to a column-vertex, together with labels
on the vertices such that

\begin{itemize}
\item the initial endpoint of the root edge is a row-vertex labeled \(1\);
\item the terminal endpoint of the root edge is a column-vertex labeled \(1\);
\item every other row-vertex carries an arbitrary label in \([n]\);
\item every other column-vertex carries an arbitrary label in \([n]\).
\end{itemize}

An edge-rooted plane tree with \(m\) edges is counted by \(\Cat_m\), and after fixing
the labels \(1\) and \(1\) on the two endpoints of the root edge, the remaining
\(m-1\) vertices may be labeled arbitrarily in \([n]\). Therefore
\[
|\mathcal T_m|=\Cat_m\,n^{m-1}.
\]

We now construct an injection
\[
\Phi:\mathcal O_m\hookrightarrow \mathcal T_m.
\]

Fix \((U,\sigma)\in\mathcal O_m\), and let \(H_U\) be the corresponding connected
bipartite graph. We perform a depth-first unfolding of \(H_U\) from the root edge
\(e_0=(1,1)\), using the prescribed cyclic orders \(\sigma\).

Start with a root row-copy labeled \(1\) and a root column-copy labeled \(1\), joined
by the root tree edge corresponding to \(e_0\). Each copy \(x\) in the unfolding
remembers the original vertex \(v\in H_U\) from which it came. For each non-root copy
\(x\), let \(p_x\) denote the original edge of \(H_U\) through which \(x\) was first
reached. For the two root copies, set \(p_x:=e_0\).

To explore a copy \(x\) of an original vertex \(v\), look at the cyclic order
\(\sigma_v\), start immediately after \(p_x\), and traverse one full turn around
\(v\). Whenever an incident original edge \(e=(v,u)\) is encountered for the first
time, mark \(e\) as discovered, create a child copy \(y\) of \(u\), connect \(x\) to
\(y\) by a tree edge representing \(e\), and recursively explore \(y\).

Since each original edge is discovered exactly once, this procedure produces an
edge-rooted plane bipartite tree with exactly \(m\) edges. Each copy inherits the
label of its original row/column vertex, so \(\Phi(U,\sigma)\in\mathcal T_m\).

It remains to show that \(\Phi\) is injective. Let
\[
T=\Phi(U,\sigma).
\]

We first recover the support \(U\). The vertices of \(T\) are row-copies and column-copies,
each carrying a label in \([n]\). Identify all row-copies with the same label, and likewise
all column-copies with the same label. Each tree edge of \(T\) then becomes an edge
\((i,j)\in [n]\times[n]\). Since every original edge of \(U\) was discovered exactly once in
the unfolding, each edge of \(U\) gives rise to exactly one tree edge of \(T\), and hence this
procedure reconstructs the original support \(U\subseteq [n]\times[n]\).

It remains to reconstruct the cyclic orders \(\sigma_v\) at the touched vertices \(v\) of \(H_U\).
Fix such a vertex \(v\), and let \(E(v)\) denote the set of original edges of \(U\) incident to \(v\).

Consider all copies \(x\) of \(v\) appearing in the unfolded tree \(T\). For each such copy,
define a distinguished incident edge \(p_x\in E(v)\) as follows: if \(x\) is one of the two root
copies, set \(p_x:=e_0\); otherwise, let \(p_x\) be the original edge corresponding to the unique
tree edge joining \(x\) to its parent. Thus \(p_x\) is exactly the edge through which the depth-first
search first reached the copy \(x\).

Next, let \(C_x\) be the ordered list of original edges corresponding to the child edges of \(x\),
read in the plane order at \(x\). By construction of the unfolding, \(C_x\) is precisely the list
of edges encountered when one scans cyclically around the original vertex \(v\), starting
immediately after \(p_x\), and continuing until the search returns along \(p_x\). In particular,
\(C_x\) is a consecutive block in the cyclic order \(\sigma_v\), and the block \(C_x\) is to be
inserted immediately after \(p_x\).

Moreover, every edge of \(E(v)\) appears exactly once in this description: each edge appears once
as some entrance edge \(p_x\), namely at the copy first reached through that edge, and every other
occurrence is recorded exactly once in one of the ordered child-blocks \(C_x\). Therefore the
cyclic order \(\sigma_v\) is partitioned into segments of the form
\[
p_x,\; C_x,
\]
one for each copy \(x\) of \(v\).

Finally, the copies \(x\) of \(v\) appear around the contour traversal of the plane tree \(T\) in
exactly the same cyclic order as these segments occur around the original vertex \(v\). Indeed, a
new copy of \(v\) is created precisely when the search re-enters \(v\) through a new incident edge,
and the plane order records the order in which the previously undiscovered edges following that
entrance edge are explored. Hence \(\sigma_v\) is uniquely recovered by listing the copies \(x\) of
\(v\) in contour order and, for each such \(x\), writing down the segment
\[
p_x,\; C_x.
\]

This reconstructs the cyclic order \(\sigma_v\) uniquely. Doing this for every touched row-vertex
and column-vertex \(v\) recovers the full collection \(\sigma\). Therefore \((U,\sigma)\) is uniquely
determined by \(T\), so \(\Phi\) is injective.

Thus
\[
|\mathcal O_m|\le |\mathcal T_m|=\Cat_m\,n^{m-1},
\]
which proves the proposition.
\end{proof}

\subsection{Analysis}

\begin{lem}[Exact-support extraction and Gaussian Poincar\'e]
\label{lem:exact-support-poincare}
For every finite \(U\subseteq [n]\times[n]\),
\[
\mathcal C_U(zW)
=
\Bigl(\prod_{e\in U}(I-\E_e)\Bigr)F_U^{(2)}(zW),
\]
where \(\E_e\) denotes expectation in the single coordinate \(W_e\), keeping all other
coordinates fixed. Consequently,
\[
w(U;z)
\le
\bigl\|\partial_U F_U^{(2)}(zW)\bigr\|_{L^2}^2,
\qquad
\partial_U:=\prod_{e\in U}\partial_{W_e}.
\]
\end{lem}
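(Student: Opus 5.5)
The plan has two parts: an algebraic identity for the exact-support projection, and a Poincaré-type inequality for a single complex Gaussian applied one coordinate at a time.

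\textbf{Step 1: the identity.} Expand the entire function $F_U^{(2)}(zW)$ in monomials $\prod_e W_e^{\alpha_e}$. For a single coordinate $e$, the operator $\E_e$ acts on a monomial $W_e^{a}$ times a function of the other coordinates. By circular symmetry $\E_e[W_e^a]=\mathbf 1_{a=0}$, so $\E_e$ kills every monomial with $\alpha_e\ge1$ and fixes those with $\alpha_e=0$. Hence $I-\E_e$ is the projection onto monomials with $\alpha_e\ge 1$. These projections commute because they act on distinct coordinates, so $\prod_{e\in U}(I-\E_e)$ keeps exactly the monomials with $\alpha_e\ge1$ for all $e\in U$, that is, those with $\supp(\alpha)=U$. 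This is $\mathcal C_U(zW)$.

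To justify interchanging $\E_e$ with the power series, note that $F_U^{(2)}$ is entire: it is a polynomial times the exponential of a quadratic. It is therefore dominated by $\exp(C|z|^2\sum_e|W_e|^2)$ on compacts, and Gaussian integrability holds for small $|z|$. Alternatively, one can argue directly in $L^2$ using the orthogonality of monomials.

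\textbf{Step 2: the inequality.} First prove the one-variable fact. For $g$ holomorphic in $w$ with $g(w)=\sum_{a\ge0}g_aw^a$ and $w\sim\CN(0,1)$, circular symmetry gives orthogonal monomials with $\E|w|^{2a}=a!$. Then
\[\E|g-\E g|^2=\sum_{a\ge1}|g_a|^2a!\le\sum_{a\ge1}|g_a|^2a^2(a-1)!=\E|\partial_w g|^2,\]
using $a!\le a\cdot a!=a^2(a-1)!$. This is the holomorphic Gaussian Poincaré inequality.

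Apply this coordinatewise, inducting over $U=\{e_1,\dots,e_m\}$. Set $h_j:=\partial_{W_{e_1}}\cdots\partial_{W_{e_j}}\prod_{k>j}(I-\E_{e_k})F$. Since $I-\E_{e_k}$ acts on coordinate $e_k$, it commutes with derivatives in the other coordinates. Conditioning on all coordinates except $e_{j+1}$ and applying the one-variable inequality in $W_{e_{j+1}}$ gives
\[\E|(I-\E_{e_{j+1}})\,\partial_{e_1}\cdots\partial_{e_j}G|^2\le\E|\partial_{e_{j+1}}\partial_{e_1}\cdots\partial_{e_j}G|^2,\]
where $G$ carries the remaining projections. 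For the remaining projections there are two options. One can use the fact that $I-\E_{e_k}$ is an $L^2$ contraction and drop them at the end. Cleaner is to apply the inequality one coordinate at a time, peeling a projection and replacing it with a derivative. Concretely, $\|(I-\E_{e})\Phi\|^2\le\|\partial_e\Phi\|^2$ for any $\Phi$ holomorphic in $W_e$ given the other coordinates. Moreover $\partial_e(I-\E_{e'})=(I-\E_{e'})\partial_e$ for $e\neq e'$, and $\|(I-\E_{e'})\Psi\|\le\|\Psi\|$. Iterating therefore yields $w(U;z)=\|\prod(I-\E_e)F\|^2\le\|\partial_U F\|^2$. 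The $z$ factors are absorbed because derivatives are taken with respect to $W$.

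\textbf{Main obstacle.} The only delicate point is analytic: we must justify termwise differentiation and the finiteness of the relevant $L^2$ norms. That is, $F_U^{(2)}(zW)$ must lie in the Gaussian $L^2$ space together with its mixed derivatives. This holds when $|z|$ is small, because the exponent is quadratic with coefficients $O(|z|^2)$. Otherwise the inequality holds trivially, with the right-hand side equal to $+\infty$.
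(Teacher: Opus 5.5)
Your proposal is correct and is essentially the paper's proof: $\prod_{e\in U}(I-\E_e)$ is identified as the projection onto monomials with $\supp(\alpha)=U$, the one-variable holomorphic Poincar\'e inequality comes from comparing $\sum_{a\ge1}|g_a|^2a!$ with $\sum_{a\ge1}a\,|g_a|^2a!$, and it is applied conditionally one coordinate at a time. Your explicit commutation of $\partial_e$ with $I-\E_{e'}$ only spells out what the paper leaves implicit.

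One analytic quibble: your claim that the right-hand side is $+\infty$ when $|z|$ is not small is unjustified, but it is also unnecessary. The identity $\E|g(W)|^2=\sum_\alpha|g_\alpha|^2\alpha!$ holds in $[0,\infty]$ for every entire $g$, so the inequality follows unconditionally, term by term, from $\alpha!\le\bigl(\prod_e\alpha_e\bigr)\alpha!$ for $\supp(\alpha)=U$. Smallness is needed only for the operator identity itself, and the condition is $|z|^2\max_v\deg_{H_U}(v)\lesssim 1$, which is guaranteed by $nt\le c_0$.
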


\begin{proof}
Expand
\[
F_U^{(2)}(zW)=\sum_{\alpha\in \NN^U} d_{U,\alpha}\, z^{|\alpha|} W^\alpha.
\]
Since \(\E[W_e^k]=0\) for every \(k\ge 1\), the operator \(I-\E_e\) kills precisely those
monomials for which \(\alpha_e=0\). Therefore
\[
\Bigl(\prod_{e\in U}(I-\E_e)\Bigr)F_U^{(2)}(zW)
=
\sum_{\supp(\alpha)=U} d_{U,\alpha}\, z^{|\alpha|} W^\alpha
=
\mathcal C_U(zW).
\]

The inequality now follows by the Poincar\'e inequality (see, e.g., \cite{van2014probability}).
Directly, let \(g(w)=\sum_{k\ge0} a_k w^k\) be an entire function of one complex Gaussian
variable \(W\sim\CN(0,1)\). Using orthogonality of complex Gaussian monomials,
\[
\|(I-\E)g(W)\|_2^2
=
\sum_{k\ge1}|a_k|^2\,k!,
\]
while
\[
\|\partial g(W)\|_2^2
=
\sum_{k\ge1}k^2 |a_k|^2 (k-1)!
=
\sum_{k\ge1}k\,|a_k|^2\,k!
\ge
\sum_{k\ge1}|a_k|^2\,k!.
\]
Hence
\[
\|(I-\E)g(W)\|_2^2\le \|\partial g(W)\|_2^2.
\]
Applying this conditionally, one coordinate at a time, gives
\[
\Bigl\|\Bigl(\prod_{e\in U}(I-\E_e)\Bigr)F_U^{(2)}(zW)\Bigr\|_2^2
\le
\bigl\|\partial_U F_U^{(2)}(zW)\bigr\|_2^2.
\]
This proves the lemma.
\end{proof}

\begin{lem}[Explicit cancellation for \(|U|=1,2\)]
\label{lem:size-1-2-cancellation}
There exists an absolute constant \(C>0\) such that for all sufficiently small \(t=|z|^2\),
the following hold:
\begin{enumerate}
\item if \(|U|=1\), then
\[
w(U;z)\le C t^3;
\]
\item if \(|U|=2\) and \(U\) is connected, then
\[
w(U;z)\le C t^3.
\]
\end{enumerate}
\end{lem}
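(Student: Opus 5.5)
Both cases can be handled by direct computation: for $|U|\le 2$ the function $F_U^{(2)}$ is explicit, we can extract its exact-support part $\mathcal C_U$ as a power series, and then evaluate $w(U;z)$ using orthogonality of complex Gaussian monomials, $\E|W^\alpha|^2=\alpha!$. The goal is to show that every monomial in $\mathcal C_U$ has total degree at least $3$. Since a monomial of degree $d$ contributes $|z|^{2d}=t^d$ times a coefficient, and the coefficients decay factorially, we then get $w(U;z)=O(t^3)$.

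\textbf{Case $|U|=1$.} Take $U=\{e\}$ and $y=y_e$. Here $H_U$ is a single edge, so $Z_{H_U}(y)=1+y$, and the line-graph adjacency sum is empty. Hence
\[
F_U^{(2)}(y)=(1+y)e^{-y+y^2/2}.
\]
Since $\log(1+y)=y-y^2/2+y^3/3-\cdots$, we have $F_U^{(2)}(y)=\exp(y^3/3-y^4/4+\cdots)=1+y^3/3+O(y^4)$. Removing the constant term gives $\mathcal C_U(y)=\sum_{k\ge 3}d_k y^k$ with $|d_k|\le C^k$, because $F_U^{(2)}$ is entire of order $2$; a Cauchy estimate on a disk of fixed radius suffices for the coefficient bound. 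Therefore
\[
w(U;z)=\sum_{k\ge3}|d_k|^2 k!\,t^k\le Ct^3
\]
for $t$ small. The factor $k!$ is harmless: it suffices to show $|d_k|^2 k!\le C^k$, which holds since the Gaussian factor gives $|d_k|\lesssim C^k/\Gamma(k/2+1)$. Alternatively, we can invoke Lemma~\ref{lem:exact-support-poincare} and bound $\|\partial F\|_{L^2}^2$ by its Taylor series.

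\textbf{Case $|U|=2$.} Take $U=\{e,f\}$ connected, so $e\sim f$, and write $a=y_e$, $b=y_f$. Because the two edges share a vertex, $Z_{H_U}=1+a+b$, and the adjacency sum over ordered pairs gives the term $ab$. Hence
\[
F_U^{(2)}=(1+a+b)\exp\!\bigl(-a-b+\tfrac12(a^2+b^2)+ab\bigr)=(1+s)e^{-s+s^2/2},\qquad s=a+b,
\]
that is, $F_U^{(2)}=g(a+b)$, where $g$ is the one-variable function from the first case. Since $g(s)=1+s^3/3+O(s^4)$, every monomial of $g(a+b)$ has degree $0$ or degree at least $3$. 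The exact-support part therefore consists only of mixed monomials $a^pb^q$ with $p,q\ge1$ and $p+q\ge3$. The same orthogonality computation then gives
\[
w(U;z)=\sum_{p,q\ge1,\;p+q\ge3}|d_{pq}|^2\,p!\,q!\,t^{p+q}\le Ct^3,
\]
where $d_{pq}=g_{p+q}\binom{p+q}{p}$, and the terms are summable because $\sum_{p+q=k}\binom{k}{p}^2p!q!=k!\,2^k$.

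\textbf{Main obstacle.} The only real step is checking that the growth of the coefficients is absorbed for small $t$, namely that $\sum_k|g_k|^2k!\,C^kt^{k}$ converges and is $O(t^3)$. This follows once we have a bound $|g_k|\le C^k/\sqrt{k!}$ (or any bound with $\sum_k |g_k|^2 k! C^k t_0^k<\infty$). To obtain it, apply Cauchy estimates on circles of radius $\sqrt{k}$: on such a circle $|g|\le e^{O(k)}$, which gives $|g_k|\le e^{O(k)}k^{-k/2}$ and hence $|g_k|^2k!\le C^k$. With this, all terms with $k\ge 3$ are summable for $t<1/(4C)$, which proves both claims.
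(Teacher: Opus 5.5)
Your proposal is correct and follows the paper's proof essentially verbatim. Both arguments reduce to $f(s)=(1+s)e^{-s+s^2/2}=1+O(s^3)$, observe that $F_U^{(2)}=f(y_e+y_f)$ for a connected pair, and conclude that the exact-support part contains only monomials of total degree at least $3$.

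The paper passes directly from this to $w(U;z)\le Ct^3$, so your Cauchy estimate on circles of radius $\sqrt{k}$ (giving $|g_k|^2k!\le C^k$) usefully fills in the summability step the paper leaves implicit. Two small corrections. First, the adjacency sum must be over unordered pairs to produce $ab$; ordered pairs would give $2ab$ and leave a surviving exact-support monomial $ab$ of degree $2$. Second, your earlier remark that a fixed-radius Cauchy estimate suffices is not true on its own, though you correct this yourself later.
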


\begin{proof}
Define
\[
f(s):=(1+s)e^{-s+s^2/2}.
\]
Since
$\log f(s)=\log(1+s)-s+\frac{s^2}{2}
=\frac{s^3}{3}-\frac{s^4}{4}+\frac{s^5}{5}-\cdots$,
we have
\[
f(s)=1+O(s^3)
\qquad (s\to 0).
\]

If \(U=\{e\}\), then
\[
F_U^{(2)}(y_e)=(1+y_e)e^{-y_e+y_e^2/2}=f(y_e),
\]
so
\[
\mathcal C_U(y_e)=f(y_e)-1.
\]
Every monomial in \(\mathcal C_U\) has degree at least \(3\), hence
\[
w(U;z)=\E\bigl[|\mathcal C_U(zW_e)|^2\bigr]\le C t^3.
\]

Now let \(U=\{e,f\}\) with \(e\sim f\). The only matchings in \(H_U\) are
\(\varnothing,\{e\},\{f\}\), so
\[
Z_{H_U}(y_e,y_f)=1+y_e+y_f.
\]
Also,
\[
-\sum_{u\in U} y_u+\frac12\sum_{u\in U}y_u^2+\sum_{u\sim v}y_u y_v
=
-(y_e+y_f)+\frac12(y_e+y_f)^2.
\]
Therefore
\[
F_U^{(2)}(y_e,y_f)=f(y_e+y_f).
\]
The exact-support part is
\[
\mathcal C_U(y_e,y_f)
=
f(y_e+y_f)-f(y_e)-f(y_f)+1.
\]
Again, since \(f(s)-1\) has no terms of degree \(1\) or \(2\), every monomial in
\(\mathcal C_U\) has total degree at least \(3\). Hence
\[
w(U;z)\le C t^3.
\]
This proves the lemma.
\end{proof}

\begin{lem}[Weighted local bound]
\label{lem:weighted-local-bound}
There exist absolute constants \(c_0,C_0>0\) such that the following holds.

Let \(U\subseteq[n]\times[n]\) be connected, with \(m:=|U|\ge 3\). Write
\[
r_i(U):=\deg_{H_U}(i,\cdot),
\qquad
c_j(U):=\deg_{H_U}(\cdot,j).
\]
If \(t=|z|^2\) satisfies
$nt\le c_0$,
then
\[
w(U;z)
\le
(C_0 t)^m
\prod_{i:\,r_i(U)\ge1}(r_i(U)-1)!
\prod_{j:\,c_j(U)\ge1}(c_j(U)-1)!.
\]
\end{lem}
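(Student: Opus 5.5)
We bound $w(U;z)$ through the Gaussian Poincar\'e inequality of Lemma~\ref{lem:exact-support-poincare}, which gives $w(U;z)\le \|\partial_U F_U^{(2)}(zW)\|_{L^2}^2$. It is convenient to write $F_U^{(2)}(y)=Z_{H_U}(y)\,e^{Q(y)}$ with
\[
Q(y)=-\sum_{e\in U}y_e+\tfrac12\sum_{e\in U}y_e^2+\sum_{e\sim f}y_ey_f .
\]
The $m$-fold mixed derivative $\partial_U$ (each coordinate once) is then expanded by the Leibniz rule. Since each $y_e=zW_e$, every $W_e$-derivative produces a factor $z$, so $\partial_U F$ carries an explicit prefactor $z^m$. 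Squaring this prefactor accounts for the $t^m$ in the target bound. What remains is to show that the $L^2$ norm of the resulting sum is at most $C_0^{m}\prod_i (r_i-1)!\prod_j(c_j-1)!$, uniformly in $nt\le c_0$.

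\textbf{Leibniz expansion.} Partition $U=A\sqcup B$, with $\partial_A$ hitting $Z_{H_U}$ and $\partial_B$ hitting $e^{Q}$.

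For the first factor, $\partial_A Z_{H_U}(y)=\sum_{M\supseteq A,\ M\text{ matching}}\prod_{e\in M\setminus A}y_e$. This vanishes unless $A$ is itself a matching.

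For the second factor, $\partial_B e^{Q}=e^{Q}\sum_{\pi}\prod_{\text{blocks }b\in\pi}\partial_bQ$ by Fa\`a di Bruno. Because $Q$ is quadratic, only blocks of size 1 and 2 survive:
\begin{itemize}
\item a size-1 block $\{e\}$ gives $\partial_eQ=-1+y_e+\sum_{f\sim e}y_f$;
\item a size-2 block $\{e,f\}$ gives $\partial_e\partial_fQ=\mathbf 1_{e\sim f}$, since the diagonal term only matters when $e=f$.
\end{itemize}
So the terms are indexed by a matching $A$, a set $P$ of disjoint adjacent pairs in $B$, and singletons $S=B\setminus\bigcup P$.

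\textbf{Bounding the random factors.} We do this by Cauchy--Schwarz and hypercontractivity.
\begin{itemize}
\item Since $\|y\|$-type quantities are small (each $y_e$ has size $\sqrt t$ and $\sum_{f\sim e}y_f$ has variance $O(nt)\le c_0$), the factor $e^{Q}$ has bounded $L^p$ norms. This is because $\mathbb E|e^{2Q}|$ is a Gaussian integral of a quadratic form of operator norm $O(nt)$, computed as in the tilt lemmas of Section~\ref{sec:permanent-second}.
\item Likewise, $\partial_A Z_{H_U}$ has $L^p$ norm at most $C^{m}$ by the KP bound (Lemma~\ref{lem:cluster}) with $\delta=O(\sqrt{nt})$ small.
\item Each factor $\partial_eQ$ has $L^p$ norm $O(1)$.
\end{itemize}
Hence each Leibniz term is $O(C^m)$ in $L^2$.

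\textbf{Counting terms.} The number of terms is the number of choices of $(A,P)$. Matchings $A$ and pair-sets $P$ consist of edges of $L(H_U)$, i.e.\ adjacencies at a common row or column vertex. At a vertex of degree $d$, the number of ways to choose such structures is at most the number of involutions on $d$ elements, roughly $\sqrt{d!}$, which is comparable to $C^d(d-1)!^{1/2}$. Squaring the $L^2$ norm of the sum of all terms gives at most (number of terms)$^2\times C^{2m}$, which is bounded by $C_0^m\prod(r_i-1)!\prod(c_j-1)!$. This is exactly the form of the claimed bound, so that Proposition~\ref{prop:weighted-connected-supports} then sums over supports with $\Cat_m n^{m-1}$.

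\textbf{Main obstacle.} The hardest step is making the combinatorial count tight: the per-vertex count must be dominated by $(d-1)!$ rather than $d!$, and the constant-size terms $-1$ in $\partial_eQ$ must not produce uncontrolled growth. I would first try the crude bound via involutions, where $\mathrm{inv}(d)^2\le C^d d!\le C'^d(d-1)!$ after absorbing the factor $d\le 2^d$ into $C_0^m$. This suffices because $\sum_v d_v=2m$. The exponential-moment bound for $e^{Q}$ also needs care: it is uniform only thanks to $nt\le c_0$, because the row/column sums $\sum_{f\sim e}y_f$ have variance about $nt$.
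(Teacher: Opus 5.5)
The step that fails is ``each factor $\partial_eQ$ has $L^p$ norm $O(1)$, hence each Leibniz term is $O(C^m)$ in $L^2$.'' The singleton factors are strongly correlated. For $e=(i,j)$ you have $\partial_eQ=-1+z(s_i+t_j-W_e)$, where $s_i,t_j$ are the row and column sums of $W$ over $U$, so all singletons at a high-degree vertex share essentially the same Gaussian. H\"older over $k$ such factors needs their $L^{2k}$ norms, which grow like $\sqrt{k}$, and the resulting $\sqrt{k!}$ loss is genuine.

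Concretely, let $U$ be a star of $d$ edges in one row. Then $\partial_eQ=s-1$ for every $e\in U$, where $s=z\sum_{e\in U}W_e\sim\CN(0,td)$. The term with $A=P=\varnothing$ is $(1+s)e^{-s+s^2/2}(s-1)^d$. Restrict to the sector $|\arg s|<\pi/8$ (where $|e^{s^2/2}|\ge 1$, $|1+s|\ge 1$, and $|s-1|\ge |s|/2$ once $|s|\ge 2$) and to $|s|^2\in[d\,td,2d\,td]$. This shows the term's squared $L^2$ norm is at least $c\,d!\,(td/4)^d e^{-O(\sqrt d)}$, which for $td\asymp c_0$ is not $O(C^d)$.

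Once you use the correct per-term size, your accounting (number of terms)$^2\times$(largest term)$^2$ becomes roughly $d\,\mathrm{inv}(d)^2\cdot d!\approx (d!)^2$ up to exponential factors. That overshoots the target $C_0^d(d-1)!$ by a factor of order $d!$, so the counting step does not close as written.

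To rescue the direct expansion, you would have to couple the two sources of factorials: terms with many pairs in $P$ are numerous but carry few singleton factors, and conversely. For the star, grouping by $k=|S|$ gives $\sum_k\binom dk(d-k-1)!!\sqrt{k!}\le C^d\sqrt{d!}$, which suffices. For general $U$, however, each singleton factor couples a row and a column at once, and this bookkeeping is the real difficulty, not the involution count you flag.

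The paper sidesteps it with a Hubbard--Stratonovich decoupling. Writing $e^{z^2s_i^2/2}=\mathbb E_g e^{zg_is_i}$, and similarly for columns, turns $F_U^{(2)}(zW)$ into $\mathbb E_{g,h}\bigl[Z_{H_U}(zW)\prod_{e}\exp(z\eta_eW_e-\tfrac{z^2}{2}W_e^2)\bigr]$ with $\eta_{ij}=g_i+h_j-1$. Conditionally on $(g,h)$, $\partial_U$ then acts edge by edge. After Jensen the $W$-expectation factorizes over edges, and the factorials enter exactly once, through $\mathbb E[(1+G^2)^re^{\lambda G^2}]\le C^r r!$ with $r=r_i(U)$ or $c_j(U)$.

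Separately, Lemma~\ref{lem:cluster} cannot be applied to $\partial_AZ_{H_U}(zW)$. It needs the deterministic bound $\max_v\sum_{e\ni v}|zW_e|\le 1/(4e)$, but this sum is random, unbounded, and typically of order $d_v\sqrt t$ (as large as $\sqrt{c_0 n}$), not $\sqrt{nt}$. The bound $\|\partial_AZ_{H_U}\|_p\le C^m$ you want is nonetheless true, e.g.\ from $\mathbb E|\partial_AZ_{H_U}(zW)|^2\le(1+t)^m$ (orthogonality of distinct multilinear monomials) plus hypercontractivity.
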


\begin{proof}
By Lemma~\ref{lem:exact-support-poincare}, it is enough to bound
\[
\bigl\|\partial_U F_U^{(2)}(zW)\bigr\|_{L^2}^2.
\]

Write
\[
s_i(W):=\sum_{j:(i,j)\in U} W_{ij},
\qquad
t_j(W):=\sum_{i:(i,j)\in U} W_{ij}.
\]
A direct algebraic identity gives
\[
\frac12\sum_{e\in U}(zW_e)^2+\sum_{\substack{e,f\in U\\ e\sim f}} zW_e\,zW_f
=
-\frac12 z^2\sum_{e\in U}W_e^2+\frac{z^2}{2}\sum_i s_i(W)^2+\frac{z^2}{2}\sum_j t_j(W)^2.
\]
Hence
\begin{align*}
F_U^{(2)}(zW)
&=
Z_{H_U}(zW)
\exp\!\Bigl(
-z\sum_{e\in U}W_e
-\frac{z^2}{2}\sum_{e\in U}W_e^2
+\frac{z^2}{2}\sum_i s_i(W)^2
+\frac{z^2}{2}\sum_j t_j(W)^2
\Bigr).
\end{align*}

Introduce independent real standard Gaussians \((g_i)\) and \((h_j)\), indexed by the
touched rows and columns of \(U\), and set
\[
\eta_{ij}:=g_i+h_j-1.
\]
Using
$e^{\frac{z^2}{2}u^2}=\E_g[e^{zgu}]$,
we obtain the Hubbard--Stratonovich representation \cite{hubbard1959calculation}
\[
F_U^{(2)}(zW)
=
\E_{g,h}\!\left[
Z_{H_U}(zW)
\prod_{e=(i,j)\in U}
\exp\!\Bigl(z\eta_{ij}W_{ij}-\frac{z^2}{2}W_{ij}^2\Bigr)
\right].
\]

Now expand
\[
Z_{H_U}(zW)=\sum_{M\in\mathcal M(H_U)} \prod_{e\in M} zW_e,
\]
where \(\mathcal M(H_U)\) denotes the matchings in \(H_U\). For each fixed matching \(M\),
differentiating once in every variable \(W_e\) yields the following.

If \(e\notin M\), then
\[
\partial_{W_e}
\exp\!\Bigl(z\eta_eW_e-\frac{z^2}{2}W_e^2\Bigr)
=
z(\eta_e-zW_e)
\exp\!\Bigl(z\eta_eW_e-\frac{z^2}{2}W_e^2\Bigr).
\]
If \(e\in M\), then
\[
\partial_{W_e}
\Bigl(
zW_e\,\exp\!\bigl(z\eta_eW_e-\tfrac{z^2}{2}W_e^2\bigr)
\Bigr)
=
z\bigl(1+z\eta_eW_e-z^2W_e^2\bigr)
\exp\!\Bigl(z\eta_eW_e-\frac{z^2}{2}W_e^2\Bigr).
\]

Define
\[
A_e:=|z|\,|\eta_e-zW_e|\,
\exp\!\Bigl(\Re\!\bigl(z\eta_eW_e-\tfrac{z^2}{2}W_e^2\bigr)\Bigr),
\]
\[
B_e:=|z|\,|1+z\eta_eW_e-z^2W_e^2|\,
\exp\!\Bigl(\Re\!\bigl(z\eta_eW_e-\tfrac{z^2}{2}W_e^2\bigr)\Bigr).
\]
Then
\[
\bigl|\partial_U F_U^{(2)}(zW)\bigr|
\le
\E_{g,h}\!\left[\sum_{M\in\mathcal M(H_U)}
\prod_{e\notin M}A_e\prod_{e\in M}B_e\right]
\le
\E_{g,h}\!\left[\prod_{e\in U}(A_e+B_e)\right].
\]
By Jensen,
\[
\bigl\|\partial_U F_U^{(2)}(zW)\bigr\|_2^2
\le
\E_{g,h}\E_W\!\left[\prod_{e\in U}(A_e+B_e)^2\right].
\]
Since the \(W_e\)'s are independent, this factors as
\[
\bigl\|\partial_U F_U^{(2)}(zW)\bigr\|_2^2
\le
\E_{g,h}\prod_{e\in U} J_t(\eta_e),
\]
where
\[
J_t(\eta):=
\E_{W\sim\CN(0,1)}\!\left[(A+B)^2\right]
\]
with \(\eta\) held fixed.

We now estimate \(J_t(\eta)\). First we make an observation based on rotational invariance.
Write
$z=\sqrt t\,e^{i\theta}$, 
and define
$\widetilde W:=e^{i\theta}W$.
Since \(\widetilde W\sim \CN(0,1)\), and since \(W=e^{-i\theta}\widetilde W\), we have
\[
zW=\sqrt t\,\widetilde W,
\qquad
z^2W^2=t\,\widetilde W^2.
\]
Therefore
\[
|\,\eta-zW\,|=|\,\eta-\sqrt t\,\widetilde W\,|,
\qquad
|\,1+z\eta W-z^2W^2\,|
=
|\,1+\sqrt t\,\eta\,\widetilde W-t\,\widetilde W^2\,|,
\]
and also
\[
\Re\!\Bigl(z\eta W-\frac{z^2}{2}W^2\Bigr)
=
\Re\!\Bigl(\sqrt t\,\eta\,\widetilde W-\frac t2\,\widetilde W^2\Bigr).
\]
It follows that \(J_t(\eta)\) depends only on \(t\), not on the phase of \(z\). Thus, by rotational invariance, it suffices to assume that \(z=\sqrt t>0\) is real.

Write
\[
W=\frac{X+iY}{\sqrt2},
\qquad
X,Y\stackrel{\mathrm{i.i.d.}}{\sim} N(0,1).
\]
Then
\[
2\Re\!\Bigl(\sqrt t\,\eta\,W-\frac t2W^2\Bigr)
=
\sqrt2\,\sqrt t\,\eta\,X-\frac t2(X^2-Y^2).
\]
Hence
\[
e^{2\Re(\sqrt t\,\eta\,W-\frac t2W^2)}\,d\mu(W)
=
K_t(\eta)\,d\nu_{t,\eta}(X,Y),
\]
where \(d\mu(W)\) is the standard complex Gaussian law, and
\[
K_t(\eta)
=
\frac1{\sqrt{1-t^2}}
\exp\!\Bigl(\frac{t\eta^2}{1+t}\Bigr),
\]
while under \(\nu_{t,\eta}\), the variables \(X\) and \(Y\) are independent Gaussians with
\[
X\sim N\!\Bigl(\frac{\sqrt2\,\sqrt t\,\eta}{1+t},\frac1{1+t}\Bigr),
\qquad
Y\sim N\!\Bigl(0,\frac1{1-t}\Bigr).
\]

Using \(J_t(\eta)\le 2\E[A^2]+2\E[B^2]\), we first estimate \(\E[A^2]\). Since
\[
A=\sqrt t\,|\eta-\sqrt t\,W|\,
e^{\Re(\sqrt t\,\eta\,W-\frac t2W^2)},
\]
we have
\[
\E[A^2]
=
t\,K_t(\eta)\,\E_{\nu_{t,\eta}}\!\left[\,|\eta-\sqrt t\,W|^2\right].
\]
Now
\[
|\eta-\sqrt t\,W|^2
=
\Bigl(\eta-\sqrt{\frac t2}\,X\Bigr)^2+\frac t2\,Y^2.
\]
Since under \(\nu_{t,\eta}\),
\[
\E[X]=\frac{\sqrt2\,\sqrt t\,\eta}{1+t},
\qquad
\Var(X)=\frac1{1+t},
\qquad
\Var(Y)=\frac1{1-t},
\]
it follows that
$\E_{\nu_{t,\eta}}\!\left[\,|\eta-\sqrt t\,W|^2\right]
\le
C(1+\eta^2)$
for all \(0\le t\le t_0\) with \(t_0<1\) fixed. Hence
\[
\E[A^2]\le C t(1+\eta^2)e^{Ct\eta^2}.
\]

Next, since
\[
B=\sqrt t\,|1+\sqrt t\,\eta\,W-tW^2|\,
e^{\Re(\sqrt t\,\eta\,W-\frac t2W^2)},
\]
we get
\[
\E[B^2]
=
t\,K_t(\eta)\,
\E_{\nu_{t,\eta}}\!\left[\,|1+\sqrt t\,\eta\,W-tW^2|^2\right].
\]
Using the inequality
$|1+u+v|^2\le 3(1+|u|^2+|v|^2)$,
we obtain
\[
|1+\sqrt t\,\eta\,W-tW^2|^2
\le
C\Bigl(1+t\eta^2|W|^2+t^2|W|^4\Bigr).
\]
Under \(\nu_{t,\eta}\), the moments of \(W\) satisfy
\[
\E_{\nu_{t,\eta}}[|W|^2]\le C(1+t\eta^2),
\qquad
\E_{\nu_{t,\eta}}[|W|^4]\le C(1+t\eta^2+t^2\eta^4),
\]
uniformly for \(0\le t\le t_0\). Therefore
\[
\E[B^2]
\le
C t\,K_t(\eta)\bigl(1+t\eta^2+t^2\eta^4\bigr).
\]
Since
$K_t(\eta)\le C e^{Ct\eta^2}$ and
$1+t\eta^2+t^2\eta^4 \le C(1+\eta^2)e^{Ct\eta^2}$,
it follows that
\[
\E[B^2]\le C t(1+\eta^2)e^{Ct\eta^2}.
\]

Combining the two bounds yields
\[
J_t(\eta)\le C t(1+\eta^2)e^{Ct\eta^2}.
\]
Substituting this into the previous display, we obtain
\[
\bigl\|\partial_U F_U^{(2)}(zW)\bigr\|_2^2
\le
(Ct)^m
\E_{g,h}\!\left[
\prod_{e=(i,j)\in U}(1+\eta_{ij}^2)e^{Ct\eta_{ij}^2}
\right].
\]

We now estimate the auxiliary Gaussian expectation. Since \(\eta_{ij}=g_i+h_j-1\),
\[
1+\eta_{ij}^2 \le C(1+g_i^2)(1+(h_j-1)^2),
\]
and
$e^{Ct\eta_{ij}^2}\le C\,e^{C t g_i^2}e^{C t (h_j-1)^2}$.
Therefore
\[
\prod_{e=(i,j)\in U}(1+\eta_{ij}^2)e^{Ct\eta_{ij}^2}
\le
C^m
\prod_i (1+g_i^2)^{r_i(U)}e^{Ct\,r_i(U)g_i^2}
\prod_j (1+(h_j-1)^2)^{c_j(U)}e^{Ct\,c_j(U)(h_j-1)^2}.
\]
Taking expectation and using independence gives
\begin{align*}
\E_{g,h}\!\left[
\prod_{e\in U}(1+\eta_e^2)e^{Ct\eta_e^2}
\right]
&\le
C^m
\prod_i \E\!\left[(1+G^2)^{r_i(U)}e^{\lambda_i G^2}\right]
\prod_j \E\!\left[(1+(H-1)^2)^{c_j(U)}e^{\mu_j(H-1)^2}\right],
\end{align*}
where \(G,H\sim N(0,1)\) and
\[
\lambda_i:=Ct\,r_i(U),\qquad \mu_j:=Ct\,c_j(U).
\]
Since \(r_i(U),c_j(U)\le n\) and \(nt\le c_0\), by choosing \(c_0\) small enough we may
assume that
\[
0\le \lambda_i,\mu_j\le \lambda_\ast
\]
for some fixed \(\lambda_\ast<1/2\). Hence, for every integer \(r\ge1\),
\[
\E[(1+G^2)^r e^{\lambda G^2}] \le C^r r!,
\qquad 0\le \lambda\le \lambda_\ast,
\]
since \(\lambda_\ast<1/2\) and \(\E[e^{\lambda G^2}]<\infty\) uniformly on
\([0,\lambda_\ast]\), while \((1+G^2)^r\le 2^r(1+|G|^{2r})\) and
\(\E|G|^{2r}\le C^r r!\).

Similarly,
\[
\E[(1+(H-1)^2)^r e^{\lambda(H-1)^2}] \le C^r r!,
\qquad 0\le \lambda\le \lambda_\ast,
\]
because \(H-1\sim N(-1,1)\) has density
\[
\frac{1}{\sqrt{2\pi}}e^{-(x+1)^2/2},
\]
so
\[
\E[(1+(H-1)^2)^r e^{\lambda(H-1)^2}]
=
\frac{1}{\sqrt{2\pi}}\int_{\R} (1+x^2)^r e^{\lambda x^2-(x+1)^2/2}\,dx.
\]
Since
\[
\lambda x^2-\frac{(x+1)^2}{2}
=
-\Bigl(\frac12-\lambda\Bigr)x^2-x-\frac12
\le
-\Bigl(\frac12-\lambda_\ast\Bigr)x^2+|x|,
\]
and \(\lambda_\ast<1/2\), the integrand is bounded by
$C(1+x^2)^r e^{-c x^2}$
for some absolute constants \(c,C>0\). Therefore
\[
\E[(1+(H-1)^2)^r e^{\lambda(H-1)^2}]
\le
C^r \int_{\R} (1+x^2)^r e^{-c x^2}\,dx
\le
C^r r!.
\]
Thus
\[
\E_{g,h}\!\left[
\prod_{e\in U}(1+\eta_e^2)e^{Ct\eta_e^2}
\right]
\le
C^m
\prod_i r_i(U)!\prod_j c_j(U)!.
\]
Combining the estimates gives
\[
\bigl\|\partial_U F_U^{(2)}(zW)\bigr\|_2^2
\le
(Ct)^m
\prod_i r_i(U)!\prod_j c_j(U)!.
\]

Finally, for every integer \(r\ge1\),
\[
r!\le 2^{r-1}(r-1)!.
\]
Since \(\sum_i r_i(U)=m=\sum_j c_j(U)\), we obtain
\[
\prod_i r_i(U)!\prod_j c_j(U)!
\le
4^m
\prod_i (r_i(U)-1)!\prod_j (c_j(U)-1)!.
\]
Absorbing the factor \(4^m\) into the constant proves the lemma.
\end{proof}

\begin{thm}[Second-order bound on \(L(K_{n,n})\)]
\label{thm:linegraph-unconditional}
There exist absolute constants \(c,C>0\) such that the following holds.

Let
\[
G_n:=L(K_{n,n}),
\qquad
F_n^{(2)}(x):=Z_{G_n}(x)\exp\!\bigl(-L_1(x)-L_2(x)\bigr),
\]
where
\[
L_1(x):=\sum_{(i,j)\in[n]\times[n]} x_{ij},
\qquad
L_2(x):=-\frac12\sum_{(i,j)}x_{ij}^2-\sum_{(i,j)\sim(i',j')}x_{ij}x_{i'j'}.
\]
Let \(W=(W_{ij})\) have i.i.d.\ entries \(W_{ij}\sim \CN(0,1)\), and define
\[
X_n^{(2)}(z):=F_n^{(2)}(zW).
\]
If
$n|z|^2\le c$,
then
\[
\log \E\bigl[|X_n^{(2)}(z)|^2\bigr]
\le
C\,n^4\,|z|^6.
\]
\end{thm}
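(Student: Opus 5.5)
Write $F_n^{(2)}(x)=\sum_{U}\mathcal C_U(x)$, a sum over finite supports $U\subseteq[n]\times[n]$ of exact-support parts. The plan rests on one key observation: the three defining ingredients of $F_n^{(2)}$ all factorize over connected components of $U$, where connectivity is taken in the line graph, i.e.\ $H_U$ connected. These ingredients are $Z_{G_n}$, which counts matchings, and the linear and quadratic exponents $L_1,L_2$, whose pair terms only join adjacent edges. Since $F^{(2)}_U$ is the restriction of $F_n^{(2)}$ to variables in $U$, it follows that $\mathcal C_U=\prod_{\text{components } U_a}\mathcal C_{U_a}$. Because of this, $F_n^{(2)}$ is the exponential of the connected exact-support parts, by the same disjoint-union factorization used in Lemma~\ref{lem:connected-hypergraph-expansion}, and so
\[
F_n^{(2)}(x)=\sum_{\{U_1,\dots,U_r\}\ \text{pairwise non-adjacent, connected}}\ \prod_a \mathcal C_{U_a}(x).
\]

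For the second moment, we use that the $W_e$ are rotationally invariant. Then different monomials $W^\alpha$ are orthogonal, and exact-support polynomials with different supports are orthogonal. Each family of polymers determines its union $\bigcup U_a$, and the union together with its component decomposition determines the family, so distinct families have distinct supports. Hence cross terms vanish and
\[
\E|X_n^{(2)}(z)|^2=\sum_{\text{compatible families}}\prod_a w(U_a;z),
\]
where we used independence across disjoint supports to factor $\E|\prod_a\mathcal C_{U_a}|^2=\prod_a w(U_a;z)$. This is a polymer gas with nonnegative weights, exactly as in Step 3 of the proof of Theorem~\ref{thm:second_moment_clean}. Dropping the compatibility constraint gives
\[
\log\E|X_n^{(2)}|^2\le \sum_{U\ \text{connected}} w(U;z).
\]

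It remains to bound this sum by $Cn^4t^3$, where $t=|z|^2$ and $nt\le c$. We split by $m=|U|$.
\begin{itemize}
\item For $m=1$, there are $n^2$ supports, each contributing $Ct^3$ by Lemma~\ref{lem:size-1-2-cancellation}, for a total of $O(n^2t^3)$.
\item For $m=2$ with $U$ connected, there are $O(n^3)$ adjacent pairs, each with weight $Ct^3$, for a total of $O(n^3t^3)$.
\item For $m\ge3$, Lemma~\ref{lem:weighted-local-bound} bounds $w(U;z)$ by $(C_0t)^m$ times the factorial weight $\prod(r_i-1)!\prod(c_j-1)!$. We sum over the root edge $e_0$, which costs a factor $n^2$ and overcounts, but that is harmless for an upper bound. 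Proposition~\ref{prop:weighted-connected-supports} then bounds the weighted count by $4^mn^{m-1}$, giving
\[
\sum_{m\ge3} n^2\cdot 4^m n^{m-1}(C_0t)^m=n\sum_{m\ge3}(4C_0nt)^m\le C n(nt)^3=Cn^4t^3,
\]
provided $4C_0c\le 1/2$.
\end{itemize}
Adding the three contributions gives $O(n^4t^3)=O(n^4|z|^6)$.

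The step I expect to need the most care is the first one: verifying the exact factorization $\mathcal C_U=\prod_a\mathcal C_{U_a}$ and the orthogonality of exact-support pieces. One must check that the exponent $L_1+L_2$ restricted to a non-adjacent union splits additively, so that $F^{(2)}$ factors and the exact-support projection $\prod_e(I-\E_e)$ from Lemma~\ref{lem:exact-support-poincare} commutes with the product. The heavy analytic input, namely the local bound and the Catalan counting, is already provided by the cited lemmas.
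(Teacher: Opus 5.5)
Your proposal is correct and follows essentially the same route as the paper's proof. Both arguments use the exact-support decomposition, factor $\mathcal C_A$ over line-graph components, use orthogonality of complex Gaussian monomials to get a nonnegative polymer gas, drop compatibility, and then apply Lemma~\ref{lem:size-1-2-cancellation} for $|U|\le 2$ and Lemma~\ref{lem:weighted-local-bound} with Proposition~\ref{prop:weighted-connected-supports} for $|U|\ge 3$.

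The differences are cosmetic. You bound $\log\E|X_n^{(2)}(z)|^2$ by $\sum_U w(U;z)$ directly rather than via representative edges and $\prod_e(1+S_e(z))$, and you count sizes $1,2$ globally rather than per root edge. One wording fix: your phrase ``$F_n^{(2)}$ is the exponential of the connected exact-support parts'' is not literally right. It is the polymer-gas sum over compatible families, which is exactly what your display states and what you actually use.
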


\begin{proof}
Write \(E_n=[n]\times[n]\). Decompose
\[
F_n^{(2)}(x)=\sum_{A\subseteq E_n}\mathcal C_A(x_A),
\]
where \(\mathcal C_A\) is the exact-support part on \(A\). Then
\[
X_n^{(2)}(z)=F_n^{(2)}(zW)=\sum_{A\subseteq E_n}\mathcal C_A(zW_A).
\]
By orthogonality of complex Gaussian monomials,
\[
\E\bigl[|X_n^{(2)}(z)|^2\bigr]
=
\sum_{A\subseteq E_n}\E\bigl[|\mathcal C_A(zW_A)|^2\bigr].
\]

If \(A=U_1\sqcup\cdots\sqcup U_k\) is the decomposition of \(A\) into connected components in
\(G_n=L(K_{n,n})\), then \(F_A^{(2)}\) factorizes over the components, and therefore so does
the exact-support piece:
\[
\mathcal C_A=\prod_{\ell=1}^k \mathcal C_{U_\ell}.
\]
Hence, for connected \(U\subseteq E_n\), if we define
\[
w(U;z):=\E\bigl[|\mathcal C_U(zW_U)|^2\bigr],
\]
then
\[
\E\bigl[|X_n^{(2)}(z)|^2\bigr]
=
\sum_{\Gamma\ \mathrm{compatible}} \prod_{U\in\Gamma} w(U;z),
\]
where \(\Gamma\) runs over finite families of pairwise compatible connected supports.

We now forget compatibility. For \(e\in E_n\), set
\[
S_e(z):=\sum_{\substack{U\ni e\\ U\text{ connected}}} w(U;z).
\]
For each connected \(U\subseteq E_n\), fix once and for all a representative edge
\(\rho(U)\in U\), for example the lexicographically smallest edge of \(U\).
If
$\Gamma=\{U_1,\dots,U_k\}$
is a compatible family of connected supports, then the sets \(U_1,\dots,U_k\) are pairwise
disjoint, hence the representative edges \(\rho(U_1),\dots,\rho(U_k)\) are distinct.
Therefore the monomial
\[
\prod_{U\in\Gamma} w(U;z)
\]
appears among the terms in the expansion of
\[
\prod_{e\in E_n}(1+S_e(z)),
\]
namely by choosing from the factor indexed by \(e=\rho(U)\) the summand \(w(U;z)\) for each
\(U\in\Gamma\), and choosing \(1\) from all other factors. Since all weights are nonnegative,
\[
\E\bigl[|X_n^{(2)}(z)|^2\bigr]
=
\sum_{\Gamma\ \mathrm{compatible}} \prod_{U\in\Gamma} w(U;z)
\le
\prod_{e\in E_n}(1+S_e(z)).
\]
Hence, using \(\log(1+s)\le s\) for \(s\ge0\),
\[
\log \E\bigl[|X_n^{(2)}(z)|^2\bigr]
\le
\sum_{e\in E_n}\ \sum_{\substack{U\ni e\\ U\text{ connected}}} w(U;z).
\]

Fix a root edge \(e_0\in E_n\). By Lemma~\ref{lem:size-1-2-cancellation},
\[
\sum_{\substack{U\ni e_0\\ U\text{ connected}\\ |U|=1}} w(U;z)\le Ct^3.
\]
Also, there are at most \(2n-2\) connected supports of size \(2\) containing \(e_0\), so
\[
\sum_{\substack{U\ni e_0\\ U\text{ connected}\\ |U|=2}} w(U;z)\le C n\, t^3.
\]

Now let \(m\ge 3\). By Lemma~\ref{lem:weighted-local-bound} and
Proposition~\ref{prop:weighted-connected-supports},
\begin{align*}
\sum_{\substack{U\ni e_0\\ U\text{ connected}\\ |U|=m}} w(U;z)
&\le
(Ct)^m
\sum_{\substack{U\ni e_0\\ U\text{ connected}\\ |U|=m}}
\prod_i (r_i(U)-1)!\prod_j(c_j(U)-1)! \le
(Ct)^m \Cat_m n^{m-1}.
\end{align*}
Using \(\Cat_m\le 4^m\),
\[
\sum_{\substack{U\ni e_0\\ U\text{ connected}\\ |U|=m}} w(U;z)
\le
(C't)^m n^{m-1}.
\]
Therefore
\[
\sum_{m\ge 3}
\sum_{\substack{U\ni e_0\\ U\text{ connected}\\ |U|=m}} w(U;z)
\le
n^2 t^3 \sum_{m\ge 3}(C''nt)^{m-3}.
\]
If \(nt\le c\) with \(c>0\) sufficiently small, the geometric series is bounded by an
absolute constant, so
\[
\sup_{e_0\in E_n}
\sum_{\substack{U\ni e_0\\ U\text{ connected}}} w(U;z)
\le
C n^2 t^3.
\]

Finally, since \(|E_n|=n^2\),
\[
\log \E\bigl[|X_n^{(2)}(z)|^2\bigr]
\le
n^2\cdot C n^2 t^3
=
C n^4 t^3
=
C n^4 |z|^6.
\]
This proves the theorem.
\end{proof}

\end{document}